\documentclass[reqno]{amsart}

\usepackage{amssymb}
\usepackage{mathabx}
\usepackage[utf8]{inputenc} 
\usepackage[english]{babel}
\usepackage{amstext}
\usepackage{euscript}
\usepackage{bbm}
\usepackage{mathrsfs}
\usepackage{enumerate}
\usepackage{graphicx}
\usepackage{caption}
\usepackage{amscd}
\usepackage{verbatim}
\usepackage{amssymb}
\usepackage{amsthm}
\usepackage{amsmath}
\usepackage{amstext}
\usepackage{amsfonts}
\usepackage{latexsym}
\usepackage{mathtools} 
\usepackage{enumitem} 
\usepackage{accents} 
\usepackage[foot]{amsaddr} 
\usepackage{textcomp} 
\usepackage{cite}

\usepackage[dvipsnames]{xcolor}

\definecolor{dkblue}{RGB}{1,31,91} 
\definecolor{dpblue}{HTML}{000f3a} 
\usepackage[colorlinks=true, pdfstartview=FitV, linkcolor=dkblue, citecolor=dkblue, urlcolor=dkblue]{hyperref}

\usepackage{orcidlink}

\newcommand{\eqdef }{\overset{\mbox{\tiny{def}}}{=}}
\newcommand{\pv}{p}
\newcommand{\pZ}{|\pv|}
\newcommand{\qv}{q}
\newcommand{\qZ}{|\qv|}

\newcommand{\pM}{p^{\mu}}
\newcommand{\qM}{q^{\mu}}

\newcommand{\pZprime}{\pv^{\prime 0}}
\newcommand{\qZprime}{\pv^{\prime 0}}
\newcommand{\pZzero}{\pv^{0}}
\newcommand{\qZzero}{\qv^{0}}

\newcommand{\rth}{{\mathbb{R}^3}}

\newcommand{\secref}[1]{Section~\ref{#1}}

\def\norm#1{\big\Vert#1\big\Vert}
\def\R{\mathbb R}

\newcommand{\vnrm}[1]{\left\vert#1\right\vert}
\newcommand{\irnp}[2]{\left\langle {#1},{#2} \right\rangle}

\newcommand{\threed}{{\mathbb R}^3}

\newcommand{\qq}{\mathfrak{q}}
\newcommand{\tq}{t^{\qq}}
\newcommand{\tmq}{t^{-\qq}}
\newcommand{\tqq}{t^{2\qq}}
\newcommand{\tTq}{t^{3\qq}}

\newcommand{\tmQq}{t^{-6\qq}}
\newcommand{\tQQq}{t^{6\qq}}
\newcommand{\tmTq}{t^{-3\qq}}

\newcommand{\compK}{\mathcal{K}}
\newcommand{\linL}{\mathcal{L}}

\newcommand{\macro}{{\bf P}}
\newcommand{\micro}{{\bf \{I- P\}}}

\newcommand{\Emptyset}{\text{\o}}

\newcommand{\kerU}{\mathcal{U}}

\newcommand{\smep}{\varepsilon}

\newcommand{\ind}{{\bf 1}}

\newcommand{\highG}{\Theta}

\newcommand{\mrel}{\varrho}

\newcommand{\CN}{\mathrm{nul}}

\newcommand{\Ga}{\Gamma}

\newcommand{\de}{\delta}

\newcommand{\T}{\mathbb{T}}
\newcommand{\Z}{\mathbb{Z}}

\newcommand{\Si}{\Sigma}

\newcommand{\real}[1]{\mathfrak{Re}\left(#1\right)}
\newcommand{\imag}[1]{\mathfrak{Im}\left(#1\right)}

\newcommand{\spr}{\zeta}

\newcommand{\tc}{\vartheta}
\newcommand{\tw}{\mathcal{T}_{\qq}}

\DeclareMathOperator*{\esssup}{ess\,sup}

\newcommand{\wb}{\mathfrak{b}}

\newcommand{\tiq}{\mathcal{S}_{\qq}}

\newcommand{\sourceS}{\mathcal{H}}

\newcommand{\hh}{\mathbbm{W}}

\newcommand{\dr}{d}

\theoremstyle{definition}
\newtheorem{theorem}{Theorem}
\newtheorem{conjecture}[theorem]{Conjecture}
\newtheorem{corollary}[theorem]{Corollary}
\newtheorem{lemma}[theorem]{Lemma}
\newtheorem{proposition}[theorem]{Proposition}
\newtheorem{remark}[theorem]{Remark}

\numberwithin{equation}{section}
\numberwithin{theorem}{section}

\allowdisplaybreaks

\begin{document}

\keywords{relativistic kinetic theory, collisional kinetic theory, cosmology, massless Boltzmann equation, FLRW spacetimes, convergence rates, mild regularity, global solutions.}
\subjclass[2010]{35Q20, 76P05, 35Q75, 82C40, 35B65, 83F05.}

\title[Global stability of the massless Boltzmann equation on FLRW]{Future global stability of Maxwell--J{\"u}ttner equilibria and vacuum for the massless Boltzmann equation on FLRW spacetimes} 

\author[R. M. Strain]{Robert M. Strain$^{\ddagger}$}
\address{$^\ddagger$Department of Mathematics, University of Pennsylvania, Philadelphia, PA 19104, USA. \href{mailto:strain@math.upenn.edu}{strain@math.upenn.edu} (\orcidlink{0000-0002-1107-8570} \href{https://orcid.org/0000-0002-1107-8570}{https://orcid.org/0000-0002-1107-8570})}
\thanks{$^\ddagger$Partially supported by the NSF grant DMS-2408264 of the USA}

\author[M. Taylor]{Martin Taylor$^{\dagger}$}
\author[R. Velozo Ruiz]{Renato Velozo Ruiz$^{\dagger}$}
\address{$^\dagger$Imperial College London,
Department of Mathematics,
South~Kensington~Campus,~London~SW7~2AZ,~United~Kingdom. \href{mailto:martin.taylor@imperial.ac.uk}{martin.taylor@imperial.ac.uk} (\orcidlink{0000-0001-5747-5267} \href{https://orcid.org/0000-0001-5747-5267}{https://orcid.org/0000-0001-5747-5267}), \href{mailto:r.velozoruiz@imperial.ac.uk}{r.velozoruiz@imperial.ac.uk} (\orcidlink{0009-0002-5337-2118} \href{https://orcid.org/0009-0002-5337-2118}{https://orcid.org/0009-0002-5337-2118}).}
\thanks{$^\dagger$Partially supported through Royal Society Tata University Research Fellowship URF\textbackslash R1\textbackslash191409.}

\begin{abstract}
In this work we study the general relativistic massless Boltzmann equation on Friedmann--Lema\^itre--Robertson--Walker spacetimes with spatial topology \(\mathbb{T}^3\) in the linear and decelerated expanding regimes, where the scale factor is \(t^{\mathfrak{q}}\) with \(\mathfrak{q}\in [0,1]\). The massless Boltzmann equation on these backgrounds admits non-stationary Maxwell--J\"uttner equilibria of the form $\exp(- |t^{2\mathfrak{q}}p|)$.   For $0 \leq \mathfrak{q} \leq 1$, we prove future global-in-time existence and uniqueness of small perturbations of these equilibria in the case of hard ball interaction without symmetry assumptions. For $0\leq  \mathfrak{q} < 1/3$, we prove that the perturbation --- measured in a suitable \(L^2_p\) based energy norm --- decays at the superpolynomial time-decay rate of $t^{-3\mathfrak{q}}\exp(-t^{1-3\mathfrak{q}})$, whereas for $1/3< \mathfrak{q} \leq  1$ we obtain the polynomial time-decay rate of $t^{-3\mathfrak{q}}$. In the borderline case \(\mathfrak{q}=1/3\), we show the time-decay of \(t^{-3\mathfrak{q} -c}\) with a uniform constant \(c>0\). Finally, for \(\frac{1}{3}< \mathfrak{q}\leq 1\), we prove future global-in-time existence and uniqueness of small perturbations of the vacuum solution on \(\mathbb{T}^3\).
\end{abstract}

\thispagestyle{empty}

\maketitle
\setcounter{tocdepth}{2}
\tableofcontents

\section{Introduction}

In this article, 
we will study the long time dynamics of the general relativistic massless Boltzmann equation on homogeneous and isotropic cosmological models of the universe described by the Friedmann--Lema\^itre--Robertson--Walker (FLRW) spacetimes
\begin{equation}\notag
\mathcal{M}=I\times \Sigma,\qquad g=-dt\otimes dt+a^2(t)g_{\Sigma},  
\end{equation}
where \(I\subset \R\) is an open interval, \((\Sigma,g_{\Sigma})\) is a constant curvature manifold, and \(a\colon I\to (0,+\infty)\) is called the scale factor.  This article studies the future global non-linear stability of Maxwell--J\"uttner equilibria on decelerated and linearly expanding FLRW spacetimes in which the constant curvature manifold is the 3-dimensional flat torus \(\mathbb{T}^3\). To the best of our knowledge, this is the first future global-in-time stability result for the Boltzmann equation in the presence of spatial dependence and a non-trivial gravitational field.

\subsection{The Boltzmann equation in general relativity}

Consider a \(4\)-dimensional Lorentzian manifold \((\mathcal{M},g)\), and let \(\mathcal{P}\subset T\mathcal{M}\) be the \emph{mass-shell} of spacetime defined by 
\[\mathcal{P}:=\big\{(x,p)\in T\mathcal{M}: g_x(p,p)=-m^2c^2\,\,\, \text{with $p$ future-directed} \big\},\] where \(c>0\) is the speed of light, and \(m\geq 0\) is a fixed parameter describing the mass of the particles under consideration.  The general relativistic Boltzmann equation on \((\mathcal{M},g)\) concerns functions $F\colon \mathcal{P} \to [0,\infty)$.

Consider a local coordinate system \((t=x^0, x^1, x^2 , x^3)\) on \(\mathcal{M}\), and let \((x^{\mu},p^{i})\) be the corresponding conjugate coordinate system on \(\mathcal{P} \subset T\mathcal{M}\), so that \((x^{\mu},p^{i})\) corresponds to the point \(p^{\mu}\partial_{x^{\mu}}|_{x}\in \mathcal{P} \subset T\mathcal{M}\), where $p^0$ is defined in terms of \((x^{\mu},p^{i})\) by the \emph{mass shell relation}
\begin{equation}\label{mass-shellChristoff}
    g_{\mu \nu} p^{\mu} p^{\nu} = -m^2c^2.
\end{equation}
We use here the Einstein summation convention over repeated indices and adopt the convention that Greek indices range over 0, 1, 2, 3, and Latin indices range over 1, 2, 3.  Spacetime points are typically denoted $(t,x) \in \mathcal{M}$, though in some expressions, which will be clear from the context, $x\in \mathcal{M}$ is used for brevity.  By a slight abuse, we will use the notation $p$ both for points in $\mathcal{P}_x$, for fixed $x\in \mathcal{M}$, and for $p=(p^1,p^2,p^3)$ to denote elements of $\mathbb{R}^3$.  Note that the latter parameterizes the former, with $p^0>0$ defined by \eqref{mass-shellChristoff}.  We will also write $p^{\mu}$ to denote the $\mu$ component of  $(p^0, p^1, p^2, p^3)$.

The Boltzmann equation on \((\mathcal{M},g)\) then takes the form 
\begin{equation}\label{genrelatbolzt}
p^0\partial_t F+p^i\partial_{x^i}F-p^{\mu} p^{\nu}\Gamma_{\mu\nu}^i\partial_{p^i}F=C(F,F),    
\end{equation}
where
\[
	\Gamma_{\mu \nu}^{\alpha} = \frac{g^{\alpha \beta}}{2} \big( \partial_{\mu}g_{\beta \nu} + \partial_{\nu}g_{\mu \beta} -  \partial_{\beta}g_{\mu \nu} \big),
\]
denote the Christoffel symbols of the metric \(g\).

The collision operator \(C(F,F)\) in \eqref{genrelatbolzt} is defined by
\begin{equation}\notag 
C(F,G)\eqdef \int_{\mathcal{P}_{x}}d\mu_{\mathcal{P}_{x}}(q)\int_{\mathcal{P}_{x}}d\mu_{\mathcal{P}_{x}}(q')\int_{\mathcal{P}_{x}}d\mu_{\mathcal{P}_{x}}(p')W\big(F(q')G(p')-F(q)G(p)\big),
\end{equation}
in terms of the volume form \(d\mu_{\mathcal{P}_{x}}\) in the fibers of the mass-shell \(\mathcal{P}_{x}\) given by \[d\mu_{\mathcal{P}_{x}}(p)=\frac{\sqrt{-\mathrm{det}\,g}}{-p_0}dp^1dp^2dp^3.\]
We recall that indices are raised and lowered with respect to the metric \(g\). Moreover, the transition rate \(W=W(p,q|p',q')\) is defined by 
\begin{equation}\label{transition_kernel}
W(p,q|p',q')=\frac{1}{2}s\sigma(\varrho, \theta)\delta^{(4)}(p^{\mu}+q^{\mu}-p'^{\mu}-q'^{\mu})(-\mathrm{det}\,g)^{-\frac{1}{2}},
\end{equation}
where \(\sigma(\varrho, \theta)\) is a scattering kernel measuring the interactions between particles. The conservation of energy and momentum due to the elastic collisions among particles is expressed by the fact that \(W\) is supported on \(p\), \(q\), \(p'\), \(q'\), satisfying 
\begin{equation}
p^\mu+q^\mu={\pv}^{\prime\mu}+{\qv}^{\prime\mu},
\label{collisional.inv.widetilde}
\end{equation}
where the notation \(p^{\mu}\), \(q^{\mu}\), and \(p'^{\mu}\), \(q'^{\mu}\), indicates that these are pre and post-collisional momenta, respectively. Here the \emph{relative momentum function} \(\mrel\,\colon \mathcal{P}_x\times \mathcal{P}_x\to [0,+\infty) \) is defined by
\begin{align}\notag 
\mrel(p,q) 
\eqdef
\sqrt{(\pM-\qM) (p_\mu-q_\mu)}
 =\sqrt{-2(\pv^\kappa \qv_\kappa + m^2c^2) },
\end{align}
where \(\varrho(p,q)\) is well-defined since \(p-q\) is either a null or a spacelike vector. Define also the function \(s\,\colon \mathcal{P}_x\times \mathcal{P}_x\to [0,+\infty) \) by 
\begin{align}\notag 
    s(p,q)
\eqdef
-(\pM+\qM) (p_\mu+q_\mu)
=2\left( -\pv^\kappa \qv_\kappa + m^2c^2 \right),
\end{align}
where \(s\geq 0\) since \(p+q\) is  either a null or a timelike vector. Note that \(s=\varrho^2+4m^2c^2.\) We observe that \(\varrho\) and \(s\) are \emph{collision invariants} in the sense that \(\varrho(p,q)=\varrho(p',q')\) for all \(p\), \(q\), \(p'\), \(q'\) satisfying \eqref{collisional.inv.widetilde}, and similarly for \(s\). We define also the \emph{scattering angle function} \(\theta \,\colon \mathcal{P}_x\times \mathcal{P}_x\times \mathcal{P}_x\times \mathcal{P}_x\to \R\) as
\begin{equation}\notag 
\cos(\theta (p,q,p',q'))
\eqdef
(\pM - \qM) ({\pv}_\mu^\prime -{\qv}_\mu^\prime)/\mrel^2.
\end{equation}
By the conservation of energy and momentum \eqref{collisional.inv.widetilde}, it can be shown
that \(\theta\) and \(\cos \theta\) are well-defined.
We now remark that the transition rate \eqref{transition_kernel} is \emph{coordinate invariant}, and so, also is the collision operator.

\subsection{The Einstein--Boltzmann system and Maxwell--J\"uttner FLRW }\label{subsecMJFLRW}

The Boltzmann equation in general relativity \eqref{mass-shellChristoff}--\eqref{genrelatbolzt} is a fundamental kinetic model for the study of gravitational collisional systems. In the theory of general relativity, these systems are modelled by the \emph{Einstein--Boltzmann system}, which consists of equations \eqref{mass-shellChristoff}--\eqref{genrelatbolzt} coupled to the Einstein equations 
\begin{equation}\label{einstein}
    Ric(g)_{\mu\nu}-\frac{1}{2}R(g)g_{\mu\nu}
    =
    \frac{8\pi G}{c^4} T_{\mu\nu}
\end{equation}
where the energy-momentum tensor \(T_{\mu\nu}\) for the Boltzmann equation takes the form 
\begin{equation}\notag
    T_{\mu\nu}[F]=\int_{\mathcal{P}_x}F(t,x,p)p_{\mu}p_{\nu}\frac{\sqrt{-\mathrm{det}\,g}}{-p_0}dp^1dp^2dp^3.
\end{equation}
The Einstein--Boltzmann system in the case of massless particles admits an explicit FLRW solution with spatial topology \(\mathbb{T}^3\), the scale factor \(a(t)=(2t)^{\frac{1}{2}}\), and with \(F(t,x,p)=\exp (-|2tp|)\) being a Maxwell--J\"uttner equilibrium.  More generally, for any $\qq \geq \frac{1}{2}$, the \emph{Einstein--massless Boltzmann--scalar field} system admits an FLRW solution with scale factor $a(t) \sim t^{\qq}$ for large $t$, and $F(t,x,p) = \exp(-\vert a(t)^2 p \vert)$ a Maxwell--J\"uttner equilibrium.

Further, for any \(\qq\in (0,1]\), there are other matter models for which the Einstein equations \eqref{einstein} admit as solutions the FLRW spacetime with spatial topology \(\T^3\), and scale factor \(a(t) = t^{\qq}\). These are the geometric backgrounds we consider in this article. We refer to the appendix of \cite{TaylorVelozo2025} for more information about how these spacetimes arise as solutions to the Einstein equations coupled to suitable matter models.

\subsection{The massless Boltzmann equation on FLRW spacetimes}

Let us now focus on the massless Boltzmann equation \eqref{mass-shellChristoff}--\eqref{genrelatbolzt} on the FLRW spacetimes with \(\mathbb{T}^3\) spatial topology of the form
\begin{equation}
\mathcal{M}=(0,+\infty)_t\times \mathbb{T}^3_x,\label{spacetimemanifold}
\end{equation}
with the Lorentzian metric
\begin{equation}
g_{\qq}=-dt\otimes dt+t^{2\qq}(dx^1\otimes dx^1+dx^2\otimes dx^2+dx^3\otimes dx^3),   \label{metricFLRW} 
\end{equation}
where \(\qq\in (0,1]\), and the speed of light is normalised to unity $c=1$. These geometric backgrounds for \(\qq>0\) model an expanding universe with a big bang singularity at \(t= 0\). For \(\qq\in (0,1)\), these spacetimes are said to undergo decelerated expansion in the sense that the scale factor \(a(t)=t^{\qq}\) satisfies \(\frac{d^2}{dt^2}a(t)<0\).  The case \(\qq=1\) is linearly expanding in the sense that \(\frac{d^2}{dt^2}a(t)=0\). We refer to the spacetime \eqref{spacetimemanifold}--\eqref{metricFLRW} in the non-expanding case \(\qq=0\), as the non-expanding FLRW spacetime. We will also consider this latter spacetime.

In this paper, we investigate the long-time dynamics of the \emph{massless Boltzmann equation on the FLRW spacetime} \((\mathcal{M},g_{\qq})\) which takes the form
\begin{equation}
    p^0\partial_t F+\pv^i \partial_{x^i} F
    - \frac{2 \qq}{t} p^ip^0 \partial_{p^i} F = C(F,F),  \qquad F(t=1,x,p)=F_1(x,p),
\label{rBoltz.eqn0}
\end{equation}
where \(F_1(x,p)\) is a regular initial datum, and \(p^0\eqdef |t^\qq\pv|\) by the mass-shell relation with \(|p|^2=(p^1)^2+(p^2)^2+(p^3)^2\). Here, we have used the explicit form of the non-zero Christoffel symbols \(\Gamma_{\mu\nu}^{\alpha}\) of \(g_{\qq}\) given by
\begin{align*}
        \Gamma^i_{j0}
    =    \Gamma^i_{0j}
    =
    \frac{\qq}{t} \delta_{ij},
    \quad
    \Gamma^0_{ij}
    =
    \frac{\qq}{t} t^{2 \qq} \delta_{ij},
\end{align*}
where \(\delta_{ij}\) is the standard Kronecker delta which is 1 when the indices are equal, and 0 otherwise. Note that for any smooth function $h(\tqq \pv)$, the linear operator in the left hand side of \eqref{rBoltz.eqn0} satisfies
\begin{align} \label{nullRHStp}
       \Big(  \partial_t  +\frac{\pv^i}{|\tq \pv|}\partial_{x^i}
    -\frac{2\qq }{t} p^i \partial_{p^i} \Big) \big(h(\tqq \pv)\big) = 0.
\end{align}

On the FLRW background \eqref{spacetimemanifold}--\eqref{metricFLRW}, the collision operator \(C(F,F)\) is 
\begin{equation}\label{collisionop}
C(F,G) =  t^{9\qq}\int_{\threed} \frac{dq}{\qZzero}
\int_{\threed}\frac{dq^\prime}{\qZprime}
\int_{\threed}\frac{dp^\prime}{\pZprime}
W(p, q \,| p^\prime, q^\prime) [F(p^{\prime})G(q^{\prime})-F(p)G(q)], 
\end{equation}
in terms of the explicit volume form \(d\mu_{\mathcal{P}_{x}}(p)=t^{3\qq}(p^0)^{-1} dp^1dp^2dp^3\) on the fibers of the mass-shell \(\mathcal{P}\) of FLRW. 
Here, the transition rate \(W(p,q|p',q')\) is given by \eqref{transition_kernel}.

\subsubsection{Scattering kernel of the collision operator}

The scattering collision kernel \(\sigma (\varrho,\theta)\) we will consider in the transition rate \(W(p,q|p',q')\) is
\begin{equation}
\sigma (\varrho,\theta)\eqdef \text{constant},
\label{hardSPHERE}
\end{equation}
with value normalised to be 1. This kernel corresponds to the case of short range interactions \cite{MR1402248,MR1958975}. We refer to the kernel \eqref{hardSPHERE} as the \emph{hard-ball cross section} since it is the relativistic analogue of the hard-sphere kernel for the Newtonian Boltzmann equation.\footnote{The Newtonian limit of the Boltzmann equation on \(\mathbb{R}^3_x\) in the massive case with \eqref{hardSPHERE} is the usual hard-sphere Boltzmann equation as shown in \cite{MR2679588}. }

More general commonly studied examples of collision kernels include $\sigma (\mrel,\theta )$ of the form
\begin{equation}\label{kernelgeneral}
    \sigma (\mrel,\theta ) = \mrel^a \sigma_0(\theta), \quad  \quad -4<a\leq  2,
\end{equation}
for a suitable non-negative function \(\sigma_0(\theta)\). We expect our techniques to generalize to these kernels with a suitable cutoff assumption on $\sigma_0(\theta)$. One example that appears frequently in the physics literature is the Israel kernel \cite{MR165921}:
\begin{equation}\label{israelkernel}
\sigma(\mrel,\theta)=\frac{\sigma_0(\theta)}{\mrel(1+\mrel^2)} .    
\end{equation}
We refer to \cite{MR3186493} for more information. Other collision kernels are discussed for example in \cite[Appendix B]{MR2679588}, and \cite{MR635279, MR471665}. See Section \ref{generalkernels} for a further discussion about kernels of the form \eqref{kernelgeneral}.

\subsubsection{Conservation laws and the H-theorem}\label{subsconserved}

Key quantities for the solutions of the massless Boltzmann equation on FLRW spacetime \eqref{rBoltz.eqn0} are their \emph{mass}, \emph{momentum}, and \emph{energy} defined respectively as
\begin{equation}\label{mass_mom_entrop}
t^{6\qq}\int_{\mathbb{T}^3_x}\int_{ \mathbb{R}^3_p} Fdpdx, \qquad t^{6\qq}\int_{\mathbb{T}^3_x}\int_{ \mathbb{R}^3_p} t^{2\qq}p^iFdpdx,\qquad t^{6\qq}\int_{\mathbb{T}^3_x}\int_{ \mathbb{R}^3_p} |t^{2\qq}p|Fdpdx,
\end{equation}
where \(i\in \{1,2,3\}\). Here the weight \(t^{6\qq}\) comes from the volume form of phase space. A basic property of the solutions to the massless Boltzmann equation \eqref{rBoltz.eqn0} is the conservation in time of these quantities,
\begin{align}\notag
\frac{d}{dt}\bigg(t^{6\mathfrak{q}}\int_{\mathbb{T}^3_x}\int_{ \mathbb{R}^3_p}  
\begin{pmatrix}
         1 \\ t^{2\mathfrak{q}}p^i\\ |t^{2\mathfrak{q}}p|
    \end{pmatrix} 
    F(t,x,p) dpdx\bigg)&=0.
\end{align}
See Corollary \ref{corconservationlaws} for the proof of these properties. These global conservation laws arise from the conservation of energy and momentum of collisions among particles.

We also observe that Boltzmann's H-theorem holds in this setting. Let \(H[F]\) be the \emph{entropy} of a distribution function \(F(t,x,p)\) at time \(t\) defined by 
\begin{equation} \label{eq:defofentropy}
    H[F]\eqdef \int_{\mathbb{T}^3_x}\int_{ \mathbb{R}^3_p} - t^{6\qq} F\log F(t,x,p) dpdx.
\end{equation}
where \(-F \log F\) is the standard entropy density. Similarly as in \eqref{mass_mom_entrop}, the term \(t^{6\qq}\) comes from the volume form of phase space. In this context, the H-theorem states that the entropy of the system \(H[F(t)]\) is a non-decreasing function of time since
\(\frac{d}{dt}H[F(t)]\geq 0.\) See Theorem \ref{prophthm} for its proof.

\subsubsection{Maxwell--J\"uttner equilibria for massless Boltzmann on FLRW}

In the massless case when \(m=0\), the Boltzmann equation on FLRW spacetime \((\mathcal{M},g_{\qq})\) has an explicit non-stationary \emph{Maxwell--J\"uttner equilibrium}
 \begin{equation}
J ( \tqq p)
=
\frac{e^{-  \tq \pv^0}}{8\pi}=\frac{e^{- |\tqq\pv|}}{8\pi},
\label{juttner}
\end{equation}
where the normalising constant is chosen so that \(\int_{\R^3} J(p)dp=1\) at \(t=1\). For $\qq>0$ the Maxwell--J\"uttner solution \eqref{juttner} is decaying as $t \to +\infty$ and thus referred to as a ``non-stationary equilibrium''. In the massive case $m>0$, there are no non-trivial global Maxwell--J\"uttner equilibria, see \cite[Section 12.5]{CercignaniKremer2002} and \cite[Theorem 5.1]{Ellis_Maartens_MacCallum_2012}.

More generally, the massless Boltzmann equation on FLRW spacetime \eqref{rBoltz.eqn0} admits an explicit \(5\)-dimensional family of non-stationary Maxwell--J\"uttner equilibria, 
\begin{equation}\label{equilibriageneral}
J_{a,b,c} (t^{{2\qq}}p)
=\frac{1}{8\pi}e^{a+b \cdot  t^{2\qq}p-c|t^{2\qq}p|}  
\end{equation}
where \(a\in \R\), \(b\in \R^3\), and \(c>|b|\) are five constant parameters. Note that \(J_{0,0,1}=J\) is the normalised Maxwell--J\"uttner equilibrium in \eqref{juttner}.

The Maxwell--J\"uttner equilibria \(J_{a,b,c}\) are characterised as
suitable maximisers of the entropy \eqref{eq:defofentropy}, subject to constant mass, momentum, and energy \eqref{mass_mom_entrop}.

For $\qq=0$ it is well known that the Boltzmann equation is dissipative around the stationary solution \eqref{juttner}.\footnote{This fact is well-known in the case of massive particles. It is also true in the massless case as it is shown in Theorem \ref{Torus Existence} in the hard-ball case.} For $\qq>0$, since \eqref{juttner} is decaying in time, this dissipative effect weakens as $\qq$ increases.  This fact is captured in the weakening of the decay rates of the perturbations in Theorem \ref{Torus Existence} as $\qq$ increases.

\subsubsection{The vacuum solution for massless Boltzmann on FLRW}

For all \(\qq\geq 0\), the massless Boltzmann equation on FLRW spacetime \((\mathcal{M},g_{\qq})\) admits the trivial stationary solution \(F\equiv 0\), also known as the \emph{vacuum solution}. This distribution function is referred as vacuum since it models a system with no particles. Theorem \ref{thmvacuum} below concerns the stability of the vacuum for the massless Boltzmann equation when \(\qq>\frac{1}{3}\).

\subsection{The stability problem for the Maxwell--J\"uttner equilibrium}

In this work, we study the problem of convergence to equilibrium for the solutions of the massless Boltzmann equation on FLRW \eqref{rBoltz.eqn0}. We will begin with a standard reformulation of the initial value problem around the Maxwell--J\"uttner equilibrium \eqref{juttner}. In what follows, we will use the notation
\(Q(F,G) = \frac{1}{p^0}C(F,G).\)

\subsubsection{Reformulation of the problem}

Given a solution \(F(t,x,p)\) of \eqref{rBoltz.eqn0}, we define \(f(t,x,p)\) by the relation
\begin{align}\label{standard.perturbation}
    F(t,x,p) \eqdef J(t^{2\qq}p) +\sqrt{J(t^{2\qq}p) } f(t,x,p).
\end{align}
The unknown \(f(t,x,p)\) satisfies the equation
\begin{align}
        \partial_t f+ \frac{\pv^i}{|\tq \pv|}\partial_{x^i} f
    -\frac{2\qq }{t} p^i \partial_{p^i} f
    +
    \linL f
=
\Gamma (f,f),
\label{rBoltz00}
\end{align}
with initial datum 
\begin{equation}\label{idff}
f(t=1,x,\pv)=f_1(x,\pv)\eqdef (F_1(x,\pv)-J(\pv))/\sqrt{J(\pv)}.
\end{equation}
Here, the linearised operator \(\linL\) and the nonlinear operator \(\Gamma\) are given by
\begin{align}
     \linL f
 & \eqdef 
 -J^{-1/2}Q(J ,\sqrt{J} f)- J^{-1/2}Q(\sqrt{J} f,J) ,
 \label{L0}
\end{align}
and
\begin{align}
\Gamma (h_1,h_2)
&\eqdef
 J^{-1/2}Q(\sqrt{J} h_1,\sqrt{J} h_2).
\label{gamma0}
\end{align}

We will suppose in the following that the perturbation \(f(t,x,p)\) satisfies for all \(t\geq 1\) that
\begin{multline}\label{zeroconserv}
           t^{6\qq} \int_{\mathbb{T}^3_x}\int_{ \mathbb{R}^3_p}   J^{\frac{1}{2}}(\tqq \pv)f(t,x,\pv)d\pv dx=    t^{6\qq} \int_{\mathbb{T}^3_x}\int_{ \mathbb{R}^3_p} \tqq p^i  J^{\frac{1}{2}}(\tqq \pv)f(t,x,\pv)d\pv dx
       \\ 
       =   t^{6\qq} \int_{\mathbb{T}^3_x}\int_{ \mathbb{R}^3_p}  |\tqq \pv| J^{\frac{1}{2}}(\tqq \pv)f(t,x,\pv)d\pv dx=0,
\end{multline}
where \(i\in \{1,2,3\}\). Due to the conservation of the mass, momentum, and energy, if these relations are satisfied initially then the equation \eqref{rBoltz00} implies that they will continue to be satisfied for all \(t \geq 1\).

For the stability problem of the more general Maxwell--J\"uttner equilibria \eqref{equilibriageneral}, one can reformulate the initial value problem similarly as in \eqref{rBoltz00}--\eqref{idff}, and assume a condition on the perturbation \(f(t,x,p)\) similar to \eqref{zeroconserv}. The results we will obtain for the problem \eqref{rBoltz00}--\eqref{idff} can be extended to these other equilibria. 

We could also relax the assumption \eqref{zeroconserv} by identifying the Maxwell--J\"uttner equilibrium \(J_{a,b,c}\) one would converge at large-times from the values of the conserved quantities \eqref{mass_mom_entrop} at \(\{t=1\}\). See Appendix \ref{MJparamet}. Then one can suitably modify the assumption \eqref{zeroconserv} by the difference with the corresponding equilibrium \(J_{a,b,c}\).

\subsection{The collision operator on FLRW}\label{sec:FLRWcoll}

By the  conservation of energy and momentum \eqref{collisional.inv.widetilde}, we can parameterize ${\pv}^{\prime\mu}$ and ${\qv}^{\prime\mu}$ by ${\pv}, {\qv} \in \mathbb{R}^3$ and an angular variable $\omega\in \mathbb{S}^2$. Due to the diagonal structure of the FLRW metric, we can follow the arguments in the special relativistic case, for example in \cite{MR2765751}, to reduce the collision integrals in \eqref{collisionop}. We present in this section the center of momentum and Glassey--Strauss reductions of the collision operator. 

\subsubsection{Collisional invariants for massless particles on FLRW}

Let us write the functions \(s(p,q)\) and \(\varrho^2(p,q)\) using the explicit form of the FLRW metric \(g_{\qq}\) in the case of massless particles \(m=0\),
\begin{align}
\mrel(p,q) 
 =\tq \sqrt{2\big(\pZ \qZ - p\cdot q\big)},\qquad s(p,q)
= \tqq 2\big(\pZ\qZ - p\cdot q\big).\label{sDEFINITIONFLRW}
\end{align} 
Note that \(s=\varrho^2\geq 0.\) We further show some elementary bounds for \(s=\varrho^2\).

\begin{lemma} 
For all \((t=x^0,x)\in \mathcal{M}\) and all \((p,q)\in \mathcal{P}_x\times \mathcal{P}_x\), there holds
\begin{equation}\label{basicrho}
      \tq \frac{|p\times q|}{\pZ^{\frac{1}{2}}\qZ^{\frac{1}{2}} }\leq \mrel(p,q) \leq \tq |\pv-\qv|, \qquad \tqq \frac{|p\times q|^2}{\pZ\qZ }\leq s(p,q)  \leq   \tqq |\pv-\qv|^2.
\end{equation}
\end{lemma}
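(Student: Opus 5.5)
Since \(s=\mrel^2\) by \eqref{sDEFINITIONFLRW}, the bounds for \(s\) are the squares of those for \(\mrel\). It therefore suffices to prove
\[
\frac{|p\times q|^2}{\pZ\qZ}\le 2\big(\pZ\qZ-p\cdot q\big)\le |p-q|^2,
\]
and then multiply through by \(\tqq\) and take square roots. The degenerate cases \(p=0\) or \(q=0\) (where the left side is interpreted as \(0\)) are trivial, so we assume \(p,q\neq 0\).

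For the upper bound we expand \(|p-q|^2=\pZ^2+\qZ^2-2p\cdot q\). Then
\[
|p-q|^2-2(\pZ\qZ-p\cdot q)=(\pZ-\qZ)^2\ge 0.
\]

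For the lower bound we use Lagrange's identity \(|p\times q|^2=\pZ^2\qZ^2-(p\cdot q)^2\). Set \(X=\pZ\qZ\) and \(Y=p\cdot q\), so that \(|Y|\le X\) by Cauchy--Schwarz. The inequality becomes
\[
\frac{X^2-Y^2}{X}=\frac{(X-Y)(X+Y)}{X}\le 2(X-Y).
\]
Since \(X-Y\ge 0\), this follows from \(X+Y\le 2X\), i.e. \(Y\le X\).

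There is no real obstacle. The only care needed is the case \(p\cdot q=\pZ\qZ\) (parallel momenta, where both sides vanish) and the degenerate zero-momentum case, both of which are handled above.
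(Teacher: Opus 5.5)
Your proof is correct and is essentially the paper's argument. The upper bound is the inequality \(|p||q|\le \frac12(|p|^2+|q|^2)\), which is your \((|p|-|q|)^2\ge 0\), and the lower bound is Lagrange's identity \(|p\times q|^2=|p|^2|q|^2-(p\cdot q)^2\) together with \(p\cdot q\le |p||q|\). Your multiplied-out form \((X-Y)(X+Y)\le 2X(X-Y)\) is slightly cleaner than the paper's identity \(s=2t^{2\mathfrak{q}}|p\times q|^2/(|p||q|+p\cdot q)\): that identity's denominator vanishes for antiparallel momenta, whereas your form has no such case.
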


\begin{proof}
    The estimates from above follow from  
    \begin{align*}
  |\pv| | \qv| \leq   \frac{1}{2}\left(|\pv|^2+ |\qv|^2\right)
    = \frac{1}{2}|\pv-\qv|^2 + \pv \cdot \qv.
\end{align*}
On the other hand, the estimates from below follow from 
\begin{equation}\label{mrel.upper.est}
    s = 2\tqq \frac{\pZ^2\qZ^2 - (p\cdot q)^2}{\pZ\qZ + p\cdot q}
    = 2\tqq \frac{|p\times q|^2}{\pZ\qZ + p\cdot q},
\end{equation}
and the Cauchy--Schwarz inequality.
\end{proof}

The estimates in \eqref{basicrho} are similar to those used for the massive case \(m=1\) in the non-expanding case \(\qq=0\) in \cite[Lemma 3.1]{MR1211782}.

\subsubsection{Center of momentum reduction of the collision operator}

Let us reduce the collision operator \(Q(F,G)\) in the center-of-momentum representation. For this, we will use the parametrisation of the post-collisional momenta in the \emph{center-of-momentum expression} written, for \(\omega\in \mathbb{S}^2\), as
\begin{align}
p'
&=\frac{p+q}{2}+\frac{\sqrt{2(|\pv| |\qv| - \pv \cdot \qv)}}{{2}}\bigg(\omega+(\gamma-1)(p+q)\frac{(p+q)\cdot \omega}{|p+q|^2}\bigg), \label{pprimedefcm}\\
q'
&=\frac{p+q}{2}-\frac{\sqrt{2(|\pv| |\qv| - \pv \cdot \qv)}}{{2}}\bigg(\omega+(\gamma-1)(p+q)\frac{(p+q)\cdot \omega}{|p+q|^2}\bigg),\label{qprimedefcm}
\end{align}
where
\begin{align}\notag 
    \gamma =\gamma (p,q)=\frac{|p|+|q|}{\sqrt{2 (\pZ \qZ - p\cdot q )}}.
\end{align}
 With this parametrisation of the post-collisional momenta
\(p'\) and \(q'\), we have
\begin{align*}
|p'|&=\frac{|p|+|q|}{2}+\frac{(p+q)\cdot \omega}{2},\qquad
|q'|=\frac{|p|+|q|}{2}-\frac{(p+q)\cdot \omega}{2}.
\end{align*}
Then, the conservation of energy and momentum \eqref{collisional.inv.widetilde} are satisfied in the form
 \begin{equation}
 |\tq p|+|\tq q|  =|\tq p'|+|\tq q'|,\qquad  
  p+ q  =p^{\prime}+q^{\prime}.
 \label{collisionalCONSERVATION}
 \end{equation}
Furthermore the scattering angle $\theta$ satisfies
\begin{align}\notag 
\cos\theta
= &~
  \omega \cdot k(p,q),
\end{align}
where 
\begin{align*}
     k^i(p,q)
= &~ 
      \frac{1 }{|\pv|+|\qv|+{\sqrt{2  \left(|\pv| |\qv| - \pv \cdot \qv \right)}}}\left(\frac{2\left(|\qv|\pv^{i} -  |\pv|\qv^{i}\right)}{{\sqrt{2  \left(|\pv| |\qv| - \pv \cdot \qv \right)}}}     
     + \left(\pv^{i} - \qv^{i} \right) \right).
\end{align*}
It can be calculated that $|k| =1$.

Then, following the proof of \cite[Theorem 2 and Corollary 3]{MR2765751}, the collision operator \(Q(F,G)\) can be reduced in its \emph{center-of-momentum representation} to 
\begin{equation}
Q(F,G)
=
\tTq\int_{\mathbb{R}^3}  {dq}  ~
\int_{\mathbb{S}^{2}} d\omega 
~ v_{\Emptyset}  \sigma (\mrel,\theta ) ~ \big[F(p^{\prime })G(q^{\prime})-F(p)G(q)\big].
\label{collisionCMc}
\end{equation}
where \(v_{\Emptyset}=v_{\Emptyset}(p,q)\) is the Møller velocity given by
\begin{equation}\label{moller}
v_{\Emptyset}(p,q)\eqdef \frac{\mrel(p,q)s(p,q)^{\frac{1}{2}}}{4p^0q^0}
=\frac{1}{2} \frac{|\pv| |\qv| - \pv \cdot \qv}{|\pv| |\qv|}
=
\frac{1}{4}\left| \frac{\pv}{\pZ} -\frac{\qv}{\qZ}\right|^2. 
\end{equation}
We used here that for massless particles in FLRW, we have \( \pv^{0}=\tq |\pv|\), \(\qv^{0}=\tq |\qv|\), and \(\mrel^2 
 =s=2 \tqq \left(|\pv| |\qv| - \pv \cdot \qv \right).\)

\subsubsection{Glassey--Strauss reduction of the collision operator}

On the other hand, we can reduce the collision operator \(Q(F,G)\) in the Glassey--Strauss representation. For this, we will use the parametrisation of the post-collisional momenta in the Glassey--Strauss expression, for \(\omega\in \mathbb{S}^2\), as
\begin{equation}
\begin{split}
\pv^{\prime } &= 
\pv^{} 
+
\frac{2\omega \cdot (|\pv|  \qv- |\qv|  \pv)(|\pv| + |\qv|) }{(|\pv| + |\qv|)^{2}-(\omega \cdot [ \pv + \qv])^{2}}\omega^{},
 \\
\qv^{\prime } &= 
\qv^{} 
-
\frac{2\omega \cdot (|\pv|  \qv- |\qv|  \pv)(|\pv| + |\qv|) }{(|\pv| + |\qv|)^{2}-(\omega \cdot [ \pv + \qv])^{2}} \omega^{}.
\end{split}
\label{postCOLLvelGS.FLRW}
\end{equation}

Then, following the proof of \cite[Theorem 4]{MR2765751}, the collision operator \(Q(F,G)\) can be reduced in the \emph{Glassey--Strauss representation} as 
\begin{equation}
Q(F,G)
=
\tTq\int_{\mathbb{R}^3}  {dq}  ~
\int_{\mathbb{S}^{2}} d\omega 
~ v_{\Emptyset}  \sigma (\mrel,\theta ) \mathcal{B}({\pv},{\qv},\omega ) \big[F(p^{\prime })G(q^{\prime})-F(p)G(q)\big],
\label{collision.GS.flrw}
\end{equation}
where 
\begin{equation}\notag
\mathcal{B}({\pv},{\qv}, {\omega} )
=
\frac{4 (|{\pv}| + |{\qv}|)^2\left|{\omega} \cdot (|{\pv}|  {\qv}- |{\qv}|  {\pv})\right| }{\left[(|{\pv}| + |{\qv}|)^{2}-({\omega} \cdot ( {\pv} + {\qv}))^{2}\right]^{2}}.
\end{equation}
Recall the definition of the Møller velocity $v_{\Emptyset}$ in \eqref{moller}.

By using the parametrisation \eqref{postCOLLvelGS.FLRW} of the post-collisional momenta, following directly the argument in \cite{MR1105532}, one can calculate the following Jacobian change of variables.

\begin{lemma}\label{Lem change of variables}
Let $(p',q')$ be the post-collisional
momenta parametrised as in \eqref{postCOLLvelGS.FLRW}. For the collision map
$(p,q)\mapsto(p',q')$, the Jacobian change of variables is
\begin{align}\label{Jacobian.COV}
    \frac{\partial(p',q')}{\partial(p,q)}=-\frac{p^{0\prime}q^{0\prime}}{p^0q^0}
    =-\frac{|\pv'| |\qv'|}{|\pv| |\qv|}.
\end{align}
\end{lemma}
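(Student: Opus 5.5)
The Jacobian is computed for fixed $\omega\in\mathbb{S}^2$, with the map $(p,q)\mapsto(p',q')$ given by \eqref{postCOLLvelGS.FLRW}. Write $p'=p+a\,\omega$ and $q'=q-a\,\omega$, where
\[
a(p,q)=\frac{2\,\omega\cdot(|p|q-|q|p)\,(|p|+|q|)}{(|p|+|q|)^2-(\omega\cdot(p+q))^2}.
\]
The approach follows Glassey--Strauss \cite{MR1105532}. First change variables linearly to $(u,v)=(p+q,\,p-q)$; this has constant Jacobian, which cancels in the final answer. In these variables $u'=u$ and $v'=v+2a\,\omega$. The $6\times 6$ derivative matrix is therefore block lower triangular, with identity on the $u$-block, and the determinant reduces to
\[
\det\big(I+2\,\omega\otimes\nabla_v a\big)=1+2\,\omega\cdot\nabla_v a,
\]
where $\nabla_v a$ is taken with $u$ held fixed. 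This step is just the matrix determinant lemma applied to a rank-one perturbation. The whole problem thus reduces to the scalar identity
\[
1+\omega\cdot\nabla_p a-\omega\cdot\nabla_q a=-\frac{|p'||q'|}{|p||q|}.
\]

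To prove this identity, set $N=\omega\cdot(|p|q-|q|p)$, $E=|p|+|q|$ and $D=E^2-(\omega\cdot u)^2$, so that $a=2NE/D$. Since $\omega\cdot u$ depends only on $u$, which is held fixed, only $N$ and $E$ vary. Write $\hat p=p/|p|$ and $\hat q=q/|q|$. Then the relevant directional derivative of $E$ is $\omega\cdot\hat p-\omega\cdot\hat q$. The corresponding derivative of $N$ works out to
\[
(\omega\cdot\hat p)(\omega\cdot q)-(\omega\cdot p)(\omega\cdot\hat q)-|q|-|p|,
\]
with each term checked by direct differentiation. Next compute $|p'|$ and $|q'|$ from $|p'|^2=|p|^2+2a\,\omega\cdot p+a^2$. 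By construction, $a$ is the nonzero root of the energy conservation relation $|p'|+|q'|=E$. Solving this gives
\[
|p'|=|p|+\frac{a\,(E+\omega\cdot u)}{2E}-\frac{\text{(something)}}{2E},
\]
or, more cleanly, $|p'|=\tfrac12\big(E+\omega\cdot(p'-q')\big)\cdot(\dots)$. I would instead use the identity $|p'||q'|=\tfrac14\big(E^2-|p'-q'|^2\big)+\tfrac12(\dots)$, which follows from the masslessness relations. The goal is a closed expression for $|p'||q'|$ as a rational function of $|p|,|q|,\omega\cdot p,\omega\cdot q$ with denominator $D$.

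The main obstacle is the final algebraic verification that these two rational expressions coincide. My plan for it is to normalize coordinates: set $\alpha=\omega\cdot\hat p$, $\beta=\omega\cdot\hat q$, $x=|p|$ and $y=|q|$. Both sides are then rational functions of $(x,y,\alpha,\beta)$ with denominator $D=(x+y)^2-(x\alpha+y\beta)^2$. In particular, $N=xy(\beta-\alpha)$. Then compare numerators by direct expansion. As a consistency check, setting $\omega\perp p,q$ gives $a=0$, so the left side must equal $-1$ up to the sign convention. That check confirms the minus sign, which arises because the map is a reflection-type involution. Finally, use $|p'|+|q'|=E$ together with $|p'|-|q'|=\omega\cdot u\cdot(\dots)$ to simplify $|p'||q'|$ before matching.
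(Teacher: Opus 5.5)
Your reduction is sound and is the same direct-computation route as the Glassey--Strauss argument the paper cites; the paper itself gives no further detail. In $(u,v)=(p+q,p-q)$ the map becomes $u'=u$, $v'=v+2a\omega$, the determinant is $1+\omega\cdot(\nabla_p-\nabla_q)a$, and the case $a=0$ correctly predicts the minus sign.

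The gap is that everything after this reduction, which is the actual content of the lemma, is not carried out. The quantities $|p'|$ and $|q'|$ are left with placeholders, and ``compare numerators by direct expansion'' is asserted rather than done. In addition, the one derivative you do write down is wrong. Since $\omega\cdot\nabla_q N=|p|-(\omega\cdot p)(\omega\cdot\hat q)$, the middle term enters with a plus sign, so in your variables $\omega\cdot(\nabla_p-\nabla_q)N=(x+y)(\alpha\beta-1)$. Your expression differs from this by $-2x\alpha\beta$, which makes the identity false in general. Your test case $\omega\perp p,q$ cannot detect the error, because $\alpha=\beta=0$ there.

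To close the argument, set $s=\omega\cdot(p+q)=x\alpha+y\beta$. Then $|p'|^2-|q'|^2=x^2-y^2+2as$, and combining this with $|p'|+|q'|=E$ gives
\[
|p'|=x+\frac{as}{E},\qquad |q'|=y-\frac{as}{E}.
\]
You can confirm directly that the given $a$ is exactly the nonzero root of $|p+a\omega|^2=(x+as/E)^2$, and likewise for $q$. Moreover $x+as/E=x\,(E^2-2\alpha Es+s^2)/D\geq 0$.

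Now write $\partial=\omega\cdot(\nabla_p-\nabla_q)$, so that $\partial s=0$. Use $a/E=2N/D$, $N=xy(\beta-\alpha)$, $\partial E=\alpha-\beta$, $\partial D=2E(\alpha-\beta)$ and the corrected $\partial N=E(\alpha\beta-1)$. One then finds
\[
1+\partial a+\frac{|p'||q'|}{xy}=\frac{2}{D}\Big[E^2\alpha\beta-s^2+xy(\alpha-\beta)^2-(\alpha-\beta)(y-x)s\Big].
\]
The bracket vanishes identically after substituting $E=x+y$ and $s=x\alpha+y\beta$: the coefficients of $\alpha^2$, $\beta^2$ and $\alpha\beta$ each cancel. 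This computation is what your proposal needs in place of the placeholders.
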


We use the parametrisation \eqref{postCOLLvelGS.FLRW} to perform the standard pre-post collision change of variables $(p,q)\mapsto(p',q')$ using the Jacobian \eqref{Jacobian.COV}.

\subsubsection{Decomposition of the linearised operator}

In the following, we will study the solutions \(F(t,x,p)\) of the massless Boltzmann equation on the FLRW spacetime 
\begin{equation}\label{massless.rBoltz}
    \partial_t F+ \frac{\pv^i}{|\tq \pv|}\partial_{x^i} F
    -\frac{2\qq }{t} p^i \partial_{p^i} F = Q(F,F), \qquad 
F(t=1, x, p) = F_1(x,p),
\end{equation}
in the case of hard-ball cross section \(\sigma=1\). Typically the center of momentum reduction of the collision operator will be used so that
\begin{equation}\label{collision.CM.massless}
Q(F,G)
=
\frac{t^{3\qq}}{2}
\int_{\mathbb{R}^3}  dq ~
\int_{\mathbb{S}^{2}} d\omega 
~  \left(1 - \frac{p\cdot q}{\pZ\qZ}\right) ~ \big(F(p^{\prime })G(q^{\prime})-F(p)G(q)\big),
\end{equation}
where \(p'=p'(p,q,\omega)\) and \(q'=q'(p,q,\omega)\) are defined by \eqref{pprimedefcm}--\eqref{qprimedefcm}.

We recall now the reformulation for the problem \eqref{massless.rBoltz}--\eqref{collision.CM.massless} around the Maxwell--J\"uttner equilibrium as in \eqref{rBoltz00}. We now decompose the linearised operator $\linL f$ defined in \eqref{L0} as
\begin{align}
     \linL h
 &  =
 \nu(t,p) h-\compK (h),\label{decomlinop}
\end{align}
in terms of the multiplication operator
\begin{equation}
    \nu(t,p) \eqdef
t^{3\qq}\int_{\mathbb{R}^3} ~  dq 
\int_{\mathbb{S}^{2}} ~ d\omega
~ v_{\Emptyset} ~  \sigma(\mrel,\theta )~ J(\tqq\qv),
\label{nuDEF}
\end{equation}
and the momentum integral operator \(\mathcal{K}\) defined by 
\begin{align}
\compK h 
&\eqdef
t^{3\qq}\int_{\mathbb{R}^3} ~  dq 
\int_{\mathbb{S}^{2}} ~ d\omega
~ v_{\Emptyset} ~ \sigma(\mrel,\theta )~ 
\sqrt{J(\tqq \qv)}\left\{ \sqrt{J(\tqq q^{\prime})} ~ h(p^{\prime })+\sqrt{J(\tqq p^{\prime})} ~ h( q^{\prime})\right\}
\notag
\\
&\qquad \qquad -t^{3\qq}\int_{\mathbb{R}^3} ~  dq 
\int_{\mathbb{S}^{2}} ~ d\omega
~ v_{\Emptyset} ~\sigma(\mrel,\theta )  
~ \sqrt{J(\tqq\qv) J(\tqq\pv)} ~ h(q)
\label{compactK}
\\
&=
\compK_2 h - \compK_1 h,
\notag
\end{align}
where the last line defines the integral operators \(\compK_2 \) and \(\compK_1 \). We now compute the function \(\nu(t,p)\), also known as collision frequency, for the hard-ball kernel.

\begin{proposition}\label{LEM:nuASYMP}
    With the hard-ball cross section \eqref{hardSPHERE}, the collision frequency is \(\nu(t,p)=\nu_0 t^{-3\qq}\), where \(\nu_0>0\) is a fixed constant.  
\end{proposition}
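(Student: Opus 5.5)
The plan is to compute $\nu(t,p)$ directly from \eqref{nuDEF} with $\sigma=1$ and the massless Møller velocity \eqref{moller}. The $\omega$-integral is trivial, since the integrand does not depend on $\omega$, and it contributes the factor $|\mathbb{S}^2|=4\pi$. Using $v_{\Emptyset}=\frac12\big(1-\frac{p\cdot q}{|p||q|}\big)$, we obtain
\begin{equation*}
\nu(t,p)=2\pi\, t^{3\qq}\int_{\mathbb{R}^3}\Big(1-\frac{p\cdot q}{|p||q|}\Big)J(\tqq q)\,dq .
\end{equation*}

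The key observation is that $J(\tqq q)=\frac{1}{8\pi}e^{-\tqq|q|}$ is radial in $q$. The angular average of $\frac{p\cdot q}{|p||q|}=\cos\angle(p,q)$ over spheres $|q|=r$ therefore vanishes, by the odd symmetry $q\mapsto -q$ (equivalently, reflection across the plane orthogonal to $p$). Hence the cross term drops out and
\begin{equation*}
\nu(t,p)=2\pi\, t^{3\qq}\int_{\mathbb{R}^3}J(\tqq q)\,dq ,
\end{equation*}
which is independent of $p$. This holds for every $p\neq 0$, and $p=0$ is a null set.

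Finally, change variables $u=\tqq q$, so that $dq=t^{-6\qq}du$. This gives
\begin{equation*}
\nu(t,p)=2\pi\, t^{3\qq}t^{-6\qq}\int_{\mathbb{R}^3}J(u)\,du=2\pi\, t^{-3\qq},
\end{equation*}
using the normalisation $\int J=1$. Thus $\nu_0=2\pi>0$, up to the normalisation conventions of $\sigma$ and $d\omega$.

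The only point requiring care, rather than a true obstacle, is justifying the vanishing of the cross term. This requires absolute integrability, which holds since $|p\cdot q|/(|p||q|)\le 1$ and $J$ is integrable. It also requires checking that the paper's normalisation of the reduced operator \eqref{collision.CM.massless}, with its prefactor $\frac12$, agrees with \eqref{nuDEF}. Any discrepancy would only change the value of $\nu_0$, not the form $\nu_0t^{-3\qq}$.
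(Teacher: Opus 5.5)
Your proposal is correct and follows essentially the same route as the paper: substitute $\sigma=1$ and the explicit velocity factor \eqref{moller} into \eqref{nuDEF}, then rescale $q\mapsto t^{-2\qq}q$ to extract $t^{3\qq}t^{-6\qq}=t^{-3\qq}$. The only difference is that the paper gets $p$-independence of the remaining integral from rotational invariance and never evaluates it. Your odd-symmetry argument removes the cross term and gives the explicit value $\nu_0=2\pi$, which agrees with the paper's prefactor $\frac{1}{16\pi}$.
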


\begin{proof}
By the explicit form \eqref{moller} of the Møller velocity, we compute in the hard-ball cross-section case that
\begin{align}\label{nuDEFnext}
    \nu(t,p) =&~
\frac{t^{3\qq}}{16\pi}\int_{\mathbb{R}^3} ~  dq 
\int_{\mathbb{S}^{2}} ~ d\omega
~  \left(1 - \frac{p\cdot q}{\pZ\qZ}\right) ~
    e^{-{| \tqq q|}}
    \\ \notag
    =&~
\frac{t^{-3\qq}}{16\pi}\int_{\mathbb{R}^3} ~  dq 
\int_{\mathbb{S}^{2}} ~ d\omega
~  \left(1 - \frac{p\cdot q}{\pZ\qZ}\right) ~
    e^{-{|  q|}} = \nu_0t^{-3\qq},
\end{align}
where $\nu_0$ is constant since the value of the integral does not depend upon $\frac{p}{\pZ}$.
\end{proof}

\subsection{Notations and function spaces}\label{sec.notation}

Let us set the notations and function spaces we will use in the paper.

\subsubsection{Notations}

Through the paper, we will write \(A \lesssim B\) if there is a uniform constant \(C > 0\) such
that \(A \leq C B\). If \(A \lesssim B\) and \(B \lesssim A\), then we write \(A \approx  B\). We denote by \(B_r = B(0,r)\) the standard ball centered at zero and radius \(r > 0\). We let $\real{z}$ and $\imag{z}$ denote the real and complex parts of a complex number $z$, respectively. In the Fourier variables $k\in \Z^3$, we also define the Japanese bracket
\begin{align}\notag 
   \langle k \rangle \eqdef \left(1+|k|^2 \right)^{\frac12}. 
\end{align}
The momentum of a particle is denoted by a 4-vector \(p^\mu\). We raise and lower indices of the momenta with the FLRW metric, i.e. \(p_{\mu} = g_{\mu\nu} p^{\nu}\). The components of the inverse of the metric \(g\) are denoted with indices upstairs as \(g^{\mu\nu}\). Throughout the paper, Greek indices range over 0, 1, 2, 3, and lower case Latin indices range over 1, 2, 3.

We parameterize the momentum of a massless particle in FLRW spacetime \((\mathcal{M},g_{\qq})\) by \(p=(p^1,p^2,p^3)\in \R^3\) with particle energy $\pZzero = \tq |\pv|$ where \(|p|^2= p \cdot p\). We write here and onwards the Euclidean dot product as \(p \cdot q=\sum_{i=1}^3p^iq^i\). Abusively, we write \(p^{\mu}\) to denote both the 4-momentum \(p^{\mu}=(p^0, p^i)\), and the component \(p^{\mu}\).

\subsubsection{Weights and Lebesgue space norms}

For $m\ge 0$, we define the weight function
\begin{align} \label{weight.fcn.def}
     w_{m}^2(\tqq \pv) \eqdef 1+|\tqq \pv|^m.
\end{align}

For \(r\), \(s\geq 1\), we define the norms 
\begin{align*}
    \norm{h}_{L^r_{\pv}}
=\norm{h}_{L^r(\threed_{\pv})}
\eqdef &~
\left(
\int_{\R^3_p}
\left| h(\pv)\right|^r
d\pv
\right)^{1/r},
\\
    \norm{h}_{L^s_{T}}
=\norm{h}_{L^s([1,T])}
\eqdef &~
\left(
\int_{1}^T
\left| h(t)\right|^s
dt
\right)^{1/s},
\\
\|h\|_{L^s_k}
\eqdef &~
\left( \int_{\Z^3_{k}}
\left| \hat{h}(k)\right|^s
d\Sigma(k)
\right)^{1/s},
\end{align*}
where \(\hat{h}(k)\) is the Fourier transform of \(h(x)\), that is, \(\hat{h}(k)=\int_{\mathbb{T}^3} e^{-ik\cdot x}h(x)dx\) with \(k\in \Z^3\). We used above the notation $d\Sigma(k)$ for the discrete measure in $\Z^3$ such that
$$\int_{\Z^3}h(k)\,d\Sigma(k)=\sum_{k\in \Z^3}h(k).$$ In the following, we will use \(\langle \cdot ,\cdot \rangle\) to denote the standard \(L^2(\mathbb{R}^3_p)\) inner product.

\subsubsection{Weighted and mixed Lebesgue spaces}

For $1\le u$, $r< +\infty$, we consider the $L^1_kL^u_TL^r_p$ mixed Lebesgue space norms 
\begin{equation*}
\|f\|_{L^1_kL^u_TL^r_{\pv}}
\eqdef
 \int_{\Z^3_{k}}
\bigg(
\int_{1}^{T}
\bigg(
\int_{\R^3}
\left| \hat{f}(t,k,\pv)\right|^r
d\pv
\bigg)^{\frac{u}{r}} dt
\bigg)^{\frac{1}{u}}
d\Sigma(k),
\end{equation*}
the weighted mixed norms  
\begin{equation*}
\|w_{m}f\|_{L^1_kL^u_TL^r_{\pv}}
\eqdef
 \int_{\Z^3_{k}}
\bigg(
\int_{1}^{T}
\bigg(
\int_{\R^3} w_{m}( \tqq \pv)^r
\left| \hat{f}(t,k,\pv)\right|^r
d\pv
\bigg)^{\frac{u}{r}} dt
\bigg)^{\frac{1}{u}}
d\Sigma(k),
\end{equation*}
and, for \(l\geq 0\), the weighted norms
\begin{equation*}
\|w_{m}f\|_{L^1_{k,l}L^u_TL^r_{\pv}}
\eqdef
 \int_{\Z^3_{k}} \langle k \rangle^l
\bigg(
\int_{1}^{T}
\bigg(
\int_{\R^3} w_{m}( \tqq \pv)^r
\left| \hat{f}(t,k,\pv)\right|^r
d\pv
\bigg)^{\frac{u}{r}} dt
\bigg)^{\frac{1}{u}}
d\Sigma(k).
\end{equation*}
We use the standard modification when $u$ or $r$ are equal to $+\infty$. Above, \(\hat{f}(t,k,p)\) denotes the Fourier transform of $f(t,x,\pv)$ with respect to $x\in \T^3$.

\subsubsection{Chemin--Lerner function space}

In the main results we consider the low regularity function space \(L^1_kL^{\infty}_TL^2_p\).  For \(1\leq T<+\infty\), define the space of functions \(L^1_kL^{\infty}_TL^2_p\) with the norm 
\begin{equation}\notag 
\|  f\|_{L^1_kL^{\infty}_TL^2_p} \eqdef\int_{\Z^3}\esssup_{1\leq t \leq T}  \|\hat{f}(t,k,\cdot)\|_{L^2_{\pv}} \,d\Sigma (k) .
\end{equation}
For \(\qq\in [0,1]\), our analysis will rely on  estimates of the norm \(\|\tTq f\|_{L^1_kL^{\infty}_TL^2_p} \) for solutions of \eqref{rBoltz00}--\eqref{idff}.  The function space \(L^1_kL^{\infty}_TL^2_p\) was first considered for the study of collisional kinetic equations in \cite{1904.12086}. In the rest of the paper when we write $\sup_{1\leq t \leq T}$ we are referring to  $\esssup_{1\leq t \leq T}$.

\subsubsection{Initial data norms}

The initial data norms we will use are
\begin{equation*}
\|f\|_{L^1_{k}L^r_{\pv}}(1)= \|f_1\|_{L^1_{k}L^r_{\pv}}
\eqdef
 \int_{\Z^3_{k}} 
\bigg(
\int_{\R^3} 
\left| \hat{f}_1(k,\pv)\right|^r
d\pv
\bigg)^{\frac{1}{r}} 
d\Sigma(k),
\end{equation*}
and
\begin{equation*}
\|w_{m}f\|_{L^1_{k,l}L^r_{\pv}}(1)
\eqdef
 \int_{\Z^3_{k}} \langle k \rangle^l
\bigg(
\int_{\R^3} w_{m}( \pv)^r
\left| \hat{f}_1(k,\pv)\right|^r
d\pv
\bigg)^{\frac{1}{r}} 
d\Sigma(k).
\end{equation*}
When $l=0$ the weighted initial data norm satisfies $\|w_{m}f\|_{L^1_{k,0}L^r_{\pv}}(1) =\|w_{m}f\|_{L^1_{k}L^r_{\pv}}(1)$.

\subsection{The main results}

We now present the main results of the article. 

\subsubsection{ Global stability of the Maxwell--J\"uttner equilibrium}

We now state our main result concerning the future global stability of the Maxwell--J\"uttner equilibrium \eqref{juttner}.

\begin{theorem}[Future global stability of the Maxwell--J\"uttner equilibrium]\label{Torus Existence}
Let \(\qq\in [0,1]\). The Maxwell--J\"uttner equilibrium is nonlinearly stable as a solution to the massless Boltzmann equation on FLRW spacetime \eqref{rBoltz.eqn0} in the case of hard-ball interaction \eqref{hardSPHERE}.

More precisely, let \(l\geq 0\) and let \(f_1(x,p)\) be an initial datum satisfying \eqref{zeroconserv}. There is \(\epsilon_0>0\) such that if \(F_1(x,p)=J(p)+J^{\frac{1}{2}}(p)f_1\geq 0\) and \(\|\langle k \rangle^lf_1\|_{L^{1}_{k}L^2_p}\leq \epsilon_0\), then there is a unique global mild solution \(f(t,x,p)\) to \eqref{rBoltz00}--\eqref{idff} for the massless Boltzmann equation with hard-ball interaction \eqref{hardSPHERE} on FLRW spacetime \((\mathcal{M},g_{\qq})\) such that \(F(t,x,p)=J(t^{2\qq}p)+J^{\frac{1}{2}}(t^{2\qq}p)f(t,x,p)\geq 0\) and 
\begin{align}
\forall \,T\geq 1,\qquad    {\Vert{\tTq \langle k \rangle^lf}\Vert_{L^1_kL^\infty_TL^2_{\pv}}}
+
{\norm{t^{3\qq/2}\langle k \rangle^l f}_{L^1_kL^2_TL^2_{\pv}}}
\lesssim 
\norm{ \langle k \rangle^l{f}_1}_{L^1_k L^2_{\pv}}.\label{globalestim}
\end{align}

Let also \(m\geq 0\), and $0< \tc < 1$. There is \(\epsilon_1>0\) such that if \(\|w_{m}f_1\|_{L^{1}_{k}L^2_p}\leq \epsilon_1\), then the unique global mild solution \(f(t,x,p)\) satisfies the uniform decay estimate 
\begin{align}
\forall \,t\geq 1,\qquad    {\Vert{ |t^{2\qq}p|^{m/2}f(t)}\Vert_{L^1_kL^2_{\pv}}}
\lesssim t^{-3\qq}(\tw(t))^{-1}\norm{ w_{m}{f}_1}_{L^1_k L^2_{\pv}},\label{decayestimate} 
\end{align}
where 
\begin{equation}\label{weight.decay}
    \tw(t)\eqdef
    \begin{cases}
    \exp(\lambda t),\quad &\text{if} \,\,\qq=0,\\
        \exp (\tc\frac{\nu_0}{1-3\qq}t^{1-3\qq}),\quad &\text{if} \,\,0<\qq<\frac{1}{3},\\
        t^{\tc\nu_0},\qquad &\text{if} \,\,\qq=\frac{1}{3},\\
        1,\qquad &\text{if} \,\,\frac{1}{3}<\qq\leq 1,\\
    \end{cases}
\end{equation}
with \(\lambda>0\) a small constant, and \(\nu_0>0\) the uniform constant in Proposition \ref{LEM:nuASYMP}.
\end{theorem}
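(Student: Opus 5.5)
The proof will follow the $L^1_kL^\infty_TL^2_p$ (Chemin--Lerner / Duan et al.) framework. The key step is a change of variables that removes the time-dependent transport in momentum. By \eqref{nullRHStp}, the characteristics of $-\frac{2\qq}{t}p^i\partial_{p^i}$ preserve $\tqq p$, so we set $v=\tqq p$ and $g(t,x,v)=f(t,x,t^{-2\qq}v)$. With $\frac{dx}{dt}=\frac{p}{|\tq p|}=t^{-\qq}\frac{v}{|v|}$, this turns \eqref{rBoltz00} into
\begin{equation*}
\partial_t g+t^{-\qq}\hat v\cdot\nabla_x g+t^{-3\qq}L g=t^{-3\qq}\Gamma_0(g,g),
\end{equation*}
where $L=\nu_0-K$ and $\Gamma_0$ are the time-independent ($t=1$) operators. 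The factor $t^{-3\qq}$ comes from the scaling computation in Proposition~\ref{LEM:nuASYMP}, applied also to $\compK$ and $\Gamma$. Since $\|f(t)\|_{L^2_p}=t^{-3\qq}\|g(t)\|_{L^2_v}$, this explains the weight $\tTq$ in \eqref{globalestim}.

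Next, take the Fourier transform in $x$ and rescale time by $s=\int_1^t\tau^{-3\qq}d\tau$. The equation becomes $\partial_s\hat g+i t^{2\qq}(\hat v\cdot k)\hat g+L\hat g=\hat\Gamma_0$, and for $\qq\le1$ the variable $s$ tends to infinity as $t$ does.

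For each $k$ we do an $L^2_v$ energy estimate. The transport term is skew and drops out, and the coercivity of $L$ on $(\ker L)^\perp$ (hard-ball, massless) gives control of the microscopic part. For $k=0$, the conservation laws \eqref{zeroconserv} kill the macroscopic part. For $k\ne0$, we recover the macroscopic coefficients $(a,b,c)$ by the standard micro--macro method: local conservation laws plus the moment equations produce an interactive functional $\mathcal{E}_{int}(k)$ with $\frac{d}{ds}\mathrm{Re}\,\mathcal{E}_{int}\lesssim-\frac{|k|^2}{\langle k\rangle^2}\cdot(\text{macro})+\text{micro}$.

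\emph{I expect this step to be the main obstacle.} The transport coefficient $t^{2\qq}$ is time dependent, so differentiating the interactive functional creates extra terms, and the macroscopic dissipation has the weight $t^{2\qq}|k|$ against a unit-size collision term. I would rescale the interactive functional by $t^{-2\qq}$, or weight it by $\min(1,t^{2\qq}|k|)$, so that its time derivative stays absorbable. This should give a Lyapunov inequality $\frac{d}{ds}\mathcal{E}_k+\lambda\mathcal{D}_k\le|\langle\hat\Gamma_0,\cdot\rangle|$ with $\mathcal{D}_k\gtrsim\mathcal{E}_k$.

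Then we take $\sup_s$ and sum over $k$. The nonlinear term is bounded by the hard-ball estimate
\begin{equation*}
\|\hat\Gamma_0(\hat g,\hat g)(k)\|\lesssim\int\|\hat g(k-l)\|\,\|\hat g(l)\|\,d\Sigma(l),
\end{equation*}
with Young's inequality on $\Z^3$ closing in $L^1_k L^\infty L^2\cap L^1_kL^2L^2$. Converting $ds=t^{-3\qq}dt$ gives the $t^{3\qq/2}$ weight in the $L^2_T$ term. Local existence, the a priori estimate and continuity then give the global solution. Uniqueness follows from the same difference estimate, and $F\ge0$ from the usual iteration with the gain/loss splitting.

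For \eqref{decayestimate} we multiply by $e^{\tc\lambda s}$. Since $L$ is coercive with constant roughly $\nu_0$ on large velocities, the exponential rate in $s$ can be pushed up to $\tc\nu_0$ once the velocity weights $w_m$ are handled; I would use a Caflisch-type splitting $g=g_1+g_2$ with $\nu_0$ damping $g_1$ and $K$ smoothing $g_2$. Then $e^{\tc\nu_0 s}$ equals $\tw(t)$ up to constants in every case, since $s=\frac{t^{1-3\qq}-1}{1-3\qq}$, or $\log t$ when $\qq=1/3$, and $s$ stays bounded for $\qq>1/3$. The weight $|v|^{m/2}$ is controlled because $w_m$ commutes with transport and $Kw_m$ is bounded. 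Undoing the change of variables gives the factor $t^{-3\qq}$.
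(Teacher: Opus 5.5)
Your existence argument and the bound \eqref{globalestim} are sound. They are the paper's proof written in the coordinates $v=\tqq p$, $ds=t^{-3\qq}dt$. The paper's basis $\tTq\mathfrak{e}_\ell(\tqq p)$ and its $t^{6\qq}$-weighted energy are exactly your pullback. Its multipliers, such as $t^{4\qq}ik_j\overline{\widehat{\mathcal{C}}}/|k|^2$, are your interactive functional weighted by $t^{-2\qq}$; the paper integrates these identities in time instead of forming a Lyapunov functional. The weight costs $\frac{d}{ds}t^{-2\qq}=-2\qq t^{\qq-1}$, which is bounded precisely because $\qq\le 1$, so this is where that restriction enters. (Also, $s$ stays bounded when $\qq>1/3$, contrary to your first claim; this is harmless.)

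The step that fails is raising the rate to $\tc\nu_0$. In your variables the $k=0$ mode solves $\partial_s\hat g(s,0)+L\hat g(s,0)=\hat\Gamma_0(\hat g,\hat g)(s,0)$, with $L$ independent of $s$ and of $\qq$. The condition \eqref{zeroconserv} only places $\hat g(s,0)$ in $(\ker L)^\perp$.

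Suppose $L\psi_1=\lambda_1\psi_1$ with $\psi_1\perp\ker L$. Then $\frac{d}{ds}\langle \hat g(s,0),\psi_1\rangle=-\lambda_1\langle\hat g(s,0),\psi_1\rangle+O(\|g\|^2)$. Hence small spatially homogeneous data with $|\langle g_1,\psi_1\rangle|\gtrsim\|g_1\|$ decay no faster than $e^{-\lambda_1 s}$.

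Since $L=\nu_0-K$ with $K$ compact, a rate $\tc\nu_0$ for every $\tc<1$ therefore requires $K\le 0$ on $(\ker L)^\perp$, i.e.\ a spectral gap equal to $\nu_0$. For comparison, Newtonian Maxwell molecules also have constant collision frequency, and their gap is at most $\tfrac34\nu_0$.

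A Caflisch splitting cannot get around this: $g_2$ still carries $K$ and decays only at the linear rate. What your Lyapunov inequality does give is $e^{-\lambda s}$ with a small $\lambda>0$, i.e.\ \eqref{decayestimate} with $\tc\nu_0$ replaced by a small constant. That is exactly what the paper asserts for $\qq=0$, where the $k=0$ dynamics in $s$ is identical.

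The paper reaches $\tc\nu_0$ by a different route. It absorbs $\compK_s$ into $(1-\tc)\nu_0$, and for $t\ge T_1$ it absorbs $\compK_c$ using the time-decaying bound \eqref{compact.improved.est} with $\spr=1/3$. Your plan has no counterpart to this step.

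However, your own scaling shows that bound cannot hold as stated. For $h=g(\tqq\,\cdot)$ one has $\compK_c h=t^{-3\qq}(K^0_c g)(\tqq\,\cdot)$ with $K^0_c$ independent of $t$. So $\tTq\langle\compK_c h,h\rangle/\|h\|_{L^2_p}^2=\langle K^0_c g,g\rangle/\|g\|_{L^2_v}^2$ is independent of $t$, and a bound $O(t^{-3\spr\qq})$ would force $K^0_c=0$. In the proof of Proposition \ref{noKestimate}, the region-$B$ estimate drops the prefactor $\tTq$ of $\compK$.

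So the $\tc\nu_0$ rates rest on the spectral-gap property above, and neither your argument nor the paper's establishes it.
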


\begin{remark}[Functional space]\label{rmfcnspc}
Alternatively, we could have obtained global stability in other function spaces such as
$L^\infty_T H^N_x L^2_p \cap L^2_T H^N_x L^2_p$ for $N> \frac{3}{2}$ without using the Fourier transform, or other \(L^{\infty}_{}\) based functional spaces. In the former case we expect the following uniform estimate
\begin{align}\notag
  {\Vert{\tTq f}\Vert_{L^\infty_T H^N_xL^2_{\pv}}}
+
{\norm{t^{3\qq/2} f}_{L^2_TH^N_xL^2_{\pv}}}
\lesssim 
\norm{ {f}_1}_{H^N_x L^2_{\pv}},
\qquad  
\forall \,T\geq 1,
\end{align}
where the implicit constant is independent of $T \geq 1$.  
\end{remark}

\begin{remark}[Asymptotic vs orbital stability] By the explicit form of the J\"uttner equilibrium, for any $m\ge 0$, one can show that \({\Vert{ |t^{2\qq}p|^mJ(t^{2\qq}p)}\Vert_{L^1_kL^2_{\pv}}} \sim \tmTq\) for large times. In contrast, Theorem \ref{Torus Existence} establishes that for the norm of the perturbation \(\sqrt{J(t^{2\qq}\pv)}f(t, x, \pv)\) we have  \({\Vert{ |t^{2\qq}p|^m\sqrt{J(t^{2\qq}p)}f}\Vert_{L^1_kL^2_{\pv}}} \lesssim \tmTq (\tw(t))^{-1}\) for large times.  From \eqref{weight.decay}, $\tmTq (\tw(t))^{-1} = \tmTq$  for \(1/3<\qq\leq 1\), and $(\tw(t))^{-1}(t)$ decays faster in time for \(0\leq \qq\leq 1/3\). In this sense, Theorem \ref{Torus Existence} can be viewed as  orbital stability for \(1/3<\qq\leq 1\), and asymptotic stability for \(0\leq \qq\leq 1/3\).
\end{remark}

\begin{remark}[Dissipation threshold at \(\qq=1/3\)]\label{rmkdissthres}
The \(t^{3\qq}\) decay rate in Theorem \ref{Torus Existence} should be viewed as a result of cosmological expansion. The additional decay factor in \eqref{weight.decay} for \(\qq\leq 1/3\) should be viewed as a result of dissipation due to particle collisions. Thus \(\qq=1/3\) is a threshold below which the Boltzmann equation is dissipative, and above which is not.

We expect different values of \(\qq\) arise as thresholds for different choices of collision kernels. See Section \ref{generalkernels} below for more information.
\end{remark}

\begin{remark}[Cosmological expansion weakens familiar kinetic effects] \label{rmk:Vlasovmixing}
    Note that the decay due to dissipation in Theorem \ref{Torus Existence} is stronger for small \(\qq\), and thus cosmological expansion weakens this dissipative effect. A related weakening due to cosmological expansion of another familiar kinetic effect --- namely phase mixing for the Vlasov equation --- has also been shown to hold on FLRW spacetimes \cite{TaylorVelozo2025}.
\end{remark}

\subsubsection{Global stability of the vacuum solution}

We will now state our main result concerning the future global stability of the vacuum solution for the massless Boltzmann equation on FLRW. Given a solution \(F(t,x,p)\) of the massless Boltzmann equation \eqref{rBoltz.eqn0}, we define \(f(t,x,p)\) by the relation   
\begin{align*}
    F(t,x,\pv) = J(\tqq \pv) f(t,x,\pv), \qquad 
    F(t=1,x,\pv) =F_1(x,\pv) = J( \pv) f_1(x,\pv).
\end{align*}
The unknown \(f\) satisfies the equation
\begin{equation}\label{perturb.massless.rBoltz0}
   \partial_t f+ \frac{\pv^i}{|\tq \pv|} \partial_{x^i} f
    -\frac{2\qq }{t} p^i\partial_{p^i} f = \Gamma(f,f),
\end{equation}
with initial datum 
\begin{equation}\label{data:vacuumprob}
    f(t=1, x, p) = f_1(x,p)\eqdef F_1(x,\pv)/J( \pv).
\end{equation}
Above in \eqref{perturb.massless.rBoltz0}, the nonlinear term $\Gamma(f,f)$ is defined as in \eqref{gamma0}. We now state our main result in the near vacuum case.  

\begin{theorem}[Future global stability of the vacuum solution]\label{thmvacuum}
    Let \(\qq\in (\frac{1}{3},1]\). The vacuum is nonlinearly stable as a solution to the massless Boltzmann equation on FLRW spacetime \eqref{rBoltz.eqn0} in the case of hard-ball interaction \eqref{hardSPHERE}.
    
    More precisely, let \(f_1(x,p)\) be an initial datum. There is \(\epsilon_0>0\) such that if \(F_1(x,p)=J^{\frac{1}{2}}(p)f_1\geq 0\) and \(\|f_1\|_{L^1_kL^2_{\pv}}\leq \epsilon_0\), then there is a unique global mild solution \(f(t,x,p)\) to \eqref{perturb.massless.rBoltz0}--\eqref{data:vacuumprob} for the massless Boltzmann equation with hard-ball interaction \eqref{hardSPHERE} on FLRW spacetime \((\mathcal{M},g_{\qq})\) such that \(F(t,x,p)=J^{\frac{1}{2}}(t^{2\qq}p)f(t,x,p)\geq 0\) and 
\begin{equation}\label{apriorivacuumestimate0}
\forall T\geq 1,\qquad {\norm{t^{3\qq} f}_{L^1_kL^\infty_TL^2_{\pv}}}
\lesssim 
\norm{f_1}_{L^1_kL^2_{\pv}}.    
\end{equation}
\end{theorem}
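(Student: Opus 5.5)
The plan is to remove the cosmological redshift by working in the rescaled momentum $P\eqdef \tqq p$. In this variable the solution $f$ of \eqref{perturb.massless.rBoltz0} becomes a pure transport equation in $x$ forced by the collision operator, with a time-integrable prefactor.

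\textbf{Step 1: rescaling.} Set $g(t,x,P)\eqdef f(t,x,t^{-2\qq}P)$. By \eqref{nullRHStp}, the operator $\partial_t-\frac{2\qq}{t}p^i\partial_{p^i}$ becomes $\partial_t$ in these variables, and the spatial velocity becomes $\frac{p}{|\tq p|}=t^{-\qq}\frac{P}{|P|}$. In the collision integral \eqref{collision.CM.massless} I will substitute $Q=\tqq q$. This gives $dq=t^{-6\qq}dQ$. The collision geometry \eqref{pprimedefcm}--\eqref{qprimedefcm} is homogeneous of degree one, the factor $1-\frac{p\cdot q}{|p||q|}$ is scale invariant, and the weights $J(\tqq\cdot)$ become $J(\cdot)$. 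Hence $\Gamma(f,f)(t,p)=t^{-3\qq}\Gamma_0(g,g)(t,P)$, where $\Gamma_0$ is the $t$-independent operator \eqref{gamma0} evaluated at $t=1$. After a Fourier transform in $x$ the equation reads
\begin{equation*}
\partial_t\hat g(t,k,P)+i\,t^{-\qq}\frac{k\cdot P}{|P|}\hat g(t,k,P)=t^{-3\qq}\,\widehat{\Gamma_0(g,g)}(t,k,P).
\end{equation*}
The multiplier on the left is purely imaginary, so Duhamel's formula gives
\begin{equation*}
|\hat g(t,k,P)|\le|\hat f_1(k,P)|+\int_1^t s^{-3\qq}\,\big|\widehat{\Gamma_0(g,g)}(s,k,P)\big|\,ds .
\end{equation*}
This formula also defines the mild solution. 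Since $\|f(t)\|_{L^2_p}=t^{-3\qq}\|g(t)\|_{L^2_P}$, the estimate \eqref{apriorivacuumestimate0} is equivalent to $\|g\|_{L^1_kL^\infty_TL^2_P}\lesssim\|f_1\|_{L^1_kL^2_p}$.

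\textbf{Step 2: bilinear estimate.} I need
\begin{equation*}
\|\widehat{\Gamma_0(g,h)}(k)\|_{L^2_P}\lesssim\sum_{m\in\Z^3}\|\hat g(k-m)\|_{L^2_P}\,\|\hat h(m)\|_{L^2_P}.
\end{equation*}
Young's inequality on $\Z^3$ then gives the $L^1_k$ algebra bound.
\begin{itemize}
\item Loss term: since $v_{\Emptyset}\le 1$, it is bounded pointwise by $h(P)\int J^{1/2}(Q)|g(Q)|\,dQ$, and Cauchy--Schwarz in $Q$ controls it.
\item Gain term: energy conservation \eqref{collisionalCONSERVATION} gives $J^{-1/2}(P)J^{1/2}(P')J^{1/2}(Q')=J^{1/2}(Q)$. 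The gain term is therefore $\int dQ\,d\omega\,v_{\Emptyset}\,J^{1/2}(Q)\,g(P')h(Q')$. I will bound it by duality against $\phi\in L^2_P$. Cauchy--Schwarz splits it into $\big(\int|\phi(P)|^2J^{1/2}(Q)\big)^{1/2}\big(\int J^{1/2}(Q)|g(P')h(Q')|^2\big)^{1/2}$. In the second factor I will perform the pre-post change of variables using the Glassey--Strauss representation \eqref{collision.GS.flrw} and Lemma~\ref{Lem change of variables}.
\end{itemize}

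\textbf{Step 3: closing the argument.} Taking $\sup_{t\le T}$ and summing in $k$ gives
\begin{equation*}
\|g\|_{L^1_kL^\infty_TL^2_P}\le\|f_1\|_{L^1_kL^2_p}+C\Big(\int_1^\infty s^{-3\qq}ds\Big)\|g\|_{L^1_kL^\infty_TL^2_P}^2 .
\end{equation*}
The time integral is finite exactly when $\qq>\frac13$, which is the hypothesis of the theorem. A contraction mapping in the ball of radius $2\epsilon_0$ then yields existence, uniqueness and \eqref{apriorivacuumestimate0} uniformly in $T$. For nonnegativity of $F$, I will use the standard monotone iteration: the loss part is written as $F\cdot R(F)$ with $R\ge 0$, it is solved by an exponential integrating factor along characteristics, and the gain part is kept on the right-hand side. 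Each iterate is then nonnegative, and nonnegativity passes to the limit.

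\textbf{Main obstacle.} The hard step is the gain-term bound in Step 2. The Jacobian $\frac{|P'||Q'|}{|P||Q|}$ in Lemma~\ref{Lem change of variables} and the kernel $\mathcal B$ are singular when $|P|$ or $|Q|$ tends to $0$ or when $\omega$ is nearly collinear with $P+Q$. My first attempt is to keep the weight $J^{1/2}(Q)$ together with $v_{\Emptyset}=\frac14\big|\frac{P}{|P|}-\frac{Q}{|Q|}\big|^2$, which vanishes in the collinear configuration. I expect this combination to absorb the Jacobian factor, with the remaining integral in $(Q,\omega)$ bounded uniformly in $P$. If this fails, the fallback is to route the gain term through the $L^\infty$-type pointwise bounds for $\compK_2$, which the stability proof for Theorem~\ref{Torus Existence} presumably also requires.
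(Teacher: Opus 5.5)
Your argument is correct. It differs from the paper's mainly in how the estimate is closed.

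\textbf{The paper's route.} The paper stays in the original momentum variable and uses the weighted energy identity $\frac{1}{2}\frac{d}{dt}\big(t^{6\qq}\|\hat{f}\|_{L^2_p}^2\big)=t^{6\qq}\real{\irnp{\hat{\Gamma}(\hat{f},\hat{f})}{\hat{f}}}$. It bounds the right-hand side with the time-uniform trilinear estimate of Lemma~\ref{thm1}, which gives $\tau^{-3\qq}$ times a cubic expression in $\tau^{3\qq}\|\hat{f}\|_{L^2_p}$. This yields $\|t^{3\qq}f\|\lesssim\|f_1\|+\|t^{3\qq}f\|^{3/2}$. The paper then needs Young's inequality, the separate local theory of Theorem~\ref{local.thm2}, and a continuity argument.

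\textbf{Your route.} You pass to $P=\tqq p$. In this variable the redshift term disappears, the nonlinearity is visibly $t^{-3\qq}$ times a fixed operator, and the free propagator in Fourier is a unimodular multiplier. Duhamel then gives $\|g\|_{L^1_kL^\infty_TL^2_P}\le\|f_1\|_{L^1_kL^2_p}+C\|g\|^2_{L^1_kL^\infty_TL^2_P}$ directly. A single contraction on $[1,\infty)$ produces existence, uniqueness in the small ball, and the bound. The threshold $\qq>\frac13$ appears transparently as integrability of the collision prefactor.

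\textbf{Comparison.} The analytic input is the same: your $L^2_P$ bound on $\Gamma_0(g,h)$ is the dual form of the paper's trilinear estimate. The paper's energy formulation buys uniformity with the Maxwell--J\"uttner proof. There the linearised operator has to be handled through coercivity and the macroscopic estimates, which fit naturally into an energy identity. Your positivity argument is the same monotone scheme the paper uses in the proof of Theorem~\ref{local.thm}.

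\textbf{The Step 2 obstacle.} The difficulty you flag there does not arise.
\begin{itemize}
\item In the rescaled variables $v_{\Emptyset}(P,Q)=\varrho^2/(4|P||Q|)$, with $\varrho$ a collision invariant. By \eqref{Jacobian.COV} one therefore has exactly $v_{\Emptyset}(P,Q)\,dP\,dQ=v_{\Emptyset}(P',Q')\,dP'\,dQ'$ under the Glassey--Strauss map at fixed $\omega$.
\item The kernel $\mathcal{B}$ is invariant under that involution as well, since $|P|+|Q|$, $P+Q$ and $|\omega\cdot(|P|Q-|Q|P)|$ are all preserved.
\item So the singular Jacobian is cancelled identically, and no weight has to absorb it.
\end{itemize}
After the change of variables you return to the center-of-momentum form and use only $v_{\Emptyset}\le 1$ and $J^{1/2}\lesssim 1$. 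This is precisely the proof of Lemma~\ref{thm1}, so the fallback via $\compK_2$ is unnecessary.
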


\begin{remark}[The restriction \(\qq>\frac{1}{3}\)] The restriction \(\qq>\frac{1}{3}\) in Theorem \ref{thmvacuum} arises from the expansion rate used to estimate the nonlinear term in terms of the energy norm \(\|t^{3\qq} f\|_{L^1_kL^\infty_TL^2_{\pv}}\), and is thus a priori unrelated to the appearance of \(\qq=\frac{1}{3}\) as a threshold in Theorem \ref{Torus Existence}. 
\end{remark}

\begin{remark}[No previous near vacuum works on \(\mathbb{T}^3\)]
    In the Newtonian Boltzmann theory, the standard approach near vacuum in the style of \cite{MR0760333} uses the dispersion of the transport operator $\partial_t + p^i \partial_{x^i}$ in the whole space $\R^3_x$ to estimate the non-linear term. As far as we know, there are no global existence results near the vacuum solution, outside of homogeneity, on the torus $\T^3_x$ for the Newtonian Boltzmann equation due to the lack of dispersion. (Note however recent developments on Landau damping for Vlasov--Poisson on \(\mathbb{T}^3\) \cite{MouhotVillani2011, CMM16, GNR21}.)
\end{remark}

\subsection{Ideas and strategy of the proof}

In this section, we overview the main steps of the proof of Theorem \ref{Torus Existence}.  In Section \ref{subseclinearmodel}, we consider a linear model problem that motivates the time-decay rates we will show for the Boltzmann equation on FLRW depending on the value of \(\qq\). Later, we will sketch the proof of our main results in Subsection \ref{subsecstratproof}.  

\subsubsection{A linear model problem}\label{subseclinearmodel}

In this section, we will consider the model problem
\begin{equation}
    \partial_t f+ \frac{\pv^i}{|\tq \pv|} \partial_{x^i} f
    -\frac{2\qq }{t} p^i\partial_{p^i} f
    +
    \nu_0 t^{-3\qq} f
=
0, \qquad f(t=1,x,p)=f_1(x,p).
\label{rBoltz.model.second}
\end{equation}
This equation arises from neglecting nonlinear terms and the operator \(\mathcal{K}\) in the massless Boltzmann equation (recall Proposition \ref{LEM:nuASYMP} for the collision frequency in the hard ball case).

This is the simplest linear equation displaying the behaviour shown in Theorem \ref{Torus Existence}, and in particular does not admit the difficulties pertaining to the kernel of \(\mathcal{L}\) as in the full linearised problem. We will now study the decay properties of \(\|f(t)\|_{L^2_{x,p}}\) for the solutions of \eqref{rBoltz.model.second}.

Below it will be useful to note that for a function $h(t)$, and $\ell\in \mathbb{R}$, we have 
\begin{align}\label{timeDeriv.pq}
   \left( \frac{d}{d t}+ \frac{\ell\qq }{t}  \right)h(t)=t^{-\ell\qq} \frac{d}{d t}\left( t^{\ell\qq} h(t) \right).
\end{align}

\begin{proposition}[Decay estimates for the model problem]\label{proplinearmodel}
    Let \(\qq\geq 0\) and some \(f_1\in L^2(\mathbb{T}^3\times \mathbb{R}^3)\). Let \(f\) be the unique solution of the model problem \eqref{rBoltz.model.second}. Then, the \(L_{x,p}^2\)-norm of \(f\) satisfies
    \begin{align}\notag
\forall \,t\geq 1,\qquad    {\Vert{ f(t)}\Vert_{L^2_{x,p}}}
\leq t^{-3\qq}
(\tw^{\circ}(t))^{-1}\norm{ {f}_1}_{ L^2_{x,p}},
\end{align}
where 
\begin{equation}\notag
    \tw^{\circ}(t)\eqdef
    \begin{cases}
    \exp( t),\quad &\text{if} \,\,\qq=0,\\
        \exp (\frac{\nu_0}{1-3\qq}t^{1-3\qq}),\quad &\text{if} \,\,0<\qq<\frac{1}{3},\\
        t^{\nu_0},\qquad &\text{if} \,\,\qq=\frac{1}{3},\\
        1,\qquad &\text{if} \,\,\frac{1}{3}<\qq .
    \end{cases}
\end{equation}
\end{proposition}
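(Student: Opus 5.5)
We prove the estimate with an $L^2_{x,p}$ energy identity for \eqref{rBoltz.model.second}, using that the transport part is antisymmetric in $x$ and that the momentum-dilation term produces an exact factor $3\qq/t$. Multiply \eqref{rBoltz.model.second} by $f$ and integrate over $\mathbb{T}^3_x\times\R^3_p$.

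First, the spatial transport term gives
\[
\int\!\!\int \frac{p^i}{|\tq p|}\,\partial_{x^i}\tfrac12 f^2\,dx\,dp=0,
\]
since the coefficient is independent of $x$ and $\mathbb{T}^3$ has no boundary.

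Second, the momentum term is handled by integrating by parts in $p$. Using $\partial_{p^i}p^i=3$,
\[
-\frac{2\qq}{t}\int\!\!\int p^i\partial_{p^i}\tfrac12 f^2\,dx\,dp=\frac{3\qq}{t}\int\!\!\int f^2\,dx\,dp .
\]
The boundary terms vanish for $f_1\in L^2$ with enough decay; by density we may first take $f_1$ smooth and compactly supported and then pass to the limit.

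Combining the two terms, the energy identity reads
\[
\frac12\frac{d}{dt}\|f\|_{L^2_{x,p}}^2+\Big(\frac{3\qq}{t}+\nu_0t^{-3\qq}\Big)\|f\|_{L^2_{x,p}}^2=0 .
\]
By \eqref{timeDeriv.pq} with $\ell=3$, this is equivalent to
\[
\frac{d}{dt}\Big(t^{3\qq}\|f\|_{L^2_{x,p}}\Big)=-\nu_0t^{-3\qq}\,t^{3\qq}\|f\|_{L^2_{x,p}} .
\]
Integrating from $1$ to $t$ gives exactly
\[
t^{3\qq}\|f(t)\|_{L^2_{x,p}}=\exp\Big(-\nu_0\int_1^t s^{-3\qq}\,ds\Big)\|f_1\|_{L^2_{x,p}} .
\]

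It remains to evaluate $\int_1^t s^{-3\qq}\,ds$ in each regime.
\begin{itemize}
\item For $\qq=1/3$, the integral equals $\log t$, so the exponential is $t^{-\nu_0}$.
\item For $0\le\qq<1/3$, the integral equals $\frac{t^{1-3\qq}-1}{1-3\qq}$. This gives $\exp(-\frac{\nu_0}{1-3\qq}t^{1-3\qq})$ up to the factor $e^{\nu_0/(1-3\qq)}$. At $\qq=0$ the exponent is $\nu_0(t-1)$, and we need $\nu_0\ge1$ to obtain the stated $e^{-t}$ up to a constant.
\item For $\qq>1/3$, we simply bound the exponential by $1$.
\end{itemize}

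The main obstacle is this constant bookkeeping. The stated inequality carries no implicit constant, whereas the computation produces the factor $e^{\nu_0/(1-3\qq)}$, and the case $\qq=0$ additionally requires $\nu_0\ge1$. I will therefore either read the bound as holding up to a constant or absorb the shift by the normalisation of $\tw^{\circ}$.

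An alternative route is to solve along characteristics $X(t)$, $P(t)=t^{-2\qq}p$, which carry the integrating factor $\exp(-\nu_0\int_1^t s^{-3\qq}\,ds)$. The factor $t^{-3\qq}$ then appears from the Jacobian of the change of variables $p\mapsto t^{-2\qq}p$, since $\|f(t)\|_{L^2_p}^2$ picks up $t^{-6\qq}$. This gives the same bound.
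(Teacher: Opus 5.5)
Your argument is correct and is essentially the paper's proof: the same $L^2_{x,p}$ energy identity, with the $3\qq/t$ term coming from integrating $p^i\partial_{p^i}$ by parts, followed by the integrating factor from \eqref{timeDeriv.pq} and explicit integration of $\nu_0 s^{-3\qq}$ in each regime. Your constant bookkeeping is right and more careful than the paper's, whose displayed explicit solution drops the $-1$ and has the wrong sign in the exponent: as written the bound fails at $t=1$ for $0\le\qq<\frac13$, so it must be read with the factor $e^{\nu_0/(1-3\qq)}$ (equivalently, with $t^{1-3\qq}$ replaced by $t^{1-3\qq}-1$ in $\tw^{\circ}$), and for $\qq=0$ the natural weight is $\exp(\nu_0 t)$ rather than $\exp(t)$.
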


\begin{proof} Multiplying \eqref{rBoltz.model.second} by $f$, and integrating over phase space, we get
\begin{gather}\notag
   \frac{1}{2}\partial_t \| f(t)\|_{L^2_{x,p}}^2
    +\frac{3\qq }{t} \| f(t) \|_{L^2_{x,p}}^2
    +
    \nu_0t^{-3\qq} \| f(t) \|_{L^2_{x,p}}^2
=
0.
\end{gather}
We therefore obtain using \eqref{timeDeriv.pq} that
\begin{gather}\label{L2normdecay}
  \frac{1}{2} t^{-6\qq} \frac{d}{d t}\left( t^{6\qq} \| f(t)\|_{L^2_{x,p}}^2 \right)
    +
    \nu_0t^{-3\qq} \| f(t) \|_{L^2_{x,p}}^2
=
0.
\end{gather}
The decay of the \(L^2\)-norm when \(\qq=0\) follows by integrating \eqref{L2normdecay}. Suppose now that \(\qq \neq 0,\) \(\frac{1}{3}\). We observe that $t^{6\qq}\|f(t)\|_{L^2_{x,p}}^2$ satisfies the ODE,
\begin{align}\label{ODE.faster.decay}
      \frac{d}{d t}\left( t^{6\qq} \| f(t)\|_{L^2_{x,p}}^2 \right)
    +
    \frac{2\nu_0}{t^{3\qq}} \left( t^{6\qq} \| f(t)\|_{L^2_{x,p}}^2 \right)
=
0,
\end{align}
which is explicitly solved by $\|f(t)\|_{L^2_{x,p}}^2=t^{-6\qq}\exp(\frac{2\nu_0}{1-3\qq}t^{1-3\qq})\|f_1\|_{L^2_{x,p}}^2.$ Finally, the decay estimate when $\qq = \frac{1}{3}$ is obtained by integrating again \eqref{ODE.faster.decay} from which we conclude that \(\|f(t)\|_{L^2_{x,p}}^2=t^{-2-2\nu_0}\|f_1\|_{L^2_{x,p}}^2.\)
\end{proof}

\subsubsection{Strategy of the proof}\label{subsecstratproof}

In this section we outline the proofs of Theorem \ref{Torus Existence} and Theorem \ref{thmvacuum}. The broad strategy of proof of Theorem \ref{Torus Existence} is familiar from previous works in the nonrelativistic and special relativistic settings \cite{MR1211782, 1904.12086}.  However previous approaches would not be sufficient to close our estimates globally in time due to the $\tqq$ rescaling from the cosmological expansion in the FLRW metric.  More precisely in \secref{sectionmacromicro} we derive the macroscopic equations using a  time-rescaled orthonormal momentum basis.  This new approach is essential to obtaining the precise large time estimates for the solution with cosmological expansion.  We also expect it to be useful in future developments.  Furthermore we determine the precise effects of cosmological expansion on the large time behavior of solutions, and we develop new methodologies to overcome difficulties due to the masslessness of the particles. The proof of Theorem \ref{thmvacuum} relies solely on time decay that is caused by cosmological expansion.

\textit{The energy method}. The proof of Theorem \ref{Torus Existence} is based on Guo's energy method \cite{Guo-VMB} to control the perturbation \(f\) of the Maxwell--J\"uttner equilibrium.  We also use the low regularity Wiener algebra method using the Fourier transform from \cite{1904.12086}.  We will use the \emph{energy norm} \(\|\tTq f\Vert_{L^1_kL^\infty_TL^2_{\pv}}\) and the \emph{dissipation norm} \(\|t^{3\qq/2} f\|_{L^1_kL^2_TL^2_{\pv}}\). The main energy estimate concerning these norms states that for all \(T\geq 1\),
\begin{equation}\label{mainestimateintro0}
        {\Vert{\tTq f}\Vert_{L^1_kL^\infty_TL^2_{\pv}}}
+
{\norm{t^{3\qq/2} f}_{L^1_kL^2_TL^2_{\pv}}}
\lesssim 
\norm{ {f}_1}_{L^1_kL^2_{\pv}}
,
\end{equation}
which is established under the assumption that the initial perturbation is sufficiently small. From \eqref{mainestimateintro0} global existence of the perturbation \(f\) can be shown by a standard continuity argument. We outline below the main arguments to derive \eqref{mainestimateintro0}.

\textit{The basic energy estimate.} Recall that the perturbation \(f\) satisfies equation \eqref{rBoltz00} by the massless Boltzmann equation. Applying the Fourier transform in \(x\) to equation \eqref{rBoltz00},
\begin{equation}\label{BoltzFour}
\left(\partial_t  + i\frac{\pv}{|\tq \pv|}\cdot k 
-\frac{2\qq }{t} p^i \partial_{p^i} \right) \hat{f}(t,k,\pv)
+\linL\hat{f}(t,k,\pv)=\widehat{\Gamma}(\hat{f},\hat{f})(t,k,\pv),
\end{equation}
where $\widehat{\Gamma}(\hat{f},\hat{f})$ corresponds to the nonlinearity of the system on the Fourier side (defined as in \eqref{def.GaF}). Then, an energy estimate can be obtained by taking the product of \eqref{BoltzFour} with the complex conjugate \(\bar{\hat{f}}\) and taking the real part
\begin{multline}\label{pro.tpm.p10}
\frac{1}{2} \tQQq \Vert \hat{f}\Vert_{L^2_{\pv}}^2(t,k)+\int^t_1  \tau^{6\qq}\real{\irnp{\linL\hat{f}}{\hat{f}}}d\tau
\\
=\frac{1}{2}\Vert \hat{f}_1(k,\cdot)\Vert_{L^2_{\pv}}^2
+
\int^t_1 \tau^{6\qq}\real{\irnp{\hat{\Gamma}(\hat{f},\hat{f})}{\hat{f}}}
d\tau.
\end{multline}
This energy estimate calls for a coercivity estimate for the linearised operator \(\mathcal{L}\) (defined by \eqref{L0}).

\textit{Coercivity of the linearised operator}.  Observe that the linearised operator \(\mathcal{L}\) has a 5-dimensional null space given by 
\begin{equation}\notag 
{\CN}(\linL)= {\rm span}\left\{
\tTq \sqrt{J(\tqq p)},~~\tTq (\tqq p^{i})  \sqrt{J(\tqq p)},  ~~
\tTq |\tqq \pv |\sqrt{J(\tqq p)}\right\},
\end{equation} 
over smooth functions of \(t\) and \(x\). The elements in \({\CN}(\linL)\) arise from the conservation laws of mass, energy, and momentum discussed in \eqref{mass_mom_entrop}. For \(\qq>0\), a new difficulty compared to previous works is that now the basis above is time-dependent.

Due to the non-trivial kernel, one can only hope to prove a coercivity estimate for \(\mathcal{L}\) modulo ${\CN}(\linL)$. We will prove in Proposition \ref{lowerN} that there exists a uniform constant \(\delta_0>0\) such that for all \(h\in L^2_p\) and all \(t\geq 1\),
\begin{equation}\label{coercivitylinearintro}
  \tTq\langle \linL h, h \rangle
\geq
\delta_0 
\| \micro  h \|_{L^2_p}^2.
\end{equation}
where \(\mathbf{I}\) is the identity operator on $L^2_p(\rth)$, and ${\bf P}$ the orthogonal projection from $L^2_p(\rth)$ onto ${\CN}(\linL)$.

Recall the decomposition of the linearised operator 
\(
\mathcal{L}h=\nu h-\mathcal{K}h
\)
in terms of the integral operator \(\mathcal{K}\) defined in \eqref{compactK}, and the multiplication operator \(\nu(t,p)= \nu_0 t^{-3\qq}\) computed in Proposition \ref{LEM:nuASYMP}. The coercivity estimate \eqref{coercivitylinearintro} follows by using an explicit decomposition of \(\mathcal{K}\) into a small piece and a compact piece, using suitable cut-off functions in the integral form of the operator. Indeed, we show in Proposition \ref{noKestimate} that for any small \(\eta>0\) we can explicitly decompose \(\mathcal{K}\) in terms of a compact operator \(\mathcal{K}_c\) by
\begin{equation}\label{smallcompactdecomp0}
\mathcal{K}=\mathcal{K}_c+\mathcal{K}_s, \qquad \text{where}\quad  \|\mathcal{K}_s\|_{L^2_p\to L^2_p}\leq \eta,
\end{equation}
where \(\|\cdot\|_{L^2_p\to L^2_p}\) is the operator norm from \(L^2_p\) to \(L^2_p\).

Recall the decomposition \(\mathcal{K}=\mathcal{K}_1-\mathcal{K}_2\) in \eqref{compactK}. Following the special relativistic calculation in \cite[Appendix]{MR2728733} or \cite{MR635279}, the operator \(\mathcal{K}_2\) can be written as
\begin{equation}\label{K2intker}
      \compK_2 h =     t^{3\qq}\int_{\mathbb{R}^3} ~  dq ~
k_{2}(\tqq p,\tqq q) ~ h(\qv),
\end{equation}
in terms of the explicit kernel \(k_2(t^{2\qq}p,t^{2\qq}p)\) given by
\[k_2(\tqq p,\tqq q) = ~
2c_1\bigg(1+\frac{|p|+|q|}{|p-q|}+2\frac{|p|+|q|}{t^{2\qq}|p-q|^2}\bigg)\left| \frac{\pv}{\pZ} -\frac{\qv}{\qZ}\right|^2
 \frac{e^{ -t^{2\qq} |p-q|/2}}{t^{2\qq}|p-q|},\]
for some $c_1>0$. A general integral operator of the form \eqref{K2intker} with kernel in \(L^2(\mathbb{R}^3_p\times \mathbb{R}^3_q)\) is compact, however \(k_2 \notin L^2(\mathbb{R}^3_p\times \mathbb{R}^3_q)\) because of the decay at infinity and the singularity at \(\{p=q\}\). With this in mind, we decompose \(\compK_2=\compK_{2c}+\compK_{2s}\) in terms of operators defined by the following kernels, 
\[
k_{2c}(p,q)=\phi_{R, \epsilon} k_2(\tqq p,\tqq q),\qquad k_{2s}(\tqq p,\tqq q)= (1-\phi_{R, \epsilon})k_2(\tqq p,\tqq q),
\]
where \(\phi_{R, \epsilon}\) is a cut-off supported in the domain \(\{t^{2\qq}|p|+t^{2\qq}|q|\leq R\}\cap \{ t^{2\qq}|p-q|\geq 2\epsilon\}\) for constants \(R>\epsilon>0\). The operator \(\compK_{2c}\) is then compact since \(k_{2c}\in L^2(\mathbb{R}^3_p\times \mathbb{R}^3_q)\), and the operator norm of \(\compK_{2s}\) is small for \(R\) sufficiently large and \(\epsilon\) sufficiently small. On the other hand, in this case of hard-ball interactions, the operator \(\mathcal{K}_1\) is compact since its kernel belongs to \(L^2(\mathbb{R}^3_p\times \mathbb{R}^3_q)\) (however \(\mathcal{K}_1\) is also decomposed as \eqref{K.one.decomp} so that \(\mathcal{K}_c\) additionally satisfies the estimate \eqref{compact.improved.est}).

The coercivity estimate \eqref{coercivitylinearintro} can then be proved by contradiction using the decomposition \eqref{smallcompactdecomp0} (see Proposition \ref{lowerN}). However, the coercivity estimate \eqref{coercivitylinearintro} only controls the perturbation \(f\) projected out of ${\CN}(\linL)$. To close the energy estimate for the perturbation \(f\), we will need further estimates for the projection \({\bf P}f\) into ${\CN}(\linL)$.

\textit{The macroscopic estimates}. In the following, we will decompose the perturbation as \(f=\mathbf{P}f+\{\mathbf{I}-\mathbf{P}\}f,\) into its \emph{macroscopic part} \(\mathbf{P}f\) and \emph{microscopic part} \(\{\mathbf{I}-\mathbf{P}\}f\). We write \(\mathbf{P}f\) as a linear combination of basis vectors
\begin{equation}\notag 
{\bf P}f=\left(  \mathcal{A}^f+\sum_{i=1}^{3} \mathcal{B}^f_i\tqq p^i/2+\frac{1}{\sqrt{3}}\mathcal{C}^f(\vnrm{\tqq\pv}-3)\right) \tTq\sqrt{J(\tqq p)},
\end{equation}
in terms of the coefficients \(\mathcal{A}^f(t,x)\), \(\mathcal{B}^f(t,x)\), and \(\mathcal{C}^f(t,x)\), given by the explicit momentum averages of \(f\) stated in \eqref{defmacrocoef}. Plugging the decomposition \(f=\mathbf{P}f+\{\mathbf{I}-\mathbf{P}\}f\) into \eqref{rBoltz00}, we obtain 
\begin{multline}\notag 
\left\{\partial_t+\frac{\pv^i}{|\tq \pv|} \partial_{x^i} -\frac{2\qq }{t} p^i \partial_{p^i} \right\}\macro f
= -\left\{\partial_t -\frac{2\qq }{t} p^i \partial_{p^i} \right\}\micro f
\\ 
-\frac{\pv^i}{|\tq \pv|} \partial_{x^i} (\micro f)  -\linL(\micro f)
+ \Gamma(f,f).
\end{multline}
When the operator \(\partial_t + \frac{\pv^i}{|\tq \pv|} \partial_{x^i} 
    -\frac{2\qq }{t} p^i \partial_{p^i} \) on the right hand side acts on ${\CN}(\linL)$, the result lives in a 13-dimensional space. Then, by considering a basis \(\{t^{3\mathfrak{q}}\mathfrak{e}_{j}(t^{2\mathfrak{q}}p)\}_{j=1}^{13}\) of this space, we will obtain a system of local conservation laws \eqref{local.cons} and macroscopic equations \eqref{macroscopic} for the coefficients \(\mathcal{A}^f\), \(\mathcal{B}^f\), and \(\mathcal{C}^f\). In particular, by the local conservation laws and \eqref{zeroconserv}, the coefficients have zero spatial average for all \(t\geq 1\).

The system of macroscopic equations for \(\widehat{\mathcal{A}}(t,k)\), \(\widehat{\mathcal{B}}(t,k)\), and \(\widehat{\mathcal{C}}(t,k)\), is estimated on the Fourier side. Recall that the spatial average of the coefficients vanish, \(\widehat{\mathcal{A}}(t,0)=\widehat{\mathcal{B}}(t,0)=\widehat{\mathcal{C}}(t,0)\) for all \(t\geq 1\). We first estimate \(\widehat{C}\) which satisfies the following schematic equations
\begin{align}\label{macroeqnsCo1}
    \frac{2}{\sqrt{3}}\tmq ik_j  \widehat{\mathcal{C}}+ \tmTq\partial_t \left(\tTq \widehat{\mathcal{B}}_j \right)  &=
\mathrm{RHS}_1,\qquad 1\leq j\leq 3,\\
 \tmTq\partial_t \left(\tTq \widehat{\mathcal{C}} \right)
  +
\frac{1}{2\sqrt{3}}  \tmq  ik \cdot \widehat{\mathcal{B}}
&=\mathrm{RHS}_2, \label{macroeqnsCo2}
\end{align}
where the right hand sides \(\mathrm{RHS}_1\), and \(\mathrm{RHS}_2\), are defined in terms of \(L^2_p\)-projections using the orthonormal functions \(\{t^{3\mathfrak{q}}\mathfrak{e}_{j}(t^{2\mathfrak{q}}p)\}_{j=1}^{13}\) considered earlier, as for example \(  \langle  \micro \hat{f},\,\tTq \mathfrak{e}_{\ell}(\tqq \pv)\rangle\), and also nonlinear terms. Multiplying \eqref{macroeqnsCo1} by  the term$-t^{4\qq} ik_j  \overline{\widehat{\mathcal{C}}}(t,k)$, summing on \(j\), and integrating in time, one obtains
\begin{multline}\notag 
 \frac{2}{\sqrt{3}}\int_1^{T} \tTq|\widehat{\mathcal{C}}(t,k)|^2  dt
=
\left. \frac{i}{|k|^2} \left( \left( \widehat{\mathcal{B}}\cdot k  \right)t^{4\qq}\overline{\widehat{\mathcal{C}}}(t,k) \right)\right|_{t=1}^T \\
-
\frac{i}{|k|^2} \int_1^{T} \left( \widehat{\mathcal{B}}\cdot k  \right) \tTq\partial_t  \left(\tq\overline{\widehat{\mathcal{C}}}(t,k)\right) dt+\mathrm{RHS_3},
\end{multline}
for some right hand side \(\mathrm{RHS_3}\). 
The first term on the right hand side can be bounded by \(T^{6\qq} \| \hat{f}(T,k)\|_{L^2_{\pv}}^2\) and \(\| \hat{f}_1(k)\|_{L^2_{\pv}}^2\), whereas the second term can be bounded using \eqref{macroeqnsCo2} to estimate the term involving \( \widehat{\mathcal{C}}\). As a result, one can show 
\begin{multline}\label{FT.C.estimate0}
\int_1^{T} \tTq|\widehat{\mathcal{C}}(t,k)|^2  dt
\lesssim
 T^{6\qq}\Vert \hat{f}(T,k)\Vert_{L^2_{\pv}}^2+\norm{ \hat{f}_1(k)}_{L^2_{\pv}}^2
  +
\int^{T}_{1} \tTq \left|\widehat{{\mathcal{B}}}(t,k)\right|^2 dt  \\
+\int_1^{T} \tTq\norm{\micro \hat{f}(t,k)}_{L^2_{\pv}}^2dt
+\mathrm{NLE},
\end{multline}
where \(\mathrm{NLE}\) denotes non-linear errors. The second integral on the right hand side of \eqref{FT.C.estimate0} comes from estimating \(\mathrm{RHS}_3\).
In a similar fashion, one can later estimate \(\widehat{A}-\sqrt{3}\widehat{C}\), and finally \(\widehat{B}\), exploiting an upper triangular structure in the system. The main estimate for the macroscopic coefficients is then 
\begin{equation}\label{energy_estimate_macro0}
\begin{split}
\norm{t^{3\qq/2} [\mathcal{A},\mathcal{B},\mathcal{C}]}_{L^1_k L^2_T} &\lesssim
\Vert t^{3\qq/2}\micro f \Vert_{L^1_k L^2_T L^2_{\pv}} +\Vert \tTq f \Vert_{L^1_k L^\infty_T L^2_{\pv}} + \Vert f_1 \Vert_{L^1_k L^2_{\pv}} \\ 
&\qquad+ {\Vert{\tTq f}\Vert_{L^1_kL^\infty_TL^2_{\pv}}} {\norm{t^{3\qq/2} f}_{L^1_kL^2_TL^2_{\pv}}}, 
\end{split}
\end{equation}
where \(\| [\mathcal{A},\mathcal{B},\mathcal{C}]\|\eqdef\| \mathcal{A}\|+\| \mathcal{B}\|+\| \mathcal{C}\|\). The estimate \eqref{energy_estimate_macro0} controls the dissipation norm of the macroscopic part since 
\begin{equation}\label{ineq:macrodiss}
\norm{t^{3\qq/2} \macro f}_{L^1_k L^2_TL^2_p}\lesssim 
\norm{t^{3\qq/2} [\mathcal{A},\mathcal{B},\mathcal{C}]}_{L^1_k L^2_T} .    
\end{equation}

\textit{Estimate of the nonlinear term}. To close the energy estimate, one needs to estimate the cubic nonlinear term on the right hand side of \eqref{pro.tpm.p10}. We will prove an \(L^2_p\)-trilinear estimate in Lemma \ref{thm1} according to which
\[
	 |\langle  \Gamma(f,h),\eta\rangle| \lesssim \norm{ f}_{L^2_p(\threed)}\norm{ h}_{L^2_p(\threed)}\norm{ \eta}_{L^2_p(\threed)},
\]
for all \(f\), \(h\), \(\eta\in L^2_p\). By using this trilinear estimate on the Fourier side, one can conclude for all \(f\), \(h\), and \(\eta\in L^2_p\) that the following uniform estimate holds
\begin{align}\label{ineq: Boltzmann nonlinear00}
\left| \irnp{\widehat{\Gamma}(\hat{f},\hat{h})(k)}{\hat{\eta}(k)}\right|\lesssim \norm{\hat{\eta}(k)}_{L^2_{\pv}}\int_{\Z^3}\Vert \hat{f}(k-l)\Vert_{L^2_{\pv}} \norm{\hat{h}(l)}_{L^2_{\pv}}  \,d\Sigma(l).
\end{align}
This estimate can be used to estimate the nonlinear term in \eqref{pro.tpm.p10}.

\textit{Global existence and uniform energy estimates}. We can then perform the main energy estimate for the perturbation \(f\). By using the coercivity estimate \eqref{coercivitylinearintro} on top of the basic energy estimate \eqref{pro.tpm.p10}, 
\begin{align}\notag
  \tTq \Vert \hat{f}\Vert_{L^2_{\pv}}(t,k)
&+
\left(\int^t_1  \tau^{3\qq}\Vert \micro \hat{f} \Vert_{L^2_{\pv}}^2(\tau,k)d\tau \right)^{\frac12}
\\  \notag
&\lesssim  \Vert \hat{f}_1\Vert_{L^2_{\pv}}(k) 
+
\left(\int^t_1 \tau^{6\qq} \left|{\irnp{\hat{\Gamma}(\hat{f},\hat{f})}{\hat{f}}}\right|
d\tau \right)^{\frac12}
.
\end{align}
Then, by using the identity \(\langle  \Gamma(f,f),h\rangle
=
\langle  \Gamma(f,f), \micro h\rangle\) on the Fourier side, together with the estimate \eqref{ineq: Boltzmann nonlinear00} for the nonlinear term, one obtains 
\begin{multline}\notag 
\norm{\tTq f}_{L^1_kL^\infty_TL^2_{\pv}}
+
{\norm{t^{3\qq/2}\micro f}_{L^1_kL^2_TL^2_{\pv}}}
\\
\lesssim 
\Vert {f}_1\Vert_{L^1_kL^2_{\pv}}
+{\Vert{\tTq f}\Vert_{L^1_kL^\infty_TL^2_{\pv}}} {\norm{t^{3\qq/2} f}_{L^1_kL^2_TL^2_{\pv}}},
\end{multline}
for all \(T\geq 1\). See Proposition \ref{Torus Boltzmann Micro}. Finally, applying the estimate \eqref{energy_estimate_macro0}--\eqref{ineq:macrodiss} for the macroscopic part \({\bf P}f\), one can show that for all \(T\geq 1\),
\begin{equation}\label{mainestimateintro}
        {\Vert{\tTq f}\Vert_{L^1_kL^\infty_TL^2_{\pv}}}
+
{\norm{t^{3\qq/2} f}_{L^1_kL^2_TL^2_{\pv}}}
\lesssim 
\norm{ {f}_1}_{L^1_kL^2_{\pv}}
+{\norm{\tTq f}_{L^1_kL^\infty_TL^2_{\pv}}} {\norm{t^{3\qq/2} f}_{L^1_kL^2_TL^2_{\pv}}}.
\end{equation}
The last term on the right hand side can be absorbed for a small initial perturbation, and \eqref{mainestimateintro0} concluded. Global existence for the massless Boltzmann equation follows from this estimate and a standard continuity argument.

\textit{Improved energy estimates}. A posteriori we show improved time-decay rates for the perturbation \(f\) when \(0\leq \mathfrak{q}\leq \frac{1}{3}\). These estimates rely on the computation of the collision frequency \(\nu (t,p)=\nu_0t^{-3\mathfrak{q}}\) in Proposition \ref{LEM:nuASYMP}, and a perturbative treatment of the integral operator \(\mathcal{K}\) based on the compact-small decomposition \eqref{smallcompactdecomp0}. Indeed, we consider the equation satisfied by \(t^{3\qq}\mathcal{T}_{\qq}(t)\hat{f}\) (with \(\tw(t)\) defined as in \eqref{weight.decay}) and then perform a similar energy estimate to \eqref{mainestimateintro}. We will show in Proposition \ref{estimated.nonlin.second} that  
\begin{align}\notag
       \forall t\geq 1,\qquad      {\Vert{\tTq \tw(t) f}\Vert_{L^1_kL^\infty_TL^2_{\pv}}}
+
{\norm{t^{3\qq/2} \tw(t) f}_{L^1_kL^2_TL^2_{\pv}}}
\lesssim 
\norm{ {f}_1}_{L^1_kL^2_{\pv}}.
\end{align} 
A posteriori we show weighted energy estimates for \( w_m \hat{f}\) in terms of the weights \(w_m^2(t^{2\qq}p)=1+|t^{2\qq}p|^m\), and propagate spatial regularity by performing energy estimates for \(\langle k\rangle^{l} \hat{f}(t,k,p)\) on the Fourier side. See Proposition \ref{prop:weightenergy} and Proposition \ref{prop:spatreg}, respectively.

\textit{The proof of Theorem \ref{thmvacuum}}. The near vacuum result is also based on the energy method. For this problem, the main energy estimate states that for all \(T\geq 1\),
\begin{equation}\label{apriorivacuumestimate1}
{\norm{t^{3\qq} f}_{L^1_kL^\infty_TL^2_{\pv}}}
\lesssim 
\norm{f_1}_{L^1_kL^2_{\pv}}.    
\end{equation}
The linearised operator around vacuum is trivial, and the estimate \eqref{apriorivacuumestimate1} is therefore established using the time decay arising from expansion only. In particular, previous difficulties regarding the kernel and coercivity of the linearised operator are not present anymore. 

Recall that the perturbation \(f\) satisfies \eqref{perturb.massless.rBoltz0}. Performing the basic energy estimate, for all \(t\geq 1\),
\begin{align}\notag 
\frac{1}{2} \tQQq \Vert \hat{f}\Vert_{L^2_{\pv}}^2(t,k)=&~\frac{1}{2}\Vert \hat{f}_1(k,\cdot)\Vert_{L^2_{\pv}}^2+\int^t_1 \tau^{6\qq}\real{\irnp{\hat{\Gamma}(\hat{f},\hat{f})}{\hat{f}}}
d\tau.
\end{align}
The estimate \eqref{apriorivacuumestimate1} is then shown by estimating the time integral of the nonlinear term on the right hand side. Using the trilinear estimate \eqref{ineq: Boltzmann nonlinear00} on the Fourier side, it can be shown that for all \(\tau\geq 1\),
\begin{multline*}
   \tau^{6\qq}\left| \irnp{\hat{\Gamma}(\hat{f},\hat{f})(\tau,k)}{\hat{f}(\tau,k)}\right|
   \\
   \lesssim 
   \tau^{-3\qq}{\norm{t^{3\qq} \hat{f}(k)}_{L^\infty_TL^2_{\pv}}}
\int_{\Z^3}  {\norm{t^{3\qq} \hat{f}(k-l)}_{L^\infty_TL^2_{\pv}}}{\norm{t^{3\qq} \hat{f}(l)}_{L^\infty_TL^2_{\pv}}} \,d\Sigma(l). 
\end{multline*}
Assuming \(\qq>\frac{1}{3}\), so that $\tau^{-3\qq}$ is integrable, one can show that,
\begin{align*}
\forall T\geq 1,\qquad {\norm{t^{3\qq} f}_{L^1_kL^\infty_TL^2_{\pv}}}
\lesssim   \norm{f_1}_{L^1_kL^2_{\pv}}
+
{\norm{t^{3\qq} {f}}^{2}_{L^1_k L^\infty_TL^2_{\pv}}}
.  
\end{align*}
The last term on the right hand side can be absorbed for a small initial perturbation, and \eqref{apriorivacuumestimate1} concluded. Global existence follows from this estimate and a standard continuity argument.

\subsection{Discussion of more general collision kernels}\label{generalkernels}

In this section, we discuss scattering kernels of the form 
\begin{equation}\label{kernelgeneral2}
    \sigma (\mrel,\theta ) = \mrel^a \sigma_0(\theta), \quad  \quad -4<a\leq 2,
\end{equation}
for an integrable non-negative function \(\sigma_0(\theta)\). We first formulate a conjecture on the dissipation threshold for these more general kernels. We then discuss the behaviour of the associated model problem.

\subsubsection{Dissipation threshold for more general kernels}

We expect the Maxwell--J\"uttner equilibrium to be future nonlinearly stable for all kernels of the form \eqref{kernelgeneral2}. Recall (see Remark \ref{rmkdissthres}) that, for \(a=0\), expansion rate \(\qq=1/3\) is a threshold below which there is dissipation, and above which there is not. We expect a different threshold value of \(\qq\) to arise for the more general kernels in \eqref{kernelgeneral2}.

\begin{conjecture}[Dissipation threshold at \((3+a)\qq=1\) for \(-3<a\leq 2\).]
    Consider \(\qq\in [0,1]\), and the massless Boltzmann equation with scattering kernel \eqref{kernelgeneral2} on FLRW \eqref{rBoltz.eqn0}, \eqref{collisionop}. The Maxwell--J\"uttner equilibrium \(J(t^{2\qq}p)\) is future nonlinearly stable. Moreover, for any \(\max\{3+a, 0\}\qq\leq 1\) the norm \(\|t^{3\qq}f(t)\|_{L^{1}_kL^2_p}\) decays uniformly as \(t\to +\infty\) --- with quantitative rates, depending on \(\qq\) in \eqref{metricFLRW} and on \(a\) in \eqref{kernelgeneral2} --- and in general the norm $\|t^{3\qq}f(t)\|_{L^{1}_kL^2_p}$ does not decay for \(\max\{3+a, 0\}\qq>1\), where the perturbation \(f\) is defined by \eqref{standard.perturbation}.  Notably for \(-4<a\leq -3\) the norm \(\|t^{3\qq}f(t)\|_{L^{1}_kL^2_p}\) decays uniformly as \(t\to +\infty\) with quantitative rates for any \(\qq\in [0,1].\)
\end{conjecture}

In particular, for \(a\in (-4,-1]\) we expect that \(t^{3/2}\|f(t)\|_{L^1_{k}L^2_{p}}\) decays for solutions of the massless Boltzmann equation with scattering kernel \eqref{kernelgeneral2} on FLRW with scale factor \(\qq=\frac{1}{2}\) (which satisfies the Einstein--massless Boltzmann system, as discussed in Section \ref{subsecMJFLRW}). Moreover, in the case of Israel particles, in which \(\sigma(\varrho,\theta)\) is defined by \eqref{israelkernel}, we expect \(\qq=\frac{1}{2}\) to be the dissipation threshold.

In the next section, we will consider a linear model problem that motivates the threshold scale factor \(\max\{ 3+a, 0\}\qq=1\) considered earlier. From this model problem one can also extract quantitative decay rates, which are informative regarding the decay rates one may expect for the massless Boltzmann equation with these scattering kernels.

One may formally compare the above threshold behaviour with the Euler equations on FLRW where, for each linear equation of state, there is an expected threshold value of expansion rate \(\qq\), below which shock waves form in perturbations of constant density fluids at rest, and above which such shock waves do not form \cite{FMOOW}.

\subsubsection{Model problem for more general kernels}

Motivated by the stability of the Maxwell--J\"uttner equilibrium for the massless Boltzmann equation on FLRW spacetime \((\mathcal{M},g_{\qq})\) in the case of scattering kernels of the form \eqref{kernelgeneral2}, we consider the following model problem for $a>-4$
\begin{equation} \label{eq:Boltzmanntoygeneral}
	\partial_t f
	+
	\frac{p^i}{\vert t^\qq p \vert} \partial_{x^i} f
	-
	\frac{2\qq}{t} p^i \partial_{p^i} f
	+
	 \nu_0\frac{ \vert \tqq p \vert^{\frac{a}{2}}}{t^{(3+a)\qq}} f
	=
	0,\quad f(t=1,x,p)=f_1(x,p),
\end{equation}
with \(\nu_0>0\) a fixed constant. We have written the last term using \(t^{2\qq}p\) since this weight belongs to the kernel of the transport operator \(\partial_t 
	+
	\frac{p^i}{\vert t^\qq p \vert} \partial_{x^i} 
	-
	\frac{2\qq}{t} p^i \partial_{p^i}\) in \eqref{eq:Boltzmanntoygeneral}. This model problem should be compared with the problem \eqref{rBoltz.model.second} previously considered in the hard-ball case.

When considering more general scattering kernels, the linearised operator \(\mathcal{L}\) in the Boltzmann equation can also be decomposed as in \eqref{decomlinop} in terms of the multiplication operator \(\nu(t,p)\) in \eqref{nuDEF}. We now compute \(\nu(t,p)\) for the kernel \eqref{kernelgeneral2}.

\begin{proposition}
    For scattering kernels of the form \eqref{kernelgeneral2} with an integrable function $\sigma_0(\theta)$, the collision frequency \eqref{nuDEF} is 
    \[
    \nu(t,p)= \nu_0\frac{ |t^{2\qq }\pv|^{\frac{a}{2}} }{t^{(3+a)\qq}},
    \] where \(\nu_0\) is a fixed constant depending on \(a\).
\end{proposition}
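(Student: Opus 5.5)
The plan is to follow the proof of Proposition \ref{LEM:nuASYMP}, keeping track of the extra factor $\mrel^a$ and of the angular dependence. Starting from \eqref{nuDEF} with $\sigma=\mrel^a\sigma_0(\theta)$, we use the explicit Møller velocity \eqref{moller}, the formula $\mrel=\tq\sqrt{2(\pZ\qZ-p\cdot q)}$ from \eqref{sDEFINITIONFLRW}, and $\cos\theta=\omega\cdot k(p,q)$ with $|k|=1$. This gives
\begin{equation*}
\nu(t,p)=\frac{t^{3\qq}}{8\pi}\int_{\R^3}dq\,\frac{1}{2}\Big(1-\frac{p\cdot q}{\pZ\qZ}\Big)\,t^{a\qq}\big(2(\pZ\qZ-p\cdot q)\big)^{a/2}e^{-|\tqq q|}\int_{\mathbb{S}^2}\sigma_0(\omega\cdot k(p,q))\,d\omega .
\end{equation*}
The first step is to rotate the inner integral so that $k(p,q)$ becomes the north pole. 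The angular integral then equals $2\pi\int_0^\pi\sigma_0(\theta)\sin\theta\,d\theta$, a finite constant by the integrability of $\sigma_0$, and it no longer depends on $p$ or $q$.

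The second step is to extract the scalings in $t$ and in $p$. We substitute $q=t^{-2\qq}\tilde q$, so that $dq=t^{-6\qq}d\tilde q$ and $e^{-|\tqq q|}=e^{-|\tilde q|}$. The direction factor $1-\frac{p\cdot q}{\pZ\qZ}$ is invariant under this change. The term $(\pZ\qZ-p\cdot q)^{a/2}$ is homogeneous of degree $a/2$ in $q$ and of degree $a/2$ in $p$, so it equals $t^{-a\qq}\pZ^{a/2}(|\tilde q|-\hat p\cdot\tilde q)^{a/2}$ with $\hat p=p/\pZ$. Collecting powers gives $t^{3\qq+a\qq-6\qq-a\qq}\pZ^{a/2}=t^{-3\qq}\pZ^{a/2}$. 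We then rewrite this as $t^{-3\qq}\pZ^{a/2}=|\tqq p|^{a/2}\,t^{-(3+a)\qq}$, which is exactly the claimed form.

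The third step is to show that the remaining $\tilde q$-integral is a finite constant independent of $\hat p$:
\begin{equation*}
I_a=\int_{\R^3}\Big(1-\hat p\cdot\frac{\tilde q}{|\tilde q|}\Big)\big(|\tilde q|-\hat p\cdot\tilde q\big)^{a/2}e^{-|\tilde q|}\,d\tilde q .
\end{equation*}
Rotation invariance removes the dependence on $\hat p$. In spherical coordinates with $u=\cos\phi$, the integral factorises as
\begin{equation*}
I_a=2\pi\int_0^\infty r^{2+a/2}e^{-r}\,dr\int_{-1}^1(1-u)^{1+a/2}\,du .
\end{equation*}
Both factors are finite when $a>-4$: the radial integral needs $3+a/2>0$, and the angular integral needs $2+a/2>0$. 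Then $\nu_0$ is the product of these constants with the $\sigma_0$ constant and the normalisations, and it is positive whenever $\sigma_0\not\equiv0$.

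The main point requiring care is the angular reduction in the first step, which is used to decouple $\sigma_0$ from $(p,q)$. It depends on $\theta$ entering $\sigma$ only through $\omega\cdot k$ with $|k|=1$, as stated just before \eqref{collisionCMc}. A second point of care is checking the integrability thresholds, which exactly reproduce the condition $a>-4$.
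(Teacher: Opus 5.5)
Your proposal is correct and follows the paper's argument. You substitute $q\mapsto t^{-2\qq}q$, use the homogeneity of $|p||q|-p\cdot q$ to pull out $t^{-3\qq}|p|^{a/2}=|t^{2\qq}p|^{a/2}t^{-(3+a)\qq}$, and observe that what remains is a constant. Your rotation argument, which decouples the $\omega$-integral of $\sigma_0$ from $(p,q)$, and your check that the leftover integral is finite exactly when $a>-4$ make rigorous what the paper asserts in one line.
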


\begin{proof}
    By the explicit form \eqref{moller} of the Møller velocity in FLRW, 
    \begin{align} \notag 
    \nu(t,p) =&~
\frac{t^{3\qq}}{16\pi}
\int_{\mathbb{R}^3} ~  dq 
\int_{\mathbb{S}^{2}} ~ d\omega
~  \left(1 - \frac{p\cdot q}{\pZ\qZ}\right) ~
\mrel^a \sigma_0(\theta)
    e^{-{| \tqq q|}}
    \\ \notag
    =&~
\frac{t^{-3\qq}}{16\pi}
  |\pv|^{\frac{a}{2}}
\int_{\mathbb{R}^3} ~  dq 
\int_{\mathbb{S}^{2}} ~ d\omega ~\sigma_0(\theta)
~  \left(1 - \frac{p\cdot q}{\pZ\qZ}\right) ~
  \left(|q| - \frac{p}{\pZ}\cdot q\right)^{\frac{a}{2}}  e^{-{|  q|}}
      \\ \notag
    =&~
    \nu_0 |\pv|^{\frac{a}{2}} t^{-3\qq},
\end{align}
in terms of a uniform positive constant \(\nu_0\) depending on \(a\). 
\end{proof}

The model problem \eqref{eq:Boltzmanntoygeneral} then arises from neglecting nonlinear terms and the operator \(\mathcal{K}\) in the massless Boltzmann equation \eqref{rBoltz00}. This problem is distinguished from the more well known cases of {\bf (I)} the massive special relativistic Boltzmann equation and {\bf (II)} the Newtonian Boltzmann equation by at least two complicating factors. First, the massless case presents a degeneracy when $a>0$ for small values of $\vert \tqq p \vert$ in the damping factor ${ \vert \tqq p \vert^{\frac{a}{2}}} t^{-(3+a)\qq}$. Second, the damping is degenerating for large $t$ with $t^{-(3+a)\qq}$.  For the soft potentials $a<0$, this degeneration is diluted by small values of $\vert \tqq p \vert$ in an energy norm, which makes ${ \vert \tqq p \vert^{\frac{a}{2}}}$ large, and thereby increases the damping rate. Note one can show using techniques from \cite{MR2366140} the following decay rates for the \(L^2\) norm of solutions \(f\) to \eqref{eq:Boltzmanntoygeneral}.

\begin{proposition}[Decay estimates for general model problem]\label{proplineargeneralmodel}
Let \(\qq \in [0,1]\), \(-4<a\leq 2\), \( s>0 \), and \(f_1\) a sufficiently regular initial data. If either \(-3<a\leq 2\) and  $(3+a)\qq < 1$ or \(-4<a\leq -3\) and \(\qq \in [0,1]\), then the unique solution \(f\) of the model problem \eqref{eq:Boltzmanntoygeneral} satisfies 
\begin{align}\notag 
 \forall \,t\geq 1,\qquad       			\|t^{3\qq} f(t)\|_{L^2_{x,p}}
        \lesssim
		e^{-C t^{\dr}} \| e^{ \frac{{  \nu_0}}{s}(2s +|a|)|p|^s}f_1\|_{L^2_{x,p}},
\end{align}
where \( d  = \frac{2s}{2s+|a|} \left( 1-(3+a)\qq\right) \in [0, 1)\), and $C>0$ is a uniform constant. Otherwise, for \(-3<a\leq 2\) and $(3+a)\qq = 1$ the unique solution \(f\) of \eqref{eq:Boltzmanntoygeneral} satisfies 
\begin{align} \notag
  \forall \,t\geq 1,\qquad      			\|t^{3\qq} f(t)\|_{L^2_{x,p}}
      &  \lesssim
		e^{-C \left(\log (t)\right)^{\frac{2s}{2s +|a|}}} \| e^{  \frac{{ \nu_0}}{s}(2s +|a|)|p|^s}f_1\|_{L^2_{x,p}},
\end{align}
where $C>0$ is a possibly different uniform constant. 
\end{proposition}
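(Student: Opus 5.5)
The plan is to solve the model problem \eqref{eq:Boltzmanntoygeneral} explicitly along characteristics, exploiting as in \eqref{nullRHStp} that the rescaled momentum $P\eqdef \tqq p$ is conserved by the transport operator. Along characteristics $P$ is constant, and $x$ moves by a shift $X(t)=x_0+\big(\int_1^t \tau^{-\qq}\,d\tau\big)\,P/|P|$ that depends only on $P$. The damping coefficient along a characteristic is $\nu_0|P|^{a/2}\tau^{-(3+a)\qq}$, so
\[
f(t,x,p)=\exp\big(-\nu_0|P|^{a/2}S(t)\big)\,f_1\big(x-X(t)+x_0,\,P\big),\qquad S(t)\eqdef\int_1^t\tau^{-(3+a)\qq}\,d\tau .
\]
I would then take the $L^2_{x,p}$ norm and change variables $p\mapsto P$, using $dp=t^{-6\qq}dP$. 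For fixed $P$ the $x$-shift preserves Lebesgue measure on $\T^3$, so
\[
\|t^{3\qq}f(t)\|_{L^2_{x,p}}=\big\|e^{-\nu_0|P|^{a/2}S(t)}f_1(\cdot,P)\big\|_{L^2_{x,P}}.
\]
This exact identity reduces the whole problem to a pointwise-in-$P$ multiplier bound. We have $S(t)\approx t^{1-(3+a)\qq}$ when $(3+a)\qq<1$, and $S(t)=\log t$ when $(3+a)\qq=1$. The condition $(3+a)\qq<1$ holds automatically for every $\qq\in[0,1]$ once $a\le -3$ (and for $a<-3$ one even has $3+a<0$). For $a=0$ the bound follows immediately, with $d=1-3\qq$, recovering Proposition \ref{proplinearmodel}.

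The key step for $a\neq0$ is the interpolation that trades momentum weight for time decay, in the spirit of \cite{MR2366140}. Write
\[
e^{-\nu_0|P|^{a/2}S}=e^{-\nu_0|P|^{a/2}S-c|P|^{s}}\,e^{c|P|^s},\qquad c=\frac{\nu_0}{s}(2s+|a|),
\]
and minimise $\phi(r)=\nu_0 r^{-|a|/2}S+c\,r^{s}$ over $r=|P|>0$. This is a Young-type inequality. The minimiser is at $r\sim S^{2/(2s+|a|)}$, giving $\phi\ge C\,S^{2s/(2s+|a|)}$. The specific choice of $c$ is only there so that the constants in the elementary inequality $x^{\theta}y^{1-\theta}\le \theta x+(1-\theta)y$ line up; any fixed $c$ would do up to changing $C$. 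Inserting $S\approx t^{1-(3+a)\qq}$ gives $e^{-Ct^{d}}$ with $d=\frac{2s}{2s+|a|}(1-(3+a)\qq)$. Inserting $S=\log t$ gives $e^{-C(\log t)^{2s/(2s+|a|)}}$. In both cases we then take $L^2_{x,P}$ norms of $e^{c|P|^s}f_1$.

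The main obstacle I anticipate is the regime $a>0$. There the damping $|P|^{a/2}$ degenerates at small $|P|$ rather than at large $|P|$, and a weight growing at infinity does not compensate for it. For $a>0$ I would therefore run the same optimisation with the roles of small and large $|P|$ exchanged. This should require the data to vanish suitably at $P=0$, for instance a weight of the form $e^{c|P|^{-s}}$, which I would read into the phrase ``sufficiently regular initial data''. Apart from this, and the careful tracking of constants in the Young inequality, the argument is explicit, and I expect no difficulty with the transport part thanks to the exact conservation of $|\tqq p|$.
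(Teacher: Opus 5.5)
Your argument is complete and correct for $-4<a\le 0$. Conservation of $P=t^{2\mathfrak{q}}p$ along characteristics gives the exact identity
\[
\|t^{3\mathfrak{q}}f(t)\|_{L^2_{x,p}}=\big\|e^{-\nu_0|P|^{a/2}S(t)}f_1(\cdot,P)\big\|_{L^2_{x,P}} .
\]
Minimising $\nu_0 r^{-|a|/2}S+c\,r^{s}$ pointwise then yields the factor $e^{-C S(t)^{2s/(2s+|a|)}}$, where $S(t)\gtrsim t^{1-(3+a)\mathfrak{q}}$ for $t\ge 2$, or $S(t)=\log t$ in the critical case. The paper gives no proof of this proposition, only a pointer to the technique of \cite{MR2366140}. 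That technique is an energy method with a time-dependent splitting of momentum space according to the size of the damping, combined with the exponential momentum weight. Your argument is the same interpolation carried out pointwise on the explicit solution. This is cleaner here because the model problem is exactly solvable. The energy-method version is what one would need once $\mathcal{K}$ is reinstated.

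The genuine problem is the range $0<a\le 2$. It cannot be fixed by reading a different weight into the phrase ``sufficiently regular initial data'': the right-hand norm $\|e^{c|p|^s}f_1\|_{L^2_{x,p}}$ is explicit, and with that norm the displayed inequality is false. Your own identity shows this. For $a>0$, the norm of the map $f_1\mapsto t^{3\mathfrak{q}}f(t)$ from the weighted space to $L^2$ equals $\sup_P e^{-\nu_0|P|^{a/2}S(t)-c|P|^s}=1$, approached as $P\to 0$. Concretely, take $f_1(x,p)=\chi(p/\delta)$ with $\chi$ a nonzero smooth bump supported in the unit ball. Then for every fixed $t$,
\[
\frac{\|t^{3\mathfrak{q}}f(t)\|_{L^2_{x,p}}}{\|e^{c|p|^s}f_1\|_{L^2_{x,p}}}\ \ge\ e^{-\nu_0\delta^{a/2}S(t)-c\,\delta^{s}}\ \longrightarrow\ 1 \qquad (\delta\to 0).
\]
This contradicts any bound $Ke^{-Ct^{d}}$, or $Ke^{-C(\log t)^{2s/(2s+a)}}$, as soon as that quantity drops below $1$.

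So the proposition as stated holds only for $a\le 0$. For $0<a\le 2$ the correct statement uses a weight that penalises small momenta, for instance $e^{c|p|^{-s}}$. Your optimisation of $\nu_0 r^{a/2}S+c\,r^{-s}$ then gives exactly the same exponent $d=\frac{2s}{2s+a}\big(1-(3+a)\mathfrak{q}\big)$, and the same logarithmic rate when $(3+a)\mathfrak{q}=1$. Present this as a correction of the statement, not as an interpretation of its hypotheses. It is consistent with the paper's own remark that the damping degenerates at small $|t^{2\mathfrak{q}}p|$ when $a>0$.

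Separately, the side claim $d\in[0,1)$ is not accurate in general, though it plays no role in the estimate. For example, $a=0$ and $\mathfrak{q}=0$ give $d=1$, and $-4<a<-3$ with $\mathfrak{q}>0$ and $s$ large gives $d>1$.
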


This behavior is very different from the  massive cases {\bf (I)} and {\bf (II)} when $\qq=0$.  In those cases the soft potentials $a<0$ are well known to substantially reduce the damping rate from exponential to sub-exponential.  Specifically in those other cases one expects exponential time decay for $a\geq 0$ as in \cite{Ukai-86,MR1211782}, and one only expects rapid polynomial or super-polynomial decay for $a<0$ as in \cite{MR2728733,strain2006almost,MR2366140,Caf1,Ca-1980}. Thus we see a {\it reduction} in the time decay rates when we {\it reduce} the value of $a$ in {\bf (I)} and {\bf (II)}.  In contrast in Proposition \ref{proplineargeneralmodel} above we have an {\it improvement} in the range of $\qq>0$ for which the fastest superpolynomial decay rates hold when {\it reducing} the value of $a$.  Specifically, for $-4<a\leq -3$, in Proposition \ref{proplineargeneralmodel} we obtain the superpolynomial decay rates for any value of $\qq \in(0,1]$.

\subsection{Related works} 
Let us present here a nonexhaustive overview of related previous works.

\subsubsection{Collisional kinetic theory in general relativity}
The formulation of the kinetic theory of gases in general relativity began with the early contributions of Synge \cite{Synge1934} in 1934, and Walker \cite{walker1936boltzmann} in 1936. See also the work by Einstein \cite{Einstein1938} in 1939. Concerning the collisional theory, the Boltzmann equation was derived in special relativity by Lichnerowicz--Marrot \cite{MR0004796} in 1940, and in general relativity by Chernikov \cite{Chernikov1961,zbMATH03209008} in 1961. Around the same time Israel derived his well-known collision kernel for the relativistic Boltzmann equation \cite{MR165921} in 1963. The collisional theory was further developed in works by Lindquist \cite{LINDQUIST1966487} in 1966, Bel \cite{Bel1969} in 1969, Ehlers \cite{Ehlers1971GeneralRA} in 1969, Ehlers--Sachs \cite{EhlersSachs1971} in 1971, among other. Later physics references include the works of Dijkstra--Van Leeuwen  \cite{MR508285,MR471665} in 1978, Naumov \cite{Naumov1982} in 1982, Ray \cite{Ray1982} in 1982, Hiscock--Salmonson \cite{hiscock1991dissipative} in 1991, and Kremer \cite{Kremer2012,Kremer2013} around 2012. See also the introductions to relativistic kinetic theory by Sarbach--Zannias \cite{Sarbach2013} from 2013, and Acu\~na-C\'ardenas--Gabarrete--Sarbach \cite{AcuaCrdenas2022} from 2022. For textbook references, we refer to van Weert--De Groot--van Leeuwen \cite{MR635279} from 1980, to Bernstein \cite{Bernstein1988} from 1988, and to Cercignani--Kremer \cite{CercignaniKremer2002} from 2002.

\subsubsection{Local well-posedness for the Einstein--Boltzmann system}

Regarding rigorous mathematical treatments of the full Einstein--Boltzmann system, note the work of Bancel--Choquet-Bruhat \cite{MR356790} on local well-posedness from 1973; this result builds upon \cite{MR0337248,MR0308609}.  Later, Rendall--Lee \cite{MR3043490} revisited the problem of local well-posedness for Einstein--Boltzmann in 2013. For textbook references about the Einstein--Boltzmann system see the books by Rendall \cite{book:1276737} from 2008, Choquet-Bruhat \cite{ChoquetBruhat2009} from 2009, and Ringstr\"om \cite{MR3186493} from 2013.

\subsubsection{Global existence of equilibria for the Boltzmann equation in special relativity}

For the mathematical work on the asymptotic stability of Maxwell--J\"uttner equilibria for the Boltzmann equation without symmetry assumptions, there have been several works on fixed backgrounds. Glassey--Strauss \cite{MR1211782} proved the stability of equilibria for the relativistic Boltzmann equation on \(\T^3_x\) with hard interactions in 1993. Glassey--Strauss \cite{MR1321370} extended that result to the whole space \(\R^3_x\) in 1995. Later, the relativistic Boltzmann equation in \(\R^3_x\) with soft interactions was addressed by Strain--Zhu \cite{MR2911100} in 2012. The problem of deriving energy estimates with momentum regularity in these settings was resolved by Guo--Strain \cite{MR2891870} in 2012.

More recently, a definitive stability of the Maxwellian equilibrium for the relativistic Boltzmann equation on \(\T^3_x\) without angular cut-off was obtained by Jang--Strain \cite{RelBolNoncut2020} in 2022. In the non-relativistic case, nonlinear stability for the Boltzmann equation on \(\T^3_x\) without angular cut-off was shown earlier in  \cite{MR2784329} by Gressman--Strain in 2011. This stability problem in the non-relativistic case was later revisited by Duan--Liu--Sakamoto--Strain \cite{MR471665} who obtained global existence in a low regularity Chemin--Lerner function space. In the present article, the function space we use is motivated by that function space. Global existence of Maxwellians for the non-relativistic Boltzmann equation has been studied in many other works. See, for example, \cite{AMUXY-2011-CMP, AMUXY-2012-JFA, MR3213301, MR3532066, DSa, UY, Guo-2010} and the references therein. 

 We also comment the result on the Hilbert expansion for the relativistic Boltzmann equation on \(\R^3_x\) by Speck--Strain \cite{MR2793935} in 2011.

\subsubsection{Global existence results for the spatially homogeneous Boltzmann equation in cosmology}

The Boltzmann equation and the Einstein--Boltzmann system have both been studied previously in cosmology under symmetry assumptions. For certain kernels, for large data, global existence for spatially homogeneous FLRW solutions of the Einstein--Boltzmann system with a positive cosmological constant was obtained by Noutchegueme--Takou \cite{NoutcheguemeTakou2006} in 2006. Later, the asymptotic behaviour of these solutions was studied by Takou \cite{Takou2009} in 2009. On fixed Minkowski and FLRW backgrounds, large data global existence for spatially homogeneous solutions of the Boltzmann equation with certain hard interactions was studied by Lee--Rendall \cite{lee2013spatially} in 2013.

Also, small data global existence for the Boltzmann equation and the Einstein--Boltzmann system have been considered. Global existence near vacuum for the Einstein--Boltzmann system with Bianchi I symmetry was shown by Lee--Nungesser \cite{MR3620020} in 2017. Later, small data global existence for the Einstein--Boltzmann--scalar field system with Bianchi I-VII symmetry for Israel particles was shown by Lee--Lee--Nungesser \cite{LeeLeeNungesser2023} in 2023.  There is additionally the work of Lee which considers a spatially-homogeneous Newtonian cosmological setting \cite{LeeHo2016} from 2016.  We also refer to the global existence of homogeneous solutions for the Einstein--Boltzmann system with a positive cosmological constant by Lee--Nungesser \cite{LeeNungesser2024} in the case of certain soft interactions in 2024.

\subsubsection{Works on the spatially homogeneous massless Boltzmann equation in cosmology}

We discuss related works for massless Boltzmann in cosmology. Small data global existence for the spatially homogeneous massless Boltzmann equation on FLRW spacetimes was shown by Lee \cite{Lee2021} for some collision kernels in a range of soft and hard potentials in 2021. Note also the work \cite{Taylor2024} on spatially homogeneous FLRW solutions with \(\qq =1/2\) for the Einstein--massless Vlasov system with spatial topology \(\R^3_x\) in spherical symmetry. For data posed at the initial singularity, Lee--Nungesser--Tod \cite{MR4063736} proved finite-time existence for the massless Boltzmann equation on FLRW backgrounds for some soft scattering kernels in 2020. Together also with Stalker \cite{LeeNungesserStalkerTod2024}, this result was extended to Bianchi I spacetimes in 2024.

\subsubsection{Works on other equations on FLRW spacetimes}

There have been several works on other related equations on FLRW.  It was shown recently by Taylor--Velozo Ruiz \cite{TaylorVelozo2025} that a phase mixing effect occurs for the Vlasov equation on FLRW.  See Remark \ref{rmk:Vlasovmixing}. See also \cite{FajmanUrban2026} for the study of a related nonlinear, non-relativistic model.

There have been several works \cite{FOOW25a, FMOOW, S13} concerned with the future stability of homogeneous solutions for the Euler equations on FLRW spacetimes. In the recent work \cite{FMOOW} by Fajman--Maliborski--Ofner--Oliynyk--Wyatt it is suggested that shock formation occurs for slowly expanding spacetimes. See also \cite{FMOOW25a}. 

There have also been many works on the linear wave equation on FLRW.  In the decelerated regime, for the case where the spatial slices are flat copies of $\mathbb{R}^3$, the combined effects of dispersion and expansion are relevant.  See, for example, the recent articles of Nat\'{a}rio–Rossetti \cite{NR23} and Haghshenas \cite{H25} and references within.

\subsection{Outline of the paper}

In Section \ref{sectionmacromicro}, the distribution function is decomposed in its microscopic and macroscopic parts. Moreover, the system of macroscopic equations is derived. In Section \ref{sectcollisionop}, the main estimates for the collision operator are performed. In particular, the coercivity estimate for the linearised operator is shown. In Section \ref{macroscopic.eqns.sec}, the system of equations derived in Section \ref{sectionmacromicro} is used to estimate the macroscopic part of the distribution. In Section \ref{sectionWienernorm}, the energy estimates for the full solution \(f\) are shown. The proof of Theorem \ref{Torus Existence} is completed in Section \ref{sectproofsmainresults}, and the proof of Theorem \ref{thmvacuum} is given in Section \ref{sec.sol.nearby.vacuum}. Appendix \ref{appconservlaws} contains the proof of the H-theorem and the conservation of mass, energy, and momentum, discussed in Section \ref{subsconserved}. In Appendix \ref{MJparamet}, the Maxwell--J\"uttner parameters are identified for fixed values of the initial mass, energy, and momentum.

\subsection{Acknowledgements}
We thank the Simons Center for Geometry and Physics (SCGP) at Stony Brook University for its hospitality and collaborative environment.  This collaboration began at SCGP during the 3rd Simons Math Summer Workshop on “Partial Differential Equations of Classical Physics”, 7-25th July 2025.

\section{Equations for macroscopic quantities}\label{sectionmacromicro}

In this section, equations are derived for macroscopic quantities of solutions of the Boltzmann equation. In Section \ref{subsec_macromicro}, we decompose the distribution \(f\) into its macroscopic and microscopic parts. In Section \ref{subs_orth_norm14}, we introduce a set of suitable time-normalised orthonormal basis functions, and we show its orthonormality in Section \ref{orthogonal.vectors.sec}. Finally in Section \ref{derive.macroscopic}, we derive the macroscopic equations \eqref{macroscopic}--\eqref{macroscopic.B}.  Later in Section \ref{macroscopic.eqns.sec}, we will use \eqref{macroscopic}--\eqref{macroscopic.B} to estimate the  part of the distribution \(f\) that is contained in the null space of the linearised collision operator.    The calculations in this section hold for any $\qq \ge 0$. 

\subsection{Macro-micro decomposition of the distribution}\label{subsec_macromicro}

In this section, we decompose $f(t,x,p)$ into its macroscopic part \(\macro f\) and microscopic part \(\micro f\) as
\begin{equation}
\label{decomp}
f=\macro f+\micro f,
\end{equation}
where \(\macro\) is the orthogonal projection from $L^2_p(\rth)$ onto the null space of the linearised operator $\linL$ defined in \eqref{L0}.  The operator $\linL$ has a 5-dimensional null space given by
\[
\mathrm{nul}(\linL)=\mathrm{span}\Big\{ \tTq J^{\frac{1}{2}}(\tqq p),~~\tTq (\tqq p^{i})  J^{\frac{1}{2}}(\tqq p),  ~~
\tTq |\tqq \pv |J^{\frac{1}{2}}(\tqq p)\Big\},
\]
where the linear span is taken over smooth functions of \(t\) and \(x\). For convenience we work with an orthonormal basis of this space. Define the following functions
\begin{align} \notag
   \tTq \mathfrak{e}_1(\tqq \pv) \eqdef &~ \tTq\sqrt{J(\tqq p)},
   \\ 
   \tTq \mathfrak{e}_{j+1}(\tqq \pv) \eqdef &~ 
   \frac{1}{2}\tqq \pv^j \tTq\sqrt{J(\tqq p)},\qquad \text{for }\quad 1\leq j \leq 3, \label{orthogonal.basis.kernel}
   \\ \notag
   \tTq \mathfrak{e}_5(\tqq \pv) \eqdef &~ \frac{1}{\sqrt{3}}(\tqq\pZ-3)\tTq\sqrt{J(\tqq p)}.
\end{align}
We denote this set of functions as $\{ \tTq\mathfrak{e}_\ell(\tqq p)\}_{\ell=1}^5$.

\begin{proposition}\label{prop_orthnorm_nul}
    The set $\{ \tTq\mathfrak{e}_\ell(\tqq p)\}_{\ell=1}^5$ forms an orthonormal basis with respect to \(L^2(\R_p^3)\) of the space \(\mathrm{nul}(\linL)\) for all \(t\geq 1\).
\end{proposition}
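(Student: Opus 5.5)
The plan is to reduce everything to a time-independent computation by the change of variables $u=\tqq p$, and then check orthonormality from a handful of explicit radial moments of $J$. For any two functions $a,b$ we have $dp=t^{-6\qq}du$, so
\begin{equation*}
\int_{\R^3}\tTq a(\tqq p)\,\tTq b(\tqq p)\,dp=\int_{\R^3}a(u)\,b(u)\,du .
\end{equation*}
The $\tTq$ normalisation in \eqref{orthogonal.basis.kernel} is chosen exactly to cancel this Jacobian. Hence it suffices to show that $\{\mathfrak{e}_\ell(u)\}_{\ell=1}^5$ is orthonormal in $L^2(\R^3_u)$, and that question no longer depends on $t\geq 1$.

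The computations use $J(u)=e^{-|u|}/(8\pi)$ together with the radial moment formula
\begin{equation*}
\int_{\R^3}|u|^nJ(u)\,du=\frac{4\pi}{8\pi}\int_0^\infty r^{n+2}e^{-r}\,dr=\frac{(n+2)!}{2}.
\end{equation*}
This gives the values $1,3,12$ for $n=0,1,2$.
\begin{itemize}
\item \emph{Normalisation of $\mathfrak{e}_1$.} We have $\|\mathfrak{e}_1\|^2=\int J=1$.
\item \emph{Normalisation of $\mathfrak{e}_{j+1}$.} By rotational symmetry $\int u_j^2J=\frac13\int|u|^2J=4$, so $\|\mathfrak{e}_{j+1}\|^2=\frac14\cdot 4=1$.
\item \emph{Normalisation of $\mathfrak{e}_5$.} We have $\|\mathfrak{e}_5\|^2=\frac13\int(|u|-3)^2J=\frac13(12-18+9)=1$.
\item \emph{Orthogonality of $\mathfrak{e}_1$ and $\mathfrak{e}_5$.} Here $\langle\mathfrak{e}_1,\mathfrak{e}_5\rangle=\frac1{\sqrt3}\int(|u|-3)J=\frac{1}{\sqrt3}(3-3)=0$.
\item \emph{Remaining pairs.} Every other inner product has an integrand that is odd in some coordinate $u_j$, namely $u_jJ$, $u_ju_kJ$ with $j\neq k$, or $u_j(|u|-3)J$. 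Each of these vanishes because $J$ is radial.
\end{itemize}

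It remains to show that the span is correct. The functions $\tTq\mathfrak{e}_\ell(\tqq p)$ are obtained from the spanning set $\tTq J^{1/2}$, $\tTq(\tqq p^i)J^{1/2}$, $\tTq|\tqq p|J^{1/2}$ of $\mathrm{nul}(\linL)$ by a constant matrix. That matrix is block upper triangular, with diagonal entries $1$, $\frac12,\frac12,\frac12$ and $\frac1{\sqrt3}$, so it is invertible. The two families therefore span the same space over smooth functions of $(t,x)$, and $\{\tTq\mathfrak{e}_\ell(\tqq p)\}_{\ell=1}^5$ is an orthonormal basis of $\mathrm{nul}(\linL)$ for every $t\geq1$.

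There is no genuine obstacle. The only points requiring care are the moment constants $\frac{(n+2)!}{2}$, which fix the normalising factors $\frac12$ and $\frac1{\sqrt3}$, and the observation that the weight $\tTq$ exactly compensates the Jacobian $t^{-6\qq}$.
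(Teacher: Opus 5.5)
Your proposal is correct and follows the paper's argument: the rescaling $u=\tqq p$ cancels the $\tTq$ weights, the radial moments $\int|u|^nJ(u)\,du=\frac{(n+2)!}{2}$ give the unit norms, and orthogonality follows from $\int(|u|-3)J\,du=0$ together with parity. You also spell out the odd-integrand cancellations and the invertible triangular change of basis for the span, both of which the paper leaves as easily checked.
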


The orthonormal property of $\{ \tTq\mathfrak{e}_\ell(\tqq p)\}_{\ell=1}^5$ is shown next in Section \ref{orthogonal.vectors.sec}.  Then, we write ${\bf P}f$ as a linear combination of the basis vectors above as
\begin{equation}
\label{macrofbasis}
{\bf P}f=\left(  \mathcal{A}^f+\sum_{j=1}^{3} \mathcal{B}^f_j\tqq p^j/2+\frac{1}{\sqrt{3}}\mathcal{C}^f(\vnrm{\tqq\pv}-3)\right) \tTq\sqrt{J(\tqq p)},
\end{equation}
where the coefficients are given by
\begin{equation}\label{defmacrocoef}
\begin{split}
    \mathcal{A}^f(t,x) & = \tTq\int_\rth f(t,x,\pv)\sqrt{J(\tqq p)}dp,
    \\
    \mathcal{B}_i^f (t,x)
    & =
    \frac{\tTq}{2}\int_\rth f(t,x,\pv) \tqq p^i\sqrt{J(\tqq p)}dp, 
    \\
    \mathcal{C}^f(t,x)
    &=
    \frac{\tTq}{\sqrt{3}}\int_{\rth}f(t,x,\pv)( |\tqq \pv |-3)\sqrt{J(\tqq p)} dp.
\end{split}
\end{equation}

\subsection{Orthonormal time-normalised basis}\label{subs_orth_norm14}

As will be seen in \secref{derive.macroscopic}, when the transport operator \(\partial_t + \frac{\pv^i}{|\tq \pv|} \partial_{x^i} 
    -\frac{2\qq }{t} p^i \partial_{p^i} \) acts on the span of the null space elements 
\begin{equation}
 \tTq \sqrt{J(\tqq p)},\qquad \tTq \frac{\tqq p^{i}}{2}  \sqrt{J(\tqq p)},  \qquad 
\frac{\tTq}{\sqrt{3}} (|\tqq \pv |-3)\sqrt{J(\tqq p)}
\end{equation}
over smooth functions of \(t\) and \(x\), the result lives in a 13-dimensional space.  We now extend the orthonormal velocity  functions $\{ \tTq\mathfrak{e}_\ell(\tqq p)\}_{\ell=1}^5$ in \eqref{orthogonal.basis.kernel}. To this end, for $1\leq l,j \leq 3$ we define  the additional orthonormal velocity functions
\begin{equation}\label{orthogonal.basis.use}
    \begin{split}
         \tTq \mathfrak{e}_{j+5}(\tqq \pv) \eqdef &~ 
   \sqrt{\frac{3}{7}}
\left(\frac{p^j}{{\pZ}} -\tqq \pv^j\right)\tTq\sqrt{J(\tqq p)},\qquad  \text{for }\quad 1\leq j \leq 3,
      \\ 
   \tTq \mathfrak{e}_{9}(\tqq \pv) \eqdef &~ 
{t^{2\qq}} \frac{\sqrt{5}}{2}\left(\frac{(p^1)^2 -(p^3)^2}{{|p|}} \right)t^{3\qq}\sqrt{J(t^{2\qq} p)},
      \\ 
   \tTq \mathfrak{e}_{10}(\tqq \pv) \eqdef &~ 
{t^{2\qq}} \frac{\sqrt{5}}{2} \left(\frac{(p^1)^2+(p^3)^2 -2(p^2)^2}{{|p|}} \right)t^{3\qq}\sqrt{J(t^{2\qq} p)},
      \\ 
   \tTq \mathfrak{e}_{i+j+8}(\tqq \pv) \eqdef &~ 
\frac{\sqrt{5}}{2}{\tqq}\frac{ \pv^l \pv^{j}}{|p|}\tTq\sqrt{J(\tqq p)},\qquad  \text{for }\quad  1\leq l<j \leq 3.
    \end{split}
\end{equation}
We denote this set of $L^2(\mathbb{R}^3_p)$ orthonormal functions with \eqref{orthogonal.basis.kernel} as $\{\tTq \mathfrak{e}_{\ell}(\tqq \pv)\}_{\ell=1}^{13}$.

We remark that the diagonal of the symmetric tensor $\frac{ \pv^l \pv^{j}-\frac{1}{3} |p|^2}{|p|}$ vanishes
\begin{align*}
    \frac{(p^1)^2-\frac13|p|^2}{{|\pv|}}+\frac{(p^2)^2-\frac13|p|^2}{{|\pv|}}+\frac{(p^3)^2-\frac13|p|^2}{{|\pv|}}=0.
\end{align*}
Thus the diagonal only contains two independent components that we represent in the components of  $\tTq \mathfrak{e}_{9}(\tqq \pv)$ and $\tTq \mathfrak{e}_{10}(\tqq \pv)$.  In this system of orthonormal functions, \eqref{orthogonal.basis.kernel} with \eqref{orthogonal.basis.use}, $\mathfrak{e}_{1}$ is the density mode, $\mathfrak{e}_{j+1}$ are the three momentum modes, $\mathfrak{e}_{5}$ is the energy mode, $\mathfrak{e}_{j+5}$ are the three heat-flux modes, $\mathfrak{e}_{9}$ and $\mathfrak{e}_{10}$ are the two independent diagonal components of the stress tensor modes,
and $\mathfrak{e}_{i+j+8}$ are the three off-diagonal stress tensor modes.  These orthonormal functions in \eqref{orthogonal.basis.kernel} and \eqref{orthogonal.basis.use} are motivated by the Grad 13-moment system introduced in \cite{Grad1949} in 1949, and also by the polynomial momentum basis used in \cite{MR2911100, RelBolNoncut2020} for the relativistic Boltzmann equation in the case of massive particles.

\begin{proposition}\label{proporthobasis}
    The set $\{\tTq \mathfrak{e}_{\ell}(\tqq \pv)\}_{\ell=1}^{13}$ forms an orthonormal family with respect to \(L^2(\R_p^3)\) for all \(t\geq 1\).
\end{proposition}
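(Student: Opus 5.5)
The plan is to remove the time dependence with the scaling substitution $u=\tqq \pv$ and then verify orthonormality of a fixed finite family by explicit moment computations in polar coordinates. Since $dp = t^{-6\qq}du$, for any two indices
\[
\int_{\rth}\tTq\mathfrak{e}_\ell(\tqq\pv)\,\tTq\mathfrak{e}_m(\tqq\pv)\,dp=\int_{\rth}\mathfrak{e}_\ell(u)\mathfrak{e}_m(u)\,du .
\]
I will check that each $\mathfrak{e}_\ell$ in \eqref{orthogonal.basis.kernel}--\eqref{orthogonal.basis.use} is genuinely a function of $u$ alone. The factor $p^j/\pZ=u^j/|u|$ is scale invariant, and $\tqq (p^lp^j)/\pZ = u^lu^j/|u|$. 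The claim thus reduces to the time-independent statement that $\{\mathfrak{e}_\ell(u)\}_{\ell=1}^{13}$ is orthonormal in $L^2(\rth_u)$ with respect to Lebesgue measure, with the weight $J(u)=e^{-|u|}/(8\pi)$ entering through the product $\sqrt{J}\sqrt{J}$. Proposition \ref{prop_orthnorm_nul} is the special case $\ell,m\le 5$.

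Next I will write $u=r\omega$ with $r>0$ and $\omega\in\mathbb{S}^2$, so that $J(u)\,du = \tfrac{1}{2}r^2e^{-r}\,dr\,\tfrac{d\omega}{4\pi}$. This is a product of the Gamma$(3)$ probability law in $r$ (with $\mathbb{E}r=3$, $\mathbb{E}r^2=12$, $\mathbb{E}r^n=(n+2)!/2$) and the uniform law on the sphere. Every $\mathfrak{e}_\ell/\sqrt{J}$ factors as (polynomial in $r$)$\times$(spherical polynomial in $\omega$):
\begin{itemize}
\item degree $0$: $\mathfrak{e}_1$, $\mathfrak{e}_5$;
\item degree $1$: $\omega_j$ for the momentum and heat-flux modes;
\item degree $2$ and trace-free: the stress modes, for example $\omega_1^2-\omega_3^2$, $\omega_1^2+\omega_3^2-2\omega_2^2$ and $\omega_l\omega_j$ with $l<j$.
\end{itemize}
Cross terms between different spherical-harmonic degrees vanish by orthogonality of spherical harmonics, or more elementarily by parity and the trace-free property, since $\int\omega_i\omega_j\,d\omega/4\pi=\delta_{ij}/3$. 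Within a fixed degree, the $\omega$-integrals reduce to the standard moments $\int\omega_1^4\,d\omega/4\pi=1/5$ and $\int\omega_1^2\omega_2^2\,d\omega/4\pi=1/15$. The radial integrals reduce to the Gamma moments above.

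Concretely, I will check the following:
\begin{itemize}
\item $\|\mathfrak{e}_1\|^2=1$, and $\|\mathfrak{e}_{j+1}\|^2=\tfrac14\cdot\tfrac13\mathbb{E}r^2=1$.
\item $\|\mathfrak{e}_5\|^2=\tfrac13\mathbb{E}(r-3)^2=\tfrac13\mathrm{Var}(r)=1$, and $\langle\mathfrak{e}_1,\mathfrak{e}_5\rangle\propto\mathbb{E}(r-3)=0$.
\item $\|\mathfrak{e}_{j+5}\|^2=\tfrac37\cdot\tfrac13\mathbb{E}(1-r)^2=1$.
\item Distinct indices $j$ within a vector family are orthogonal because $\int\omega_i\omega_j\,d\omega=0$ for $i\neq j$.
\item The three off-diagonal stress modes and the two diagonal ones are mutually orthogonal by the fourth-moment identities.
\end{itemize}

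The main obstacle is the radial bookkeeping inside a fixed spherical degree, in particular the pairing of the momentum mode $\omega_j r$ with the heat-flux mode $\omega_j(1-r)$, and the normalisation of the stress modes. My first evaluation gives $\langle\mathfrak{e}_{j+1},\mathfrak{e}_{j+5}\rangle\propto\mathbb{E}[r-r^2]=3-12\neq0$, and a stress-mode norm of $\tfrac54\cdot12\cdot\tfrac4{15}=4$ rather than $1$. So before finalising, I will recheck the precise radial profiles and constants in \eqref{orthogonal.basis.use}. If needed, I will replace the heat-flux profile by its Gram--Schmidt correction against $r\omega_j$, namely a radial factor of the form $(a-r)$ with $a=\mathbb{E}r^2/\mathbb{E}r=4$, suitably normalised, and adjust the stress-mode constants to achieve unit norm. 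With the corrected constants, the remaining verification is the finite list of Gamma and sphere moments described above, and the result holds uniformly in $t\ge1$ because the scaling step removed all $t$-dependence.
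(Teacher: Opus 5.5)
Your method is the paper's own: rescale by $u=\tqq\pv$ so every inner product becomes $t$-independent, then reduce to the radial moments $\int_{\rth}|u|^nJ(u)\,du=(n+2)!/2$ and the sphere moments $\frac13,\frac15,\frac1{15}$. The two discrepancies you found are not arithmetic slips. They are correct, and they show that the proposition is false for the functions as written in \eqref{orthogonal.basis.use}, so the ``recheck'' step of your plan cannot close the argument. With $J(u)=e^{-|u|}/(8\pi)$,
\[
\irnp{\tTq\mathfrak{e}_{j+1}(\tqq\pv)}{\tTq\mathfrak{e}_{j+5}(\tqq\pv)}=\frac12\sqrt{\frac37}\int_{\rth}\Big(\frac{(u^j)^2}{|u|}-(u^j)^2\Big)J(u)\,du=\frac12\sqrt{\frac37}\,(1-4)=-\frac32\sqrt{\frac37}\neq0 .
\]
Using the paper's own moments $\int_{\rth}\frac{(u^a)^2(u^b)^2}{|u|^2}J(u)\,du=\frac45$ for $a\neq b$ and $\frac{12}{5}$ for $a=b$,
\[
\big\|\tTq\mathfrak{e}_{9}(\tqq\pv)\big\|_{L^2_p}^2=\frac54\Big(\frac{24}{5}-\frac{8}{5}\Big)=4,\qquad \big\|\tTq\mathfrak{e}_{10}(\tqq\pv)\big\|_{L^2_p}^2=\frac54\Big(\frac{72}{5}-\frac{24}{5}\Big)=12 .
\]
Everything else checks out. $\mathfrak{e}_1,\dots,\mathfrak{e}_8$ and the three off-diagonal stress modes have unit norm, and all other pairs are orthogonal by parity and trace-freeness, as you say. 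The paper's proof only assembles the norms of $\mathfrak{e}_1,\dots,\mathfrak{e}_8$ and then asserts that the remaining norms and all orthogonality relations are easy. The cases it skips are exactly the ones that fail.

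What your bookkeeping does prove is a corrected statement, and you should present it that way rather than as a proof of the proposition as given. First, take $\tTq\mathfrak{e}_{j+5}(\tqq\pv)\eqdef\frac{\sqrt3}{2}\big(\tqq\pv^j-4\frac{\pv^j}{\pZ}\big)\tTq\sqrt{J(\tqq\pv)}$. Its radial profile $r-4$ is orthogonal to $r$ since $\mathbb{E}[r^2]=4\,\mathbb{E}[r]$, and it has unit norm since $\frac14\mathbb{E}[(r-4)^2]=1$. This matches the paper's listed integral $\int_{\rth}\frac{(\pv^a)^2}{\pZ}(\pZ-4)\frac{e^{-\pZ}}{8\pi}dp=0$, so it is presumably the intended heat-flux mode. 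Second, replace the constant $\frac{\sqrt5}{2}$ by $\frac{\sqrt5}{4}$ in $\mathfrak{e}_9$ and by $\frac{\sqrt5}{4\sqrt3}$ in $\mathfrak{e}_{10}$; your moment computations then give a complete proof.

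This correction matters downstream. In the proof of Proposition \ref{macroscopic.prop}, the identity $\frac{\pv^j}{\pZ}\tTq\sqrt{J}=\sqrt{7/3}\,\tTq\mathfrak{e}_{j+5}+2\,\tTq\mathfrak{e}_{j+1}$ is algebraically true but is not an orthogonal expansion; the actual $L^2_p$ component along $\tTq\mathfrak{e}_{j+1}$ is $\frac12$. The momentum law there should therefore read $\tmTq\partial_t(\tTq\mathcal{B}_j)+\frac12\tmq\partial_j\big(\mathcal{A}+\frac{1}{\sqrt3}\mathcal{C}\big)=-\tmq\nabla_x\cdot\highG_{j+1}$. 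With the corrected basis, the heat-flux equation still involves $\partial_j(\mathcal{A}-\sqrt3\,\mathcal{C})$, and eliminating $\mathcal{A}$ still yields the $\frac{2}{\sqrt3}\tmq\partial_j\mathcal{C}$ of \eqref{macroscopic.B}. So the structure used later appears to survive, with adjusted constants.
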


The orthonormal property of $\{\tTq \mathfrak{e}_{\ell}(\tqq \pv)\}_{\ell=1}^{13}$ is shown later in Section \ref{orthogonal.vectors.sec}.

\subsection{Orthonormal property of the bases}\label{orthogonal.vectors.sec} 

In this section, we prove Proposition \ref{proporthobasis}, in other words, we show that $\{\tTq \mathfrak{e}_{\ell}(\tqq \pv)\}_{\ell=1}^{13}$ is an orthonormal basis with respect to \(L^2(\R_p^3)\). This follows from some elementary integral computations that we write below.

It is well-known for the Gamma function \(\Gamma\) that \(\int_0^\infty \rho^n e^{-\rho} d\rho = n!\) for all \(n\in \{0,1,2,\ldots\}\). Hence
\begin{align}\label{integralp}
   \int_\rth | \pv |^n  \frac{e^{- |\pv|}}{8\pi} dp = \frac{(n+2)!}{2},\qquad \forall n\in \{0,1,2,\ldots\}. 
\end{align}
Recall the definition of \(J(\tqq \pv)\) in \eqref{juttner}. We use the integral \eqref{integralp} to calculate the values of various integrals related to $J(\tqq \pv)$ as in the following for $1 \leq a \leq 3$:
\begin{align} \notag
\tQQq\int_{\rth}J(\tqq p)dp &= \int_\rth   \frac{e^{- |\pv|}}{8\pi} dp =1,
\\\notag
\tQQq\int_\rth |\tqq \pv |J(\tqq p)dp
     &=\int_\rth | \pv |  \frac{e^{- |\pv|}}{8\pi} dp = 3, 
\\ \label{integralsVALUES}
\tQQq\int_\rth |\tqq \pv |^2J(\tqq p)dp
         &=  \int_\rth | \pv |^2  \frac{e^{- |\pv|}}{8\pi} dp = 12 , 
         \\ \notag
\tQQq\int_\rth (\tqq \pv^a)^2J(\tqq p)dp
              &=\int_\rth ( \pv^1)^2  \frac{e^{- |\pv|}}{8\pi} dp
              =
              \frac{1}{3} \int_\rth | \pv |^2  \frac{e^{- |\pv|}}{8\pi} dp = 4,
              \\ \notag
\tQQq\int_\rth \frac{(\tqq p^a)^2}{{|\tqq \pv |}}J(\tqq p)dp
                  &=\int_\rth \frac{( p^1)^2}{{| \pv |}}  \frac{e^{- |\pv|}}{8\pi} dp               =
              \frac{1}{3}\int_\rth | \pv |  \frac{e^{- |\pv|}}{8\pi} dp = 1,
                            \\ \notag
\tQQq\int_\rth (|\tqq \pv |-3)^2 J(\tqq p)dp
                  &=\tmQq\int_\rth (| \pv |-3)^2 \frac{e^{- |\pv|}}{8\pi}dp 
                = 3.
\end{align}
From the integrals above, the functions $\{\tTq \mathfrak{e}_{\ell}(\tqq \pv)\}_{\ell=1}^{5}$ have unit \(L^2_p\) norm. Moreover, one can easily show that the functions in $\{\tTq \mathfrak{e}_{\ell}(\tqq \pv)\}_{\ell=1}^{5}$ are orthogonal. This shows Proposition \ref{prop_orthnorm_nul}.

Together with \eqref{integralsVALUES} we also use the values of the following integrals for $1 \leq a \leq 3$:
\begin{align*}
    \int_\rth | \pv |^3  \frac{e^{- |\pv|}}{8\pi} dp &= 60,
    \\
\int_{\rth}   \frac{ (\pv^a )^2}{{\pZ}}( {\pZ}-3) \frac{e^{- |\pv|}}{8\pi}      dp
&=
\frac{1}{3}
\int_{\rth}   {\pZ}( {\pZ}-3) \frac{e^{- |\pv|}}{8\pi}dp =1,
\\
\int_{\rth}   \frac{ (\pv^a )^2}{{\pZ}}( {\pZ}-4) \frac{e^{- |\pv|}}{8\pi}      dp
&=
\frac{1}{3}
\int_{\rth}   {\pZ}( {\pZ}-4) \frac{e^{- |\pv|}}{8\pi}      dp
=0,
\\
\int_{\rth}   ( {\pZ}-3)( {\pZ}-4) \frac{e^{- |\pv|}}{8\pi}      dp
&=
\int_{\rth}   ( {\pZ}^2-7{\pZ}+12) \frac{e^{- |\pv|}}{8\pi}      dp
=3.
\end{align*}
We also compute for $1 \leq a \leq 3$ that
\begin{align*}
 \tQQq   \int_\rth \frac{(p^a)^2}{{| \pv |^2}}J(\tqq p)dp
                  &=\int_\rth \frac{( p^1)^2}{{| \pv |^2}}  \frac{e^{- |\pv|}}{8\pi} dp               =
              \frac{1}{3}\int_\rth  \frac{e^{- |\pv|}}{8\pi} dp = \frac{1}{3},
\end{align*}
and also the values of the integrals $\int_{\rth} \frac{(\pv^a )^2(\pv^b)^2}{{\pZ^2}}  \frac{e^{- |\pv|}}{8\pi} 
    dp$ for $1\leq a$, $b \leq 3$, given by
\begin{equation}\notag
\begin{split}
\int_{\rth} \frac{(\pv^a )^2(\pv^b)^2}{{\pZ^2}}  \frac{e^{- |\pv|}}{8\pi} 
    dp
&=
\frac{4}{5},
\qquad \text{for}\quad a \neq b,
\\
\int_{\rth} \frac{(\pv^a )^2(\pv^a)^2}{{\pZ^2}}  \frac{e^{- |\pv|}}{8\pi} 
    dp
&=
\frac{12}{5}.
\end{split}
\end{equation}
For the basis elements \(\left(\sqrt{\frac{3}{7}}
\left(\frac{p^a}{{\pZ}} -\tqq \pv^a\right)\tTq\sqrt{J}\right)_{1\leq a\leq 3}\) considered in \eqref{orthogonal.basis.use}, we have
\begin{multline*}
       \tQQq\int_\rth \left(\frac{p^a}{{| \pv |}}- \tqq p^a\right)^2 J(\tqq p)dp
                  =\int_\rth \left(\frac{p^a}{{| \pv |}}-  p^a\right)^2  \frac{e^{- |\pv|}}{8\pi} dp               
                  \\
=\int_\rth \left(\frac{\left(p^a\right)^2 }{{| \pv |^2}} +  \left(p^a\right)^2 -2\frac{\left(p^a\right)^2}{{| \pv |}} \right) \frac{e^{- |\pv|}}{8\pi} dp  
=\frac{7}{3}.
\end{multline*}
The computations above, and similar ones, show the functions $\{\tTq \mathfrak{e}_{\ell}(\tqq \pv)\}_{\ell=1}^{13}$ have unit \(L^2_p\) norm. One can easily show that the functions in $\{\tTq \mathfrak{e}_{\ell}(\tqq \pv)\}_{\ell=1}^{13}$ are also \(L^2_p\)  orthogonal. This proves Proposition \ref{proporthobasis}.

\subsection{Macroscopic Equations}\label{derive.macroscopic}

In this section we derive the macroscopic equations for the coefficients \(\mathcal{A}^f\), \(\mathcal{B}^f\), and \(\mathcal{C}^f\) in \eqref{defmacrocoef}, when \(f\) is a solution of the Boltzmann equation. For given function \(f\), define for $\ell\in\{1,2,\ldots, 13\}$ the quantities
\begin{align}
\begin{aligned}\label{quantmacroscopeqns}
\mathfrak{m}_{\ell} \eqdef &~  \irnp{  \micro f}{\tTq \mathfrak{e}_{\ell}(\tqq \pv)},
\\
\highG_{\ell} \eqdef &~  \irnp{\micro f}{\tTq \frac{\pv}{| \pv|}\mathfrak{e}_{\ell}(\tqq \pv)},
\\
\mathfrak{l}_{\ell} \eqdef &~  -\irnp{\linL(\micro f)}{\tTq \mathfrak{e}_{\ell}(\tqq \pv)},
\\
\mathfrak{h}_{\ell}  \eqdef &~ 
\irnp{\Gamma(f,f)}{\tTq \mathfrak{e}_{\ell}(\tqq \pv)},
 \\
 \mathfrak{u}_{\ell} \eqdef &~  -\tmq \nabla_x \cdot \highG_{\ell}
 +\mathfrak{l}_{\ell}+\mathfrak{h}_{\ell},
 \end{aligned}
\end{align}
where \(\nabla_x ~\cdot\) is the standard \(\R^3_x\) divergence of a vector field.

\begin{proposition}[Macroscopic equations]\label{macroscopic.prop}
If \(f\) solves the Boltzmann equation \eqref{rBoltz00}, and \(\mathcal{A}\), \(\mathcal{B}\), and \(\mathcal{C}\) are defined by \eqref{defmacrocoef}, then the following local conservation laws are satisfied
\begin{align} \label{local.cons}
    \tmTq\partial_t \left(\tTq \mathcal{A} \right)
     +
\frac{1}{2}  \tmq \nabla_x \cdot \mathcal{B}
&=-\tmq \nabla_x \cdot \highG_{1},
\\ \notag
 \tmTq\partial_t \left(\tTq \mathcal{B}_{j} \right) 
 +
2\tmq\partial_j \left( \mathcal{A}
-\frac{2}{\sqrt{3}}
\mathcal{C}  \right)
&=-\tmq \nabla_x \cdot \highG_{j+1}, \quad  (1 \leq j \leq 3),
\\ \notag
\tmTq\partial_t \left(\tTq \mathcal{C} \right) 
  +
\frac{1}{2\sqrt{3}}  \tmq \nabla_x \cdot \mathcal{B} 
&=-\tmq \nabla_x \cdot \highG_{5},
\end{align}
and also the macroscopic equations (for $1 \leq l, j \leq 3$),
\begin{align} \notag
\sqrt{\frac{7}{3}}\tmq\left(\partial_j\mathcal{A}
-
\sqrt{3}\partial_j\mathcal{C} \right)
&=-\tmTq  \partial_t\left( \tTq \mathfrak{m}_{j+5} \right)+\mathfrak{u}_{j+5}, 
\\  \label{macroscopic}
\tmq\frac{1}{2\sqrt{5}} \left( \partial_1 \mathcal{B}_1 - \partial_3 \mathcal{B}_3\right)
&=-\tmTq  \partial_t\left( \tTq \mathfrak{m}_{9} \right)+\mathfrak{u}_{9},
 \\ \notag
\tmq\frac{1}{6\sqrt{5}}\left( \nabla_x\cdot \mathcal{B}-3\partial_2 \mathcal{B}_2\right)
&=-\tmTq  \partial_t\left( \tTq \mathfrak{m}_{10} \right)+\mathfrak{u}_{10},
 \\ \notag
\frac{1}{\sqrt{5}}  \tmq\left(\partial_j \mathcal{B}_{l}+\partial_l \mathcal{B}_{j} \right)
&=-\tmTq  \partial_t\left( \tTq \mathfrak{m}_{l+j+9} \right)+\mathfrak{u}_{l+j+9},\quad \text{for}
\quad (l\neq j),
\end{align}
where the fixed indices above are $l,j\in\{1,2,3\}$. 
\end{proposition}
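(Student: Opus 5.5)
The plan is to take the $L^2_p$ inner product of \eqref{rBoltz00} with each of the thirteen orthonormal functions $\tTq\mathfrak{e}_\ell(\tqq\pv)$ of Proposition \ref{proporthobasis}, after inserting the decomposition $f=\macro f+\micro f$ from \eqref{decomp} with $\macro f$ written as in \eqref{macrofbasis}. The structural fact I will use throughout is \eqref{nullRHStp}: for any smooth $h$,
\[
\Big(\partial_t-\frac{2\qq}{t}p^i\partial_{p^i}\Big)\big(t^{3\qq}h(\tqq\pv)\big)=\frac{3\qq}{t}\,t^{3\qq}h(\tqq\pv).
\]
Consequently the transport operator acting on $\macro f=\sum_{\ell\le 5}c_\ell(t,x)\tTq\mathfrak{e}_\ell(\tqq\pv)$ (with $c=(\mathcal{A},\mathcal{B},\mathcal{C})$) gives $\sum_\ell \tmTq\partial_t(\tTq c_\ell)\,\tTq\mathfrak{e}_\ell+\sum_\ell \tmq\frac{p^i}{|p|}\partial_i c_\ell\,\tTq\mathfrak{e}_\ell$, where the factor $\tmTq\partial_t(\tTq\cdot)$ is produced by \eqref{timeDeriv.pq}.

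For the microscopic part I will test against $\tTq\mathfrak{e}_\ell$ and integrate by parts in $p$: $\int p^i\partial_{p^i}g\,\phi=-3\int g\phi-\int g\,p^i\partial_{p^i}\phi$. Combined with $p^i\partial_{p^i}[\tTq\mathfrak{e}_\ell(\tqq\pv)]=t\partial_t[\tTq\mathfrak{e}_\ell]-3\qq\,\tTq\mathfrak{e}_\ell$, this turns $\langle(\partial_t-\frac{2\qq}{t}p\cdot\nabla_p)\micro f,\tTq\mathfrak{e}_\ell\rangle$ into $\partial_t\mathfrak{m}_\ell+\frac{3\qq}{t}\mathfrak{m}_\ell=\tmTq\partial_t(\tTq\mathfrak{m}_\ell)$, after checking that the $\frac{6\qq}{t}$ from $\nabla_p$ and the $-\frac{3\qq}{t}$ terms combine correctly. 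I will verify this bookkeeping carefully. The spatial transport term gives $\tmq\nabla_x\cdot\highG_\ell$ by definition \eqref{quantmacroscopeqns}. The collision terms give $\langle\linL\micro f,\cdot\rangle=-\mathfrak{l}_\ell$ and $\mathfrak{h}_\ell$, since $\linL\macro f=0$.

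For $\ell\le5$, I get $\mathfrak{m}_\ell=0$ because $\micro f\perp\mathrm{nul}(\linL)$, $\mathfrak{l}_\ell=0$ by self-adjointness of $\linL$, and $\mathfrak{h}_\ell=0$ by the collision invariants \eqref{collisionalCONSERVATION}. This yields \eqref{local.cons}. For $6\le\ell\le13$, the tested time-derivative term from $\macro f$ vanishes by orthogonality, leaving only spatial-derivative terms, which gives \eqref{macroscopic}.

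The main obstacle is the computation of the coefficients $\langle\frac{p^i}{|p|}\tTq\mathfrak{e}_m,\tTq\mathfrak{e}_\ell\rangle$. The key step is to expand $\frac{p^i}{|p|}\mathfrak{e}_m$ for $m\le5$ in the thirteen-function basis, showing it lies in their span; this is the assertion that the result is thirteen-dimensional. After rescaling $\tqq p\mapsto p$, everything reduces to the moment integrals \eqref{integralp}, \eqref{integralsVALUES} and those in Section \ref{orthogonal.vectors.sec}. For example:
\begin{itemize}
\item $\frac{p^j}{|p|}\mathfrak{e}_1=\frac{p^j}{|p|}\sqrt J$ decomposes into $\frac12\cdot$ the $\mathfrak{e}_{j+1}$ component, a $\sqrt{3/7}$ heat-flux component, and an $\mathfrak{e}_5$-free remainder. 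This should produce the coefficient $\frac12$ of $\nabla\cdot\mathcal{B}$ and the combination $\sqrt{7/3}(\partial_j\mathcal{A}-\sqrt3\partial_j\mathcal{C})$.
\item $\frac{p^i}{|p|}\mathfrak{e}_{j+1}=\frac12\frac{p^ip^j}{|p|}\sqrt J$ splits into its trace part, which lands in $\mathfrak{e}_1,\mathfrak{e}_5$, and its trace-free part, which lands in $\mathfrak{e}_9,\mathfrak{e}_{10},\mathfrak{e}_{l+j+8}$. This produces the $\mathcal{B}$-equations in \eqref{macroscopic}.
\end{itemize}
I expect the constants (e.g.\ $\frac{2}{\sqrt3}$, $\frac1{2\sqrt5}$, $\frac1{6\sqrt5}$) to be a routine but error-prone moment calculation.
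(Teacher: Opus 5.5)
Your route is the paper's route: insert $f=\macro f+\micro f$, apply \eqref{nullRHStp} to $\macro f$, test against the thirteen functions, and use $\mathfrak{m}_\ell=\mathfrak{l}_\ell=\mathfrak{h}_\ell=0$ for $\ell\le 5$. One small fix first: your $p$-identity should read $2\qq\,p^i\partial_{p^i}[\tTq\mathfrak{e}_\ell]=t\partial_t[\tTq\mathfrak{e}_\ell]-3\qq\,\tTq\mathfrak{e}_\ell$. With that factor the microscopic term does become $\tmTq\partial_t(\tTq\mathfrak{m}_\ell)$.

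The genuine gap is in the coefficient computation you deferred. The family \eqref{orthogonal.basis.kernel}--\eqref{orthogonal.basis.use} is not orthonormal, so Proposition~\ref{proporthobasis} cannot be invoked. Rescale $q=\tqq p$; the $L^2_p$ pairing of time-normalised functions becomes a $dq$-integral. Then \eqref{integralsVALUES} gives $\int \frac{(q^j)^2}{|q|}J(q)\,dq=1$ and $\int (q^j)^2J(q)\,dq=4$, hence
\[
\big\langle \tTq\mathfrak{e}_{j+5}(\tqq p),\,\tTq\mathfrak{e}_{j+1}(\tqq p)\big\rangle=\tfrac12\sqrt{\tfrac37}\,(1-4)=-\tfrac32\sqrt{\tfrac37}\neq 0 .
\]
The moments $\frac{12}{5}$ and $\frac45$ likewise give $\|\tTq\mathfrak{e}_9\|_{L^2_p}^2=4$ and $\|\tTq\mathfrak{e}_{10}\|_{L^2_p}^2=12$. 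So for $\ell=j+5$ your claim that the tested time-derivative term ``vanishes by orthogonality'' is false, and the stated constants do not come out.

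Concretely, test against $\tTq\mathfrak{e}_{j+1}$. Your bullet correctly has $\langle \frac{q^i}{|q|}\sqrt{J(q)},\frac12 q^j\sqrt{J(q)}\rangle=\frac12\delta_{ij}$, and also $\langle \frac{q^i}{|q|}\frac{|q|-3}{\sqrt3}\sqrt{J(q)},\frac12 q^j\sqrt{J(q)}\rangle=\frac{1}{2\sqrt3}\delta_{ij}$. This yields
\[
\tmTq\partial_t\big(\tTq\mathcal{B}_j\big)+\tfrac12\,\tmq\,\partial_j\Big(\mathcal{A}+\tfrac{1}{\sqrt3}\,\mathcal{C}\Big)=-\tmq\,\nabla_x\cdot\highG_{j+1},
\]
which is the gradient of the linearised pressure. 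It differs from the stated momentum law by $\frac32\tmq\partial_j(\mathcal{A}-\sqrt3\mathcal{C})$, exactly the non-orthogonality cross term. Testing against $\tTq\mathfrak{e}_{j+5}$ leaves an extra $-\frac32\sqrt{3/7}\,\tmTq\partial_t(\tTq\mathcal{B}_j)$, and the spatial part is $-\frac{1}{\sqrt{21}}\tmq\partial_j(2\mathcal{A}+\sqrt3\mathcal{C})$. The left-hand sides for $\ell=9,10$ pick up factors $4$ and $12$.

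So the momentum law and the first three macroscopic equations cannot be derived as displayed; the mass, energy and off-diagonal stress equations are correct. The paper's proof has the same defect: it reads off coefficients of the non-orthogonal expansion \eqref{Z2eqn} as if they were inner products. The fix is Gram--Schmidt. Replace $\tTq\mathfrak{e}_{j+5}(\tqq p)$ by $2\sqrt3\big(\frac{p^j}{|p|}-\frac14\tqq p^j\big)\tTq\sqrt{J(\tqq p)}$, and multiply $\mathfrak{e}_9,\mathfrak{e}_{10}$ by $\frac12$ and $\frac{1}{2\sqrt3}$. The heat-flux equation then reads $\frac{1}{2\sqrt3}\tmq\partial_j(\mathcal{A}-\sqrt3\mathcal{C})=-\tmTq\partial_t(\tTq\mathfrak{m}_{j+5})+\mathfrak{u}_{j+5}$. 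Subtracting $\sqrt3$ times it from the corrected momentum law gives exactly the left side of \eqref{macroscopic.B}. Hence the structure used in Section~\ref{macroscopic.eqns.sec} survives, with modified constants and right-hand sides, but the proposition as stated does not.
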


\begin{proof}[Proof of Proposition \ref{macroscopic.prop}]
Let us plug the decomposition \eqref{decomp} into \eqref{rBoltz00}, then we obtain 
\begin{multline}\label{hydro}
\left\{\partial_t+\frac{\pv^i}{|\tq \pv|} \partial_{x^i} -\frac{2\qq }{t} p^i \partial_{p^i} \right\}\macro f
= -\left\{\partial_t -\frac{2\qq }{t} p^i \partial_{p^i} \right\}\micro f
\\ 
-\frac{\pv^i}{|\tq \pv|} \partial_{x^i} (\micro f)  -\linL(\micro f)
+ \Gamma(f,f).
\end{multline}
We also recall $\linL(f)=\linL(\micro f)$ by the definition of \(\macro\) in \eqref{macrofbasis}. We have expressed here the macroscopic part $\macro f$ in terms of the microscopic part $\micro f$ and the non-linear term $\Gamma (f,f)$.

We consider the expression \eqref{macrofbasis} of $\macro f$ with respect to the basis functions \eqref{orthogonal.basis.kernel}. Also using \eqref{nullRHStp}, we obtain that the left-hand side of \eqref{hydro} can be written as
\begin{multline}\label{lhs.expand.macro}
\partial_t \left(\tTq \mathcal{A} \right)\sqrt{J(\tqq\pv)}
+\partial_t \left(\tTq \mathcal{C} \right) ({|\tqq \pv|}-3)\sqrt{J(\tqq\pv)}/\sqrt{3}
\\
+
\tTq\sum_{j=1}^3\partial_j(\mathcal{A}+\mathcal{C}({|\tqq \pv|}-3)/\sqrt{3})\frac{p^j}{{|\tq \pv|}}\sqrt{J(\tqq\pv)}
\\
+\sum_{j=1}^{3}\partial_t \left(\tTq \mathcal{B}_{j} \right) \frac{\tqq}{2} p^j\sqrt{J(\tqq\pv)}
 +
 \sum_{l,j=1}^{3}  \tTq\partial_j \mathcal{B}_{l} \frac{\tqq}{2}\frac{p^lp^j}{{|\tq \pv|}}\sqrt{J(\tqq\pv)},
\end{multline}
where $\mathcal{A}=\mathcal{A}^f$, $\mathcal{B}=\mathcal{B}^f$, $\mathcal{C}=\mathcal{C}^f$ and $\partial_j=\partial_{x_j}$.   We rewrite \eqref{lhs.expand.macro} as
\begin{multline}\label{macroscopeqn}
\tmTq\partial_t \left(\tTq \mathcal{A} \right)\tTq\sqrt{J(\tqq\pv)}
+\tmTq\partial_t \left(\tTq \mathcal{C} \right) ({|\tqq \pv|}-3)\tTq\sqrt{J(\tqq\pv)}/\sqrt{3}
\\
+
\sum_{i=1}^3\tmq\partial_i\mathcal{A}\frac{p^i}{{| \pv|}}\tTq\sqrt{J(\tqq\pv)}
+
\sum_{i=1}^3\tmq\partial_i\mathcal{C}\frac{p^i}{{| \pv|}} ({|\tqq \pv|}-3)\tTq\sqrt{J(\tqq\pv)} /\sqrt{3}
\\
+\sum_{i=1}^{3} \tmTq\partial_t \left(\tTq \mathcal{B}_i \right) \frac{\tqq}{2} p^i \tTq\sqrt{J(\tqq\pv)}
 +
 \sum_{i,j=1}^{3}  \tmq\partial_j \mathcal{B}_i \frac{\tqq}{2}\frac{p^ip^j}{{|\pv|}} \tTq\sqrt{J(\tqq\pv)} 
 \\
 \eqdef \mathcal{Z}_1+\mathcal{Z}_2+\mathcal{Z}_3.
\end{multline}
Here
\begin{multline}\notag
\mathcal{Z}_1 \eqdef 
\tmTq\partial_t \left(\tTq \mathcal{A} \right)\tTq\sqrt{J(\tqq\pv)}
+\tmTq\partial_t \left(\tTq \mathcal{C} \right) ({|\tqq \pv|}-3)\tTq\sqrt{J(\tqq\pv)}/\sqrt{3}
\\
+\tTq \mathfrak{e}_1(\tqq \pv) \frac{\tmq}{2}   \nabla_x \cdot \mathcal{B}
+
\tTq \mathfrak{e}_5(\tqq \pv)\frac{\tmq}{2\sqrt{3}}   \nabla_x \cdot \mathcal{B}.
\end{multline}
Thus
\begin{multline}\label{Z1eqn}
\mathcal{Z}_1 
 =
\tTq \mathfrak{e}_1(\tqq \pv)\left( \tmTq\partial_t \left(\tTq \mathcal{A} \right) +
\frac{1}{2}  \tmq \nabla_x \cdot \mathcal{B} \right)
\\
+
\tTq \mathfrak{e}_5(\tqq \pv)
\left(\tmTq\partial_t \left(\tTq \mathcal{C} \right)   
+
\frac{1}{2\sqrt{3}}  \tmq \nabla_x \cdot \mathcal{B}\right).
\end{multline}
The above is our main expression for \eqref{Z1eqn} and it leads directly to \eqref{local.cons}$_1$ and \eqref{local.cons}$_3$.

Next we consider $\mathcal{Z}_3$, but first we notice that
\begin{align*}
     \frac{1}{2}  \tTq \mathfrak{e}_1(\tqq \pv)
  +
  \frac{1}{2\sqrt{3}} 
\tTq \mathfrak{e}_5(\tqq \pv)
= \frac{\tqq}{6}|p| \tTq\sqrt{J(\tqq p)}
= \frac{\tqq}{2}\frac{\frac13|p|^2}{|p|} \tTq\sqrt{J(\tqq p)}. 
\end{align*}
Thus we expand
\begin{multline*}
\mathcal{Z}_3 \eqdef 
 \sum_{i,j=1}^{3}  \tmq\partial_j \mathcal{B}_i \frac{\tqq}{2}\frac{p^ip^j}{{|\pv|}} \tTq\sqrt{J(\tqq\pv)}
-
\frac{\tqq}{2}\frac{\frac13|p|^2}{|p|} \tTq\sqrt{J(\tqq p)}
\tmq\nabla_x \cdot \mathcal{B}
    \\
    =
     \sum_{\substack{i,j=1 \\ i<j}}^3
     \tmq\left(\partial_j \mathcal{B}^i+\partial_i \mathcal{B}_{j} \right) \frac{\tqq}{2}\frac{p^ip^j}{{|\pv|}} \tTq\sqrt{J(\tqq\pv)}
     \\
     +
     \sum_{j=1}^{3}\tmq\partial_j \mathcal{B}_j \frac{\tqq}{2}\frac{(p^j)^2-\frac13|p|^2}{{|\pv|}} \tTq\sqrt{J(\tqq\pv)}.
\end{multline*}
For the ortho-normal functions in \eqref{orthogonal.basis.use} we have 
\begin{align*}
       \frac{\tqq}{2}\frac{(p^1)^2-\frac13|p|^2}{{|\pv|}} \tTq\sqrt{J(\tqq\pv)}
    & = \frac{\tTq \mathfrak{e}_{9}(\tqq \pv)}{2\sqrt{5}} + \frac{\tTq \mathfrak{e}_{10}(\tqq \pv)}{6\sqrt{5}},
    \\ 
           \frac{\tqq}{2}\frac{(p^2)^2-\frac13|p|^2}{{|\pv|}} \tTq\sqrt{J(\tqq\pv)}
    & = -  \frac{\tTq \mathfrak{e}_{10}(\tqq \pv)}{3\sqrt{5}},
    \\ 
           \frac{\tqq}{2}\frac{(p^3)^2-\frac13|p|^2}{{|\pv|}} \tTq\sqrt{J(\tqq\pv)}
    & = -\frac{\tTq \mathfrak{e}_{9}(\tqq \pv)}{2\sqrt{5}} + \frac{\tTq \mathfrak{e}_{10}(\tqq \pv)}{6\sqrt{5}}.
\end{align*}
Thus we express $\mathcal{Z}_3$ in terms of the orthonormal functions as 
\begin{multline}\notag 
\mathcal{Z}_3 =
\tTq \mathfrak{e}_{9}(\tqq \pv) \left(\tmq\frac{\partial_1 \mathcal{B}_1 - \partial_3 \mathcal{B}_3}{2\sqrt{5}} \right)
+
\tTq \mathfrak{e}_{10}(\tqq \pv) \left(\tmq\frac{\nabla_x\cdot \mathcal{B}-3\partial_2 \mathcal{B}_2}{6\sqrt{5}} \right)
 \\
 +
 \frac{1}{\sqrt{5}}
 \sum_{\substack{i,j=1 \\ l<j}}^3 \tTq \mathfrak{e}_{i+j+9}(\tqq \pv)
 \tmq\left(\partial_j \mathcal{B}^i+\partial_i \mathcal{B}_{j} \right). 
\end{multline}
This is our main expression for $\mathcal{Z}_3$ and it leads directly to \eqref{macroscopic}$_2$-\eqref{macroscopic}$_4$.

Next we consider $\mathcal{Z}_2$ in \eqref{macroscopeqn} given by 
\begin{multline}\notag
\mathcal{Z}_2 \eqdef 
\sum_{i=1}^3\tmq\partial_i\mathcal{A}\frac{p^i}{{| \pv|}}\tTq\sqrt{J(\tqq\pv)}
+
\sum_{i=1}^3\tmq\partial_i\mathcal{C}\frac{p^i}{{| \pv|}} ({|\tqq \pv|}-3)\tTq\sqrt{J(\tqq\pv)} /\sqrt{3}
\\
+\sum_{i=1}^{3} \tmTq\partial_t \left(\tTq \mathcal{B}_i \right) \frac{\tqq}{2} p^i \tTq\sqrt{J(\tqq\pv)}.
\end{multline}
From \eqref{orthogonal.basis.use} we have 
\begin{align*}
    \frac{p^j}{{| \pv|}}\tTq\sqrt{J(\tqq\pv)} = \sqrt{\frac{7}{3}}\tTq \mathfrak{e}_{j+5}(\tqq \pv)
    +
2\tTq \mathfrak{e}_{j+1}(\tqq \pv),
\end{align*}
and
\begin{multline*}
       \frac{1}{\sqrt{3}} \frac{p^j}{{| \pv|}}({|\tqq \pv|}-3)\tTq\sqrt{J(\tqq\pv)} = 
    \frac{2}{\sqrt{3}}\tTq \mathfrak{e}_{j+1}(\tqq \pv)
    -
    \sqrt{3} \frac{p^j}{{| \pv|}}\tTq\sqrt{J(\tqq\pv)}
    \\
    =
    \frac{2}{\sqrt{3}}\tTq \mathfrak{e}_{j+1}(\tqq \pv)
   -
    \sqrt{7}\tTq \mathfrak{e}_{j+5}(\tqq \pv)
    -
2\sqrt{3}\tTq \mathfrak{e}_{j+1}(\tqq \pv)
    \\
    =
   -
    \sqrt{7}\tTq \mathfrak{e}_{j+5}(\tqq \pv)
    -
\frac{4}{\sqrt{3}}\tTq \mathfrak{e}_{j+1}(\tqq \pv).
\end{multline*}
Thus
\begin{multline}\label{Z2eqn}
\mathcal{Z}_2 
=
\sqrt{\frac{7}{3}}
\sum_{j=1}^3\tTq \mathfrak{e}_{j+5}(\tqq \pv)\tmq\partial_j\left(\mathcal{A}
-\sqrt{3}\mathcal{C}\right)
\\
+
\sum_{j=1}^3\tTq \mathfrak{e}_{j+1}(\tqq \pv)\left(\tmTq\partial_t \left(\tTq \mathcal{B}_{j} \right)+2\tmq\partial_j\left(\mathcal{A}
-\frac{2}{\sqrt{3}}\mathcal{C}\right) \right).
\end{multline}
This equation \eqref{Z2eqn} is our main expression for $\mathcal{Z}_2 $.

We now consider the first term on the right side of \eqref{hydro}. Multiplying by the individual basis functions in $\{\tTq \mathfrak{e}_{\ell}(\tqq \pv)\}_{\ell=1}^{13}$ in \eqref{orthogonal.basis.use}, then integrating and using \eqref{nullRHStp}, 
we get 
\begin{align*}
     \int_\rth d \pv\,\tTq \mathfrak{e}_{\ell}(\tqq \pv)~ & \left\{\partial_t -\frac{2\qq }{t} p^i\partial_{p^i} \right\}\micro f
    \\
    = &~
     \int_\rth d \pv~ \left\{\partial_t -\frac{2\qq }{t} p^i \partial_{p^i} \right\}\left( \tTq \mathfrak{e}_{\ell}(\tqq \pv)\micro f \right)
    \\
     &~
     -
          \int_\rth d \pv~ 3 \mathfrak{q} t^{3 \mathfrak{q}-1}  \mathfrak{e}_{\ell}(\tqq \pv)\micro f 
           \\
    = &~
    \left(\partial_t +\frac{3\qq }{t}  \right)
    \irnp{\micro f}{\tTq \mathfrak{e}_{\ell}(\tqq \pv)}.
\end{align*}
Thus from \eqref{timeDeriv.pq} we have 
\begin{align*}
     \int_\rth d \pv~ \tTq \mathfrak{e}_{\ell}(\tqq \pv)  \left\{\partial_t -\frac{2\qq }{t} p^i \partial_{p^i} \right\}  \micro f
    = &~
  \tmTq  \partial_t\left( \tTq
    \langle  \micro f,\tTq \mathfrak{e}_{\ell}(\tqq \pv)\rangle \right).
\end{align*}
And we also have 
\begin{align*}
     \int_\rth d \pv~  \tTq \mathfrak{e}_{\ell}(\tqq \pv) & 
     \frac{\pv}{|\tq \pv|}\cdot \nabla_x \micro f
    = 
   \tmq \nabla_x \cdot \int_\rth d \pv~ \left( \tTq \frac{\pv}{| \pv|}\mathfrak{e}_{\ell}(\tqq \pv)\micro f \right)
           \\
    = &~
    \tmq \nabla_x \cdot
    \irnp{\micro f}{\tTq \frac{\pv}{| \pv|}\mathfrak{e}_{\ell}(\tqq \pv)}.
\end{align*}
We recall now the definitions \eqref{quantmacroscopeqns} of \(\mathfrak{m}_{\ell},\) \(\highG_{\ell}\), \(\mathfrak{l}_{\ell}\), \(\mathfrak{h}_{\ell}\), and \(\mathfrak{u}_{\ell}\). We observe that $\mathfrak{m}_j = \mathfrak{l}_j = \mathfrak{h}_j = 0$ for $1  \leq j\leq  5$. Indeed, the linearised term \(\mathfrak{l}_{\ell}\) vanishes because of the self-adjointness of \(\mathcal{L}\), and the nonlinear term \(\mathfrak{h}_{\ell}\) can be shown to vanish using the definition of \(\Gamma(f,f)\) and \(p^0+q^0=p'^0+q'^0\). This enables us to obtain the local conservation laws in \eqref{local.cons} directly from the microscopic equations.

Multiplying \eqref{hydro} in the form \eqref{macroscopeqn}--\eqref{Z1eqn}--\eqref{Z2eqn} by each basis function in $\{\tTq \mathfrak{e}_{\ell}(\tqq \pv)\}_{\ell=1}^{13}$ defined in \eqref{orthogonal.basis.use} and integrating over $\threed_p$, we obtain the stated local conservation laws and macroscopic equations for the coefficients \(\mathcal{A}\), \(\mathcal{B}^i\), and \(\mathcal{C}\).
\end{proof}

The equations \eqref{local.cons} directly imply the following conservation laws for the coefficients $\mathcal{A}$, $\mathcal{B}$, $\mathcal{C}$:

\begin{corollary}[Conserved quantities]
	\label{L8.5}  From the system of conservation laws  \eqref{local.cons},  we obtain for every \(j\in\{1,2,3\}\) and all $t\geq 1$ that
\begin{align*}
   	\tTq\int_{\mathbb{T}^3} \mathcal{A}(t,x)dx=\tTq\int_{\mathbb{T}^3} \mathcal{B}_{j}(t,x)dx=\tTq\int_{\mathbb{T}^3} \mathcal{C}(t,x)dx=0. 
\end{align*}
    Equivalently, for all $t\geq 1$, we have 
    \begin{equation}\label{conservlawfourier}
\widehat{[\mathcal{A},\mathcal{B},\mathcal{C}]}(t,0)=\widehat{[\mathcal{A},\mathcal{B},\mathcal{C}]}(1,0)=0,
\end{equation}
where $[\mathcal{A},\mathcal{B},\mathcal{C}]$ denotes the vector with coefficients \(\mathcal{A}\), \(\mathcal{B}^i\), and \(\mathcal{C}\).
\end{corollary}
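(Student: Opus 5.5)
The plan is to integrate each of the three local conservation laws \eqref{local.cons} over $\mathbb{T}^3_x$ and to use that the right-hand sides and the spatial-derivative terms are divergences or gradients. Their integrals over the torus vanish by periodicity.

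\textbf{Step 1.} Take \eqref{local.cons}$_1$ and multiply through by $\tTq$, so that it reads
\[
\partial_t\left(\tTq \mathcal{A}\right) = -t^{2\qq}\nabla_x\cdot\left(\tfrac12\mathcal{B}+\highG_1\right).
\]
Integrate this over $\mathbb{T}^3$. The right-hand side integrates to zero by the divergence theorem on the torus, since there is no boundary. Hence $\frac{d}{dt}\big(\tTq\int_{\mathbb{T}^3}\mathcal{A}\,dx\big)=0$. The second equation, with $2t^{2\qq}\partial_j(\mathcal{A}-\frac{2}{\sqrt3}\mathcal{C})$ and $t^{2\qq}\nabla_x\cdot\highG_{j+1}$, and the third equation are handled in exactly the same way. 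So each of $\tTq\int_{\mathbb{T}^3}\mathcal{A}\,dx$, $\tTq\int_{\mathbb{T}^3}\mathcal{B}_j\,dx$ and $\tTq\int_{\mathbb{T}^3}\mathcal{C}\,dx$ is constant in $t\geq 1$.

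\textbf{Step 2.} Identify the value at $t=1$. By \eqref{defmacrocoef}, we have $\tTq\int_{\mathbb{T}^3}\mathcal{A}\,dx = t^{6\qq}\int\int J^{1/2}(\tqq p) f\,dp\,dx$. This is (a constant multiple of) the first quantity in \eqref{zeroconserv}, and it vanishes at $t=1$. The same identification holds for $\mathcal{B}_j$ via the momentum condition. For $\mathcal{C}$, we have $\tTq\int\mathcal{C} = \frac{t^{6\qq}}{\sqrt3}\int\int(|\tqq p|-3)J^{1/2}f$, which is a linear combination of the mass and energy conditions and so also vanishes. Alternatively, one can simply note that \eqref{zeroconserv} is assumed for all $t\geq1$, which gives the identities directly at every time. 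Step 1 then shows consistency, namely that \eqref{zeroconserv} holding at $t=1$ suffices.

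\textbf{Step 3.} The Fourier statement \eqref{conservlawfourier} follows because $\hat h(t,0)=\int_{\mathbb{T}^3}h\,dx$ and $\tTq\neq0$.

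The only point needing care is regularity. To justify differentiating under the integral and using periodicity for mild solutions, I would argue at the level of the Fourier mode $k=0$. The $k=0$ mode of any $\partial_j$ term is $ik_j(\cdot)|_{k=0}=0$, so the Fourier version of \eqref{local.cons} at $k=0$ directly gives $\partial_t(\tTq\widehat{\mathcal{A}}(t,0))=0$, and likewise for $\mathcal{B}$ and $\mathcal{C}$, in the sense of distributions in $t$. This avoids any smoothness issues.
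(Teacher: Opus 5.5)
Your proposal is correct and is essentially the paper's argument. Integrating \eqref{local.cons} over $\mathbb{T}^3$ (equivalently, reading off the $k=0$ Fourier mode) kills every divergence and gradient term, so $t^{3\mathfrak{q}}\int_{\mathbb{T}^3}\mathcal{A}\,dx$, $t^{3\mathfrak{q}}\int_{\mathbb{T}^3}\mathcal{B}_j\,dx$ and $t^{3\mathfrak{q}}\int_{\mathbb{T}^3}\mathcal{C}\,dx$ are constant in time and vanish at $t=1$ by \eqref{zeroconserv}; your explicit identification of these integrals with the mass, momentum, and energy-minus-three-times-mass quantities, which the paper leaves implicit, is accurate.
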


\begin{remark}
    Note that we have assumed in our main results that condition \eqref{conservlawfourier} holds initially at $t=1$.
\end{remark}

For later use, we  rewrite the local conservation laws in \eqref{local.cons} as
\begin{equation}\label{local.cons.alt}
\begin{split}
 \partial_t  \mathcal{A} 
     +
\frac{1}{2}  \tmq \nabla_x \cdot \mathcal{B}
+
\frac{3\qq }{t} {\mathcal{A}}
&=-\tmq \nabla_x \cdot \highG_{1},
\\ 
\partial_t \mathcal{B}_{j} 
 +
2\tmq\partial_j \left( \mathcal{A}
-\frac{2}{\sqrt{3}}
\mathcal{C}  \right)
+
\frac{3\qq }{t} {\mathcal{B}_{j}}
&=-\tmq \nabla_x \cdot \highG_{j+1}, \quad \text{for}\quad 1 \leq j \leq 3,
\\ 
\partial_t  \mathcal{C} 
  +
\frac{1}{2\sqrt{3}}  \tmq \nabla_x \cdot \mathcal{B} 
+
\frac{3\qq }{t} {\mathcal{C}}
&=-\tmq \nabla_x \cdot \highG_{5}.
\end{split}
\end{equation}
We next subtract $2\sqrt{\frac{3}{7}}$ times \eqref{macroscopic}$_1$ from \eqref{local.cons}$_2$ to obtain instead 
\begin{align}\label{macroscopic.B}
 \tmTq\partial_t \left(\tTq \mathcal{B}_{j} \right) 
 +
\tmq\frac{2}{\sqrt{3}}\partial_j \mathcal{C}  
&=
-\tmTq  \partial_t\left( \tTq \mathfrak{m}_{j+13} \right)+\mathfrak{u}_{j+13}\quad  (1 \leq j \leq 3),
\end{align}
where
\begin{align*}
\mathfrak{m}_{j+13}  \eqdef  -2\sqrt{\frac{3}{7}}\mathfrak{m}_{j+5}, \qquad 
\mathfrak{u}_{j+13} \eqdef  -\tmq \nabla_x \cdot \highG_{j+1}-2\sqrt{\frac{3}{7}}\mathfrak{u}_{j+5}.
\end{align*}
We will use this system \eqref{macroscopic}, \eqref{local.cons.alt} and \eqref{macroscopic.B} in \secref{macroscopic.eqns.sec} to obtain the crucial time-normalised coercivity estimate.

\section{Estimates for the collision operator}\label{sectcollisionop}

In this section, estimates for the linear and nonlinear parts of the collision operator are obtained. In Section \ref{sectintegralop}, we will estimate the integral operator \(\mathcal{K}\) defined in \eqref{compactK}. In Section \ref{seccoerciveestima}, we show the main coercivity estimate for the linearised operator \(\linL\). In Section \ref{secttrilnonlinear}, we prove weighted trilinear estimates for the nonlinear term \(\Gamma(f,f)\). Finally,  we show in Section \ref{subaveragequanti} estimates for averaged quantities in the momentum variables against the basis functions in \eqref{orthogonal.basis.use}. The estimates in this section hold for any $\qq \ge 0$.

\subsection{Estimates for the integral momentum operator \(\mathcal{K}\)}\label{sectintegralop}

In this section, we show that the operator \(\mathcal{K}\) can be decomposed into a compact and a small part. This is the main step to prove the coercivity for the linearised operator \(\mathcal{L}\).

\begin{proposition}[Compact-small decomposition of \texorpdfstring{\(\mathcal{K}\)}{}]
\label{noKestimate} 
Let $m\ge 0$ and $w_{m}(\tqq \pv)$ as in \eqref{weight.fcn.def}. For any small $\eta >0,$ we can decompose $\compK$ from \eqref{compactK} as 
$$
\compK = \compK_c + \compK_s,
$$
where we have that for all \(t\geq 1\),
\begin{equation}
 \tTq |\langle  w_{m}^2 \compK_s h_1, h_2\rangle |
\le 
 \eta 
\| w_{m} h_1\|_{L^2_p}
 \| w_{m}  h_2\|_{L^2_p}.
\notag
\end{equation}
for any suitably regular \(h_1\) and \(h_2\). Furthermore, for some $R=R(\eta)>0$ chosen sufficiently large, and for any $0< \spr<1$ there is a large constant $C=C_{\eta,\spr}>0$ such that for all \(t\geq 1\),
\begin{align}\label{compact.improved.est}
    \tTq |\langle w_{m}^2 \compK_c h_1, h_2\rangle |
\le
C t^{-3\spr\qq} \|\ind_{\le R} h_1 \|_{L^2_p} \| \ind_{\le R} h_2 \|_{L^2_p},
\end{align}
where $\ind_{\le R}$ is the indicator function of the ball $B_R$.
In addition $\compK_c$ is a compact operator on $L^2_p$.  
\end{proposition}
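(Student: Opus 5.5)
The first step is to remove the time dependence by rescaling. Set $P=\tqq p$, $Q=\tqq q$ and $\tilde h(P)=t^{3\qq}h(t^{-2\qq}P)$, so that $\|\tilde h\|_{L^2_P}=\|h\|_{L^2_p}$. From \eqref{compactK} and the kernel representation \eqref{K2intker}, $\compK_2 h(p)=t^{3\qq}\int dq\,k_2(\tqq p,\tqq q)h(q)=t^{-3\qq}\int dQ\,k_2(P,Q)\,t^{-3\qq}\tilde h(Q)$. We will check that $\compK_1$ is similar, since $v_{\Emptyset}$ is scale invariant. This gives
\[
t^{3\qq}\langle w_m^2\compK h_1,h_2\rangle = \langle w_m^2(P)\,\compK^{(1)}\tilde h_1,\tilde h_2\rangle_{L^2_P},
\]
where $\compK^{(1)}$ denotes the operator at $t=1$, with kernel $k(P,Q)=k_2(P,Q)-k_1(P,Q)$. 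Here $k_1(P,Q)=c\,|P/|P|-Q/|Q||^2e^{-|P|/2-|Q|/2}$ comes from the $\nu$-type computation in Proposition \ref{LEM:nuASYMP}. The whole proposition therefore reduces to a statement about a fixed kernel $k$ on $\R^3\times\R^3$.

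Next we split the kernel with a cutoff $\phi_{R,\epsilon}(P,Q)$ supported in $\{|P|+|Q|\le R\}\cap\{|P-Q|\ge 2\epsilon\}$ and equal to $1$ on $\{|P|+|Q|\le R/2,\ |P-Q|\ge 4\epsilon\}$. We also insert a cutoff away from the origin, $\{|P|,|Q|\ge \epsilon\}$, because $k_2$ has factors $(|P|+|Q|)/|P-Q|^2$ and these need separation in order to be bounded. We set $k_c=\phi k$ and $k_s=(1-\phi)k$. For the small part we run a Schur test with weights $w_m^2$: we must show
\[
\sup_P\int |k_s(P,Q)|\frac{w_m^2(P)}{w_m(P)w_m(Q)}\,dQ
\]
is small, together with the symmetric bound. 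Three regions contribute. Near the diagonal, $k_2\lesssim (1+|P|)|P-Q|^{-2}e^{-|P-Q|/2}$ is locally integrable in dimension three, so $\int_{|P-Q|\le4\epsilon}$ is $O(\epsilon)(1+|P|)$. For large momenta, the factor $e^{-|P-Q|/2}$ together with $|P/|P|-Q/|Q||^2\lesssim |P-Q|^2/|P|^2$ should give decay like $(1+|P|)^{-1}$; this is the standard relativistic estimate of \cite{MR1211782,MR2728733}. Near the origin the region has measure $O(\epsilon^3)$. The weight ratio $w_m^2(P)/(w_m(P)w_m(Q))=w_m(P)/w_m(Q)$ is controlled by $(1+|P-Q|)^{m/2}$, and $e^{-|P-Q|/2}$ absorbs it. $k_1$ decays exponentially in both variables, so its tail is trivially small. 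Choosing $R$ large and then $\epsilon$ small yields the bound $\eta$.

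For $k_c$, boundedness on the compact support away from the diagonal and the origin gives $k_c\in L^2(\R^6)$, so $\compK_c$ is Hilbert--Schmidt and hence compact. To get \eqref{compact.improved.est} we return to the $p$ variables: $\tTq\langle w_m^2\compK_c h_1,h_2\rangle$ equals $t^{-3\qq}\int\!\!\int w_m^2 k_c\,t^{-3\qq}\tilde h_1\tilde h_2$, and Cauchy--Schwarz gives $\lesssim_R \|\tilde h_1\ind_{|P|\le R}\|\,\|\tilde h_2\ind_{|P|\le R}\|$. The indicator in the rescaled variable is $\ind_{|p|\le Rt^{-2\qq}}\le \ind_{\le R}$ since $t\ge1$. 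The $t^{-3\spr\qq}$ gain does not come from the unitary rescaling. It should come instead from the extra Lebesgue volume of the $p$-support: we will not fully rescale, but apply Cauchy--Schwarz in $p$ directly with $\|k_c(\tqq\cdot,\tqq\cdot)\|_{L^2_{p,q}}= t^{-6\qq}\|k_c\|_{L^2}$, times the prefactor $t^{6\qq}$ (from $\tTq$ and the $t^{3\qq}$ in \eqref{K2intker}). This gives $O(1)$, so the gain must come from elsewhere. The remedy is to replace the $L^2$ Cauchy--Schwarz by Hölder, with an $L^r$ norm of the kernel for $r<2$ and interpolation against $L^2$ norms of $h_i$ on the ball of radius $Rt^{-2\qq}$. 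Since $|B_{Rt^{-2\qq}}|\sim t^{-6\qq}$, Hölder on this small ball produces a factor $t^{-3\spr\qq}$ for any $\spr<1$, at the cost of the constant $C_{\eta,\spr}$, which blows up as $\spr\to1$ because of the integrability of $k_c$ near the edge of the support. Getting this exponent bookkeeping right is the main obstacle. The Schur bounds in the large-momentum region for the massless kernel, where $1/|P-Q|$ and the $|P|,|Q|\to0$ degeneracy interact, are the other delicate point.
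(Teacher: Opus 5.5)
Your rescaling is the right reduction. With the unitary normalization $h(p)=\tTq\tilde h(\tqq p)$ (the exponent in your definition of $\tilde h$ has the wrong sign), one has $\compK h(p)=(\compK^{(1)}\tilde h)(\tqq p)$ and $\tTq\langle w_m^2\compK h_1,h_2\rangle=\langle w_m^2\compK^{(1)}\tilde h_1,\tilde h_2\rangle$. The same holds for $\compK_c,\compK_s$ whenever the cutoffs are functions of $(\tqq p,\tqq q)$, as both yours and the paper's are.

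\textbf{The gap.} The factor $t^{-3\spr\qq}$ in \eqref{compact.improved.est} cannot be supplied by your H\"older remedy. H\"older's inequality is covariant under $p\mapsto\tqq p$. With the kernel in $L^r_{p,q}$ and the $h_i$ in $L^{r'}$ on $B_{Rt^{-2\qq}}$, three factors appear: $t^{6\qq}$ from the two prefactors, $t^{-12\qq/r}$ from the kernel, and $t^{-12\qq(1/r'-1/2)}$ from H\"older on the small ball (twice). They multiply to $t^{0}$ whenever the step is admissible, i.e.\ $r\ge2$. For $r<2$, as you propose, $r'>2$ and $\|h_i\|_{L^{r'}}$ is not controlled by $\|h_i\|_{L^2}$ at all.

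In fact your identity shows that for $\qq>0$ no decomposition can satisfy both estimates. Take $m=0$, fix $\tilde h_1,\tilde h_2\in C^\infty_c(\R^3)$, and set $h_i=\tTq\tilde h_i(\tqq\,\cdot)$. Then $\|w_0h_i\|_{L^2_p}=\|w_0\tilde h_i\|_{L^2}$ and $\|\ind_{\le R}h_i\|_{L^2_p}\le\|\tilde h_i\|_{L^2}$. Adding the two estimates gives, with a left side independent of $t$,
\begin{equation*}
|\langle w_0^2\compK^{(1)}\tilde h_1,\tilde h_2\rangle|\le\eta\|w_0\tilde h_1\|_{L^2}\|w_0\tilde h_2\|_{L^2}+C_{\eta,\spr}\,t^{-3\spr\qq}\|\tilde h_1\|_{L^2}\|\tilde h_2\|_{L^2}.
\end{equation*}
Letting $t\to\infty$ and then $\eta\to0$ forces $\compK^{(1)}=0$, since it is bounded on $L^2$. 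This contradicts $\compK^{(1)}\sqrt{J}=\nu_0\sqrt{J}$, i.e.\ $\linL\sqrt{J}=0$ at $t=1$. What your Hilbert--Schmidt argument does prove is the $t$-uniform bound $\tTq|\langle w_m^2\compK_ch_1,h_2\rangle|\le C\|\ind_{\le R}h_1\|_{L^2_p}\|\ind_{\le R}h_2\|_{L^2_p}$. That is all Corollary \ref{2.10} and Proposition \ref{lowerN} use.

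\textbf{The paper's argument has the same problem.} It gets the decay by splitting along $A=\{|p|+|q|\ge t^{-(1+\spr)\qq}\}$ and $B=A^c$. On $B$ the bound $\int dq\,k_{1cB}\lesssim t^{-3\qq}t^{-3\spr\qq}$ is correct. However, the bound then stated for $\langle w_m^2\compK_{1cB}h_1,h_2\rangle$ drops the prefactor $t^{3\qq}$ of \eqref{hilbert.sch.K}, which is present in \eqref{CauchySchwarz.ineq}. Restoring it gives only $\tTq|\langle w_m^2\compK_{1cB}h_1,h_2\rangle|\lesssim t^{3(1-\spr)\qq}\|h_1\|_{L^2_p}\|h_2\|_{L^2_p}$. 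Moreover, $A$ misses the support $\{\tqq(|p|+|q|)\le R_1\}$ of $k_{1c}$ once $t^{(1-\spr)\qq}>R_1$. So $\compK_{1c}=\compK_{1cB}$ for large $t$, and the correct bound is the $t$-independent one you found. The same happens for $\compK_{2jcB}$. Consequently, the use of \eqref{compact.improved.est} with $\spr=1/3$ in the faster-decay argument needs a different justification.

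\textbf{The rest of your plan.} It matches the paper: cutoffs in $\tqq(|p|+|q|)$ and near the diagonal, a Schur test with $w_m(P)/w_m(Q)\lesssim(1+|P-Q|)^{m/2}$, and Hilbert--Schmidt for compactness. You should, however, drop the cutoff near the origin. It is unnecessary, since by \eqref{unit.vector.ineq} one has $k_2\lesssim e^{-|P-Q|/2}(|P-Q|^{-1}+|P-Q|^{-2})$ uniformly; this also fixes your non-uniform bound $(1+|P|)|P-Q|^{-2}$ near the diagonal. It is also harmful: it puts the full rows $\{|P|<\epsilon\}$ into $k_s$, and their row integrals $\int|k(P,Q)|\,dQ$ do not become small, so the Schur test does not close there.
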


\begin{corollary}
	\label{2.10} 
	Let $m\ge 0$ and $w_{m}(\tqq \pv)$ as in \eqref{weight.fcn.def}. For any $R>0$ large enough, there is some $C_R>0$ such that for all suitably regular \(f\), and for all \(t\geq 1\),
\begin{equation}\label{estimlinweight}
	     \tTq \irnp{ w_{m}^2\linL f}{f} \gtrsim \norm{w_{m} f}^2_{L^2_p(\threed)}-C_R\norm{\ind_{\le R}f}^2_{L^2_p}. 
	 \end{equation}
\end{corollary}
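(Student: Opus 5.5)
We derive \eqref{estimlinweight} directly from the decomposition $\linL = \nu - \compK$ in \eqref{decomlinop}, Proposition \ref{LEM:nuASYMP}, and the compact--small splitting of Proposition \ref{noKestimate}.

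\textbf{Multiplicative part.} Since the hard-ball collision frequency is $\nu(t,p)=\nu_0 t^{-3\qq}$, we have
\begin{equation*}
\tTq\irnp{w_m^2\nu f}{f}=\nu_0\norm{w_m f}^2_{L^2_p}
\end{equation*}
exactly. The time factor $\tTq$ in the statement is chosen precisely to cancel the $t^{-3\qq}$ degeneration of $\nu$, so this term is uniform in $t\ge 1$.

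\textbf{Integral part.} We fix $\eta=\nu_0/2$ in Proposition \ref{noKestimate}. This yields a decomposition $\compK=\compK_c+\compK_s$ and a radius $R=R(\eta)$.

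For the small part, take $h_1=h_2=f$ in the first bound of that proposition:
\begin{equation*}
\tTq|\irnp{w_m^2\compK_s f}{f}|\le \tfrac{\nu_0}{2}\norm{w_m f}^2_{L^2_p}.
\end{equation*}

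For the compact part, use \eqref{compact.improved.est} with any fixed $\spr\in(0,1)$. Since $t\ge1$, the factor $t^{-3\spr\qq}$ is at most $1$, and we get
\begin{equation*}
\tTq|\irnp{w_m^2\compK_c f}{f}|\le C_{\eta,\spr}\norm{\ind_{\le R}f}^2_{L^2_p}.
\end{equation*}

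\textbf{Combining.} Putting the three estimates together gives
\begin{equation*}
\tTq\irnp{w_m^2\linL f}{f}\ge \tfrac{\nu_0}{2}\norm{w_m f}^2_{L^2_p}-C_R\norm{\ind_{\le R}f}^2_{L^2_p},
\end{equation*}
which is \eqref{estimlinweight}.

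\textbf{Larger radii.} The statement asks for every sufficiently large $R$, not only $R(\eta)$. For any $R'\ge R(\eta)$ we have $\norm{\ind_{\le R(\eta)}f}_{L^2_p}\le\norm{\ind_{\le R'}f}_{L^2_p}$, so the same inequality holds with $R'$ and the same constant.

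\textbf{Where the difficulty lies.} The real content is in Proposition \ref{noKestimate}, which we are assuming. Two points there carry the weight. First, the smallness of $\compK_s$ must survive the weight $w_m^2$ uniformly in $t$; this comes from the Gaussian-type decay of the kernels in $t^{2\qq}p$ and $t^{2\qq}q$ after rescaling. Second, the compact piece must carry the factor $\tTq$ with a bound uniform in $t$. Given that proposition, the remaining argument is only the bookkeeping above.

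One mild point to check is the convention for $\ind_{\le R}$. If it means the indicator of $\{|t^{2\qq}p|\le R\}$ rather than $\{|p|\le R\}$, the argument is unchanged, provided the same convention is used consistently in \eqref{compact.improved.est} and in \eqref{estimlinweight}.
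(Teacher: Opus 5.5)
Your proof is correct and follows the paper's route exactly: the paper just says Corollary \ref{2.10} follows directly from Proposition \ref{noKestimate}, namely that $\tTq\nu=\nu_0$ cancels the time degeneration, $\compK_s$ is absorbed by taking $\eta<\nu_0$, and $\compK_c$ is controlled by the localized norm. You are right to use only the $t$-uniform content of \eqref{compact.improved.est}, discarding the factor via $t^{-3\spr\qq}\le 1$: under the $L^2_p$-isometric rescaling $h(p)=\tTq g(\tqq p)$, the quantity $\tTq\irnp{w_m^2\compK_c h}{h}$ is independent of $t$, so that decay factor cannot actually be gained for $\qq>0$, and uniformity in $t$ is all this corollary needs.
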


Corollary \ref{2.10} follows directly from Proposition \ref{noKestimate}. The estimate \eqref{estimlinweight} is similar to \cite[Section 2.6, Lemma 2.9]{RelBolNoncut2020} for the relativistic Boltzmann equation on \(\T^3_x\) in the massive case.

\begin{proof}[Proof of Proposition \ref{noKestimate}]   
We write the linear operators $\compK_1$ and $\compK_2$ from \eqref{compactK} in terms of a kernel as
\begin{align}\label{hilbert.sch.K}
    \compK_j h =     t^{3\qq}\int_{\mathbb{R}^3} ~  dq ~
k_{j}(\tqq p,\tqq q) ~ h(\qv),\qquad \text{for}~~j \in \{1, 2\}.
\end{align}
Here, from \eqref{moller} and \eqref{compactK} for $k_1=k_1(\tqq p,\tqq q)$ we have
\begin{equation}
k_1 \eqdef 
 \frac{1}{2}\left(1-\frac{ p \cdot q}{|\pv||\qv|} \right)e^{-\tqq (\pZ+\qZ)/2}
 =
 \frac{1}{4}\left| \frac{\pv}{\pZ} -\frac{\qv}{\qZ}\right|^2 e^{-\tqq (\pZ+\qZ)/2},
\label{k1def}
\end{equation}
and $k_2(\tqq p,\tqq q)$ is defined in \eqref{k2def} below.  We first prove the estimates for $\compK_1$.

Given $R \ge 1$, we set a positive and smooth cut-off function $\phi_R = \phi_R(\tqq p,\tqq q)$ such that 
\begin{equation}
\label{smmothPr}
\begin{split}
\phi_R = 1, 
\quad
\text{if} 
\quad
\tqq |p|+\tqq |q| \le R/2, \qquad \left| \phi_R \right| \le 1,
\\
\text{supp} (\phi_R)
\subset
\left\{(p,q) \left| ~
\tqq |p|+\tqq |q| \le R\right.
\right\}.
\end{split}
\end{equation}
We use this cut-off function with several different $R$ values in the rest of the proof.  

First we estimate $k_1(\tqq p,\tqq q)$ in \eqref{compactK}
and \eqref{hilbert.sch.K}.  For large $R_1>0$ we split 
\begin{align*}
k_1(\tqq p,\tqq q) &= \phi_{R_1} k_1(\tqq p,\tqq q) + (1-\phi_{R_1})k_1(\tqq p,\tqq q) \\
&\eqdef k_{1c}(\tqq p,\tqq q)+k_{1s}(\tqq p,\tqq q),
\end{align*}
and then we similarly decompose 
\begin{equation}\label{K.one.decomp}
\compK_1=\compK_{1c}+\compK_{1s}.
\end{equation}
Recall now, for $m\ge 0$, the weight $w_{m}(\tqq \pv)$ in \eqref{weight.fcn.def}. From \eqref{k1def} and \eqref{smmothPr} there is a uniform constant such that
\begin{align*}
   w_{m}^2(\tqq \pv) \left| k_{1s}(\tqq p,\tqq q)\right| 
    \lesssim  e^{-R_1/4} e^{-\tqq (\pZ+\qZ)/4}.
\end{align*}
Thus we estimate $\compK_{1s}$ using the Cauchy--Schwarz inequality as
\begin{multline} \label{CauchySchwarz.ineq}
    \left|
\langle  w_{m}^2 \compK_{1s}(h_1), h_2\rangle 
\right|
\le 
\tTq\int_{\mathbb{R}^3}dq ~  \int_{\mathbb{R}^3}dp ~ w_{m}^2(\tqq \pv) ~
\left| k_{1s}(\tqq p,\tqq q) \right| 
\left|
 h_1(q)
h_2(p)
\right|
\\ 
\le 
\tTq\left( \int_{\mathbb{R}^3}dq ~ \left|
 h_1(q)
\right|^2 \int_{\mathbb{R}^3}dp ~ w_{m}^2(\tqq \pv) ~
\left| k_{1s}(\tqq p,\tqq q) \right| 
\right)^{\frac12}
\\ 
\times
\left( \int_{\mathbb{R}^3}dp ~ \left|
 h_2(p)
\right|^2 \int_{\mathbb{R}^3}dq ~ w_{m}^2(\tqq \pv) ~
\left| k_{1s}(\tqq p,\tqq q) \right| 
\right)^{\frac12}
\\ 
\lesssim   
\tmTq e^{-R_1/4} \|  h_1\|_{L^2_p}
 \|   h_2\|_{L^2_p}.
\end{multline}
By choosing $R_1>0$ sufficiently large, this gives the desired estimate for $\compK_{1s}$.

For $\compK_{1c}$, from \eqref{k1def} and \eqref{smmothPr}, there is a large constant such that 
\begin{align*}
    w_{m}^2(\tqq \pv) \left| k_{1c}(\tqq p,\tqq q) \right| 
\lesssim e^{-\tqq (\pZ+\qZ)/2} ~ \ind_{\tqq|p|\le R} ~ \ind_{\tqq|q|\le R}.
\end{align*}
Now in addition, for some $0< \spr < 1$, we consider the sets 
\begin{align}\label{time.split.cmpt}
  A \eqdef \big\{   |p|+ |q| \geq t^{-(1+\spr)\qq}\big\}, \qquad B \eqdef A^c = \big\{   |p|+ |q| < t^{-(1+\spr)\qq}\big\},
\end{align}
and then split
\begin{align*}
    k_{1c}(\tqq p,\tqq q)&=k_{1c}(\tqq p,\tqq q) \ind_{A}+k_{1c}(\tqq p,\tqq q) \ind_{B}\\
    &\eqdef k_{1cA}(\tqq p,\tqq q)+k_{1cB}(\tqq p,\tqq q),
\end{align*}
and we similarly decompose \[\mathcal{K}_{1c}=\mathcal{K}_{1cA}+\mathcal{K}_{1cB}.\]
Then we have 
\begin{align*}
   w_{m}^2  k_{1cA}(\tqq p,\tqq q)  
\lesssim e^{-t^{(1-\spr)\qq} /4}e^{-\tqq (\pZ+\qZ)/4} ~ \ind_{\tqq|p|\le R} ~ \ind_{\tqq|q|\le R}.
\end{align*}
Thus for $\compK_{1cA}$ with kernel $k_{1cA}(\tqq p,\tqq q)$, using Cauchy--Schwarz as in \eqref{CauchySchwarz.ineq},
\begin{align*}
    |\langle  w_{m}^2 \compK_{1cA} h_1, h_2\rangle |
\lesssim
\tmTq e^{-t^{(1-\spr)\qq}/4}\|\ind_{\le R} h_1 \|_{L^2_p} \| \ind_{\le R} h_2 \|_{L^2_p}.
\end{align*}
On $B$ instead, we use that 
\begin{align*}
\max \{ |p|, |q|\} \leq  |p|+ |q| < t^{-(1+\spr)\qq}.
\end{align*}
Hence, for $k_{1cB}(\tqq p,\tqq q)$ we have 
\begin{align*}
w_{m}^2(\tqq \pv)\int_{\mathbb{R}^3}dq ~
 k_{1cB}(\tqq p,\tqq q)
\lesssim \ind_{|p|\le R} \int_0^{t^{-(1+\spr)\qq}} \rho^2 d\rho \lesssim \ind_{|p|\le R} t^{-3\qq} t^{-3\spr\qq}.
\end{align*}
The same estimate holds for $\int_{\mathbb{R}^3}dp ~ w_{m}^2(\tqq \pv)
k_{1cB}(\tqq p,\tqq q)$.  Thus for $\compK_{1cB}$ with kernel $k_{1cB}(\tqq p,\tqq q)$,  similarly to \eqref{CauchySchwarz.ineq}, we obtain
\begin{align*}
    |\langle w_{m}^2 \compK_{1cB} h_1, h_2\rangle |
\lesssim
\tmTq  t^{-3\spr\qq}\|\ind_{\le R} h_1 \|_{L^2_p} \| \ind_{\le R} h_2 \|_{L^2_p}.
\end{align*}
This establishes the desired estimate for $\compK_{1c}$. Then $\compK_{1c}$ is a compact operator because $k_{1c}(\tqq p,\tqq q) \in L^2(\threed_p\times \threed_q)$.

We now consider $\compK_{2}$ as in \eqref{compactK} and in \eqref{hilbert.sch.K}.   We first explain how to obtain the form \eqref{hilbert.sch.K} for the kernel. Due to the diagonal structure of the FLRW metric, we can directly follow the special relativistic calculation in \cite[Appendix, starting on page 588]{MR2728733} to obtain, for some fixed constant $c_1>0$, that we have 
\begin{align}\notag 
    k_2(\tqq p,\tqq q) = &~
\frac {c_1 s^{3/2}}{\mrel{\pZ}{\qZ}\tqq}
\kerU_1(\tqq p,\tqq q)
 \exp\left( -\kerU_2(\tqq p,\tqq q)\right),
\end{align}
where 
\begin{equation}\notag 
\kerU_2(\tqq p, \tqq q)  \eqdef \frac{ \sqrt{s}}{2\mrel}\tqq|p-q|= \frac{\tqq}{2}|p-q|,    
\end{equation}
and
\begin{align}\notag  
    \kerU_1(\tqq p,\tqq q)\eqdef &~ \frac{1}{\kerU_2}+\frac{\tqq}{2}\frac{{\pZ}+{\qZ}}{\kerU_2^2}+\frac{\tqq}{2}\frac{{\pZ}+{\qZ}}{\kerU_2^3}
    \\
    =  &~
    \frac{2}{\tqq }\left(\frac{1}{|p-q|}+\frac{{\pZ}+{\qZ}}{|p-q|^2}
+
\frac{2}{\tqq}\frac{{\pZ}+{\qZ}}{ |p-q|^3}\right).\nonumber
\end{align}
Notice from \eqref{sDEFINITIONFLRW} that we have
\begin{align}\label{difference.pq.ktwo}
    \frac {s^{3/2}}{\mrel{\pZ}{\qZ}\tqq}
        =
    \frac{2\big(\pZ\qZ - p\cdot q\big)}{{\pZ}{\qZ}}
    =
\left| \frac{\pv}{\pZ} -\frac{\qv}{\qZ}\right|^2.
\end{align}
Using additionally \eqref{difference.pq.ktwo}, 
we  decompose
\begin{align*}
    k_2(\tqq p,\tqq q)=k_{21}(\tqq p,\tqq q)+k_{22}(\tqq p ,\tqq q)+k_{23}(\tqq p,\tqq q),
\end{align*}
where
\begin{align}\notag
k_{21}(\tqq p,\tqq q)= &~
2c_1
\frac{\left| \frac{\pv}{\pZ} -\frac{\qv}{\qZ}\right|^2}{\tqq|p-q|}
 e^{- \tqq|p-q|/2},
 \\ \label{k2def}
 k_{22}(\tqq p,\tqq q)= &~
2c_1
\frac{\tqq\left(|\pv|+|\qv|\right)\left| \frac{\pv}{\pZ} -\frac{\qv}{\qZ}\right|^2}{\left(\tqq|p-q|\right)^2}
 e^{- \tqq|p-q|/2},
 \\ \notag
  k_{23}(\tqq p,\tqq q)= &~
4c_1
\frac{\tqq\left(|\pv|+|\qv|\right)\left| \frac{\pv}{\pZ} -\frac{\qv}{\qZ}\right|^2}{\left(\tqq|p-q|\right)^3}
 e^{- \tqq|p-q|/2}.
\end{align}
 To simplify no tation, we suppress the dependence on $\tqq$ in the rest of this proof. In particular, we write $k_2(p,q)=k_2(\tqq p, \tqq q)$ and similarly for $k_{21}$, $k_{22}$ and $k_{23}$.

The following inequality shows that $k_2(p,q)$ as in \eqref{k2def} is integrable.  We write
\begin{align*}
    \frac{\pv}{\pZ} -\frac{\qv}{\qZ}
    =
    \frac{\pv-\qv}{\pZ} +\qv\left(\frac{1}{\pZ}-\frac{1}{\qZ}\right).
\end{align*}
Thus, by the triangle inequality
\begin{align*}
   \left| \frac{\pv}{\pZ} -\frac{\qv}{\qZ}\right|
    \leq
    \frac{|\pv-\qv|}{\pZ} +|\qv|\left|\frac{1}{\pZ}-\frac{1}{\qZ}\right|.
\end{align*}
As a result, 
\begin{align*}
  \vnrm{\pv} \left| \frac{\pv}{\pZ} -\frac{\qv}{\qZ}\right|
    \leq
    \vnrm{\pv-\qv}
 +\vnrm{\vnrm{\qv}-\vnrm{\pv}}
     \leq
    2\vnrm{\pv-\qv}.
\end{align*}
After  using symmetry, we conclude that 
\begin{align}\label{unit.vector.ineq}
  \max\{\pZ, \qZ\} \left| \frac{\pv}{\pZ} -\frac{\qv}{\qZ}\right|
    \leq
    2|\pv-\qv|.
\end{align}
From \eqref{k2def} and \eqref{unit.vector.ineq} there is a uniform constant such that
\begin{align}\notag
    k_{21}(p,q)  +k_{22}(p,q) 
    \lesssim
    \frac{e^{- \tqq|p-q|/2}}{\tqq|p-q|},
    \qquad 
    k_{23}(p,q) 
    \lesssim
    \frac{e^{- \tqq|p-q|/2}}{(\tqq|p-q|)^2}.
\end{align}
Then \eqref{kernel.bounds.hs} shows that $k_2(p,q)$ as in \eqref{k2def} is  integrable in $\mathbb{R}^3_q$ uniformly in time for any fixed $\pv \in \threed$.  For the weight $w_{m}^2(\tqq \pv)$ in \eqref{weight.fcn.def} with $m\ge 0$ we use $|p| \leq |p-q| + |q|$ to see that 
\begin{align*}
    w_{m}^2(\tqq \pv) \lesssim w_{m}^2(\tqq \qv)+w_{m}^2(\tqq (\pv-\qv)).
\end{align*}
We thus conclude that 
\begin{equation}\label{kernel.bounds.hs}
\begin{split}
    w_{m}^2(\tqq \pv)\left(k_{21}(p,q)  + k_{22}(p,q) \right)
   & \lesssim
    \min\big\{w_{m}^2(\tqq \pv), w_{m}^2(\tqq \qv)\big\}\frac{e^{- \tqq|p-q|/4}}{\tqq|p-q|},
\\
   w_{m}^2(\tqq \pv) k_{23}(p,q) 
   & \lesssim
   \min\big\{w_{m}^2(\tqq \pv), w_{m}^2(\tqq \qv)\big\} \frac{e^{- \tqq|p-q|/4}}{(\tqq|p-q|)^2}.
\end{split}
\end{equation}
These will be our main estimates for the kernels $k_{2j}(p,q)$ above.

To proceed, we will employ a splitting to cut out the singularity, analogous to \cite[Equation (3.1)]{MR2728733} and also to \cite{MR2366140}. Fix a small $\smep\in(0,1)$.  We choose a smooth nonnegative cut-off function $\chi (r)$ satisfying
\begin{equation}
\chi=\chi (\tqq|\pv - \qv|)=
\left\{
\begin{array}{cl}
 0 & {\rm if } ~~  \tqq|\pv - \qv| \ge 2\smep,
 \\
  1 &
  {\rm if } ~~ \tqq|\pv - \qv| \leq \smep.
\end{array}
\right.
\label{cut}
\end{equation}
Now with \eqref{k2def} and \eqref{cut} for $j \in \{1, 2, 3\}$ we define
\begin{align}\label{kCUT}
    k_{2j}^{1-\chi}(p,q) \eqdef  \left( 1 - \chi \right)k_{2j}(p,q), 
    \qquad 
    k_{2j}^{\chi}(p,q)\eqdef   \chi k_{2j}(p,q)=k_{2js1}(p,q),
\end{align}
and will use the analogous splitting
$$
\compK_{2j}
\eqdef
\compK_{2j}^{1-\chi}
+
\compK_{2j}^{\chi}, \quad j \in \{1, 2, 3\}.
$$
We also denote $\compK_{2j}^{\chi}=\compK_{2js1}$ with kernel $k_{2js1}(p,q)$.

Next we consider $\compK_{2js1}$. To prove the desired estimate for the terms $\compK_{2js1}$, it suffices to prove for $j \in \{1, 2, 3\}$ the following uniform bounds
\begin{align}\label{integral.est.needed}
  \int_{\threed} dq~ k_{2js1}(p,q)\lesssim 
  \tmQq  \smep,
 \quad\int_{\threed} d\pv~ w_{m}^2(\tqq \pv)k_{2js1}(p,q)  \lesssim 
  w_{m}^2(\tqq \qv)\tmQq  \smep. 
\end{align}
We will only prove \eqref{integral.est.needed} for the $d\qv$ integral because, after using \eqref{kernel.bounds.hs} with $m>0$, the other proof is identical.  
From \eqref{kernel.bounds.hs} with $m=0$ we have 
\begin{align*}
    \int_{\threed} d\qv~ k_{2js1}(p,q)
    \lesssim \tmQq\int_0^{2\smep} d\rho ~ e^{- \rho/4} \rho^{b(j)}
    \lesssim \tmQq \smep^{b(j)+1}.
\end{align*}
Here $b(1) =b(2) = 1$ and  $b(3) = 0$ as in \eqref{kernel.bounds.hs}.  In particular the estimate \eqref{integral.est.needed} holds.  Then proceeding as in \eqref{CauchySchwarz.ineq} we have
\begin{align}\notag
\left|
\langle  w_{m}^2 \compK_{2js1}(h_1), h_2\rangle 
\right|
\lesssim   
\tmTq \smep \| w_{m} h_1\|_{L^2_p}
 \|  w_{m} h_2\|_{L^2_p}.
\end{align}
This is the desired estimate for $\compK_{2j}^{\chi}=\compK_{2js1}$.

We next consider $\compK_{2j}^{1-\chi}$ with kernel $k_{2j}^{1-\chi}(p,q)$ as in \eqref{kCUT}.  For this term using \eqref{smmothPr} with a different $R_2>0$ for $j \in \{1, 2, 3\}$ we decompose 
\begin{equation}\notag
k_{2j}^{1-\chi}(p,q) = \phi_{R_2} k_{2j}^{1-\chi}(p,q) + (1-\phi_{R_2})k_{2j}^{1-\chi}(p,q) \eqdef k_{2jc}(p,q)+k_{2js2}(p,q).
\end{equation}
We will first estimate $\compK_{2jc}$ with kernel $k_{2jc}$.  From \eqref{k2def}, \eqref{kernel.bounds.hs} and \eqref{cut}, there is a uniform constant such that 
\begin{align*}
    w_{m}^2(\tqq \pv)  k_{2jc}(p,q) 
\lesssim \smep^{-2} e^{- \tqq|p-q|/2} ~ \ind_{|\tqq p|\le R_2} ~ \ind_{|\tqq q|\le R_2}.
\end{align*}
Now, similarly to \eqref{time.split.cmpt}, for any $0< \spr < 1$ we consider the regions
\begin{align}\notag
  A \eqdef \{   |p-q| \geq t^{-(1+\spr)\qq}\}, \qquad B \eqdef A^c = \{   |p-q| < t^{-(1+\spr)\qq}\},
\end{align}
and then split \[k_{2jc}= k_{2jc}\ind_A+k_{2jc}\ind_B\eqdef k_{2jcA}+k_{2jcB},\]
and we use the analogous decomposition \[\mathcal{K}_{2jc}=\mathcal{K}_{2jcA}+\mathcal{K}_{2jcB},\qquad \text{for}\quad j\in\{1,2,3\}.\]
On the set $A$ we have 
\begin{align*}
    w_{m}^2(\tqq \pv)  k_{2jcA} 
\lesssim \smep^{-2} e^{- t^{(1-\spr)\qq}/4} e^{- \tqq|p-q|/4} ~ \ind_{\tqq|p|\le R_2} ~ \ind_{\tqq|q|\le R_2}.
\end{align*}
Thus for $\compK_{2jcA}$ with kernel $k_{2jcA}$, as in \eqref{CauchySchwarz.ineq} we have
\begin{align*}
    |\langle w_{m}^2 \compK_{2jcA} h_1, h_2\rangle |
\lesssim
\tmTq e^{-t^{(1-\spr)\qq}/4}\|\ind_{\le R} h_1 \|_{L^2_p} \| \ind_{\le R} h_2 \|_{L^2_p}.
\end{align*}
On $B$ instead, we estimate the contribution of $k_{2jcB}$ by 
\begin{align*}
\int_{\mathbb{R}^3}dq ~
w_{m}^2(\tqq \pv) k_{2jcB}(\tqq p,\tqq q)
\lesssim \ind_{|p|\le R} \int_0^{t^{-(1+\spr)\qq}} \rho^2 d\rho  
\lesssim \ind_{|p|\le R} t^{-3\qq} t^{-3\spr\qq}.
\end{align*}
 Thus for $\compK_{2jcB}$ with kernel $k_{2jcB}$, similarly to \eqref{CauchySchwarz.ineq}, we obtain
\begin{align*}
    |\langle w_{m}^2 \compK_{2jcB} h_1, h_2\rangle |
\lesssim
\tmTq  t^{-3\spr\qq}\|\ind_{\le R} h_1 \|_{L^2_p} \| \ind_{\le R} h_2 \|_{L^2_p}.
\end{align*}
This establishes the desired estimate for $\compK_{2jcB}$ and again $\compK_{2jcB}$ is a compact operator because  $k_{2jcB} \in L^2(\threed_p\times \threed_q)$.

It remains to estimate $\compK_{2js2}$ with kernel $k_{2js2}$.  However from \eqref{smmothPr} we have $ |\tqq p|+ | \tqq q| > R_2/2$, thus either $ | \tqq \pv|$ or $ | \tqq \qv|$ is strictly larger than $R_2/4$.  Suppose that either $ |\tqq \qv| \leq R_2^{\frac12}$ or $ | \tqq \pv| \leq R_2^{\frac12}$.  If for instance $ |\tqq \qv| \leq R_2^{\frac12}$, then 
\begin{align*}
 1=   \frac{ |\tqq p|+  |\tqq q|}{ |\tqq p|+ |\tqq q|} \leq \frac{\tqq |p-q|+2 |\tqq q|}{ |\tqq p|+ |\tqq q|} 
    \lesssim \frac{\tqq |p-q|+2 R_2^{\frac12}}{R_2}. 
\end{align*}
The same estimate holds if $ |\tqq \pv| \leq R_2^{\frac12}$.  We plug the bounds above into \eqref{kernel.bounds.hs}, for $\smep>0$ fixed and $R_2>0$ large,  to obtain
\begin{equation}\label{kernel.bounds.qbd}
\begin{split}
    w_{m}^2(\tqq \pv)\left(k_{21s2}  + k_{22s2} \right)
   & \lesssim
    \min\big\{w_{m}^2(\tqq \pv), w_{m}^2(\tqq \qv)\big\}\frac{e^{- \tqq|p-q|/8}}{R_2^{\frac12} \smep},
\\
   w_{m}^2(\tqq \pv) k_{23s2}
   & \lesssim
   \min\big\{w_{m}^2(\tqq \pv), w_{m}^2(\tqq \qv)\big\} \frac{e^{- \tqq|p-q|/8}}{R_2^{\frac12} \smep^2}.
\end{split}
\end{equation}
This is our main estimate when either $ |\tqq\qv| \leq R_2^{\frac12}$ or $ |\tqq\pv| \leq R_2^{\frac12}$.

Now, alternatively we suppose that both $ |\tqq\pv| > R_2^{\frac12}$ and $|\tqq \qv| > R_2^{\frac12}$. Notice that we can instead express \eqref{k2def} using \eqref{difference.pq.ktwo} as
\begin{align}\notag
k_{21}(p,q)= &~
\frac{2c_1}{\tqq|p-q|}\frac{\tqq s}{{\pZ}\tqq{\qZ}\tqq}
 e^{- \tqq|p-q|/2},
 \\ \label{k2def.Second}
 k_{22}(p,q)= &~
2c_1
\frac{\tqq s}{\left(\tqq|p-q|\right)^2}
\left(\frac{1}{{\pZ}}+\frac{1}{{\qZ}} \right)\frac{1}{\tqq}
 e^{- \tqq|p-q|/2},
 \\ \notag
  k_{23}(p,q)= &~
4c_1
\frac{\tqq s}{\left(\tqq|p-q|\right)^3}
\left(\frac{1}{{\pZ}}+\frac{1}{{\qZ}} \right)\frac{1}{\tqq}
 e^{- \tqq|p-q|/2}.
\end{align}
By using \eqref{mrel.upper.est} and \eqref{cut} in the expressions \eqref{k2def.Second} above, for $\smep>0$ fixed and $R_2>0$ large, for $j \in \{1, 2, 3\}$ we obtain
\begin{align}\label{kernel.bounds.last.small}
  w_{m}^2(\tqq \pv)  k_{2js2}(p,q)  
    \lesssim  
    \min\big\{w_{m}^2(\tqq \pv), w_{m}^2(\tqq \qv)\big\}
    R_2^{-\frac12} \smep^{-1} {e^{- \tqq|p-q|/8}}.
\end{align}
Again proceeding as in \eqref{CauchySchwarz.ineq}, using \eqref{kernel.bounds.qbd} and \eqref{kernel.bounds.last.small}, we have
\begin{align}\notag
\left|
 \irnp{w_{m}^2 \compK_{2js2}(h_1)}{h_2} 
\right|
\lesssim   
\tmTq R_2^{-\frac12} \smep^{-1} \| w_{m} h_1\|_{L^2_p}
 \|  w_{m} h_2\|_{L^2_p}.
\end{align}
First choose $\smep>0$ small, then choose $R_2>0$ large to complete the proof.
\end{proof}

\subsection{Coercivity of the linearised operator}\label{seccoerciveestima}

Let us study the basic functional analytic properties of the linearised operator $\linL$.

\begin{proposition}
    The linear operator $\linL$ from \eqref{L0} is self-adjoint, that is, $\langle \linL f,h\rangle =\langle  f,\linL h\rangle$ for all \(f\), \(h\in L^2_p\). Moreover $\linL$ is non-negative, in other words $\langle\linL h,h\rangle \ge 0$ for all \(h\in L^2_p\). Moreover $\linL h = 0$ if and only if $h = {\bf P} h$ for ${\bf P}$ from \eqref{macrofbasis}.  
\end{proposition}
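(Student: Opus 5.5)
The plan is the classical symmetrization argument for the linearised Boltzmann operator, carried out at each fixed time $t\ge 1$ with $J=J(\tqq\cdot)$. I will use the center-of-momentum form \eqref{collision.CM.massless} of $Q$ and the pre-post change of variables from Lemma \ref{Lem change of variables}. Its key feature is that the factor $\frac{\partial(p',q')}{\partial(p,q)} = -\frac{|p'||q'|}{|p||q|}$ is exactly compensated by the invariant measures $\frac{dp}{p^0}\frac{dq}{q^0}$ in \eqref{collisionop}, i.e.\ by the Lorentz-invariant formulation with $W$ symmetric under $(p,q)\leftrightarrow(q,p)$ and $(p,q)\leftrightarrow(p',q')$. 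It is convenient to work with the original form $Q=\frac{1}{p^0}C$, where these symmetries of $W$ are manifest. Energy conservation \eqref{collisionalCONSERVATION} gives $J(\tqq p')J(\tqq q') = J(\tqq p)J(\tqq q)$, and this is the only property of $J$ needed.

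\textbf{Step 1 (quadratic form).} For $f,h\in L^2_p$, write $\langle \linL f,h\rangle$ as an integral over $(p,q,p',q')$ weighted by $W$ and the invariant measures. The integrand is
\[
\sqrt{J(p)J(q)}\,\big[f(p)+f(q)-f(p')-f(q')\big]\,h(p).
\]
To reach this form, use $\sqrt{J(p')J(q')}=\sqrt{J(p)J(q)}$ and expand the two terms in \eqref{L0}. Then apply the four symmetries $p\leftrightarrow q$ and $(p,q)\leftrightarrow(p',q')$ to average over them. This yields
\[
\langle \linL f,h\rangle=\frac{1}{4}\int\!\!\int\!\!\int\!\!\int W\,\frac{t^{9\qq}\cdots}{p^0q^0}\,J(p)J(q)\,\Delta\!\left[\frac{f}{\sqrt J}\right]\Delta\!\left[\frac{h}{\sqrt J}\right],
\]
where $\Delta[g]=g(p')+g(q')-g(p)-g(q)$ (the factor $J(p)J(q)$ appears after pulling $\sqrt{J}$ appropriately). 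The right-hand side is manifestly symmetric in $f,h$, which gives self-adjointness. Taking $h=f$ makes it nonnegative, since $W\ge0$.

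\textbf{Step 2 (null space).} If $\linL h=0$ then $\langle \linL h,h\rangle=0$, so $g=h/\sqrt J$ satisfies
\[
g(p')+g(q')=g(p)+g(q)
\]
for almost every collision configuration; that is, $g$ is a collision invariant. This is the main obstacle: the classification of measurable collision invariants for massless particles. I would cite the relativistic result of Lichnerowicz--Marrot and Cercignani--Kremer, adapted as in \cite{MR2911100} / \cite{CercignaniKremer2002}, which shows $g=a+b\cdot p+c|p|$. The argument first establishes the identity for smooth $g$, using the Glassey--Strauss parametrisation \eqref{postCOLLvelGS.FLRW} to differentiate in $\omega$, and then passes to general $g$ by mollification. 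Rescaling by $\tqq$ turns this into the span of $\tTq\mathfrak e_\ell(\tqq p)$, $\ell\le5$. The converse, that these functions lie in the kernel, follows directly from \eqref{collisionalCONSERVATION}, since each of them makes $\Delta$ vanish, and hence $\linL h=0$. Combined with Proposition \ref{prop_orthnorm_nul}, this identifies $\ker\linL$ with the range of ${\bf P}$, so $\linL h=0$ if and only if $h={\bf P}h$.

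A technical point remains: all integrals must make sense for $h\in L^2_p$. This follows from the boundedness of $\nu$ (Proposition \ref{LEM:nuASYMP}) and of $\compK$ on $L^2_p$ (Proposition \ref{noKestimate}), so the formulas can be justified on a dense class and extended by continuity.
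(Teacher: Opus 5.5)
Your argument is correct. It is the standard symmetrisation-plus-collision-invariant proof, which the paper does not write out and instead defers to \cite{GL1996}.

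Two small corrections:
\begin{itemize}
\item Your unsymmetrised integrand is wrong as written. With $g=f/\sqrt{J}$ it should be $J(p)J(q)\,[g(p)+g(q)-g(p')-g(q')]\,(h/\sqrt{J})(p)$. Your final symmetrised formula is nevertheless right.
\item If you mollify a merely measurable collision invariant, you must average over Lorentz transformations of the null vectors rather than convolve in $p$. Translations in momentum do not preserve the massless collision relations $p+q=p'+q'$, $|p|+|q|=|p'|+|q'|$.
\end{itemize}
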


    The properties above are standard, and their proofs follow directly from \cite{GL1996}. We now derive the main coercivity estimate for the linearised problem.

\begin{proposition}[Coercivity estimate]\label{lowerN}  
There exists a uniform constant
$\delta_0>0$ such that for all \(h\in L^2_p\), and for all \(t\geq 1\),
\begin{equation}\label{coercestprop2}
  \tTq\langle \linL h, h \rangle
\geq
\delta_0 
\| \micro  h \|_{L^2_p}^2.
\end{equation}
\end{proposition}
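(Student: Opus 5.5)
Since $\linL h=\linL\micro h$ and $\linL$ is self-adjoint, $\langle\linL h,h\rangle=\langle\linL\micro h,\micro h\rangle$, so it suffices to prove \eqref{coercestprop2} for $h$ orthogonal to $\mathrm{nul}(\linL)$. The plan is to remove the time dependence by rescaling and then run the classical contradiction argument on a fixed ($t$-independent) operator.

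\textbf{Step 1 (rescaling).} Set $\tilde h(u)=t^{3\qq}h(t^{-2\qq}u)$. The map $h\mapsto\tilde h$ is an $L^2_p$ isometry: substituting $p=t^{-2\qq}u$ gives $dp=t^{-6\qq}du$, which cancels the factor $t^{6\qq}$ in $|\tilde h|^2$. Under this map the basis $t^{3\qq}\mathfrak{e}_\ell(t^{2\qq}p)$, $\ell\le5$, goes to $\mathfrak{e}_\ell(u)$, so $\micro$ is conjugated to the $t=1$ projection. Next compare $\nu=\nu_0t^{-3\qq}$ (Proposition \ref{LEM:nuASYMP}) with the kernel representation \eqref{hilbert.sch.K}: $\compK$ has the form $t^{3\qq}\int k(t^{2\qq}p,t^{2\qq}q)h(q)\,dq$, where $k$ is $t$-independent as a function of its rescaled arguments. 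Substituting $q=t^{-2\qq}v$ shows $t^{3\qq}\langle\linL h,h\rangle=\langle\linL_1\tilde h,\tilde h\rangle$, where $\linL_1$ is the operator at $t=1$. One should verify directly that the powers of $t$ match, using \eqref{k1def} and \eqref{k2def}: $k_1$ and $k_2$ depend only on $t^{2\qq}p,t^{2\qq}q$ (the factors $|p/|p|-q/|q||$ are scale invariant). This step is where I expect the bookkeeping risk to lie.

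\textbf{Step 2 (fixed-time coercivity).} It now suffices to show $\langle\linL_1 g,g\rangle\ge\delta_0\|g\|^2$ for $g\perp\mathrm{nul}(\linL_1)$. Suppose not. Then there are $g_n\perp\mathrm{nul}$ with $\|g_n\|_{L^2}=1$ and $\langle\linL_1g_n,g_n\rangle\to0$. Pass to a weakly convergent subsequence $g_n\rightharpoonup g$; the limit $g$ still satisfies $g\perp\mathrm{nul}$. Apply Proposition \ref{noKestimate} with $m=0$, $t=1$, and $\eta$ small, so that $\compK=\compK_c+\compK_s$ with $|\langle\compK_sg_n,g_n\rangle|\le\eta$ and $\compK_c$ compact. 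Compactness gives $\langle\compK_cg_n,g_n\rangle\to\langle\compK_cg,g\rangle$. Hence
\[
0=\lim\langle\linL_1g_n,g_n\rangle\ge\nu_0-\eta-\langle\compK_cg,g\rangle,
\]
so $\langle\compK_cg,g\rangle\ge\nu_0-\eta>0$ and in particular $g\neq0$.

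Weak lower semicontinuity of the quadratic form $g\mapsto\nu_0\|g\|^2-\langle\compK_sg,g\rangle$ is not automatic. It is cleaner to argue through the spectrum: $\nu_0 I-\compK_c$ is a compact perturbation of $\nu_0 I$, and $\|\compK_s\|\le\eta<\nu_0$, so the essential spectrum of $\linL_1$ lies in $[\nu_0-\eta,\infty)$. The spectrum below $\nu_0-\eta$ therefore consists of isolated eigenvalues of finite multiplicity. Because $\linL_1\ge0$, its kernel is exactly $\mathrm{nul}$, and $0$ is isolated, the restriction to $\mathrm{nul}^\perp$ has a positive spectral gap $\delta_0$. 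This gives the estimate.

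\textbf{Main obstacle.} The main difficulty is justifying the essential-spectrum claim given that $\compK_s$ is only small rather than compact. Weyl's theorem handles this after writing $\linL_1=(\nu_0-\compK_s)-\compK_c$, where $\nu_0-\compK_s$ is self-adjoint with spectrum in $[\nu_0-\eta,\nu_0+\eta]$. To apply it, one must arrange that $\compK_s$ is symmetric, for instance by using symmetric cutoffs in $p,q$, as the construction indeed does.
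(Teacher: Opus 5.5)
Your argument is correct in substance but takes a different route from the paper, and Step 1 has a normalization slip you should fix. The map $h\mapsto t^{3\qq}h(t^{-2\qq}\cdot)$ is not an isometry. In $\int|\tilde h(u)|^2\,du$, the change of variables $u=t^{2\qq}p$ gives $du=t^{6\qq}dp$, which multiplies rather than cancels the prefactor. Hence $\|\tilde h\|_{L^2}=t^{6\qq}\|h\|_{L^2}$, and the null basis $t^{3\qq}\mathfrak{e}_\ell(t^{2\qq}p)$ is sent to $t^{6\qq}\mathfrak{e}_\ell(u)$.

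The unitary dilation you want is $(Uh)(u)=t^{-3\qq}h(t^{-2\qq}u)$, i.e.\ $h(p)=t^{3\qq}\tilde h(t^{2\qq}p)$. With it, three facts combine: $\nu=\nu_0t^{-3\qq}$, the map $(p,q,\omega)\mapsto(p',q')$ in \eqref{pprimedefcm}--\eqref{qprimedefcm} is homogeneous of degree one, and $v_{\Emptyset}$ is scale invariant. Together they give $U(t^{3\qq}\linL)U^{-1}=\linL_1$ and $U\macro U^{-1}=\macro_1$, which is what you intended. Your map equals $t^{6\qq}U$ and \eqref{coercestprop2} is homogeneous of degree two, so it would still transfer the inequality. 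However, the identities as you wrote them are off by a factor $t^{12\qq}$.

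\textbf{Comparison with the paper.} The paper never rescales. It negates the estimate by fixing ``a time $t\ge1$'' and runs weak compactness at that single $t$. Strictly, negating a statement that is uniform in $t$ produces times $t_n$ that vary with $n$, so the paper's argument as written only yields $\delta_0$ possibly depending on $t$. Your unitary equivalence of $t^{3\qq}\linL$ with $\linL_1$ is exactly what makes $\delta_0$ uniform, so Step 1 is a real gain in rigor.

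For the fixed-time step, the paper stays with the contradiction argument. Since the splitting of Proposition~\ref{noKestimate} exists for every $\eta$, it obtains $\langle\compK h^n,h^n\rangle\to\langle\compK h^0,h^0\rangle$, and then concludes from $\linL\ge0$ and the kernel characterization. Your Weyl-theorem route is valid and gives more structure (discrete spectrum below the essential spectrum).

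Both worries you raise are unnecessary, however. At $t=1$ you have $\|\compK-\compK_c\|\le\eta$ with $\compK_c$ compact, for every $\eta>0$. Hence $\compK$ is itself compact (and self-adjoint), so $\sigma_{\mathrm{ess}}(\linL_1)=\{\nu_0\}$ directly, with no symmetry requirement on the cutoffs. The same observation gives $\langle\compK g_n,g_n\rangle\to\langle\compK g,g\rangle$, so the contradiction argument you abandoned closes immediately:
\begin{itemize}
\item $\langle\compK g,g\rangle=\nu_0$ forces $g\neq0$;
\item $\langle\linL_1 g,g\rangle\le\liminf\langle\linL_1 g_n,g_n\rangle=0$ forces $g\in\mathrm{nul}(\linL_1)\cap\mathrm{nul}(\linL_1)^\perp=\{0\}$, a contradiction.
\end{itemize}
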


\begin{proof}[Proof of Proposition \ref{lowerN}]
Suppose that the coercive estimate is false.  Then there is a time \(t\geq 1\) and a sequence of functions $h^n(\pv)$ satisfying that
$
\macro h^n = 0,
$ 
$\nu_0\| h^n \|_{L^2_p}=1$,
and
$$
\tTq\langle {\linL} h^n, h^n \rangle
=
\nu_0\| h^n \|_{L^2_p}^2
- 
\tTq\langle {\compK} h^n, h^n \rangle
\le \frac{1}{n}.
$$ We recall that \(\nu_0>0\) is the uniform constant obtained earlier in Proposition \ref{LEM:nuASYMP}.
Since $\nu_0\| h^n \|_{L^2_p}=1$, then 
 $ \{ h^n \} $ is weakly compact in ${L^2_p}$ with limit point $h^0$.  
By weak lower-semi continuity $\nu_0\| h^0 \|_{L^2_p} \le 1$.  Furthermore,
$$
\tTq\langle {\linL} h^n, h^n \rangle
=
1
- 
\tTq\langle {\compK} h^n, h^n \rangle.
$$ 
From Proposition \ref{noKestimate}, we can split $\compK = \compK_c + \compK_s$ where $\tTq\langle {\compK_s} h^n, h^n \rangle$ can be chosen arbitrarily small and $\compK_c$ is a compact operator.  More precisely, for any small $\eta>0$ we can choose $\compK_s$ as in Proposition \ref{noKestimate} so that we have 
\begin{equation}
\tTq |\langle   \compK_s h^n, h^n\rangle |
\le 
 \eta 
\| h^n\|_{L^2_p}^2 = \eta.
\notag
\end{equation}
Then the remaining part $\compK_c$ has an integrable kernel which makes it a compact operator.  Thus in particular
$$
\lim_{n\to\infty}
\langle {\compK}_c h^n, h^n \rangle
=
\langle {\compK}_c h^0, h^0 \rangle.
$$
Now fix $\epsilon>0$.  Then choose $\eta = \frac{\epsilon}{4}$.  Then 
\begin{equation}
\tTq |\langle   \compK_s h^n, h^n\rangle |
\le 
\frac{\epsilon}{4}, 
\quad 
\tTq |\langle   \compK_s h^0, h^0\rangle |
\le 
\frac{\epsilon}{4}.
\notag
\end{equation}
Then for the resulting  ${\compK}_c$ from Proposition \ref{noKestimate} (which depends upon $\eta>0$) we can choose $N \in \mathbb{N}$ large enough so that 
\begin{align*}
   \tTq \left|\langle {\compK}_c h^n, h^n \rangle
-
\langle {\compK}_c h^0, h^0 \rangle \right| 
< \frac{\epsilon}{2}, \quad \forall n \ge N.
\end{align*}
Thus 
\begin{align*}
   \tTq \left|\langle {\compK} h^n, h^n \rangle
-
\langle {\compK} h^0, h^0 \rangle \right| 
< \frac{\epsilon}{2}+\frac{\epsilon}{4}+\frac{\epsilon}{4}=\epsilon, \quad \forall n \ge N.
\end{align*}
Combining these statements, we conclude  that
$$
\lim_{n\to\infty}
\tTq\langle {\compK} h^n, h^n \rangle
=
\tTq\langle {\compK} h^0, h^0 \rangle.
$$
In particular, we have that 
$\tTq\langle {\compK} h^0, h^0 \rangle= 1$,
or equivalently
$$
\tTq\langle {\linL} h^0, h^0 \rangle
=
\nu_0\| h^0 \|_{L^2_p}^2 
- 
1.
$$
Since $\linL\ge 0$, we conclude that $\nu_0\| h^0 \|_{L^2_p}^2  
=
1$.  Thus $\linL h^0= 0$ which implies $h^0 = {\bf P} h^0$.  On the other hand,
 $h^n = \{{\bf I- P}\} h^n$ so that weak convergence implies $h^0 = \{{\bf I- P}\} h^0$.
Thus $h^0 =0$, which is a contradiction to $\nu_0\| h^0 \|_{L^2_p}^2 = 1$. 
\end{proof}

\begin{remark}
    The proof of Proposition \ref{lowerN} shows by contradiction the existence of a constant \(\delta_0>0\) such that \eqref{coercestprop2} is satisfied. Alternatively, one could attempt to provide a quantitative proof of the existence of \(\delta_0>0\) as in previous works in the non-expanding case \cite{MR2322149, BarangerMouhot2005, mouhot2006explicit}.
\end{remark}

The proof above follows the strategy of \cite[Lemma 6.9]{RelBolNoncut2020} for the relativistic Boltzmann equation on \(\T^3_x\) in the massive case. See also the analysis in \cite{MR1211782}. 

\subsection{Trilinear estimate for the nonlinear term}\label{secttrilnonlinear}

We now prove $L^2_p$-trilinear estimates for the non-linear term \(\Gamma(f,f)\) in \eqref{gamma0}.

\begin{lemma}
	\label{thm1} Let \(m\geq 0\) and recall the weight \(w_{m}^2 = 1+|\tqq \pv|^m\). Then, for all \(f\), \(h\), and \(\eta\), the following uniform estimate holds,
	 $$
	 |\langle w_{m}^2 \Gamma(f,h),\eta\rangle| \lesssim \norm{w_{m}f}_{L^2_p(\threed)}\norm{w_{m}h}_{L^2_p(\threed)}\norm{w_{m}\eta}_{L^2_p(\threed)}.
	$$
\end{lemma}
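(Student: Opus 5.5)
We split $\Gamma = \Gamma_{+} - \Gamma_{-}$ into gain and loss parts using the center-of-momentum form \eqref{collision.CM.massless}, and bound each separately by Cauchy--Schwarz. We normalise by rescaling $\tilde p=\tqq p$. The factor $t^{3\qq}$ in $Q$ and the Jacobian $dq = t^{-6\qq}d\tilde q$ combine to $t^{-3\qq}$. The Møller velocity $v_{\Emptyset}\le 1$ is scale invariant. Together with the $L^2_p$ normalisations, these factors only produce powers $t^{-3\qq}\cdot t^{3\qq}$ (from $\|h\|_{L^2_p}=t^{-3\qq}\|h(t^{-2\qq}\cdot)\|_{L^2}$ counted three times against one $dp$). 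So it suffices to prove the estimate at $t=1$, with a constant that is at worst bounded as $t\ge1$. If a leftover power $t^{-3\qq}\le 1$ appears, it is harmless.

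For the loss term, $\Gamma_-(f,h)=h(p)\,\int dq\int d\omega\, v_{\Emptyset}\sqrt{J(q)}f(q)$. Since $v_{\Emptyset}\le1$, Cauchy--Schwarz in $q$ gives
\[
\Big|\int v_{\Emptyset}\sqrt{J(q)}f(q)\,dq\Big|\lesssim \|f\|_{L^2_q}.
\]
Then
\[
|\langle w_m^2\Gamma_-(f,h),\eta\rangle|\lesssim \|f\|_{L^2}\|w_m h\|_{L^2}\|w_m\eta\|_{L^2}.
\]

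For the gain term we use the conservation of energy $|p|+|q|=|p'|+|q'|$ from \eqref{collisionalCONSERVATION}. This gives $\sqrt{J(q)}=\sqrt{J(p')J(q')}/\sqrt{J(p)}$, so
\[
\langle w_m^2\Gamma_+(f,h),\eta\rangle=\int dp\,dq\,d\omega\, v_{\Emptyset}\,\sqrt{J(q)}\,w_m^2(p)f(p')h(q')\eta(p).
\]
We also use $w_m^2(p)\lesssim w_m(p)\,w_m(p')w_m(q')$, which follows since $|p|\le |p'|+|q'|$. Applying Cauchy--Schwarz, this is bounded by
\[
\Big(\int v_{\Emptyset}\sqrt{J(q)}\,|w_m\eta(p)|^2\Big)^{1/2}\Big(\int v_{\Emptyset}\sqrt{J(q)}\,|w_mf(p')|^2|w_mh(q')|^2\Big)^{1/2}.
\]
The first factor is $\lesssim\|w_m\eta\|_{L^2}$ by integrating $\sqrt{J(q)}$ in $q$. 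Here we keep the $\sqrt{J(q)}$ factor split suitably between the two factors, for instance $J(q)^{1/4}$ in each.

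For the second factor we perform the pre-post change of variables $(p,q)\mapsto(p',q')$. In the Glassey--Strauss representation \eqref{collision.GS.flrw} this uses Lemma \ref{Lem change of variables}. The Jacobian $|p'||q'|/(|p||q|)$ combines with the kernel factor $v_{\Emptyset}\mathcal B$, and the result is
\[
\int dp\,dq\, |w_mf(p)|^2|w_mh(q)|^2\,\Phi(p,q),
\]
where $\Phi(p,q)=\int d\omega\,(\cdots)J(q')^{1/4}$ is evaluated at the post-collisional variables.

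The main obstacle is showing that $\Phi$ is uniformly bounded. The weight $J(q')^{1/4}$ no longer decays in the new integration variables, and the Jacobian together with $\mathcal B$ can become singular where $|p|$ or $|q|$ is small. I would first try to avoid the Jacobian altogether by using the symmetric (center-of-momentum) measure. Because $dp\,dq\,d\omega$ weighted by $v_{\Emptyset}/(|p||q|)\cdot|p||q|$ is invariant under the collision map for the Lorentz-invariant measure, one writes $v_{\Emptyset}\,dp\,dq\,d\omega=\frac{s}{4}\frac{dp}{|p|}\frac{dq}{|q|}d\omega$, which is invariant. Then the second factor becomes
\[
\int \frac{s}{4|p||q|}\,J(q_*)^{1/4}\,|w_mf(p)|^2|w_mh(q)|^2,
\]
with $q_*$ now the pre-image. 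Bounding $J(q_*)^{1/4}\le1$ and $s/(|p||q|)\le 4$ by \eqref{basicrho} gives $\|w_mf\|^2\|w_mh\|^2$, up to the $\omega$-integral over the sphere. This closes the estimate, up to checking the $t$-powers as in the first paragraph.

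If the invariant-measure step gives trouble, the fallback is to put all of $\sqrt{J(q)}$ on the $\eta$ factor. In that case $\Phi$ is bounded simply because $s\le 4|p||q|$, with no exponential decay needed.
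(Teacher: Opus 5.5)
Your proposal is correct and follows essentially the paper's route. You split into gain and loss terms, use $w_m(\tqq p)\lesssim w_m(\tqq p')w_m(\tqq q')$ from energy conservation, apply Cauchy--Schwarz, and remove the post-collisional arguments by the pre-post change of variables; your invariance of $v_{\Emptyset}\,dp\,dq\,d\omega$ is the same fact the paper invokes through the Jacobian \eqref{Jacobian.COV}, and your reduction to $t=1$ is exact because both sides scale like $t^{-9\qq}$. The only slips are cosmetic: in the loss term the roles of $f$ and $h$ are swapped (it is $f(p)\int v_{\Emptyset}\sqrt{J(q)}\,h(q)$), and $s\le 4|p||q|$ at $t=1$ follows directly from \eqref{sDEFINITIONFLRW} rather than from \eqref{basicrho}.
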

	
\begin{proof}[Proof of Lemma \ref{thm1}]
From \eqref{gamma0}, we have 
\begin{align*}
  \langle w_{m}^2\Gamma(f,h),\eta\rangle   
 = &~
\tTq \int_{\mathbb{R}^3} ~  d\pv ~w_{m}^2(\tqq \pv)\eta (\pv)\int_{\mathbb{R}^3} ~  d\qv 
\int_{\mathbb{S}^{2}} ~ d\omega
~ v_{\Emptyset} ~   \sqrt{J(\tqq q)}
f(p^{\prime })h(q^{\prime})
 \\
   &~
-\tTq \int_{\mathbb{R}^3} ~  d\pv ~w_{m}^2(\tqq \pv)\eta (\pv) f(\pv)\int_{\mathbb{R}^3} ~  d\qv 
\int_{\mathbb{S}^{2}} ~ d\omega
~ v_{\Emptyset} ~   \sqrt{J(\tqq q)}
h(q)
 \\
  \eqdef &~\Gamma_1 - \Gamma_2.
\end{align*}
We will estimate separately $\Gamma_1$ and $\Gamma_2$.

First, we have the following uniform estimate
\begin{equation*}
    \int_{\mathbb{R}^3} ~  d\qv 
   \sqrt{J(\tqq q)}
 \left| h(q)\right|
 \lesssim
 \norm{J^{\frac14} h}_{L^2_p(\threed)}  \left(    \int_{\mathbb{R}^3} ~  d\qv 
   \sqrt{J(\tqq q)} \right)^{\frac12}
   \lesssim
   \tmTq
    \norm{J^{\frac14}h}_{L^2_p(\threed)}.
\end{equation*}
Thus, we estimate $\Gamma_2$ using the Cauchy--Schwarz inequality as
\begin{align*}
\left| \Gamma_2 \right|  
 \lesssim &~
\tTq   \int_{\mathbb{R}^3} ~  d\pv ~ w_{m}^2(\tqq \pv)\left| \eta (\pv) f(p)\right|\int_{\mathbb{R}^3} ~  d\qv 
   \sqrt{J(\tqq q)}
 \left| h(q)\right|
 \\
  \lesssim &~
  \norm{J^{\frac14}h}_{L^2_p(\threed)}
\norm{w_{m}f}_{L^2_p(\threed)} \norm{w_{m}\eta}_{L^2_p(\threed)}.
\end{align*}
This is the desired estimate for $\Gamma_2$.

Next using \eqref{collisionalCONSERVATION}, we have the uniform estimate
\begin{equation*}
    w_{m}(\tqq \pv) \lesssim w_{m}(\tqq \pv')+w_{m}(\tqq \qv')  \lesssim w_{m}(\tqq \pv')w_{m}(\tqq \qv').
\end{equation*}
Next we estimate $\Gamma_1$ using again the Cauchy--Schwarz inequality as
\begin{align*}
\left| \Gamma_1 \right|  
 \lesssim &~
\tTq  \int_{\mathbb{R}^3} ~  d\pv ~ w_{m}(\tqq \pv)\left| \eta (\pv) \right|\int_{\mathbb{R}^3} ~  d\qv 
~ v_{\Emptyset} ~
   \sqrt{J(\tqq q)}
 \left| w_{m}(\tqq \pv)h(q^\prime)f(p^\prime)\right|
 \\
  \lesssim &~
\tTq \norm{w_{m}\eta}_{L^2_p}
\mathcal{I}_1,
\end{align*}
where
\begin{align*}
\mathcal{I}_1  
\eqdef &~
\left( \int_{\mathbb{R}^3} ~  d\pv ~ 
\left(\int_{\mathbb{R}^3} ~  d\qv 
~ v_{\Emptyset} ~
   \sqrt{J(\tqq q)}
 \left| w_{m}(\tqq \qv')h(q^\prime)w_{m}(\tqq \pv')f(p^\prime)\right|\right)^{2}\right)^{\frac12}
  \\
  \lesssim &~
\tmTq 
\left( \int_{\mathbb{R}^3} ~  d\pv ~ 
\int_{\mathbb{R}^3} ~  d\qv 
~ v_{\Emptyset} ~
  \left| w_{m}(\tqq \qv')h(q^\prime)w_{m}(\tqq \pv')f(p^\prime)\right|^{2}\right)^{\frac12}.
\end{align*}
We use the change of variables $(p^\prime, q^\prime) \mapsto (p,q)$ with Jacobian \eqref{Jacobian.COV} to obtain 
\begin{align*}
\mathcal{I}_1   
 \lesssim &~
\tmTq \norm{w_{m}f}_{L^2_p(\threed)}\norm{w_{m}h}_{L^2_p(\threed)}.
\end{align*}
Note that, as explained in \cite{MR2765751}, one can equivalently write $\mathcal{I}_1$ in the Glassey-Strauss representation of \eqref{collision.GS.flrw} and then use the change of variables $(p^\prime, q^\prime) \mapsto (p,q)$ in \eqref{Jacobian.COV}.  Afterwards we can switch back to the center of momentum representation shown in this proof.  This justifies the use of \eqref{Jacobian.COV}.  \end{proof}

\subsection{Estimates for averaged quantities}\label{subaveragequanti} 

Now we prove estimates for averaged quantities in the momentum variables against the basis functions in \eqref{orthogonal.basis.use}.  

\begin{lemma}
	\label{L8.6} For any suitably regular \(f\), and any element of $\{\tTq \mathfrak{e}_{\ell}(\tqq \pv)\}_{\ell=1}^{13}$ in \eqref{orthogonal.basis.use},  we have the following uniform estimates
\begin{align}\label{dispersion.term.basis.est}
   \left| \left\langle \micro  f, \tTq \mathfrak{e}_{\ell}(\tqq \pv)\right\rangle \right|
        \lesssim &~   \norm{J^{\frac{1}{4}} \micro   f}_{L^2_{\pv}},
\end{align}
and
    \begin{align}\label{lin.basis.est}
   \left| \left\langle \linL( \micro  f), \tTq \mathfrak{e}_{\ell}(\tqq \pv)\right\rangle \right|
        \lesssim &~ \tmTq \norm{J^{\frac{1}{8}} \micro  f}_{L^2_{\pv}}.
\end{align}
\end{lemma}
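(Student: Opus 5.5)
The first estimate \eqref{dispersion.term.basis.est} is a direct Cauchy--Schwarz argument. Each basis function in \eqref{orthogonal.basis.kernel}--\eqref{orthogonal.basis.use} has the form $\tTq P_\ell(\tqq p)\sqrt{J(\tqq p)}$, where $P_\ell$ is a function of $\tqq p$ growing at most linearly. For instance, $\tqq (p^l p^j)/|p| = (\tqq p^l)(\tqq p^j)/|\tqq p|$, and $p^j/|p|$ is bounded. Note that $\mathfrak{e}_{j+5}$ contains $p^j/|p|$, which is scale-invariant, so it can also be viewed as a function of $\tqq p$. I will write $\sqrt{J} = J^{1/4}\cdot J^{1/4}$ and apply Cauchy--Schwarz:
\[
\left|\langle \micro f, \tTq\mathfrak{e}_\ell\rangle\right| \le \norm{J^{\frac14}\micro f}_{L^2_p}\,\tTq\Big(\int_{\rth}|P_\ell(\tqq p)|^2 J^{\frac12}(\tqq p)\,dp\Big)^{\frac12}.
\]
The change of variables $u=\tqq p$ gives $dp = t^{-6\qq}du$. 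The last factor therefore equals $\tTq\cdot t^{-3\qq}\cdot C_\ell = C_\ell$, which is uniform in $t$. This proves the first bound.

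For \eqref{lin.basis.est}, I will use the self-adjointness of $\linL$ established in the proposition before Proposition \ref{lowerN}. This gives $\langle \linL(\micro f), \tTq\mathfrak{e}_\ell\rangle = \langle \micro f, \linL(\tTq\mathfrak{e}_\ell)\rangle$. It then suffices to prove the pointwise bound
\[
|\linL(\tTq\mathfrak{e}_\ell)(p)| \lesssim \tmTq\,\tTq J^{\frac{1}{8}}(\tqq p)\cdot \tTq\, G(\tqq p),
\]
with $G$ in a suitable weighted $L^2$ class, so that a Cauchy--Schwarz splitting as above yields $\tmTq\norm{J^{1/8}\micro f}_{L^2_p}$. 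More concretely, I aim to show $|\linL(\tTq\mathfrak{e}_\ell)| \lesssim t^{-3\qq}\,\tTq\,\Phi(\tqq p)$ with $\Phi(u)\lesssim \langle u\rangle^{N} e^{-c|u|}$ for some $c>1/16$. Pairing with $J^{1/8}\micro f$ then leaves $\tTq\|\Phi(\tqq\cdot)J^{-1/8}(\tqq\cdot)\|_{L^2_p}$, which by scaling is $O(1)$, giving an overall factor $\tmTq$.

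To obtain this pointwise bound, I will use the decomposition $\linL=\nu-\compK_2+\compK_1$ from \eqref{decomlinop}--\eqref{compactK}, together with the scaling structure. The multiplication part is immediate: by Proposition \ref{LEM:nuASYMP}, $\nu\,\tTq\mathfrak{e}_\ell = \nu_0\tmTq\,\tTq P_\ell\sqrt{J}$. For $\compK_1$ and $\compK_2$ I will use the kernel forms \eqref{hilbert.sch.K}, where each kernel is a function of $(\tqq p,\tqq q)$ multiplied by $\tTq$. Substituting $u=\tqq q$ in $\tTq\int k_j(\tqq p,\tqq q)\,\tTq\mathfrak{e}_\ell(\tqq q)\,dq$ produces $t^{3\qq}\cdot t^{3\qq}\cdot t^{-6\qq}\int k_j(\tqq p,u)\,\mathfrak{e}_\ell(u)\,du$. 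Together with the extra factor $\tmTq$, this is consistent with the structure $\linL(\tTq\mathfrak{e}_\ell)(p)=\tmTq\,\tTq(\linL_1\mathfrak{e}_\ell)(\tqq p)$, where $\linL_1$ denotes the $t=1$ operator. I expect this scaling identity to hold generally, since $\linL$ at time $t$ is conjugate to $\tmTq\linL_1$ under the $L^2$-unitary dilation $h\mapsto t^{3\qq}h(\tqq\cdot)$, and I will verify it by the same change of variables in \eqref{L0} and \eqref{collision.CM.massless}.

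The remaining task, which I expect to be the main obstacle, is the time-independent decay estimate $|(\linL_1\mathfrak{e}_\ell)(u)|\lesssim \langle u\rangle^N e^{-|u|/4}$ (any rate beyond $e^{-|u|/16}$ would suffice). For the $\nu$ and $\compK_1$ terms this is clear from \eqref{k1def}, since $k_1$ carries the factor $e^{-(|u|+|v|)/2}$. For $\compK_2$ I will use the bounds \eqref{kernel.bounds.hs}: the kernel is dominated by $e^{-|u-v|/4}(|u-v|^{-1}+|u-v|^{-2})$, and convolving it against $\langle v\rangle e^{-|v|/2}$ gives a decay of order $e^{-|u|/4}$ up to polynomial factors, using $|u|\le|u-v|+|v|$. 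The care needed here is that the split exponent must still beat $J^{-1/8}=e^{|u|/8}\cdot$const. Alternatively, I could write $\compK_2\mathfrak{e}_\ell$ directly through the collision integral: the gain term involves $\sqrt{J(q)J(q')}\,\mathfrak{e}_\ell(p')$, and energy conservation gives $\sqrt{J(q)}\sqrt{J(q')}\,|\mathfrak{e}_\ell(p')| \lesssim \langle\cdot\rangle^{N}\sqrt{J(p)}\,J^{1/4}(q)$, which leads to an $e^{-|u|/2}$ bound after the $q$-integration with the bounded M{\o}ller velocity. This second route avoids the kernel singularities, and I would try it first.
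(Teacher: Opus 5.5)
Your proposal is correct and follows essentially the paper's proof: Cauchy--Schwarz after splitting off $J^{1/4}$ for the first bound; for the second, self-adjointness of $\linL$ plus a pointwise bound $\linL(\tTq\mathfrak{e}_\ell)=\mathfrak{m}(\tqq p)\,J^{1/4}(\tqq p)$ with $\mathfrak{m}$ bounded, obtained as in your preferred route (energy conservation $J(p')J(q')=J(p)J(q)$ inside the gain terms), with $\tmTq$ coming from the change of variables $u=\tqq p$ --- the paper leaves your dilation identity implicit and puts the time-independent weight $J(p)^{1/4}\geq J(\tqq p)^{1/4}$ on $\micro f$. One small slip: pairing against $J^{1/8}\micro f$ requires $\Phi J^{-1/8}\in L^2$, i.e.\ $c>1/8$ rather than $c>1/16$, which your $e^{-|u|/4}$ (or $e^{-|u|/2}$) bound comfortably satisfies.
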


\begin{proof}[Proof of Lemma \ref{L8.6}]	
  For any element of \eqref{orthogonal.basis.use}, since $t \ge 1$, we can always split 
\begin{align}\label{basis.vector.split.bdd}
    \tTq \mathfrak{e}_{\ell}(\tqq \pv)
    =  \tTq\mathfrak{e}_{\ell}(\tqq \pv) J(\tqq \pv)^{-\frac{1}{4}} J(\tqq \pv)^{\frac{1}{4}}
    \leq 
    \tTq \mathfrak{e}_{\ell}(\tqq \pv) J(\tqq \pv)^{-\frac{1}{4}} J( \pv)^{\frac{1}{4}}.
\end{align}
Then we can include the factor $J( \pv)^{\frac{1}{4}}$ with $\micro f$ and the rest of the upper bound above is exponentially decaying and suitably time-normalised.  Then for $\langle \micro  f, \tTq \mathfrak{e}_{\ell}(\tqq \pv)\rangle$, using the Cauchy--Schwarz inequality, we obtain   \eqref{dispersion.term.basis.est}.

For the estimate of \eqref{lin.basis.est}, we use the self-adjoint property of $\linL$ as 
\begin{align*}
    \langle \linL( \micro  f), \tTq \mathfrak{e}_{\ell}(\tqq \pv)\rangle
    =
    \langle  \micro  f, \linL(\tTq \mathfrak{e}_{\ell}(\tqq \pv))\rangle.
\end{align*}
Then from \eqref{L0}, \eqref{nuDEFnext}, and \eqref{compactK}, we have for $\nu_0>0$ that
\begin{align*}
         \linL(\tTq \mathfrak{e}_{\ell}(\tqq \pv))
= \tTq \linL( \mathfrak{e}_{\ell}(\tqq \pv))
 =
 \nu_0 \mathfrak{e}_{\ell}(\tqq \pv)- \tTq\compK (\mathfrak{e}_{\ell}(\tqq \pv)).
\end{align*}
Further  $\compK = \compK_2 - \compK_1$ from \eqref{compactK} has the form 
\begin{align*}
 \tTq\compK_1 (\mathfrak{e}_{\ell}) 
 =
 \tQQq \int_{\mathbb{R}^3} ~  dq 
\int_{\mathbb{S}^{2}} ~ d\omega
~ v_{\Emptyset}   
~ \sqrt{J(\tqq\qv) J(\tqq\pv)} ~ \mathfrak{e}_{\ell}(\tqq \qv),
\end{align*}
and we split $\compK_2 = \compK_{21}+\compK_{22}$ where 
\begin{align*}
 \tTq\compK_{21} (\mathfrak{e}_{\ell}) 
 = &~
 \tQQq 
\int_{\mathbb{R}^3} ~  dq 
\int_{\mathbb{S}^{2}} ~ d\omega
~ v_{\Emptyset} ~  
\sqrt{J(\tqq \qv)} \sqrt{J(\tqq q^{\prime})} ~ \mathfrak{e}_{\ell}(\tqq \pv^{\prime}),
\\
 \tTq\compK_{22} (\mathfrak{e}_{\ell}) 
 = &~
 \tQQq 
\int_{\mathbb{R}^3} ~  dq 
\int_{\mathbb{S}^{2}} ~ d\omega
~ v_{\Emptyset} ~  
\sqrt{J(\tqq \qv)}\sqrt{J(\tqq p^{\prime})} ~ \mathfrak{e}_{\ell}(\tqq \qv^{\prime}).
\end{align*}
For simplicity we denote each term above as $\compK_* (\mathfrak{e}_{\ell}) =\compK_* (\mathfrak{e}_{\ell}(\tqq \cdot))$.  
By using the structure of the basis vectors \eqref{orthogonal.basis.use} and the conservation laws \eqref{collisionalCONSERVATION},  we have that each term $\compK_*$  is of the form 
\begin{align*}
 \tTq\compK_* (\mathfrak{e}_{\ell})
 =
 \tQQq J(\tqq\pv)^{\frac14}\int_{\mathbb{R}^3} ~  dq 
~ v_{\Emptyset}   ~ {\wb}_{*\ell}(\tqq \pv,\tqq \qv)
~ J(\tqq\qv)^{\frac14} ,
\end{align*}
where each ${\wb}_{*\ell}={\wb}_{*\ell}(\tqq \pv,\tqq \qv)$  can be written as 
\begin{align*}
{\wb}_{1\ell}
= &~
\int_{\mathbb{S}^{2}} ~ d\omega~ \mathfrak{e}_{\ell}(\tqq \qv) J(\tqq\qv)^{\frac14} J(\tqq\pv)^{\frac14},
\\
{\wb}_{21\ell}
= &~
\int_{\mathbb{S}^{2}} ~ d\omega~\mathfrak{e}_{\ell}(\tqq \pv^{\prime}) J(\tqq \pv^{\prime})^{-\frac12} \sqrt{J(\tqq \qv)}J(\tqq\qv)^{\frac14} J(\tqq\pv)^{\frac14},
\\
{\wb}_{22\ell}
= &~
\int_{\mathbb{S}^{2}} ~ d\omega~\mathfrak{e}_{\ell}(\tqq \qv^{\prime}) J(\tqq \qv^{\prime})^{-\frac12} \sqrt{J(\tqq \qv)}J(\tqq\qv)^{\frac14} J(\tqq\pv)^{\frac14}.
\end{align*}
Then each ${\wb}_{*\ell}$ is uniformly bounded by using, for example, \eqref{collisionalCONSERVATION}.  

In particular each term $\tTq\compK_j (\mathfrak{e}_{\ell})$ for $j\in \{1,2\}$ is of the form 
\begin{align*}
 \tTq\compK_j (\mathfrak{e}_{\ell})
 =
 \mathfrak{m}_{j}(\tqq \pv) J(\tqq\pv)^{\frac14}.
\end{align*}
Here $\mathfrak{m}_{j}(\tqq \pv)$ is a smooth uniformly bounded function.  We also denote for simplicity $\mathfrak{m}_{0}(\tqq \pv) = \nu_0 \mathfrak{e}_{\ell}(\tqq \pv) J(\tqq\pv)^{-\frac14}$.  Then $\tTq \linL( \mathfrak{e}_{\ell}) = \sum_{j=0}^2 \mathfrak{m}_{j}(\tqq \pv)J(\tqq\pv)^{\frac14}$.  By using again \eqref{basis.vector.split.bdd}, we have 
\begin{align*}
   \left| \left\langle \micro  f, \mathfrak{m}_{j}(\tqq \pv)J(\tqq\pv)^{\frac14}\right\rangle \right|
        \lesssim &~ \tmTq \norm{J^{\frac{1}{8}} \micro  \nabla_x f}_{L^2_{\pv}}.
\end{align*}
Thus, collecting the previous calculations, we obtain the estimate \eqref{lin.basis.est}.  
\end{proof}

See also \cite[Lemma 6.6]{RelBolNoncut2020} for a similar estimate in the massive case when \(\qq=0\).  We also prove estimates for the nonlinear term when averaged against a basis function.

\begin{lemma}
	\label{L8.7} 
    For any suitably regular functions \(f\), \(h\), and any element of $\{\tTq \mathfrak{e}_{\ell}(\tqq \pv)\}_{\ell=1}^{13}$ in \eqref{orthogonal.basis.use},  we have the following uniform estimates
    \begin{align*}
    	\left| \langle   \Gamma(f,h),\tTq \mathfrak{e}_{\ell}(\tqq \pv)\rangle \right|
        \lesssim &~
\| J^{\frac{1}{8}} f\|_{L^2_{\pv}} \| J^{\frac{1}{8}} h\|_{L^2_{\pv}}.
\end{align*}
\end{lemma}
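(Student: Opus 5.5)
We follow the proof of Lemma \ref{thm1}, now with the test function $\eta=\tTq \mathfrak{e}_{\ell}(\tqq \pv)$ fixed. Its Gaussian-type decay should let us place a factor $J^{1/8}$ on each of $f$ and $h$. From \eqref{gamma0} and \eqref{collision.CM.massless} we split
\[
\langle \Gamma(f,h),\tTq\mathfrak{e}_\ell\rangle=\Gamma_1-\Gamma_2 ,
\]
with gain term
\[
\Gamma_1=\tQQq\int dp\,dq\,d\omega\, v_{\Emptyset}\sqrt{J(\tqq q)}\,f(p')h(q')\,\mathfrak{e}_\ell(\tqq p)
\]
and loss term
\[
\Gamma_2=\tQQq\int dp\,dq\,d\omega\, v_{\Emptyset}\sqrt{J(\tqq q)}\,f(p)h(q)\,\mathfrak{e}_\ell(\tqq p).
\]
We use $v_{\Emptyset}\le 1$ from \eqref{moller}. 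Every $\mathfrak{e}_\ell$ is a polynomial-type function of $\tqq p$ (possibly involving $p/|p|$) times $\sqrt{J(\tqq p)}$. Hence $|\mathfrak{e}_\ell(\tqq p)|\lesssim J(\tqq p)^{1/4}$, uniformly in $t$.

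For the loss term we write the bound as a product of two integrals: $\tTq\int |f(p)|J(\tqq p)^{1/4}\,dp$ times $\tTq\int |h(q)|J(\tqq q)^{1/2}\,dq$. Cauchy--Schwarz then gives
\[
\int |f|J^{1/4}(\tqq p)\,dp\le \|J^{1/8}f\|_{L^2_p}\Big(\int J^{1/4}(\tqq p)\,dp\Big)^{1/2}\lesssim \tmTq\|J^{1/8}f\|_{L^2_p},
\]
and the $h$ factor is handled the same way. The two factors $\tmTq$ exactly cancel the prefactor $\tQQq$.

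For the gain term we first move weight onto $p',q'$ using energy conservation \eqref{collisionalCONSERVATION}: $|p|+|q|=|p'|+|q'|$ implies
\[
J(\tqq p)^{1/4}J(\tqq q)^{1/2}\le \big(J(\tqq p)J(\tqq q)\big)^{1/4}=J(\tqq p')^{1/4}J(\tqq q')^{1/4}.
\]
This yields
\[
|\Gamma_1|\lesssim \tQQq\int dp\,dq\,d\omega\,v_{\Emptyset}\, |J^{1/8}f(p')|\,|J^{1/8}h(q')|\,J^{1/8}(\tqq p')J^{1/8}(\tqq q').
\]
We then apply Cauchy--Schwarz to the product $\big(|J^{1/8}f(p')|J^{1/8}(\tqq q')\big)\cdot\big(|J^{1/8}h(q')|J^{1/8}(\tqq p')\big)$. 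Each resulting square we handle via the pre-post change of variables $(p,q)\mapsto(p',q')$ with Jacobian \eqref{Jacobian.COV}, justified as at the end of Lemma \ref{thm1} by passing through the Glassey--Strauss form \eqref{collision.GS.flrw}. This produces, for example, $\int |J^{1/8}f(p)|^2\,dp\int J^{1/4}(\tqq q)\,dq\lesssim \tmTq\|J^{1/8}f\|_{L^2_p}^2$, and the same bound for $h$. Combining, $|\Gamma_1|\lesssim \|J^{1/8}f\|\,\|J^{1/8}h\|$.

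We expect the main obstacle to be this gain-term change of variables. The Jacobian factor $|p'||q'|/(|p||q|)$ and the Glassey--Strauss kernel $\mathcal{B}$ must be bounded compatibly with $v_{\Emptyset}$; we will rely on the identity between the two representations rather than on pointwise bounds. Tracking the $t$-scaling should be harmless after substituting $\tqq p\to p$.
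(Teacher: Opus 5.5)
Your proof is correct, but it is arranged differently from the paper's.

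The paper applies the pre-post change of variables $(p,q)\mapsto(p',q')$ once, to the whole gain term. What moves to the post-collisional momenta is then the explicit test function, not $f$ and $h$. Two facts finish it:
\begin{itemize}
\item the identity $\sqrt{J(\tqq q')}\,\mathfrak{e}_\ell(\tqq p')=\sqrt{J(\tqq p)J(\tqq q)}\;\mathfrak{e}_\ell(\tqq p')J(\tqq p')^{-1/2}$, which follows from energy conservation;
\item the bound $|\tqq p'|\le|\tqq p|+|\tqq q|$.
\end{itemize}
With these, gain and loss combine into a single kernel $\mathfrak{m}_{\Gamma}(\tqq p,\tqq q)J(\tqq p)^{1/4}J(\tqq q)^{1/4}$ acting on $f(p)h(q)$, with $\mathfrak{m}_\Gamma$ bounded. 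The estimate then reduces to the same product of two weighted $L^1_p$ integrals that you use for your loss term.

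You instead follow the template of the $\mathcal{I}_1$ estimate in Lemma \ref{thm1}. You transfer the exponential weights to $(p',q')$ by energy conservation, apply Cauchy--Schwarz over $(p,q,\omega)$, and change variables separately in each squared factor. Both arguments use the same ingredients:
\begin{itemize}
\item energy conservation and $v_{\Emptyset}\le 1$;
\item the polynomial-times-$\sqrt{J}$ structure of the $\mathfrak{e}_\ell$;
\item invariance of $v_{\Emptyset}\,dp\,dq\,d\omega$ under the collision map, justified through the Glassey--Strauss form.
\end{itemize}
Your version never needs to bound $\mathfrak{e}_\ell(\tqq p')$ in terms of $(p,q)$. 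The paper's version handles gain and loss together and avoids a Cauchy--Schwarz on the collision integral.

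There is one bookkeeping slip to fix. Since $\int J^{1/4}(\tqq q)\,dq=\tmQq\int J^{1/4}(q)\,dq$, each squared factor in the gain term is $\lesssim\tmQq\|J^{1/8}f\|_{L^2_p}^2$ (and likewise for $h$), not merely $\lesssim\tmTq\|J^{1/8}f\|_{L^2_p}^2$ as you wrote. With only $\tmTq$ per square, the two square roots give $\tmTq$ in total and leave an uncompensated factor $\tTq$ against the prefactor $\tQQq$. With the correct $\tmQq$ — the same scaling you already used in the loss term — the prefactor cancels exactly, and $|\Gamma_1|\lesssim\|J^{1/8}f\|_{L^2_p}\|J^{1/8}h\|_{L^2_p}$ follows as you claim.
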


\begin{proof}[Proof of Lemma \ref{L8.7}]	
From \eqref{gamma0} and \eqref{orthogonal.basis.use},  we must estimate 
\begin{align*}
    	\langle   \Gamma(f,h),\tTq \mathfrak{e}_{\ell}(\tqq \pv)\rangle
        = &~
\tQQq\int_{\mathbb{R}^3} ~  d\pv \int_{\mathbb{R}^3} ~  dq 
\int_{\mathbb{S}^{2}} ~ d\omega
~ v_{\Emptyset} ~  
 \sqrt{J(\tqq\qv)}\mathfrak{e}_{\ell}(\tqq \pv)
 f(p^{\prime })h(q^{\prime})
        \\
        &~
-\tQQq\int_{\mathbb{R}^3} ~  d\pv\int_{\mathbb{R}^3} ~  dq 
\int_{\mathbb{S}^{2}} ~ d\omega
~ v_{\Emptyset} ~  
 \sqrt{J(\tqq\qv)}\mathfrak{e}_{\ell}(\tqq \pv)
 f(p)h(q).
\end{align*}
After using the pre-post collision change of variables \((p,q)\mapsto (p',q')\) with Jacobian \eqref{Jacobian.COV} on the first term, we have
\begin{align*}
    	\langle   \Gamma(f,h),\tTq \mathfrak{e}_{\ell}(\tqq \pv)\rangle
        = &~
\tQQq\int_{\mathbb{R}^3} ~  d\pv\int_{\mathbb{R}^3} ~  dq ~
\mathfrak{m}_{\Gamma}(\tqq\pv,\tqq\qv)
J(\tqq\pv)^{\frac14}J(\tqq\qv)^{\frac14} 
 f(p)h(q),
\end{align*}
where now with \eqref{collisionalCONSERVATION} we have
\begin{align*}
\mathfrak{m}_{\Gamma}(\tqq\pv,\tqq\qv)
= &~
v_{\Emptyset}~J(\tqq\qv)^{\frac14} J(\tqq\pv)^{\frac14} \int_{\mathbb{S}^{2}} ~ d\omega~  
\mathfrak{e}_{\ell}(\tqq \pv^{\prime}) J(\tqq \pv^{\prime})^{-\frac12}
\\
 &~
-
v_{\Emptyset}~J(\tqq\qv)^{\frac14} J(\tqq\pv)^{\frac14} \mathfrak{e}_{\ell}(\tqq \pv) J(\tqq \pv)^{-\frac12}\int_{\mathbb{S}^{2}} ~ d\omega~  
.
\end{align*}
Then $\mathfrak{m}_{\Gamma}(\tqq\pv,\tqq\qv)$ is suitably time-normalised and uniformly bounded. Lemma \ref{L8.7} follows by using the Cauchy--Schwarz inequality.
\end{proof}
See also the proof of \cite[Lemma 6.7]{RelBolNoncut2020} in the case of massive particles when \(\qq=0\).

\section{Macroscopic estimates}\label{macroscopic.eqns.sec}

In this section we will estimate the solutions of the macroscopic equations from \eqref{macroscopic}, \eqref{local.cons.alt} and \eqref{macroscopic.B}.  In the course of these estimates, we will use that the zero mode is zero according to Corollary \ref{L8.5}. The estimates in this section hold for any $0 \leq \qq \leq 1$. We refer also to the macroscopic estimates obtained in \cite{EGKM-13} and in \cite[Section 5.1]{1904.12086} for the nonrelativistic Boltzmann equation

\begin{theorem}[Estimates for macroscopic quantities]\label{thmmacroscop}
Consider \(T>1\), and a suitably regular solution \(f\) of \eqref{rBoltz00} on \([1,T]\times \mathbb{T}^3_x\times \mathbb{R}^3_p\). Then the macroscopic quatities defined by \eqref{defmacrocoef} satisfy the following uniform estimate
\begin{align}\label{abcest.torus1}
\norm{t^{3\qq/2} [\mathcal{A},\mathcal{B},\mathcal{C}]}_{L^1_k L^2_T} &\lesssim
\Vert t^{3\qq/2}\micro f \Vert_{L^1_k L^2_T L^2_{\pv}} +\Vert \tTq f \Vert_{L^1_k L^\infty_T L^2_{\pv}} + \Vert f_1 \Vert_{L^1_k L^2_{\pv}} \\ \notag
&\qquad+ \int_{\mathbb{Z}^3_k} \left( \int^{T}_{1} t^{5\qq}\left| \widehat{\mathfrak{h}}(t,k)\right|^2
 dt \right)^{\frac12} d\Sigma (k),
\end{align}
where \(\norm{ [\mathcal{A},\mathcal{B},\mathcal{C}]}\eqdef\norm{ \mathcal{A}}+\norm{ \mathcal{B}}+\norm{ \mathcal{C}}\), and the definition of $ \widehat{\mathfrak{h}}(t,k)$ is provided below in \eqref{rhs.def.terms.FT}, \eqref{def.GaF}  and \eqref{index.sum.notation} with \eqref{orthogonal.basis.use}. 
\end{theorem}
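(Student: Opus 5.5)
We work on the Fourier side in $x$ at each fixed frequency $k\neq 0$; the mode $k=0$ is absent because $\widehat{[\mathcal{A},\mathcal{B},\mathcal{C}]}(t,0)=0$ by Corollary \ref{L8.5}. We will establish, for each $k\neq0$, the bound
\[
\Big(\int_1^T t^{3\qq}\big|\widehat{[\mathcal{A},\mathcal{B},\mathcal{C}]}(t,k)\big|^2dt\Big)^{\frac12}\lesssim
\Big(\int_1^T t^{3\qq}\|\micro\hat f\|_{L^2_p}^2dt\Big)^{\frac12}+\sup_{t\le T}t^{3\qq}\|\hat f(t,k)\|_{L^2_p}+\|\hat f_1(k)\|_{L^2_p}+\Big(\int_1^T t^{5\qq}|\hat{\mathfrak h}|^2dt\Big)^{\frac12},
\]
with a constant independent of $k$ and $T$. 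Summing over $k\in\mathbb{Z}^3$ then gives \eqref{abcest.torus1}. The key point is to use the upper-triangular structure of the system \eqref{macroscopic}, \eqref{local.cons.alt}, \eqref{macroscopic.B}. We estimate the three blocks of coefficients in a fixed order.

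\textbf{Step 1: $\mathcal{C}$.} Take the Fourier transform of \eqref{macroscopic.B}, multiply by $-t^{4\qq}ik_j\overline{\widehat{\mathcal{C}}}/|k|^2$, sum over $j$, take real parts and integrate over $[1,T]$. The left side produces $\frac{2}{\sqrt3}\int t^{3\qq}|\widehat{\mathcal{C}}|^2$. The time-derivative terms $\tmTq\partial_t(\tTq(\widehat{\mathcal{B}}+\hat{\mathfrak m}_{j+13}))$ are handled by integrating by parts in $t$, which shifts the derivative onto $t^{\qq}\overline{\widehat{\mathcal{C}}}$.
\begin{itemize}
\item The boundary terms are bounded by $T^{6\qq}\|\hat f(T)\|^2+\|\hat f_1\|^2$; for this we use Lemma \ref{L8.6} and $|k|\ge1$.
\item The derivative $\partial_t(t^{3\qq}\widehat{\mathcal C})$ is replaced using \eqref{local.cons.alt}$_3$. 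This yields $|k|\,t^{-\qq}$ times $\widehat{\mathcal B}$ or $\hat\Theta_5$, and the factor $|k|$ cancels against the $1/|k|$ already present.
\item The resulting terms are bounded by $\eta\int t^{3\qq}|\widehat{\mathcal C}|^2+C_\eta\int t^{3\qq}(|\widehat{\mathcal B}|^2+\|\micro\hat f\|^2)$.
\end{itemize}
The terms $\hat{\mathfrak u}$ contain $\hat{\mathfrak l}$, which we bound with Lemma \ref{L8.6}; its $t^{-3\qq}$ gain is more than enough. They also contain $t^{-\qq}ik\cdot\hat\Theta$, which again cancels the $1/|k|$. Finally, the $\hat{\mathfrak h}$ term is bounded by Cauchy--Schwarz, using the weight $t^{4\qq}\cdot t^{-3\qq/2}$; this produces exactly $(\int t^{5\qq}|\hat{\mathfrak h}|^2)^{1/2}$ times $(\int t^{3\qq}|\widehat{\mathcal C}|^2)^{1/2}$.

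\textbf{Step 2: $\mathcal{A}-\sqrt3\,\mathcal C$.} Proceed in the same way using \eqref{macroscopic}$_1$, testing against $\overline{ik_j(\widehat{\mathcal A}-\sqrt3\widehat{\mathcal C})}$. The time derivative now falls only on $\mathfrak m_{j+5}$. After integration by parts, $\partial_t(\widehat{\mathcal A}-\sqrt3\widehat{\mathcal C})$ is replaced using \eqref{local.cons.alt}$_{1,3}$. The $\widehat{\mathcal B}$ contributions cancel identically, since $\frac12-\sqrt3\cdot\frac1{2\sqrt3}=0$. What remains is $\hat\Theta$ and $\frac{3\qq}{t}(\cdots)$ terms, all paired with $\mathfrak m$, which are microscopic. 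This gives the bound with no $\mathcal B$ on the right-hand side.

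\textbf{Step 3: $\mathcal{B}$.} Use the Korn-type system \eqref{macroscopic}$_{2,3,4}$. Its left-hand sides are the traceless symmetric gradient of $\mathcal B$. Combined with \eqref{local.cons.alt} (to control $\nabla\cdot\mathcal B$ when paired against the time-derivative terms), testing on the Fourier side with suitable combinations of $\overline{ik_l\widehat{\mathcal B}_j}/|k|^2$ recovers $|k|^2|\widehat{\mathcal B}|^2$ algebraically, via $\sum_{l,j}|k_l\widehat{\mathcal B}_j+k_j\widehat{\mathcal B}_l|^2\ge c|k|^2|\widehat{\mathcal B}|^2$ for the traceless part. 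The time-derivative terms are again integrated by parts, and $\partial_t\widehat{\mathcal B}$ is replaced using \eqref{local.cons.alt}$_2$. That equation involves $\mathcal A-\frac{2}{\sqrt3}\mathcal C$, which Steps 1--2 control, up to the small multiple of $\mathcal B$ absorbed in Step 1. Combining Steps 3, 1 and 2 in this order, and choosing $\eta$ small to absorb the $\mathcal B$ term from Step 1, closes the estimate.

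The main obstacle is the time weights. After each integration by parts, every factor of $t$ must match. The $\frac{3\qq}{t}$ terms in \eqref{local.cons.alt} combine with $t^{4\qq}$ into $t^{4\qq-1}$, which is bounded by $t^{3\qq}$ times $t^{\qq-1}\le 1$; this is where $\qq\le1$ is used. The boundary terms must be expressed through $t^{3\qq}\|\hat f\|$, which Lemma \ref{L8.6} provides via the normalised basis $t^{3\qq}\mathfrak e_\ell$.
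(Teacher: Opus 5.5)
Your proposal is correct and follows the paper's proof essentially step by step. It uses the same order: first $\mathcal{C}$, then $\mathcal{A}-\sqrt{3}\,\mathcal{C}$ (where the $\mathcal{B}$ contributions cancel), then $\mathcal{B}$ from the traceless stress equations, each time via integration by parts in time with \eqref{local.cons.alt}, the bound $t^{4\qq-1}\le t^{3\qq}$ from $\qq\le 1$, and Cauchy--Schwarz with weight $t^{5\qq/2}$ on $\widehat{\mathfrak{h}}$. The one point to state more carefully is the final absorption, because in Step 1 the $\mathcal{B}$ term carries an $O(1)$ constant. The estimate closes because in Step 3 the $[\mathcal{A},\mathcal{C}]$ terms arise only paired with the microscopic $\widehat{\mathfrak{m}}$, so they can be given an arbitrarily small coefficient (as in the paper's \eqref{FT.B.estimate}) before you add a small multiple of the $[\mathcal{A},\mathcal{C}]$ bound.
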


\begin{proof}
First, we take the Fourier transform in \eqref{macroscopic} and \eqref{macroscopic.B} to obtain 
\begin{align} \label{macroscopic.FT.B}
 \tmTq\partial_t \left(\tTq \widehat{\mathcal{B}}_j \right) 
 +
\frac{2}{\sqrt{3}}\tmq ik_j  \widehat{\mathcal{C}} 
&=
-\tmTq  \partial_t\left( \tTq \widehat{\mathfrak{m}}_{j+13} \right)+\hat{\mathfrak{u}}_{j+13} , \quad (1 \leq j \leq 3),
\\ 
\label{macroscopic.FT.AC}
\tmq\sqrt{\frac{7}{3}}ik_j \left(\widehat{\mathcal{A}}
-
\sqrt{3}\widehat{\mathcal{C}} \right)
&=-\tmTq  \partial_t\left( \tTq \widehat{\mathfrak{m}}_{j+5} \right)+\hat{\mathfrak{u}}_{j+5},
\\ \label{macroscopic.FT.Bthree}
\tmq\frac{i}{2\sqrt{5}} \left( k_1 \widehat{\mathcal{B}}_1 - k_3 \widehat{\mathcal{B}}_3\right)
&=-\tmTq  \partial_t\left( \tTq \widehat{\mathfrak{m}}_{9} \right)+\widehat{\mathfrak{u}}_{9},
 \\ \label{macroscopic.FT.Btwo}
\tmq\frac{i}{6\sqrt{5}}\left( k\cdot \widehat{\mathcal{B}}-3k_2 \widehat{\mathcal{B}}_2\right)
&=-\tmTq  \partial_t\left( \tTq \widehat{\mathfrak{m}}_{10} \right)+\widehat{\mathfrak{u}}_{10},
 \\ \label{macroscopic.FT.Bjl}
\frac{1}{\sqrt{5}}  \tmq\left(ik_j \widehat{\mathcal{B}}_l+ik_l \widehat{\mathcal{B}}_j \right)
&=-\tmTq  \partial_t\left( \tTq \widehat{\mathfrak{m}}_{l+j+9} \right)+\hat{\mathfrak{u}}_{l+j+9},
\quad (l\neq j).
\end{align}
In particular using the orthonormal basis functions in \eqref{orthogonal.basis.use} we define
\begin{align} \notag
\widehat{\mathfrak{m}}_{\ell} \eqdef &~  \irnp{  \micro \hat{f}}{\tTq \mathfrak{e}_{\ell}(\tqq \pv)},
\\ \label{rhs.def.terms.FT}
\widehat{\highG}_{\ell} \eqdef &~  \irnp{\micro \hat{f}}{\tTq \frac{\pv}{| \pv|}\mathfrak{e}_{\ell}(\tqq \pv)},
\\ \notag
\hat{\mathfrak{l}}_{\ell} \eqdef &~  -\irnp{\linL(\micro \hat{f})}{\tTq \mathfrak{e}_{\ell}(\tqq \pv)},
\\ \notag
\widehat{\mathfrak{h}}_{\ell}  \eqdef &~ 
\irnp{\widehat{\Gamma}(\hat{f},\hat{f})}{\tTq \mathfrak{e}_{\ell}(\tqq \pv)},
 \\ \notag
\hat{\mathfrak{u}}_{\ell} \eqdef &~  -\tmq ik \cdot \widehat{\highG}_{\ell}
 +\hat{\mathfrak{l}}_{\ell}+\widehat{\mathfrak{h}}_{\ell},
\end{align}
where in the second to last term, from \eqref{gamma0}, we have
\begin{align}\label{def.GaF}
\widehat{\Gamma}(\hat{f},\hat{h})(k,\pv)
= &~
t^{3\qq}\int_{\mathbb{R}^3} ~  dq 
\int_{\mathbb{S}^{2}} ~ d\omega
~ v_{\Emptyset} ~  
 \sqrt{J(\tqq\qv)}
 [\hat{f}(p^{\prime})*\hat{h}(q^{\prime})](k)
 \\ \nonumber
 &~ -
 t^{3\qq}\int_{\mathbb{R}^3} ~  dq 
\int_{\mathbb{S}^{2}} ~ d\omega
~ v_{\Emptyset} ~  
 \sqrt{J(\tqq\qv)}
 [\hat{f}(p)*\hat{h}(q)](k),
\end{align}
with
\begin{eqnarray*}
{[\hat{f}(\pv^{\prime})*\hat{h}(\qv^{\prime})]}(k) & \eqdef &\int_{\Z^3_l} \hat{f}(k-l,\pv^{\prime})\hat{h}(l,\qv^{\prime})\,d\Si(l),\\
{[\hat{f}(\pv)*\hat{h}(\qv)]}(k)  & \eqdef &\int_{\Z^3_l}  \hat{f}(k-l,\pv)\hat{h}(l,\qv)\,d\Si(l).
\end{eqnarray*}
 We will also use the notation 
\begin{align}\label{index.sum.notation}
\big| \widehat{\mathfrak{m}} \big| = \sum_{\ell=1}^{13} \big| \widehat{\mathfrak{m}}_{\ell}\big|, \quad  
    \big| \widehat{\highG}\big| = \sum_{\ell=1}^{13} \big| \widehat{\highG}_{\ell}\big|, \quad 
    \big| \hat{\mathfrak{l}} \big| = \sum_{\ell=1}^{13} \big| \hat{\mathfrak{l}}_{\ell} \big|, 
    \quad 
    \big| \widehat{\mathfrak{h}} \big| = \sum_{\ell=1}^{13} \big| \widehat{\mathfrak{h}}_{\ell} \big|.
\end{align}
In the rest of these estimates we write $\widehat{\mathfrak{m}}_{\ell}$, $\widehat{\highG}_{\ell}$, $\hat{\mathfrak{l}}_{\ell}$, $\widehat{\mathfrak{h}}_{\ell}$, or $\hat{\mathfrak{u}}_{\ell}$ to denote an arbitrary term with index $\ell$ when the value of the index is unimportant. 

We take the Fourier transform of \eqref{local.cons.alt} to obtain the local conservation laws
\begin{align} \label{local.conservation.FTa}
\partial_t\widehat{\mathcal{A}}
     +
\frac{1}{2}  \tmq ik \cdot \widehat{\mathcal{B}}
+
\frac{3\qq }{t} \widehat{\mathcal{A}}
&=-\tmq ik \cdot \widehat{\highG}_{1},
\\  \label{local.conservation.FTb}
\partial_t\widehat{\mathcal{B}}_j
 +
2\tmq ik_j \left( \widehat{\mathcal{A}}
-\frac{2}{\sqrt{3}}
\widehat{\mathcal{C}}  \right)
+
\frac{3\qq }{t}\widehat{\mathcal{B}}_j
&=-\tmq ik \cdot \widehat{\highG}_{j+1}, \quad (1 \leq j \leq 3),
\\ \label{local.conservation.FTc}
\partial_t\widehat{\mathcal{C}}
  +
\frac{1}{2\sqrt{3}}  \tmq  ik \cdot \widehat{\mathcal{B}}
+
\frac{3\qq }{t} \widehat{\mathcal{C}}
&=-\tmq ik \cdot \widehat{\highG}_{5}.
\end{align}
We will use \eqref{macroscopic.FT.B}-\eqref{macroscopic.FT.Bjl} and \eqref{local.conservation.FTa}-\eqref{local.conservation.FTc} with \eqref{rhs.def.terms.FT} to prove our desired estimates.

First we will estimate $\widehat{\mathcal{C}}(t,k)$. We multiply \eqref{macroscopic.FT.B} by $-t^{4\qq} ik_j  \overline{\widehat{\mathcal{C}}}(t,k)$ and sum on $j$ to obtain
\begin{align*}
 \frac{2}{\sqrt{3}}\tTq |k|^2   \left|\widehat{\mathcal{C}}(t,k)\right|^2
& =   i \tq\overline{\widehat{\mathcal{C}}}(t,k)\partial_t \left(\tTq \widehat{\mathcal{B}}\cdot k +\sum_{j=1}^3 k_j   \tTq \widehat{\mathfrak{m}}_{j+13} \right)
\\
& \quad -
i t^{4\qq}\overline{\widehat{\mathcal{C}}}(t,k)  \sum_{j=1}^3 k_j\hat{\mathfrak{u}}_{j+13} .
\end{align*}
We rearrange the time derivative as
\begin{multline*}
 \frac{2}{\sqrt{3}}\tTq |k|^2   \left|\widehat{\mathcal{C}}(t,k)\right|^2
=   i \frac{d}{dt} \left( \left(\tTq \widehat{\mathcal{B}}\cdot k +\sum_{j=1}^3 k_j   \tTq \widehat{\mathfrak{m}}_{j+13} \right)\tq\overline{\widehat{\mathcal{C}}}(t,k) \right)
\\
-
i  \left( \widehat{\mathcal{B}}\cdot k +\sum_{j=1}^3 k_j    \widehat{\mathfrak{m}}_{j+13} \right) \tTq\partial_t  \left(\tq\overline{\widehat{\mathcal{C}}}(t,k)\right)
-
i t^{4\qq}\overline{\widehat{\mathcal{C}}}(t,k)  \sum_{j=1}^3 k_j\hat{\mathfrak{u}}_{j+13} .
\end{multline*}
After dividing by $|k|^2 \ge 1$ and integrating in time we have
\begin{multline*}
 \frac{2}{\sqrt{3}}\int_1^{T} \tTq|\widehat{\mathcal{C}}(t,k)|^2  dt
=
\left. \frac{i}{|k|^2} \left( \left( \widehat{\mathcal{B}}\cdot k +\sum_{j=1}^3 k_j    \widehat{\mathfrak{m}}_{j+13} \right)t^{4\qq}\overline{\widehat{\mathcal{C}}}(t,k) \right)\right|_{t=1}^T
\\
-
\frac{i}{|k|^2} \int_1^{T} \left( \widehat{\mathcal{B}}\cdot k +\sum_{j=1}^3 k_j    \widehat{\mathfrak{m}}_{j+13} \right) \tTq\partial_t  \left(\tq\overline{\widehat{\mathcal{C}}}(t,k)\right) dt
-
\frac{i}{|k|^2} \int_1^{T}t^{4\qq}\overline{\widehat{\mathcal{C}}}(t,k)  \sum_{j=1}^3 k_j\hat{\mathfrak{u}}_{j+13} dt.
\end{multline*}
We will estimate each of the terms on the right.

From \eqref{rhs.def.terms.FT} and \eqref{dispersion.term.basis.est} we have 
\begin{align}\label{mTH.est}
\left| \widehat{\mathfrak{m}}(t,k)\right|
+
\left| \widehat{\highG}(t,k)\right|
 \lesssim 
\norm{\micro \hat{f}(t,k)}_{L^2_{\pv}}.
\end{align}
Thus, for $|k|\ge 1$ and $T \ge 1$ we have 
\begin{multline}\label{upperBIGT.est}
    \left| \frac{1}{|k|^2} \sum_{j=1}^3 
 T^{4\qq}  k_j\widehat{\mathfrak{m}}_{j+13} (T) \overline{\widehat{\mathcal{C}}}(T)\right|
 \lesssim  
 T^{4\qq}\norm{\micro \hat{f}(T,k)}_{L^2_{\pv}} \left|\widehat{\mathcal{C}}(T,k)\right|
 \\
  \lesssim  
 T^{4\qq}\Vert \hat{f}(T,k)\Vert_{L^2_{\pv}}^2
   \lesssim 
 T^{6\qq}\Vert \hat{f}(T,k)\Vert_{L^2_{\pv}}^2,
\end{multline}
and similarly 
\begin{align}\label{upperONET.est}
    \left| \frac{1}{|k|^2} \sum_{j=1}^3 
   k_j\widehat{\mathfrak{m}}_{j+13} (1) \overline{\widehat{\mathcal{C}}}(1)\right|
 \lesssim  
\norm{ \hat{f}_1(k)}_{L^2_{\pv}}^2.
\end{align}
The term  
$\left. \frac{i}{|k|^2} \left( \left( \widehat{\mathcal{B}}\cdot k  \right)t^{4\qq}\overline{\widehat{\mathcal{C}}}(t,k) \right)\right|_{t=1}^T$ also satisfies the estimates in \eqref{upperBIGT.est} and \eqref{upperONET.est} at the respective points $t=T$ and $t=1$.  Thus 
\begin{align*}
    \left|
\left. \frac{1}{|k|^2} \left( \left( \widehat{\mathcal{B}}\cdot k +\sum_{j=1}^3 k_j   \widehat{\mathfrak{m}}_{j+13} \right)t^{4\qq}\overline{\widehat{\mathcal{C}}}(t,k) \right)\right|_{t=1}^T
\right|
\lesssim
 T^{6\qq}\Vert \hat{f}(T,k)\Vert_{L^2_{\pv}}^2
 +
 \norm{ \hat{f}_1(k)}_{L^2_{\pv}}^2.
\end{align*}
Next, from \eqref{local.conservation.FTc} we have
\begin{align*}
    \partial_t\left( \tq \widehat{\mathcal{C}} \right)
  +
\frac{1}{2\sqrt{3}}   ik \cdot \widehat{\mathcal{B}}
+
\frac{2\qq }{t} \tq\widehat{\mathcal{C}}
=- ik \cdot \widehat{\highG}_{5}.
\end{align*}
Hence
\begin{align}\label{usingqleq1}
  \left|  \tTq \partial_t\left(\tq\overline{\widehat{\mathcal{C}}}\right) \right|
 \lesssim 
\tTq |k| \left( \left|\widehat{\mathcal{B}}\right|+\left| \widehat{\highG}\right| \right)
+
2\qq  t^{4\qq -1}\left|\widehat{\mathcal{C}}\right|.
\end{align}
Thus for $\qq \leq 1$ and for any small $\eta>0$ we have
\begin{multline*}
    \left|
\frac{1}{|k|^2} \int_1^{T} \left( \widehat{\mathcal{B}}\cdot k +\sum_{j=1}^3 k_j    \widehat{\mathfrak{m}}_{j+13} \right) \tTq\partial_t  \left(\tq\overline{\widehat{\mathcal{C}}}(t,k)\right) dt
\right|
\\
\leq
C_\eta\int_1^{T} \tTq\norm{\micro \hat{f}(t,k)}_{L^2_{\pv}}^2dt
+
C_\eta \int^{T}_{1} \tTq \left|\widehat{{\mathcal{B}}}(t,k)\right|^2 dt 
+
\eta \int^{T}_{1} \tTq \left|\widehat{\mathcal{C}}(t,k)\right|^2 dt. 
\end{multline*}
Now from \eqref{lin.basis.est} we obtain
\begin{align}\label{linear.op.est.grad}
\big| \widehat{\mathfrak{l}}\big|
 \lesssim 
 \tmTq
\norm{\micro \hat{f}(t,k)}_{L^2_{\pv}}.
\end{align}
With \eqref{rhs.def.terms.FT} and \eqref{linear.op.est.grad} and the previous estimates for any small $\eta>0$ we obtain
\begin{align*}
\left|  \frac{1}{|k|^2} \int_1^{T}t^{4\qq}\overline{\widehat{\mathcal{C}}}(t,k)  \sum_{j=1}^3 k_j\hat{\mathfrak{u}}_{j+13} dt
 \right|
 \leq &~
\int_1^{T}
 t^{4\qq}\left( \tmq\left|\widehat{\highG}(t,k)\right|+\big| \widehat{\mathfrak{l}}\big|+\big| \widehat{\mathfrak{h}}\big|\right) \left|\widehat{{\mathcal{C}}}(t,k)\right| dt
 \\
  \leq &~
  \eta \int^{T}_{1} \tTq  \left|\widehat{{\mathcal{C}}}(t,k)\right|^2 dt
  +C_\eta\int_1^{T} t^{5\qq}\big| \widehat{\mathfrak{h}}\big|^2dt
  \\
&~
+C_\eta\int_1^{T} \tTq\norm{\micro \hat{f}(t,k)}_{L^2_{\pv}}^2dt.
\end{align*}
We collect these estimates for $\widehat{\mathcal{C}}(t,k)$, choosing $\eta>0$ sufficiently small, to obtain
\begin{multline}\label{FT.C.estimate}
\int_1^{T} \tTq|\widehat{\mathcal{C}}(t,k)|^2  dt
\lesssim
 T^{6\qq}\Vert \hat{f}(T,k)\Vert_{L^2_{\pv}}^2+\norm{ \hat{f}_1(k)}_{L^2_{\pv}}^2
  +
\int^{T}_{1} \tTq \left|\widehat{{\mathcal{B}}}(t,k)\right|^2 dt 
 \\
+\int_1^{T} \tTq\norm{\micro \hat{f}(t,k)}_{L^2_{\pv}}^2dt
+\int_1^{T} t^{5\qq}\big| \widehat{\mathfrak{h}}\big|^2dt.
\end{multline}
This is our primary estimate for $\widehat{\mathcal{C}}(t,k)$.

Next we estimate $\widehat{\mathcal{A}}- \sqrt{3}\widehat{\mathcal{C}}$.  Multiply \eqref{macroscopic.FT.AC} by $-t^{4\qq} ik_j  \left( \overline{\widehat{\mathcal{A}}}- \sqrt{3}\overline{\widehat{\mathcal{C}}}\right)(t,k)$ and sum on $j$ to obtain
\begin{align*}
\sqrt{\frac{7}{3}}\tTq |k|^2   \left|\widehat{\mathcal{A}}(t,k)- \sqrt{3}\widehat{\mathcal{C}}(t,k)\right|^2
= &~ 
 i \tq\left( \overline{\widehat{\mathcal{A}}}- \sqrt{3}\overline{\widehat{\mathcal{C}}}\right)(t,k)\partial_t \left(\tTq\sum_{j=1}^3 k_j    \widehat{\mathfrak{m}}_{j+5} \right)
\\
&~
-
i t^{4\qq}\left( \overline{\widehat{\mathcal{A}}}- \sqrt{3}\overline{\widehat{\mathcal{C}}}\right)(t,k)  \sum_{j=1}^3 k_j\hat{\mathfrak{u}}_{j+5}.
\end{align*}
We rearrainge the time derivative to conclude that
\begin{multline*}
\sqrt{\frac{7}{3}}\tTq |k|^2   \left|\widehat{\mathcal{A}}(t,k)- \sqrt{3}\widehat{\mathcal{C}}(t,k)\right|^2
=   i \frac{d}{dt} \left(  \left(\tTq\sum_{j=1}^3 k_j    \widehat{\mathfrak{m}}_{j+5} \right)\tq\left( \overline{\widehat{\mathcal{A}}}- \sqrt{3}\overline{\widehat{\mathcal{C}}}\right) \right)
\\
-
i  \left( \sum_{j=1}^3 k_j    \widehat{\mathfrak{m}}_{j+5} \right) \tTq\partial_t  \left(\tq\left( \overline{\widehat{\mathcal{A}}}- \sqrt{3}\overline{\widehat{\mathcal{C}}}\right)\right)
-
i t^{4\qq}\left( \overline{\widehat{\mathcal{A}}}- \sqrt{3}\overline{\widehat{\mathcal{C}}}\right)  \sum_{j=1}^3 k_j\hat{\mathfrak{u}}_{j+5}.
\end{multline*}
Next divide by $|k|^2 \ge 1$ and then integrate in time to achieve
\begin{multline*}
\sqrt{\frac{7}{3}}\int_1^{T} \tTq \left|\widehat{\mathcal{A}}(t,k)- \sqrt{3}\widehat{\mathcal{C}}(t,k)\right|^2  dt
\\
=
\left. \frac{i}{|k|^2} \left(  \left(\tTq\sum_{j=1}^3 k_j    \widehat{\mathfrak{m}}_{j+5} \right)\tq\left( \overline{\widehat{\mathcal{A}}}- \sqrt{3}\overline{\widehat{\mathcal{C}}}\right)(t,k) \right)\right|_{t=1}^T
\\
-
\frac{i}{|k|^2} \int_1^{T} \left( \sum_{j=1}^3 k_j    \widehat{\mathfrak{m}}_{j+5} \right) \tTq\partial_t  \left(\tq\left( \overline{\widehat{\mathcal{A}}}- \sqrt{3}\overline{\widehat{\mathcal{C}}}\right)(t,k)\right) dt
\\
-
\frac{i}{|k|^2} \int_1^{T} t^{4\qq}\left( \overline{\widehat{\mathcal{A}}}- \sqrt{3}\overline{\widehat{\mathcal{C}}}\right)(t,k)  \sum_{j=1}^3 k_j\hat{\mathfrak{u}}_{j+5} dt.
\end{multline*}
We will estimate each of the terms on the right.  

Then as in \eqref{upperBIGT.est} and \eqref{upperONET.est} we have 
\begin{align*}
    \left|
\left. \frac{1}{|k|^2} \left(  \left(\tTq\sum_{j=1}^3 k_j    \widehat{\mathfrak{m}}_{j+5} \right)\tq\left( \overline{\widehat{\mathcal{A}}}- \sqrt{3}\overline{\widehat{\mathcal{C}}}\right)(t,k) \right)\right|_{t=1}^T
\right|
\lesssim
 T^{6\qq}\Vert \hat{f}(T,k)\Vert_{L^2_{\pv}}^2
 +
 \norm{ \hat{f}_1(k)}_{L^2_{\pv}}^2.
\end{align*}
Next subtract $\sqrt{3}$ times \eqref{local.conservation.FTc} from  \eqref{local.conservation.FTa} to obtain 
\begin{align}\notag
    \partial_t \left( \widehat{\mathcal{A}}-\sqrt{3}\widehat{\mathcal{C}} \right)
+
\frac{3\qq }{t} \left( \widehat{\mathcal{A}}-\sqrt{3}\widehat{\mathcal{C}} \right)
&=-\tmq ik \cdot \left( \widehat{\highG}_{1}-\sqrt{3}\widehat{\highG}_{5} \right).
\end{align}
In particular 
\begin{align}\notag
    \partial_t \left(\tq\left( \widehat{\mathcal{A}}-\sqrt{3}\widehat{\mathcal{C}} \right)\right)
+
\frac{2\qq }{t} \tq \left( \widehat{\mathcal{A}}-\sqrt{3}\widehat{\mathcal{C}} \right)
&=- ik \cdot \left( \widehat{\highG}_{1}-\sqrt{3}\widehat{\highG}_{5} \right).
\end{align}
Thus we have 
\begin{align*}
  \left|  \tTq \partial_t \left(\tq\left( \overline{\widehat{\mathcal{A}}}- \sqrt{3}\overline{\widehat{\mathcal{C}}}\right)(t,k)\right) \right|
 \lesssim 
\tTq |k| \left| \widehat{\highG}\right|
+
2\qq  t^{4\qq -1}\left|{\widehat{\mathcal{A}}}- \sqrt{3}{\widehat{\mathcal{C}}}\right|.
\end{align*}
We conclude, again using $\qq \leq 1$, for any small $\eta>0$ that 
\begin{multline*}
    \left|
\frac{1}{|k|^2} \int_1^{T} \left( \sum_{j=1}^3 k_j    \widehat{\mathfrak{m}}_{j+5} \right) \tTq\partial_t  \left(\tq\left( \overline{\widehat{\mathcal{A}}}- \sqrt{3}\overline{\widehat{\mathcal{C}}}\right)(t,k)\right) dt
\right|
\\
\leq
C_\eta\int_1^{T} \tTq\norm{\micro \hat{f}(t,k)}_{L^2_{\pv}}^2dt
+
\eta \int^{T}_{1} \tTq \left|{\widehat{\mathcal{A}}}- \sqrt{3}{\widehat{\mathcal{C}}}\right|^2(t,k) dt. 
\end{multline*}
Also using \eqref{rhs.def.terms.FT} and \eqref{linear.op.est.grad} and the prior estimates  for any small $\eta>0$ we obtain 
\begin{multline*}
\left|  \frac{1}{|k|^2} 
\int_1^{T} t^{4\qq}\left( \overline{\widehat{\mathcal{A}}}- \sqrt{3}\overline{\widehat{\mathcal{C}}}\right)(t,k)  \sum_{j=1}^3 k_j\hat{\mathfrak{u}}_{j+5} dt
 \right|
\\
\leq
\int_1^{T}
 t^{4\qq}\left( \tmq\left|\widehat{\highG}(t,k)\right|+\left| \widehat{\mathfrak{l}}\right|+\left| \widehat{\mathfrak{h}}\right|\right) \left|{\widehat{\mathcal{A}}}- \sqrt{3}{\widehat{\mathcal{C}}}\right|(t,k) dt
\\
\leq
C_\eta\int_1^{T} \tTq\norm{\micro \hat{f}(t,k)}_{L^2_{\pv}}^2dt
  +
  C_\eta\int_1^{T} t^{5\qq}\left| \widehat{\mathfrak{h}}\right|^2dt
+
\eta \int^{T}_{1} \tTq \left|{\widehat{\mathcal{A}}}- \sqrt{3}{\widehat{\mathcal{C}}}\right|^2(t,k) dt. 
\end{multline*}
We collect the estimates above choosing $\eta>0$ sufficiently small to obtain
\begin{multline}\label{FT.AC.estimate}
\int_1^{T} \tTq\left|\left( {\widehat{\mathcal{A}}}- \sqrt{3}{\widehat{\mathcal{C}}}\right)(t,k)\right|^2  dt
\lesssim
 T^{6\qq}\Vert \hat{f}(T,k)\Vert_{L^2_{\pv}}^2+\norm{ \hat{f}_1(k)}_{L^2_{\pv}}^2
 \\
 +
\int_1^{T} \tTq\norm{\micro \hat{f}(t,k)}_{L^2_{\pv}}^2dt
+
\int_1^{T} t^{5\qq}\left| \widehat{\mathfrak{h}}\right|^2dt.
\end{multline}
This is our primary estimate for ${\widehat{\mathcal{A}}}- \sqrt{3}{\widehat{\mathcal{C}}}$.

Now we take a linear combination of \eqref{FT.C.estimate} and \eqref{FT.AC.estimate} to obtain 
\begin{multline}\label{FT.ACt.estimate}
    \int^{T}_{1} \tTq \left|\widehat{[\mathcal{A},\mathcal{C}]}(t,k)\right|^2 dt
    \lesssim
     T^{6\qq}\Vert \hat{f}(T,k)\Vert_{L^2_{\pv}}^2+\norm{ \hat{f}_1(k)}_{L^2_{\pv}}^2
  +
 \int^{T}_{1} \tTq \left|\widehat{{\mathcal{B}}}(t,k)\right|^2 dt 
 \\
+\int_1^{T} \tTq\norm{\micro \hat{f}(t,k)}_{L^2_{\pv}}^2dt
+\int_1^{T} t^{5\qq}\left| \widehat{\mathfrak{h}}\right|^2dt.
\end{multline}
This will be our primary estimate for $\widehat{[\mathcal{A},\mathcal{C}]}(t,k)$.

Lastly we will estimate $\widehat{\mathcal{B}}$.  To this end we multiply \eqref{macroscopic.FT.Bjl} by $-ik_j \tq \overline{\widehat{\mathcal{B}}}_l$ and sum over $l\neq j$, to obtain
\begin{multline}
        -\sum_{\substack{l=1 \\ l\neq j}}^3 \left(ik_j \widehat{\mathcal{B}}_l+ik_l \widehat{\mathcal{B}}_j \right)ik_j \overline{\widehat{\mathcal{B}}}_l
    = (k_j)^2 \left(|\widehat{\mathcal{B}}(t,k)|^2 
    -
    ( \widehat{\mathcal{B}}_j )^2\right) 
    + 
     k_j \overline{\widehat{\mathcal{B}}}_j\left( k \cdot \widehat{\mathcal{B}} -k_j \widehat{\mathcal{B}}_j\right)
    \\  \label{sum.bj.calc}
     = (k_j)^2 |\widehat{\mathcal{B}}(t,k)|^2 
    + 
     k_j \overline{\widehat{\mathcal{B}}}_j\left(\left( k \cdot \widehat{\mathcal{B}} \right)
        -
2   k_j   \widehat{\mathcal{B}}_j  \right).
\end{multline}
Next, by taking suitable linear combinations of \eqref{macroscopic.FT.Bthree} and \eqref{macroscopic.FT.Btwo} we observe that for any $1\leq j\leq 3$ we have 
\begin{align}\label{macroscopic.FT.Bjjj}
    \tmq\frac{i}{6\sqrt{5}}\left( k\cdot \widehat{\mathcal{B}}-3k_j \widehat{\mathcal{B}}_j\right)
&=-\tmTq  \partial_t\left( \tTq \widehat{\mathfrak{m}}_{*j} \right)+\widehat{\mathfrak{u}}_{*j},
\end{align}
where $\widehat{\mathfrak{m}}_{*j}$ is some linear combination of  $\widehat{\mathfrak{m}}_{9}$ and $\widehat{\mathfrak{m}}_{10}$ that depends upon $j$, and  $\widehat{\mathfrak{u}}_{*j}$ is a similar linear combination of $\widehat{\mathfrak{u}}_{9}$ and $\widehat{\mathfrak{u}}_{10}$ that depends upon $j$.  Next multiply \eqref{macroscopic.FT.Bjjj} by $-ik_j \tq \overline{\widehat{\mathcal{B}}}_j$ to obtain 
\begin{align*}
    \left( k\cdot \widehat{\mathcal{B}}-2k_j \widehat{\mathcal{B}}_j\right)
    \frac{k_j \overline{\widehat{\mathcal{B}}}_j}{6\sqrt{5}}
= \frac{(k_j)^2 |\widehat{\mathcal{B}}_j|^2}{6\sqrt{5}}
+
t^{-2\qq}  ik_j\overline{\widehat{\mathcal{B}}}_j\partial_t\left( \tTq \widehat{\mathfrak{m}}_{*j} \right)
-
ik_j \tq\overline{\widehat{\mathcal{B}}}_j\widehat{\mathfrak{u}}_{*j}.
\end{align*}
Note that $|k^T\widehat{\mathcal{B}}(t,k)|^2 = \sum_{j=1}^3(k_j)^2 |\widehat{\mathcal{B}}_j|^2$.  
We plug this into \eqref{sum.bj.calc}, use \eqref{macroscopic.FT.Bjl}, and sum on $j$ to obtain 
\begin{multline*}
    |k|^2 |\widehat{\mathcal{B}}(t,k)|^2 + |k^T\widehat{\mathcal{B}}(t,k)|^2
    =
    -\sqrt{5}\sum_{j=1}^3\sum_{\substack{l=1 \\ l\neq j}}^3 \left(ik_j \widehat{\mathcal{B}}_l+ik_l \widehat{\mathcal{B}}_j \right)ik_j \overline{\widehat{\mathcal{B}}}_l
    \\
    +
6\sqrt{5}it^{-2\qq} 
\sum_{j=1}^3 
k_j\overline{\widehat{\mathcal{B}}}_j\partial_t\left( \tTq \widehat{\mathfrak{m}}_{*j} \right)
-
6\sqrt{5}i\tq
\sum_{j=1}^3
k_j \overline{\widehat{\mathcal{B}}}_j\widehat{\mathfrak{u}}_{*j}
\\
=
- t^{-2\qq}  \sqrt{5} \sum_{j=1}^3\sum_{\substack{l=1 \\ l\neq j}}^3 ik_j \overline{\widehat{\mathcal{B}}}_l\partial_t\left( \tTq \widehat{\mathfrak{m}}_{l+j+9} \right)
+
\sqrt{5}\tq\sum_{j=1}^3\sum_{\substack{l=1 \\ l\neq j}}^3 ik_j \overline{\widehat{\mathcal{B}}}_l \hat{\mathfrak{u}}_{l+j+9}
    \\
    +
6\sqrt{5}it^{-2\qq} 
\sum_{j=1}^3 
k_j\overline{\widehat{\mathcal{B}}}_j\partial_t\left( \tTq \widehat{\mathfrak{m}}_{*j} \right)
-
6\sqrt{5}i\tq
\sum_{j=1}^3
k_j \overline{\widehat{\mathcal{B}}}_j\widehat{\mathfrak{u}}_{*j}.
\end{multline*}
We have shown that 
\begin{align*}
    |k|^2 |\widehat{\mathcal{B}}(t,k)|^2 + |k^T\widehat{\mathcal{B}}(t,k)|^2
    =
t^{-2\qq} 
\sum_{l=1}^3 
k_l\overline{\widehat{\mathcal{B}}}_l\partial_t\left( \tTq \widehat{\mathfrak{m}}_{\ddagger l} \right)
-
\tq
\sum_{l=1}^3
k_l \overline{\widehat{\mathcal{B}}}_l \widehat{\mathfrak{u}}_{\ddagger l},
\end{align*}
where $\widehat{\mathfrak{m}}_{\ddagger l}$ is a linear combination of the $\widehat{\mathfrak{m}}_{l+j+9}$ and the $\widehat{\mathfrak{m}}_{*l}$, 
and $\widehat{\mathfrak{u}}_{\ddagger l}$ is a similar linear combination of the $\widehat{\mathfrak{u}}_{l+j+9}$ and the $\widehat{\mathfrak{u}}_{*l}$.

After multiplying by $\tTq$ and dividing by $|k|^2$ we thus have
\begin{multline*}
 \tTq |\widehat{\mathcal{B}}(t,k)|^2 + \tTq \frac{|k^T\widehat{\mathcal{B}}(t,k)|^2}{|k|^2}
=
\frac{ \tq}{|k|^2}
\sum_{l=1}^3 
k_l\overline{\widehat{\mathcal{B}}}_l\partial_t\left( \tTq \widehat{\mathfrak{m}}_{\ddagger l} \right)
-
\frac{t^{4\qq} }{|k|^2}
\sum_{l=1}^3
k_l \overline{\widehat{\mathcal{B}}}_l \widehat{\mathfrak{u}}_{\ddagger l}
\\
=
\frac{d}{dt}\left(
\frac{ t^{4\qq}}{|k|^2}
\sum_{l=1}^3 
k_l\overline{\widehat{\mathcal{B}}}_l \widehat{\mathfrak{m}}_{\ddagger l} 
 \right)
 -
 \frac{1}{|k|^2 } 
 \sum_{l=1}^3 
  \tTq k_l\widehat{\mathfrak{m}}_{\ddagger l}\partial_t\left(\tq\overline{\widehat{\mathcal{B}}}^j\right) 
-
\frac{t^{4\qq} }{|k|^2}
\sum_{l=1}^3
k_l \overline{\widehat{\mathcal{B}}}_l \widehat{\mathfrak{u}}_{\ddagger l}.
\end{multline*}
Similar to the previous cases, after integrating in time over $[1,T]$, we obtain
\begin{multline*}
\int_1^{T} \tTq \left(  |\widehat{\mathcal{B}}(t,k)|^2 +  \frac{|k^T\widehat{\mathcal{B}}(t,k)|^2}{|k|^2}\right)  dt
=
 \frac{ T^{4\qq}}{|k|^2}
\sum_{l=1}^3 
k_l\overline{\widehat{\mathcal{B}}}_l(T) \widehat{\mathfrak{m}}_{\ddagger l} (T)
 \\
- \frac{ 1}{|k|^2}
\sum_{l=1}^3 
k_l\overline{\widehat{\mathcal{B}}}_l(1) \widehat{\mathfrak{m}}_{\ddagger l} (1)
  -
 \frac{1}{|k|^2 } 
 \sum_{l=1}^3 
   k_l \int_1^{T} \tTq \widehat{\mathfrak{m}}_{\ddagger l}\partial_t\left(\tq\overline{\widehat{\mathcal{B}}}_l\right)  dt
-
\frac{1 }{|k|^2}
\sum_{l=1}^3
k_l \int_1^{T}  t^{4\qq} \overline{\widehat{\mathcal{B}}}_l \widehat{\mathfrak{u}}_{\ddagger l} dt.
\end{multline*}
We will estimate each of the terms on the right side.

As in \eqref{upperBIGT.est} and \eqref{upperONET.est}, for $|k|\geq 1$, we have 
\begin{align*}
 \frac{ T^{4\qq}}{|k|^2}
\sum_{l=1}^3 
    \left| k_l\overline{\widehat{\mathcal{B}}}_l(T) \widehat{\mathfrak{m}}_{\ddagger l} (T)
\right|
+ \frac{ 1}{|k|^2}
\sum_{l=1}^3 
\left| k_l\overline{\widehat{\mathcal{B}}}_l(1) \widehat{\mathfrak{m}}_{\ddagger l} (1)\right|
\lesssim
 T^{6\qq}\Vert \hat{f}(T,k)\Vert_{L^2_{\pv}}^2
 +
 \norm{ \hat{f}_1(k)}_{L^2_{\pv}}^2.
\end{align*}
Next from \eqref{local.conservation.FTb} we have 
\begin{align*}
\tTq\partial_t\left(\tq {\widehat{\mathcal{B}}}_j\right)
 +
2\tTq ik_j \left( \widehat{\mathcal{A}}
-\frac{2}{\sqrt{3}}
\widehat{\mathcal{C}}  \right)
+
\frac{2\qq }{t}t^{4\qq}\widehat{\mathcal{B}}_j
&=- \tTq ik \cdot \widehat{\highG}_{j+1}, \quad (1 \leq j \leq 3),
\end{align*}
Hence for any $1 \leq j \leq 3$ we have
\begin{align*}
  \left|  \tTq\partial_t\left(\tq {\widehat{\mathcal{B}}}_j\right) \right|
 \lesssim 
\tTq |k| \left(\left| \widehat{\highG}\right|+\left|\widehat{[\mathcal{A},\mathcal{C}]}\right|\right)
+
2\qq  t^{4\qq -1}\left|{\widehat{\mathcal{B}}}\right|.
\end{align*}
Thus again using \eqref{mTH.est} for any $\qq \leq 1$ and for any small $\eta>0$ we have
\begin{multline*}
    \left|
 \frac{1}{|k|^2 } 
 \sum_{l=1}^3 
   k_l \int_1^{T} \tTq \widehat{\mathfrak{m}}_{\ddagger l}\partial_t\left(\tq\overline{\widehat{\mathcal{B}}}_l\right)  dt
\right|
\\
\leq
C_\eta\int_1^{T} \tTq\norm{\micro \hat{f}(t,k)}_{L^2_{\pv}}^2dt
+
\eta \int^{T}_{1} \tTq \left|\widehat{{\mathcal{B}}}(t,k)\right|^2 dt 
+
\eta \int^{T}_{1} \tTq \left|\widehat{[\mathcal{A},\mathcal{C}]}(t,k)\right|^2 dt. 
\end{multline*}
Then using \eqref{rhs.def.terms.FT}, \eqref{mTH.est}, and \eqref{linear.op.est.grad} for any small $\eta>0$ we have
\begin{align*}
\left|  \frac{1 }{|k|^2}
\sum_{l=1}^3
k_l \int_1^{T}  t^{4\qq} \overline{\widehat{\mathcal{B}}}_l \widehat{\mathfrak{u}}_{\ddagger l} dt\right|
 \leq &~
\int_1^{T}
 t^{4\qq}\left( \tmq\left|\widehat{\highG}(t,k)\right|+\left| \widehat{\mathfrak{l}}\right|+\left| \widehat{\mathfrak{h}}\right|\right) \left|\widehat{{\mathcal{B}}}(t,k)\right| dt
 \\
  \leq &~
  \eta \int^{T}_{1} \tTq  \left|\widehat{{\mathcal{B}}}(t,k)\right|^2 dt
  +C_\eta\int_1^{T} t^{5\qq}\left| \widehat{\mathfrak{h}}\right|^2dt
  \\
&~
+C_\eta\int_1^{T} \tTq\norm{\micro \hat{f}(t,k)}_{L^2_{\pv}}^2dt.
\end{align*}
We collect all of these estimates for $\widehat{\mathcal{B}}(t,k)$ to conclude for any small $\eta>0$ that 
\begin{multline}\label{FT.B.estimate}
\int_1^{T} \tTq|\widehat{\mathcal{B}}(t,k)|^2  dt
\lesssim
 T^{6\qq}\Vert \hat{f}(T,k)\Vert_{L^2_{\pv}}^2+\norm{ \hat{f}_1(k)}_{L^2_{\pv}}^2
 +
\eta \int^{T}_{1} \tTq \left|\widehat{[\mathcal{A},\mathcal{C}]}(t,k)\right|^2 dt
 \\
+C_\eta\int_1^{T} \tTq\norm{\micro \hat{f}(t,k)}_{L^2_{\pv}}^2dt
+C_\eta\int_1^{T} t^{5\qq}\big|\widehat{\mathfrak{h}}\big|^2dt.
\end{multline}
This is our primary estimate for $\widehat{\mathcal{B}}(t,k)$.

To conclude the proof we multiply \eqref{FT.ACt.estimate} by a suitably small constant $\eta_1>0$ and then we add this to \eqref{FT.B.estimate} choosing $\eta>0$ in \eqref{FT.B.estimate} small enough that $\eta-\eta_1>0$.  Combining \eqref{FT.ACt.estimate} with \eqref{FT.B.estimate} we obtain
\begin{multline*}
    \int^{T}_{1} \tTq \left|\widehat{[\mathcal{A},\mathcal{B},\mathcal{C}]}(t,k)\right|^2 dt
    \lesssim
     T^{6\qq}\Vert \hat{f}(T,k)\Vert_{L^2_{\pv}}^2+\norm{ \hat{f}_1(k)}_{L^2_{\pv}}^2
 \\
+\int_1^{T} \tTq\norm{\micro \hat{f}(t,k)}_{L^2_{\pv}}^2dt
+\int_1^{T} t^{5\qq}\left| \widehat{\mathfrak{h}}\right|^2dt.
\end{multline*}
Next we take the square root in the above inequality and use 
\begin{equation}
\label{sqrt.ineq}
\frac{1}{\sqrt{2}} (A+B)\leq \sqrt{A^2+B^2}\leq A+B,
\end{equation}
further take $\sup_{0\le t\le T}$ for the term $t^{6\qq}\Vert \hat{f}( t,k)\Vert_{L^2_{\pv}}$,
and then integrate the resulting inequality with respect to $d\Sigma(k)$ over $\Z^3$ to obtain the proof of the theorem.
\end{proof}

\begin{remark}
    In the proof of Theorem \ref{thmmacroscop} above, we have used \(\qq\leq 1\) to control the term \(t^{4\qq -1}\) on the right hand side of \eqref{usingqleq1}. 
\end{remark}

\section{Main energy estimate for \texorpdfstring{\(f\)}{}}\label{sectionWienernorm}

In this section we prove Proposition \ref{Torus Boltzmann Micro}, that is, our main energy estimate for the distribution function \(f\). This relies on the coercivity estimate for the linearised operator \(\linL\) in Proposition \ref{lowerN}, and the trilinear estimate for the nonlinearity in Lemma \ref{thm1}.

\begin{proposition}[Estimates for the distribution function]\label{Torus Boltzmann Micro}
Consider \(T>1\), and a suitably regular solution \(f\) of \eqref{rBoltz00} on \([1,T]\times \mathbb{T}^3_x\times \mathbb{R}^3_p\). Then the function \(f\) satisfies the following uniform estimate
\begin{multline}
\norm{\tTq f}_{L^1_kL^\infty_TL^2_{\pv}}
+
{\norm{t^{3\qq/2}\micro f}_{L^1_kL^2_TL^2_{\pv}}}
\\
\lesssim 
\Vert {f}_1\Vert_{L^1_kL^2_{\pv}}
+{\Vert{\tTq f}\Vert_{L^1_kL^\infty_TL^2_{\pv}}} {\norm{t^{3\qq/2} f}_{L^1_kL^2_TL^2_{\pv}}}.
\label{est.fRHS}
\end{multline}
\end{proposition}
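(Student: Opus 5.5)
The plan is to run the Fourier-side energy identity \eqref{pro.tpm.p10} mode by mode in $k\in\Z^3$, feed in the coercivity from Proposition \ref{lowerN}, and control the nonlinear term with the trilinear estimate of Lemma \ref{thm1}, specifically its convolution form \eqref{ineq: Boltzmann nonlinear00}.

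\textbf{Step 1: the energy identity.} Take the Fourier transform \eqref{BoltzFour}, pair with $\overline{\hat f}$, take real parts, and multiply by $t^{6\qq}$. The transport term $i\frac{p}{|\tq p|}\cdot k$ is skew and drops out after taking real parts. For the term $-\frac{2\qq}{t}p^i\partial_{p^i}$, integrate by parts in $p$: this gives $+\frac{3\qq}{t}\|\hat f\|^2_{L^2_p}$, which combines with $\partial_t$ via \eqref{timeDeriv.pq} to produce $\frac12 t^{-6\qq}\frac{d}{dt}(t^{6\qq}\|\hat f\|^2_{L^2_p})$. Integrating over $[1,t]$ yields \eqref{pro.tpm.p10}.

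\textbf{Step 2: coercivity.} Proposition \ref{lowerN} gives $\tau^{6\qq}\real{\langle \linL\hat f,\hat f\rangle}\geq \delta_0\tau^{3\qq}\|\micro\hat f\|^2_{L^2_p}$. This applies to complex-valued $\hat f$ because $\linL$ is real and self-adjoint, so it acts separately on real and imaginary parts. Hence, for each $k$ and each $t\le T$,
\[
t^{6\qq}\|\hat f(t,k)\|^2_{L^2_p}+\delta_0\int_1^t\tau^{3\qq}\|\micro\hat f\|^2_{L^2_p}d\tau\lesssim\|\hat f_1(k)\|^2_{L^2_p}+\int_1^T\tau^{6\qq}\left|\langle\widehat\Gamma(\hat f,\hat f),\hat f\rangle\right|d\tau .
\]

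\textbf{Step 3: the nonlinear term.} Since $\Gamma(f,f)$ is orthogonal to $\mathrm{nul}(\linL)$, as in the vanishing of $\mathfrak h_\ell$ for $\ell\le5$ in the proof of Proposition \ref{macroscopic.prop}, we may replace $\hat f$ by $\micro\hat f$ in the last slot. Writing $\tau^{6\qq}=\tau^{3\qq/2}\cdot\tau^{3\qq}\cdot\tau^{3\qq/2}$ and using \eqref{ineq: Boltzmann nonlinear00}, the integrand is bounded by
\[
\tau^{3\qq/2}\|\micro\hat f(k)\|\int_{\Z^3}\tau^{3\qq}\|\hat f(k-l)\|\,\tau^{3\qq/2}\|\hat f(l)\|\,d\Sigma(l).
\]
Take the supremum in time of the middle factor, then apply Cauchy--Schwarz in $\tau$ to the remaining two. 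This bounds the $\tau$-integral by
\[
\Big(\int_1^T\tau^{3\qq}\|\micro\hat f(k)\|^2d\tau\Big)^{1/2}\int_{\Z^3}\sup_\tau\tau^{3\qq}\|\hat f(k-l)\|\Big(\int_1^T\tau^{3\qq}\|\hat f(l)\|^2d\tau\Big)^{1/2}d\Sigma(l).
\]

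\textbf{Step 4: closing the estimate.} Take the supremum over $t\le T$ and square roots, using \eqref{sqrt.ineq} and $\sqrt{ab}\le \epsilon a+C_\epsilon b$. The dissipation factor $\big(\int\tau^{3\qq}\|\micro\hat f(k)\|^2\big)^{1/2}$ is absorbed into the left side. Finally sum over $k$: Young's inequality $\ell^1*\ell^1\subset\ell^1$ turns the convolution into the product $\|\tTq f\|_{L^1_kL^\infty_TL^2_p}\|t^{3\qq/2}f\|_{L^1_kL^2_TL^2_p}$, which gives \eqref{est.fRHS}.

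The main point needing care is the time-weight bookkeeping in Step 3. The weight $\tau^{6\qq}$ has to split exactly into the energy weight $\tau^{3\qq}$ and two dissipation weights $\tau^{3\qq/2}$; this works precisely because the trilinear bound carries no extra powers of $t$. The rest is the standard argument.
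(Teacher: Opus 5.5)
Your proposal is correct and follows the paper's proof essentially step for step: the Fourier-side energy identity, coercivity from Proposition \ref{lowerN}, replacing $\hat f$ by $\micro\hat f$ in the nonlinear pairing, the convolution trilinear bound, and summation in $k$ by Fubini and translation invariance. The differences are cosmetic and give the same bound. The paper applies Cauchy--Schwarz in $t$ and Young's inequality before extracting the time supremum via the Minkowski inequality \eqref{minkowski.ineq}, and it absorbs the dissipation term after summing over $k$ rather than pointwise in $k$.
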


The estimate above follows by suitably adapting the energy estimate in \cite[Section 3]{1904.12086} for the nonrelativistic Boltzmann equation on \(\T^3_x\).

\begin{proof}
We take the Fourier transform of \eqref{rBoltz00} to obtain
\begin{align}\label{torus fourier transformed}
\left(\partial_t  + i\frac{\pv}{|\tq \pv|}\cdot k 
-\frac{2\qq }{t} p^i \partial_{p^i} \right) \hat{f}(t,k,\pv)
+\linL\hat{f}(t,k,\pv)=\widehat{\Gamma}(\hat{f},\hat{f})(t,k,\pv),
\end{align}
where $\hat{\Gamma}(\hat{f},\hat{h})(k,\pv)$ is defined in \eqref{def.GaF}.  Take the product of  \eqref{torus fourier transformed} with the complex conjugate of $\hat{f}(t,k,\pv)$, denoted $\bar{\hat{f}}$, and then take the real part of the resulting equation to obtain
\begin{align*}
\frac{1}{2}\left(\frac{d}{dt}   
-\frac{2\qq }{t} p^i \partial_{p^i} \right)
\vert \hat{f}(t,k,\pv)\vert^2
+\real{\bar{\hat{f}}\linL\hat{f} }
=\real{\bar{\hat{f}}\hat{\Gamma}(\hat{f},\hat{f})}.
\end{align*}
Here $\real{z}$ is the real part of a complex number $z$.
Integrating the above with respect to $\pv$ we have
\begin{align*}
    \frac{1}{2}\left(\frac{d}{dt}   
+\frac{6\qq }{t}  \right)\Vert \hat{f}\Vert_{L^2_{\pv}}^2(t,k)+ \real{\irnp{\linL\hat{f}}{\hat{f}}}
= \real{\irnp{\hat{\Gamma}(\hat{f},\hat{f})}{\hat{f}}}.
\end{align*}
From \eqref{timeDeriv.pq} this becomes
\begin{align}\label{eqn.complex}
    \frac{1}{2}\frac{d}{dt}  \left( \tQQq
\Vert \hat{f}\Vert_{L^2_{\pv}}^2(t,k)  \right)+ \tQQq\real{\irnp{\linL\hat{f}}{\hat{f}}}= \tQQq\real{\irnp{\hat{\Gamma}(\hat{f},\hat{f})}{\hat{f}}}.
\end{align}
Then integrating in time on the interval $[1,t]$, we have
\begin{align}\label{pro.tpm.p1}
\frac{1}{2} \tQQq \Vert \hat{f}\Vert_{L^2_{\pv}}^2(t,k)+\int^t_1  \tau^{6\qq}\real{\irnp{\linL\hat{f}}{\hat{f}}}d\tau=&~\frac{1}{2}\Vert \hat{f}_1(k,\cdot)\Vert_{L^2_{\pv}}^2
\\ \notag
&~+\int^t_1 \tau^{6\qq}\real{\irnp{\hat{\Gamma}(\hat{f},\hat{f})}{\hat{f}}}
d\tau.
\end{align}
By the coercivity estimate of $\linL$ in Proposition \ref{lowerN}, there is $\de_0>0$ such that
\begin{equation*}
\de_0\Vert \micro \hat{f} \Vert_{L^2_{\pv}}^2(t,k)\le \tTq\real{\irnp{\linL\hat{f}}{\hat{f}}}(t,k).
\end{equation*}
Thus it follows from \eqref{pro.tpm.p1} that
\begin{align}\notag
\frac{1}{2} \tQQq \Vert \hat{f}\Vert_{L^2_{\pv}}^2(t,k)+\de_0\int^t_1  \tau^{3\qq}\Vert \micro \hat{f} \Vert_{L^2_{\pv}}^2(\tau,k)d\tau
\leq &~\frac{1}{2}\Vert \hat{f}_1\Vert_{L^2_{\pv}}^2(k)
\\ \notag
&~+\int^t_1 \tau^{6\qq}\real{\irnp{\hat{\Gamma}(\hat{f},\hat{f})}{\hat{f}}}
d\tau.
\end{align}
Take the square root on both sides, use \eqref{sqrt.ineq}, and then we further have
\begin{multline*}
  \frac{1}{\sqrt{2}} \tTq \Vert \hat{f}\Vert_{L^2_{\pv}}(t,k)
+
\sqrt{\de_0}
\left(\int^t_1  \tau^{3\qq}\Vert \micro \hat{f} \Vert_{L^2_{\pv}}^2(\tau,k)d\tau \right)^{\frac12}
\leq \frac{1}{2}\Vert \hat{f}_1\Vert_{L^2_{\pv}}(k)
\\ 
+
\sqrt{2}\left(\int^t_1 \tau^{6\qq} \left|\real{\irnp{\hat{\Gamma}(\hat{f},\hat{f})}{\hat{f}}}\right|
d\tau \right)^{\frac12}.  
\end{multline*}
We have derived, for all $0\leq t\leq T$ and any $k\in \Z^3$, that 
\begin{align}\notag
  \tTq \Vert \hat{f}\Vert_{L^2_{\pv}}(t,k)
&+
\left(\int^t_1  \tau^{3\qq}\Vert \micro \hat{f} \Vert_{L^2_{\pv}}^2(\tau,k)d\tau \right)^{\frac12}
\\
&\le C_0\left\{ \Vert \hat{f}_1\Vert_{L^2_{\pv}}(k) 
+
\left(\int^t_1 \tau^{6\qq} \left|{\irnp{\hat{\Gamma}(\hat{f},\hat{f})}{\hat{f}}}\right|
d\tau \right)^{\frac12}
\right\},
\label{v-int}
\end{align}
with $C_0=\max\{2,\sqrt{2/\de_0}\}>0$.  By the following well-known identity,
\begin{align*}
\langle  \Gamma(f,f),h\rangle
=
\langle  \Gamma(f,f), \micro h\rangle,
\end{align*}
it follows directly that
\begin{equation}\notag 
\irnp{\hat{\Gamma}(\hat{f},\hat{f})}{\hat{f}}=\irnp{\hat{\Gamma}(\hat{f},\hat{f})}{\micro\hat{f}}.
\end{equation}
Then, taking $\sup_{0\le t\le T}$ on both sides of \eqref{v-int} and integrating the resulting inequality with respect to $d\Sigma(k)$ over $\Z^3$, we have
\begin{align}
&\int_{\Z^3} \sup_{0\le t\le T} \left(\tTq \Vert \hat{f}(t,k)\Vert_{L^2_{\pv}} \right) d\Sigma(k) + 
\int_{\Z^3} \left(\int^T_1  \tau^{3\qq}\Vert \micro f (\tau,k) \Vert_{L^2_{\pv}}^2d\tau \right)^{\frac12}
d\Sigma(k)
\notag\\
&\le C_0\left\{ \Vert \hat{f}_1\Vert_{L^1_kL^2_{\pv}}
+
\int_{\Z^3} 
\left(\int^T_1 \tau^{6\qq} \left|{\irnp{\hat{\Gamma}(\hat{f},\hat{f})}{\micro\hat{f}}}\right|
d\tau \right)^{\frac12}
d\Sigma(k)\right\}.
\label{pro.tpm.p2}
\end{align}
We can estimate the last term on the right-hand side of \eqref{pro.tpm.p2} as
\begin{multline}
\int_{\Z^3} \left(\int^T_1 \tQQq\left|\irnp{\hat{\Gamma}(\hat{f},\hat{f})}{{\micro\hat{f}}}\right|dt\right)^{\frac12}d\Sigma(k)\\
\lesssim \int_{\Z^3}\left(\int^T_1 \tQQq \norm{{\micro\hat{f}}(k)}_{L^2_{\pv}}\int_{\Z^3}\norm{\hat{f}(k-l)}_{L^2_{\pv}}\norm{\hat{f}(l)}_{L^2_{\pv}} d\Sigma(l)dt\right)^{\frac12}d\Sigma(k).
\label{pro.tpm.p3}
\end{multline}
Then \eqref{pro.tpm.p3} is a direct consequence of Lemma \ref{lem: Boltzmann nonlinear} that is stated and proven below.

We now estimate the upper bound in \eqref{pro.tpm.p3}. Applying the Cauchy--Schwarz inequality to $\int_{1}^{T}(\cdot)dt$, we obtain
\begin{align}
&\mathcal{D}\eqdef\int_{\Z^3_k}\left(\int^{T}_{1} \tQQq\norm{\micro\hat{f}(t,k)}_{L^2_{\pv}}\int_{\Z^3_l}\Vert \hat{f}(t,k-l)\Vert_{L^2_{\pv}} \norm{\hat{f}(t,l)}_{L^2_{\pv}}  d\Sigma(l)dt\right)^{\frac12}d\Sigma(k)
\notag
\\
&\leq \int_{\Z^3_k} \left(\int^{T}_{1} \tQQq \tTq\left(\int_{\Z^3_l} \Vert \hat{f}(t,k-l)\Vert_{L^2_{\pv}} \norm{\hat{f}(t,l)}_{L^2_{\pv}}d\Sigma(l)\right)^2dt\right)^{1/4}
\notag
\\
&\qquad\qquad\qquad\times \left(\int^{T}_{1} \tTq\norm{\micro\hat{f}(t,k)}_{L^2_{\pv}}^2 dt\right)^{1/4} d\Sigma(k),
\notag
\end{align}
and then  use Young's inequality with  small constant $\eta>0$ to observe that
\begin{align}
&\mathcal{D}\leq \eta\int_{\Z^3_k}\left(\int^{T}_{1} \tTq\norm{\micro\hat{f}(t,k)}_{L^2_{\pv}}^2 dt\right)^{\frac12}d\Sigma(k)
\label{pro.tpm.p4}
\\ \notag
&\qquad+\frac{1}{4\eta} \int_{\Z^3_k}\left(\int^{T}_{1} \tQQq \tTq\left(\int_{\Z^3_l} \Vert \hat{f}(t,k-l)\Vert_{L^2_{\pv}} \norm{\hat{f}(t,l)}_{L^2_{\pv}}d\Sigma(l)\right)^2dt\right)^{\frac12} d\Sigma(k).
\end{align}
We now recall the following version of Minkowski's inequality 
\begin{equation}\label{minkowski.ineq}
\big\|\|\cdot\|_{L^1_l}\big\|_{L^2_t}\leq \big\|\|\cdot\|_{L^2_t}\big\|_{L^1_l}.
\end{equation}
For the second term on the right-hand side of \eqref{pro.tpm.p4},
we use \eqref{minkowski.ineq} to obtain
\begin{multline*}
\left(\int^{T}_{1} \tQQq \tTq\left(\int_{\Z^3_l} \Vert \hat{f}(t,k-l)\Vert_{L^2_{\pv}} \norm{\hat{f}(t,l)}_{L^2_{\pv}}d\Sigma(l)\right)^2dt\right)^{\frac12}
\\
\leq
\int_{\Z^3_l} \left(\int^{T}_{1} \tQQq \Vert \hat{f}(t,k-l)\Vert_{L^2_{\pv}}^2 \tTq\norm{\hat{f}(t,l)}_{L^2_{\pv}}^2 dt\right)^{\frac12}d\Sigma(l)
\\
\leq 
\int_{\Z^3_k} \int_{\Z^3_l} \sup_{0\le \tau\le T} \left( \tau^{3\qq}\Vert \hat{f}(\tau,k-l) \Vert_{L^2_{\pv}}\right) \left(\int^{T}_{1} \tTq\norm{\hat{f}(t,l)}_{L^2_{\pv}}^2 dt\right)^{\frac12}d\Sigma(l) d\Sigma(k).
\end{multline*}
Then Fubini's theorem and translation invariance imply that
\begin{multline*}
    \int_{\Z^3_k} \int_{\Z^3_l} \sup_{0\le \tau\le T} \left( \tau^{3\qq}\Vert \hat{f}(\tau,k-l) \Vert_{L^2_{\pv}}\right) \left(\int^{T}_{1} \tTq\norm{\hat{f}(t,l)}_{L^2_{\pv}}^2 dt\right)^{\frac12}d\Sigma(l) d\Sigma(k)
\\
= \int_{\Z^3_l}  d\Sigma(l)\left(\int^{T}_{1} \tTq\norm{\hat{f}(t,l)}_{L^2_{\pv}}^2 dt\right)^{\frac12} \int_{\Z^3_k}d\Sigma(k) \sup_{0\le \tau\le T} \left( \tau^{3\qq}\Vert \hat{f}(\tau,k-l) \Vert_{L^2_{\pv}}\right) 
\\
=\Vert \tTq{f}\Vert_{L^1_kL^\infty_TL^2_{\pv}}
\Vert t^{3\qq/2} {f}\Vert_{L^1_kL^2_TL^2_{\pv}}.
\end{multline*}
Applying the estimates above to the second term on the right side of \eqref{pro.tpm.p4} and further using \eqref{pro.tpm.p2} and \eqref{pro.tpm.p3} leads to the desired estimate \eqref{est.fRHS} in the statement of the proposition.
\end{proof}

We now state and prove the following lemma.

\begin{lemma}\label{lem: Boltzmann nonlinear}
For all \(f\), \(h\), and \(\eta\), the following uniform estimate holds,
\begin{align}\label{ineq: Boltzmann nonlinear}
\left| \irnp{\widehat{\Gamma}(\hat{f},\hat{h})(k)}{\hat{\eta}(k)}\right|\lesssim \norm{\hat{\eta}(k)}_{L^2_{\pv}}\int_{\Z^3}\Vert \hat{f}(k-l)\Vert_{L^2_{\pv}} \norm{\hat{h}(l)}_{L^2_{\pv}}  \,d\Sigma(l).
\end{align}
\end{lemma}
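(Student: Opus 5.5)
The plan is to reduce the Fourier-side estimate to the momentum trilinear bound of Lemma \ref{thm1} with $m=0$, applied pointwise in the discrete convolution variable. First I expand $\irnp{\widehat{\Gamma}(\hat{f},\hat{h})(k)}{\hat{\eta}(k)}$ using \eqref{def.GaF}. Since the convolution sum over $l\in\Z^3$ commutes with the $(p,q,\omega)$ integrals (justified by Fubini once absolute integrability is checked, which the bounds below provide), this gives
\begin{equation*}
\irnp{\widehat{\Gamma}(\hat{f},\hat{h})(k)}{\hat{\eta}(k)}=\int_{\Z^3}\irnp{\Gamma\big(\hat{f}(k-l,\cdot),\hat{h}(l,\cdot)\big)}{\hat{\eta}(k,\cdot)}\,d\Sigma(l),
\end{equation*}
where $\Gamma$ is the momentum operator \eqref{gamma0} acting on the complex-valued functions $\hat f(k-l,\cdot)$ and $\hat h(l,\cdot)$ of $p$. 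This uses only that $\Gamma$ is bilinear and that the collision map acts on $p$ alone, so the Fourier variable is just a parameter.

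Next I will check that Lemma \ref{thm1} holds for complex-valued inputs. Its proof splits into $\Gamma_1$ (gain) and $\Gamma_2$ (loss) and bounds each with absolute values inside the integrals: Cauchy--Schwarz in $q$ against $\sqrt{J(\tqq q)}$ for $\Gamma_2$, and Cauchy--Schwarz plus the pre-post change of variables with Jacobian \eqref{Jacobian.COV} for $\Gamma_1$. These steps only involve $|f|,|h|,|\eta|$, so they go through verbatim for complex functions. With $m=0$, so that $w_0^2=2$, this gives
\begin{equation*}
\left|\irnp{\Gamma\big(\hat{f}(k-l),\hat{h}(l)\big)}{\hat{\eta}(k)}\right|\lesssim \norm{\hat f(k-l)}_{L^2_p}\norm{\hat h(l)}_{L^2_p}\norm{\hat\eta(k)}_{L^2_p},
\end{equation*}
with a constant uniform in $t\ge1$. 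The time weights $\tTq$ cancel against the $\tmTq$ produced by the $\sqrt{J}$ integrals, exactly as in that proof.

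Finally I sum this bound over $l$ and pull $\norm{\hat\eta(k)}_{L^2_p}$ outside, which gives \eqref{ineq: Boltzmann nonlinear}. The main point needing care is the interchange of the $l$-sum with the momentum integrals, particularly in the gain term, where the arguments are $(p',q')$. I would first bound the absolute integrand by running the same Cauchy--Schwarz and change-of-variables argument on $\sum_l|\hat f(k-l,p')||\hat h(l,q')|$. The resulting quantity is finite whenever the right side of \eqref{ineq: Boltzmann nonlinear} is finite, and then Fubini--Tonelli justifies the exchange.
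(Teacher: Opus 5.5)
Your proposal is correct and follows the paper's proof. The paper also uses Fubini to write the pairing as $\int_{\Z^3}\irnp{\Gamma(\hat f(k-l),\hat h(l))}{\hat\eta(k)}\,d\Sigma(l)$, applies the triangle inequality, and bounds each term by Lemma \ref{thm1}; your extra remarks on complex-valued inputs and on justifying the interchange via Tonelli on the absolute integrand are sound, and they make explicit steps the paper leaves implicit.
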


Notice that \eqref{ineq: Boltzmann nonlinear} directly implies \eqref{pro.tpm.p3}.

\begin{proof}
By the definition \eqref{def.GaF} of $\hat{\Ga} (\cdot,\cdot)$ and Fubini's theorem, we obtain
\begin{align}\label{fourier.convolution.nonlin}
\irnp{\hat{\Gamma}(\hat{f},\hat{h})(k)}{\hat{\eta}(k)}
=
\int_{\Z^3} \irnp{ \Gamma(\hat{f}(k-l),\hat{h}(l))}{\hat{\eta}(k)}  d\Sigma(l).
\end{align}
This calculation is performed in detail in \cite[Lemma 3.2]{1904.12086}.  In our case, we have
\begin{multline*}
    \irnp{\hat{\Gamma}(\hat{f},\hat{h})(k)}{\hat{\eta}(k)}
\\
=
t^{3\qq}\int_{\mathbb{R}^3}   d\pv \int_{\mathbb{R}^3}   dq 
\int_{\mathbb{S}^{2}}  d\omega
~ v_{\Emptyset}  
 \sqrt{J(\tqq\qv)}
\left\{[\hat{f}(\pv')*\hat{h}(\qv')](k)-[\hat{f}(\pv)*\hat{h}(\qv)](k)\right\}\bar{\hat{\eta}}(\pv,k) 
\\
=
t^{3\qq}\int_{\mathbb{R}^3} ~  d\pv \int_{\mathbb{R}^3} ~  dq 
\int_{\mathbb{S}^{2}} ~ d\omega
~ v_{\Emptyset} ~  
 \sqrt{J(\tqq\qv)}
\\
\times\int_{\Z^3} \left\{\hat{f}(k-l,\pv')\hat{h}(l,\qv')-\hat{f}(k-l,\pv)\hat{h}(l,\qv)\right\} \bar{\hat{\eta}}(\pv,k) d\Sigma(l) 
\\
=
\int_{\Z^3} \irnp{\Gamma(\hat{f}(k-l),\hat{h}(l))}{\hat{\eta}(k)}  d\Sigma(l).
\end{multline*}
We conclude
\begin{equation}
\label{lem.ptri.p1}
\left\vert \irnp{\hat{\Gamma}(\hat{f},\hat{h})(k)}{\hat{\eta}(k)}\right\vert\le \int_{\Z^3} \left\vert \irnp{\Gamma(\hat{f}(k-l),\hat{h}(l))}{\hat{\eta}(k)}\right\vert d\Sigma(l).
\end{equation}
Then, by Lemma \ref{thm1} we have 
\begin{equation}\label{ad.vwte}
\left|\irnp{\Gamma(\hat{f}(k-l),\hat{h}(l))}{\hat{\eta}(k)} \right| 
\lesssim \Vert \hat{f}(k-l)\Vert_{L^2_{\pv}} \norm{\hat{h}(l)}_{L^2_{\pv}}  \norm{\hat{\eta}(k)}_{L^2_{\pv}}.
\end{equation}
With \eqref{ad.vwte} above, the desired estimate \eqref{ineq: Boltzmann nonlinear} follows from \eqref{lem.ptri.p1}.  \end{proof}

\section{ Global stability of the Maxwell--J\"uttner equilibrium}\label{sectproofsmainresults}

In this section, we will prove Theorem \ref{Torus Existence}. In Section \ref{subsecexistuniq}, we first show the local well-posedness result Theorem \ref{Torus Existence} is based on. In Section \ref{subsglbalexistence}, we show global existence and the uniform decay estimate \eqref{subsexpnonexp} in Theorem \ref{Torus Existence}. In Section \ref{subfastdecay}, we prove the faster decay estimates in the case when \(\qq\in (0,1/3]\). In Section \ref{subsexpnonexp}, we prove exponential decay in the nonexpanding case \(\qq=0\). In Section \ref{rapid.decay.solution}, we prove time-decay for weighted energy norms. Finally, in Section \ref{sec.spatial.regularity.solution}, we prove propagation of spatial regularity.

\subsection{Local existence and uniqueness around the equilibrium}\label{subsecexistuniq}

We prove here local existence and uniqueness of solutions for the massless Boltzmann equation on FLRW spacetime \eqref{rBoltz00} around the Maxwell--J\"uttner equilibrium.

\begin{theorem}[Local existence and uniqueness around the equilibrium]\label{local.thm}
Let \(\qq\in [0,1]\). Under the assumptions of Theorem \ref{Torus Existence}, there is a small $\epsilon_0>0$, a $T_0=T_0(\epsilon_0)>0$ and some  $C_0=C_0(T_0)>0$ such that if 
\begin{equation}\label{initial.small.local}
\| f_{1}\|_{L_{k}^1L^2_{\pv}}\leq\epsilon_0,
\end{equation}
then the massless Boltzmann equation on FLRW spacetime around the equilibrium, \eqref{rBoltz00},
 admits a unique local-in-time solution
$$
f(t,x,\pv),~~ \ 0\leq t\leq T_0<\infty, ~~\ x\in  \T^3,~~\ \pv\in \R^3,
$$
which satisfies the  uniform estimate
\begin{equation}\label{loces}
\| \tTq f\|_{L^1_{{k}}L^\infty_{T_0}L^2_{\pv}}+\| t^{3\qq/2} f\|_{L^1_{{k}}L^2_{T_0}L^2_{\pv}}
\leq C_{0}
\|f_1\|_{L_{{k}}^1L^2_{\pv}}.
\end{equation}
If additionally 
$F_1(x,\pv)=J(\pv)+J^{\frac{1}{2}}(\pv)f_1(x,\pv)\geq0$,
then the solution satisfies
$$
F(t,{x},\pv)=J(t^{2\qq}p) +\sqrt{J(t^{2\qq}p) }f(t,{x},\pv)\geq0.
$$
\end{theorem}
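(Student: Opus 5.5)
We plan to construct the local solution by a standard iteration in the space $X_{T_0}$ defined by the norm $\|\tTq f\|_{L^1_kL^\infty_{T_0}L^2_p}+\|t^{3\qq/2}f\|_{L^1_kL^2_{T_0}L^2_p}$, working at the level of the original unknown $F=J+\sqrt{J}f$ so that positivity can be propagated. Following \cite{1904.12086} and \cite{MR1211782}, we use the monotone iteration written in gain/loss form. Set $F^0=J(\tqq p)$, and define $F^{n+1}$ as the solution of the linear problem
\begin{equation*}
\Big(\partial_t+\frac{p^i}{|\tq p|}\partial_{x^i}-\frac{2\qq}{t}p^i\partial_{p^i}\Big)F^{n+1}+R(F^n)F^{n+1}=Q_{+}(F^n,F^n),\qquad F^{n+1}(1)=F_1,
\end{equation*}
where $R(F)=\tTq\int dq\int d\omega\, v_{\Emptyset}F(q)$ is the loss frequency and $Q_+$ is the gain part of \eqref{collision.CM.massless}. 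This equation is solved along the characteristics of the transport operator. Those characteristics are explicit: $\tqq p$ is constant by \eqref{nullRHStp}, and $x$ moves with speed $t^{-\qq}p/|p|$. The solution is given by a Duhamel formula with integrating factor $\exp(-\int R)$. Consequently, if $F_1\ge0$ and $F^n\ge 0$, then $F^{n+1}\ge0$, which yields the positivity statement in the limit.

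Next we rewrite the iteration in terms of $f^{n+1}=(F^{n+1}-J)/\sqrt J$. The equation becomes
\begin{equation*}
(\partial_t+\dots)f^{n+1}+\nu f^{n+1}-\compK f^{n}=\Gamma_+(f^n,f^n)-\Gamma_-(f^n,f^{n+1}).
\end{equation*}
We then perform the energy estimate of Proposition~\ref{Torus Boltzmann Micro} on the Fourier side. This means multiplying by $\overline{\hat f^{n+1}}$, using \eqref{timeDeriv.pq} to produce the weight $\tQQq$, and integrating over $[1,t]$. On a short interval we do not need coercivity. We keep $\nu=\nu_0t^{-3\qq}$ as a good term. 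The $\compK$ term is bounded using $\tTq\|\compK\|_{L^2_p\to L^2_p}\lesssim1$, which follows from Proposition~\ref{noKestimate}. Its contribution is at most $\int_1^{T_0}t^{-3\qq}\|\tTq\hat f^n\|\|\tTq\hat f^{n+1}\|\,dt\lesssim (T_0-1)\,\|\cdot\|_{L^\infty_T}^2$, so it is small for small $T_0-1$. The nonlinear terms are handled by Lemma~\ref{lem: Boltzmann nonlinear}, applied to $\Gamma_\pm$ separately. This is valid because the proof of Lemma~\ref{thm1} bounds $\Gamma_1$ and $\Gamma_2$ separately. We then apply Minkowski's inequality \eqref{minkowski.ineq} and translation invariance in $k$ exactly as in the proof of Proposition~\ref{Torus Boltzmann Micro}. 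The result is
\begin{equation*}
\|f^{n+1}\|_{X_{T_0}}\le C\|f_1\|_{L^1_kL^2_p}+C(T_0-1)\|f^n\|_{X_{T_0}}+C\|f^n\|_{X_{T_0}}\big(\|f^n\|_{X_{T_0}}+\|f^{n+1}\|_{X_{T_0}}\big).
\end{equation*}
We choose $T_0$ close to $1$ and $\epsilon_0$ small. By induction this gives the uniform bound $\|f^n\|_{X_{T_0}}\le 2C\|f_1\|$. Note that the dissipation part of the norm comes for free on a bounded interval, since $\|t^{3\qq/2}f\|_{L^2_{T_0}}\lesssim (T_0-1)^{1/2}\|\tTq f\|_{L^\infty_{T_0}}$ after the $k$-integration.

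We then estimate differences $f^{n+1}-f^n$ in the same way. The equation for the difference has the same linear structure, with bilinear right-hand sides in which one slot is a difference. The same estimates give a contraction factor $C(T_0-1+\epsilon_0)<1/2$. Hence $f^n\to f$ in $X_{T_0}$, and $f$ is a mild solution of \eqref{rBoltz00} satisfying \eqref{loces}. Uniqueness follows by running the difference estimate on two solutions with small norm. Positivity passes to the limit because $F^n\ge0$ and the convergence is in $L^2_p$ for a.e.\ $x$, after extracting a subsequence.

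The step we expect to be the main obstacle is making the Duhamel/positivity formulation compatible with the $L^1_k$ Wiener-algebra energy estimate. Mild solutions are defined along the characteristics in $x$, but the estimates are done in Fourier variables. We plan to justify the energy identity by first working with data truncated to finitely many Fourier modes and compact momentum support, and then passing to the limit using the uniform bounds. A second delicate point is that the loss term $\Gamma_-(f^n,f^{n+1})$ is linear in the unknown and must be absorbed, which uses smallness of $\|f^n\|_{X_{T_0}}$ rather than smallness of $T_0-1$.
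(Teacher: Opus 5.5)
Your proposal is correct and follows essentially the paper's proof: a Fourier-side energy estimate with $\nu_0t^{-3\qq}$ kept as a damping term, $\compK$ treated as a source that is small on a short time interval via Proposition~\ref{noKestimate}, the nonlinear terms controlled by Lemmas~\ref{thm1} and~\ref{lem: Boltzmann nonlinear}, uniqueness from the difference estimate, and positivity from the gain/loss scheme solved by Duhamel along the explicit characteristics. The only difference is organizational: the paper first builds the solution with the scheme $(\partial_t+\cdots+\nu_0t^{-3\qq})f^{n+1}=\Gamma(f^{n+1},f^n)+\compK f^n$, and uses the gain/loss scheme (which reads $\Gamma_+(f^n,f^n)-\Gamma_-(f^{n+1},f^n)+\compK f^n$ in the paper's argument order, not $\Gamma_-(f^n,f^{n+1})$) only to show, via uniqueness, that the limit is nonnegative; you run the gain/loss scheme throughout, which is a harmless streamlining.
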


To prove Theorem \ref{local.thm}, we start from the following linear inhomogeneous problem
\begin{equation}\label{lc.lb}
\begin{split}
        \partial_t h+ \frac{\pv^i}{|\tq \pv|}\partial_{x^i} h
    -\frac{2\qq }{t} p^i \partial_{p^i} h
    + t^{-3\qq} \nu_0 h
    -\Gamma(h,\hh)
= &~
\compK (\hh),
\\ 
h(t=1,x,\pv)
= &~
h_1(x,\pv),
\end{split}
\end{equation}
where $\hh=\hh(t,x,\pv)$ is a given function.  The solvability of \eqref{lc.lb} is guaranteed by the following lemma.

\begin{lemma}\label{lb.loc}
Let \(h_{1}\in L^1_{k}L^2_{\pv}\) be an initial datum. We suppose that for some $\epsilon_0>0$, $T_0>0$ and $T\in (0,T_0]$ the given function ${\hh}$ satisfies
\begin{equation}
\label{lb.loc.a2}
\| \tTq {\hh}\|_{ L^1_{k}L^\infty_{T}L^2_{\pv}}+\|t^{3\qq/2}{\hh}\|_{L^1_{k}L^2_{T}L^2_{\pv}}\leq \epsilon_0.
\end{equation}
Then, for $\epsilon_0>0$ in \eqref{lb.loc.a2} small enough, there exists $T_0>0$ and $C_1=C_1(T_0)>0$ such that for any $T\in (0,T_0]$ the initial value problem \eqref{lc.lb} admits a unique weak solution $h(t,x,\pv)$ for $0\leq t\leq T$, $x\in  \T^3$ and $\pv\in \R^3$ satisfying
\begin{align}\notag 
    \| \tTq h\|_{L^1_{k}L^\infty_{T}L^2_{\pv}}+\| t^{3\qq/2}  h\|_{ L^1_{k}L^2_{T}L^2_{\pv}}
\leq C_0\left(\|h_{1}\|_{L^1_{k}L^2_{\pv}}+\sqrt{T}\|{\hh}\|_{ L^1_{k}L^2_{T}L^2_{\pv}}\right).
\end{align}
\end{lemma}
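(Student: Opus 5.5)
We will prove Lemma \ref{lb.loc} by an energy estimate on the Fourier side, combined with a fixed point (or Galerkin/iteration) construction for the linear problem \eqref{lc.lb}, in which $h$ appears linearly through $\nu_0 t^{-3\qq}h$ and $-\Gamma(h,\hh)$.

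\emph{Step 1: a priori estimate.} Taking the Fourier transform in $x$ of \eqref{lc.lb}, we get
\begin{equation*}
\Big(\partial_t+i\frac{\pv\cdot k}{|\tq\pv|}-\frac{2\qq}{t}p^i\partial_{p^i}\Big)\hat h+\nu_0t^{-3\qq}\hat h-\widehat\Gamma(\hat h,\hat\hh)=\widehat{\compK\hh}.
\end{equation*}
Pairing with $\bar{\hat h}$, taking real parts and using \eqref{timeDeriv.pq} exactly as in the derivation of \eqref{eqn.complex}, we obtain
\begin{equation*}
\tfrac12\tfrac{d}{dt}\big(\tQQq\|\hat h\|^2_{L^2_p}\big)+\nu_0\tTq\|\hat h\|^2_{L^2_p}=\tQQq\real{\irnp{\widehat\Gamma(\hat h,\hat\hh)+\widehat{\compK\hh}}{\hat h}}.
\end{equation*}
The $\nu_0$ term directly yields the dissipation $\int_1^t\tau^{3\qq}\|\hat h\|^2\,d\tau$, with no coercivity or macroscopic analysis needed. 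For the $\compK$ term we use Proposition \ref{noKestimate} with $m=0$, which gives $\tTq|\langle\compK g,h\rangle|\lesssim\|g\|\|h\|$. Hence this contribution is bounded by $\int\tau^{3\qq}\|\hat\hh\|\|\hat h\|\,d\tau$, and by Cauchy--Schwarz in time it is at most $\sqrt T\,\sup(\tau^{3\qq}\|\hat h\|)\,(\int_1^T\|\hat\hh\|^2)^{1/2}$, up to powers of $T_0$ absorbed into $C_0$ since $T\le T_0$. For the nonlinear term, Lemma \ref{lem: Boltzmann nonlinear} gives $\tQQq|\langle\widehat\Gamma(\hat h,\hat\hh),\hat h\rangle|\lesssim\tQQq\|\hat h(k)\|\int\|\hat h(k-l)\|\|\hat\hh(l)\|\,d\Sigma(l)$.

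\emph{Step 2: summation in $k$.} After taking square roots and $\sup_t$, we integrate in $d\Sigma(k)$ and argue as in \eqref{pro.tpm.p4}--\eqref{minkowski.ineq}, using Young's inequality, Minkowski's inequality and translation invariance. This bounds the $\Gamma$ contribution by $\eta\|t^{3\qq/2}h\|_{L^1_kL^2_TL^2_p}+C_\eta\|\tTq h\|_{L^1_kL^\infty_TL^2_p}\|t^{3\qq/2}\hh\|_{L^1_kL^2_TL^2_p}$. By \eqref{lb.loc.a2}, the last factor is at most $\epsilon_0$, so for $\epsilon_0$ small both terms are absorbed into the left side. This yields the claimed bound with the $\sqrt T\|\hh\|$ term on the right.

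\emph{Step 3: existence and uniqueness.} We construct the solution by iteration, $h^{n+1}$ solving the transport-damping equation with source $\Gamma(h^n,\hh)+\compK\hh$. Each step is explicit along characteristics, since $h(t,x+\ldots,t^{-2\qq}\cdot)$ is transported. Applying the same estimate to differences shows contraction because of the $\epsilon_0$ smallness. Uniqueness follows from the estimate applied to the difference of two solutions with $h_1=0$ and no $\hh$ source.

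The main obstacle is the nonlinear term $\Gamma(h,\hh)$ in the Chemin--Lerner norm, which requires the convolution and Minkowski bookkeeping. It is handled exactly as in Proposition \ref{Torus Boltzmann Micro}; here it is simpler, since absorption uses the full dissipation provided by $\nu_0$ rather than only the microscopic part.
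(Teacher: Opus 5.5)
Your proposal is correct. The paper omits this proof as ``completely standard,'' and your argument is that standard argument. You run the Fourier-side energy estimate with the full $\nu_0 t^{-3\qq}$ damping providing the dissipation, so no coercivity or macroscopic analysis is needed. Proposition \ref{noKestimate} with Cauchy--Schwarz in time yields the $\sqrt{T}\|\hh\|_{L^1_kL^2_TL^2_{\pv}}$ term. The Young/Minkowski bookkeeping of Proposition \ref{Torus Boltzmann Micro} then absorbs $\Gamma(h,\hh)$ using the smallness of $\hh$. You finish with an iteration along characteristics that contracts because of that same smallness, which uses the paper's tools exactly as in its proof of Theorem \ref{local.thm}.
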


Regarding the local-in-time existence and uniqueness of solutions to the linear equation \eqref{lc.lb}, the proof of Lemma \ref{lb.loc} is completely standard and omitted.

\begin{proof}[Proof of Theorem \ref{local.thm}.]  We will first build a sequence of approximate solutions denoted by $(f^n(t,x,\pv))_{n=0}^\infty$ for the problem
\eqref{rBoltz00} with initial data \eqref{idff}.  For $n=0,1,2,\ldots$, we use the following iterative scheme 
\begin{align}\notag
        \partial_t f^{n+1}+ \frac{\pv^i}{|\tq \pv|}\partial_{x^i} f^{n+1}
    -\frac{2\qq }{t} p^i \partial_{p^i} f^{n+1}
    + t^{-3\qq} \nu_0 f^{n+1}
= &~
\Gamma(f^{n+1},f^{n})
+
\compK (f^{n}),
\\  \label{approx.solution}
f^{0}(t,x,\pv)
= 
f_1(x,\pv), \qquad 
f^{n+1}(t=1,x,\pv)
= &~
f_1(x,\pv).
\end{align}
Then, by using Lemma \ref{lb.loc}, and the a priori estimates in \secref{sectcollisionop}, it is a standard  procedure to apply an induction argument to show that there are $\epsilon_0>0$ and $T_0>0$ such that if \eqref{initial.small.local} holds then the approximation sequence $(f^n(t,x,\pv))_{n=0}^\infty$ is well-defined in the Banach space $L^1_{k}L^\infty_{T}L^2_{\pv}\cap L^1_{k}L^2_{T}L^2_{\pv}$ for any $1<T\leq T_0$.    

 We will prove that for a short time, $T>1$, the iterated solutions to \eqref{approx.solution} satisfy the following uniform estimate
    \begin{equation}\label{induction hypo}
        \sup_{n\ge 0}
        \left( \| \tTq f^{n}\|_{L^1_{{k}}L^\infty_{T}L^2_{\pv}}+\| t^{3\qq/2} f^{n}\|_{L^1_{{k}}L^2_{T}L^2_{\pv}}\right)
        \le 4C_1\|f_1\|_{L^1_{{k}}L^2_{\pv}},
\end{equation}
where $C_1>0$ is a uniform constant.

Fix an integer $m\ge 1$ and suppose that for some small $T-1> 0$, that
\eqref{induction hypo} holds for 
all $n\in\{0,1,2,3, \ldots,m\}$ for some constant $C_1>0$ that is uniform in $m$.  Then it is sufficient to prove that \eqref{induction hypo} holds for $n=m+1$ by induction.  We mention that \eqref{induction hypo} holds for $n=0$ with $f^0=f_1$ as defined in \eqref{approx.solution}.

Then, for the iterative problem \eqref{approx.solution}, following the proof of Proposition \ref{Torus Boltzmann Micro}, 
\begin{multline*}
    \norm{\tTq f^{n+1}}_{L^1_kL^\infty_TL^2_{\pv}}
+
{\norm{t^{3\qq/2} f^{n+1}}_{L^1_kL^2_TL^2_{\pv}}}
\\
\leq
C_1\Vert {f}_1\Vert_{L^1_kL^2_{\pv}}
+
C_2{\Vert{\tTq f^{n+1}}\Vert_{L^1_kL^\infty_TL^2_{\pv}}} {\norm{t^{3\qq/2} f^{n}}_{L^1_kL^2_TL^2_{\pv}}}
\\
+
C_2
\int_{\Z^3_k} \left(\int^T_1 \tau^{6\qq} \left|{\irnp{\compK(\hat{f}^{n})}{\hat{f}^{n+1}}}(\tau, k)\right|
d\tau \right)^{\frac12} d\Sigma(k).
\end{multline*}
Above $C_2>0$ and the constant $C_1>0$ in \eqref{induction hypo} is given by
\begin{equation*}
    C_1\eqdef 2\max\big\{2,\sqrt{2/\nu_0}\big\}>0.
\end{equation*}
For the last term, from Proposition \ref{noKestimate}, we have 
\begin{equation*}
     \tau^{6\qq} \left|\irnp{\compK (\widehat{f}^{n})}{\widehat{f}^{n+1}}\right|
\lesssim
\tau^{3\qq}
\norm{\widehat{f}^{n}}_{L^2_{\pv}}(\tau)
\norm{\widehat{f}^{n+1} }_{L^2_{\pv}}(\tau).
\end{equation*}
We conclude that 
\begin{equation*}
        \int^{T}_1 \tau^{6\qq} 
\left|\irnp{\compK (\widehat{f}^{n})}{\widehat{f}^{n+1}}(\tau, k)\right|
d\tau
\lesssim
(T-1) \norm{ \widehat{f}^{n}(k) }_{L^\infty_{T}L^2_{\pv}}
\norm{t^{3\qq} \widehat{f}^{n+1}(k)}_{L^\infty_{T}L^2_{\pv}}.
\end{equation*}
Thus for any small $\delta>0$ we have 
\begin{multline*}
\int_{\Z^3_k} \left(\int^T_1 \tau^{6\qq} \left|{\irnp{\compK(\hat{f}^{n})}{\hat{f}^{n+1}}}(\tau, k)\right|
d\tau \right)^{\frac12} d\Sigma(k)
\\
\leq
C (T-1)^{\frac{1}{2}} \norm{ {f}^{n} }_{L^1_k L^\infty_{T}L^2_{\pv}}^{\frac{1}{2}}
\norm{t^{3\qq} {f}^{n+1}}_{L^1_k L^\infty_{T}L^2_{\pv}}^{\frac{1}{2}}
\\
\leq
 C_\delta (T-1) \norm{ {f}^{n} }_{L^1_k L^\infty_{T}L^2_{\pv}}
 +
 \delta
\norm{t^{3\qq} {f}^{n+1}}_{L^1_k L^\infty_{T}L^2_{\pv}}.
\end{multline*}
Thus for $\delta>0$ small enough, and some $C_3>0$, we have
\begin{multline*}
    \norm{\tTq f^{n+1}}_{L^1_kL^\infty_TL^2_{\pv}}
+
{\norm{t^{3\qq/2} f^{n+1}}_{L^1_kL^2_TL^2_{\pv}}}
\leq
2C_1\Vert {f}_1\Vert_{L^1_kL^2_{\pv}}
\\
+
C_3{\Vert{\tTq f^{n+1}}\Vert_{L^1_kL^\infty_TL^2_{\pv}}} {\norm{t^{3\qq/2} f^{n}}_{L^1_kL^2_TL^2_{\pv}}}
+
C_3
(T-1) \norm{ {f}^{n} }_{L^1_k L^\infty_{T}L^2_{\pv}}.
\end{multline*}
Then by the induction assumption, \eqref{induction hypo} for $n=m$, we have 
\begin{multline*}
    \norm{\tTq f^{n+1}}_{L^1_kL^\infty_TL^2_{\pv}}
+
{\norm{t^{3\qq/2} f^{n+1}}_{L^1_kL^2_TL^2_{\pv}}}
\leq 
2C_1\Vert {f}_1\Vert_{L^1_kL^2_{\pv}}
\\
+
2C_1 C_3 
{\Vert{\tTq f^{n+1}}\Vert_{L^1_kL^\infty_TL^2_{\pv}}} 
\|f_1\|_{L^1_{{k}}L^2_{\pv}}
+
2C_1C_3 
(T-1) 
\|f_1\|_{L^1_{{k}}L^2_{\pv}}.
\end{multline*}
Thus, we obtain 
\begin{equation*}
        \norm{\tTq f^{n+1}}_{L^1_kL^\infty_TL^2_{\pv}}
+
{\norm{t^{3\qq/2} f^{n+1}}_{L^1_kL^2_TL^2_{\pv}}}
\leq
C_4\Vert {f}_1\Vert_{L^1_kL^2_{\pv}},
\end{equation*}
where
\begin{equation*}
    C_4 
    =
  \frac{2C_1 + 2C_1C_3 
(T-1) }{1-2C_1 C_3  \|f_1\|_{L^1_{{k}}L^2_{\pv}}}.
\end{equation*}
Then for $T-1>0$ and $\|f_1\|_{L^1_{{k}}L^2_{\pv}}$ sufficiently small, we observe that $C_4  \leq 4 C_1$.  This proves \eqref{induction hypo} by induction.  By a similar standard procedure we can also show that $(f^n(t,x,\pv))_{n=0}^\infty$ is a Cauchy sequence in this space.  Then the limit function $f(t,x,\pv)$  is a weak solution of  \eqref{rBoltz00} satisfying the estimate \eqref{loces}.

We now prove the uniqueness of the local-in-time solution.  Suppose alternatively that we have two solutions $f$ and $h$ to \eqref{rBoltz00}, both satisfying the local-in-time estimate \eqref{loces} for $1 \leq t \leq T_0$ with initial data $f_1$ and $h_1$, respectively.  Then the difference of the two solutions $f-h$ satisfies
\begin{multline*}
               \left( \partial_t + \frac{\pv^i}{|\tq \pv|}\partial_{x^i} 
    -\frac{2\qq }{t} p^i \partial_{p^i} \right) \left( f-h\right)
    +
    \linL \left( f-h\right)
=
\Gamma (f,f)-\Gamma (h,h)
\\
=
\Gamma (f,f-h)+\Gamma (f-h,h).
\end{multline*}
We can again apply the same estimates as in the proof of Proposition \ref{Torus Boltzmann Micro} to obtain
\begin{multline}\notag
\norm{\tTq (f-h)}_{L^1_kL^\infty_{T_0}L^2_{\pv}}
+
{\norm{t^{3\qq/2}\micro (f-h)}_{L^1_kL^2_{T_0}L^2_{\pv}}}
\\
\lesssim 
\Vert {f}_1-{h}_1\Vert_{L^1_kL^2_{\pv}}
+
{\Vert{\tTq (f-h)}\Vert_{L^1_kL^\infty_{T_0}L^2_{\pv}}}
\left( {\norm{t^{3\qq/2} f}_{L^1_kL^2_{T_0}L^2_{\pv}}} +{\norm{t^{3\qq/2} h}_{L^1_kL^2_{T_0}L^2_{\pv}}}\right).
\end{multline}
Since $f$ and $h$ satisfy \eqref{loces} with $\epsilon_0>0$ sufficiently small we obtain
\begin{equation*}
    \norm{\tTq (f-h)}_{L^1_kL^\infty_TL^2_{\pv}}
+
{\norm{t^{3\qq/2}\micro (f-h)}_{L^1_kL^2_TL^2_{\pv}}}
\lesssim 
\Vert {f}_1-{h}_1\Vert_{L^1_kL^2_{\pv}}.
\end{equation*}
The uniqueness of the local-in-time solution follows directly.

We now explain how to apply the  proof of positivity, as in \cite{Ukai-86}, to the massless Boltzmann equation on FLRW in \eqref{rBoltz.eqn0}.  For this, we use the  alternative approximating formula
\begin{equation}
    \partial_t F^{n+1}+\frac{\pv^i}{|t^\qq\pv|} \partial_{x^i} F^{n+1}
    - \frac{2 \qq}{t} p^i \partial_{p^i} F^{n+1} 
    + R(F^{n})F^{n+1}
    =Q_+(F^{n},F^{n}),  
\label{rBoltz.eqn0.approx}
\end{equation}
for $n\in\{0,1, 2, \ldots\}$
with the initial conditions 
\begin{equation*}
   \left. F^{n+1}\right|_{t=1} =  F_1(x,\pv)= J(\pv) + \sqrt{J(\pv)} f_1(x,\pv) \geq 0.
\end{equation*}
We also take
\begin{equation}\label{first.step}
 F^{0}(t,x,\pv)\eqdef J(\tqq\pv) \geq 0, \quad t \geq 1.
\end{equation}
In \eqref{rBoltz.eqn0.approx} we have used the standard decomposition of 
$
Q
=
Q_+ - Q_-
$
into gain and loss terms where from \eqref{collisionCMc} we have 
\begin{equation}\notag
\begin{split}
Q_+(F_1,F_2)
= &~ 
\tTq\int_{\mathbb{R}^3}  {dq}  ~
\int_{\mathbb{S}^{2}} d\omega 
~ v_{\Emptyset}   ~ F_1(p^{\prime })F_2(q^{\prime}),
\\
Q_-(F_1,F_2)
= &~ 
\tTq\int_{\mathbb{R}^3}  {dq}  ~
\int_{\mathbb{S}^{2}} d\omega 
~ v_{\Emptyset} ~ F_1(p)F_2(q) = R(F_2)F_1(p),
\end{split}
\end{equation}
where
\begin{align*}
    R(F_2) \eqdef Q_-(1,F_2)
    =
    \tTq\int_{\mathbb{R}^3}  {dq}  ~
\int_{\mathbb{S}^{2}} d\omega 
~ v_{\Emptyset} ~ F_2(q).
\end{align*}
We will also use the notation $Q_+^{n} \eqdef Q_+(F^{n},F^{n})$.

The characteristics of \eqref{rBoltz.eqn0} and \eqref{massless.rBoltz} for $s \geq 1$ are given by 
\begin{equation}\notag
\begin{split}
\frac{d {X}^j}{ds}
= &~ 
s^{-\qq}\frac{{P}^j}{| {P}|}, \qquad {X}^j(s=1,x,\pv)=x^j,
\\
\frac{d {P}^j}{ds}
= &~ 
-\frac{2\qq }{s} P^j, \qquad {P}^j(s=1,\pv)=\pv^j.
\end{split}
\end{equation}
For $0 \leq \qq < 1$ we define
\begin{equation}\label{tiq.weight}
    \tiq(t) \eqdef \frac{t^{1-\qq}-1}{1-\qq}, \qquad \tiq(1)=0.
\end{equation}
For $\qq = 1$ we define 
\begin{equation}\label{weightflowq1}
    \mathcal{S}_1(t) \eqdef 
    \log(t), \qquad \mathcal{S}_1(1)=0.
\end{equation}
These characteristics are solved as
\begin{equation} \label{characterisitics.vacuum}
\begin{split}
{X}^j(t,x,\pv)
= &~ 
x^j+\tiq(t)\frac{\pv^j}{ |\pv|},
\\
{P}^j(t,\pv)
= &~ 
{t^{-2\qq }}p^j.
\end{split}
\end{equation}
We will use \eqref{characterisitics.vacuum} to express solutions to \eqref{rBoltz.eqn0.approx} as an ordinary differential equation.

We first consider the linear model of \eqref{massless.rBoltz} with a source term ${\sourceS}$ as
\begin{equation}\label{massless.rBoltz.collisionless}
    \partial_t F+ \frac{\pv^i}{|\tq \pv|}\partial_{x^i} F
    -\frac{2\qq }{t} p^i \partial_{p^i} F = {\sourceS}(t,x,\pv), \quad 
F(t=1, x, p) = F_1(x,p).
\end{equation}
Then with \eqref{characterisitics.vacuum} solutions to \eqref{massless.rBoltz.collisionless} satisfy
\begin{align*}
    \frac{d}{dt } \left(F(t,{X}(t,x,\pv),{P}(t,\pv)) \right)
    = {\sourceS}(t,{X}(t,x,\pv),{P}(t,\pv)).
\end{align*}
Integrating over $[1,t]$ we obtain
 \begin{align*}
     F(t,{X}(t),{P}(t)) 
     = &~ 
     F_1({X}(1),{P}(1)) + \int_1^t d\tau~ {\sourceS}(\tau,{X}(\tau),{P}(\tau)).
 \end{align*}
 Equivalently,
\begin{align*}
    F\left(t,x+\tiq(t)\frac{\pv}{ |\pv|},{t^{-2\qq }} \pv \right) 
    = &~ 
    F_1(x,p)
    \\
    &~ 
    + \int_1^t d\tau~ {\sourceS}
    \left(\tau,x+\tiq(\tau)\frac{\pv}{ |\pv|},{\tau^{-2\qq }} \pv \right). 
\end{align*}
Next, setting $\qv = {t^{-2\qq }} \pv$ and $y=x+\tiq(t)\frac{\pv}{ |\pv|}=x+\tiq(t)\frac{\qv}{ |\qv|}$, we equivalently have
\begin{align*}
    F\left(t,y,\qv\right) 
    = &~ 
    F_1\left(t,y-\tiq(t)\frac{\qv}{ |\qv|}, \tqq\qv \right)
    \\
    &~ 
    + \int_1^t d\tau~ {\sourceS}
    \left(\tau,y+\left(\tiq(\tau)-\tiq(t) \right)\frac{\qv}{ |\qv|},{(t/\tau)^{2\qq }} \qv \right). 
\end{align*}
Thus, for $t \geq \tau \geq 1$, we define the backward characteristics of \eqref{characterisitics.vacuum} by 
\begin{equation} \label{characterisitics.vacuum.backward}
\begin{split}
\widetilde{X}^j(t,\tau,x,\pv)
= &~ 
x^j+\left(\tiq(\tau)-\tiq(t) \right)\frac{\pv^j}{ |\pv|},
\\
\widetilde{P}^j(t,\tau,\pv)
= &~ 
{(t/\tau)^{2\qq }}\pv^j.
\end{split}
\end{equation}
Now, by using \eqref{characterisitics.vacuum.backward}, we define the operator $U(t,\tau)$ for $t \geq \tau \geq 1$ by 
\begin{equation*}
    (U(t,\tau)G)(s,x,\pv) \eqdef G(s,\widetilde{X}(t,\tau,x,\pv),\widetilde{P}(t,\tau,\pv)), \quad s \ge 1.
\end{equation*}
Then we can write the solution to \eqref{massless.rBoltz.collisionless} as
\begin{align*}
    F(t,x,\pv) 
    =  
    (U(t,1)F_1)(x,\pv) 
    + 
    \int_1^t d\tau~ 
    (U(t,\tau){\sourceS})(\tau,x,\pv).
\end{align*}
Next we will apply this calculation to \eqref{rBoltz.eqn0.approx}.

We use the source ${\sourceS} = Q_+(F^{n},F^{n})=Q_+^{n}$ in \eqref{massless.rBoltz.collisionless} to write \eqref{rBoltz.eqn0.approx} as  
\begin{align*}
    \frac{d}{dt } \left(F^{n+1}(t,{X}(t),{P}(t)) \right)
    + R(F^{n})F^{n+1}(t,{X}(t),{P}(t))
    = {Q_+^{n}}(t,{X}(t),{P}(t)).
\end{align*}
Further with \eqref{characterisitics.vacuum} we define
\begin{align*}
       \Phi^{n}(t,s, x,\pv) \eqdef &~ \int_s^t d\tau~
    R(F^{n})(\tau,{X}(\tau),{P}(\tau)), \quad \Phi^{n}(t) = \Phi^{n}(t,1, x,\pv).
\end{align*}
Then we can write \eqref{rBoltz.eqn0.approx} as 
\begin{align*}
    \frac{d}{dt } \left(e^{\Phi^{n}(t)}F^{n+1}(t,{X}(t),{P}(t)) \right)
    = e^{\Phi^{n}(t)}{Q_+^{n}}(t,{X}(t),{P}(t)).
\end{align*}We integrate over $[1,t]$ to obtain
\begin{align*}
    F^{n+1}\left(t,x+\tiq(t)\frac{\pv}{ |\pv|},{t^{-2\qq }} \pv \right) 
    = &~ 
   e^{-\Phi^{n}(t,1, x,\pv)}  F_1(x,p)
    \\
    &~ 
    + \int_1^t d\tau~ e^{-\Phi^{n}(t,\tau, x,\pv)} {Q_+^{n}}
    \left(\tau,x+\tiq(\tau)\frac{\pv}{ |\pv|},{\tau^{-2\qq }} \pv \right). 
\end{align*}
Above for $t\geq \tau \geq 1$, we have used that
\begin{equation*}
    \Phi^{n}(\tau)-\Phi^{n}(t) 
    =
     \Phi^{n}(\tau,1, x,\pv)-\Phi^{n}(t,1, x,\pv)
     =
     -\Phi^{n}(t,\tau, x,\pv).
\end{equation*}
Further with \eqref{characterisitics.vacuum.backward} we define 
\begin{align*}
       \widetilde{\Phi}^{n}(t,s, x,\pv) \eqdef &~ \int_s^t d\tau~
    R(F^{n})(\tau,\widetilde{X}(t,\tau,x,\pv),\widetilde{P}(t,\tau,\pv)),
       \quad \widetilde{\Phi}^{n}(t) = \widetilde{\Phi}^{n}(t,1, x,\pv). 
\end{align*}
Then using the backward characteristics as in \eqref{characterisitics.vacuum.backward}  we can write \eqref{rBoltz.eqn0.approx} as 
\begin{multline}\label{positivity.duhamel}
            F^{n+1}(t,x,\pv) 
    =  
   e^{-\widetilde{\Phi}^{n}(t)} (U(t,1)F_1)(x,\pv) 
   \\
    + 
    \int_1^t d\tau~ e^{-\widetilde{\Phi}^{n}(t,\tau)}
    (U(t,\tau){Q_+^{n}})(\tau,x,\pv).
\end{multline}
Now we use \eqref{positivity.duhamel} to give the proof of the positivity of a solution to \eqref{rBoltz.eqn0.approx}.

Indeed given $F^{n} \ge 0$, then clearly $Q_+^{n} = Q_+(F^{n},F^{n})\geq 0$.  Then \eqref{positivity.duhamel} and $F_1(x,\pv)\geq 0$ immediately implies that $F^{n+1}(t,x,\pv) \geq 0$. From \eqref{first.step} we have that $ F^{0}(t,x,\pv) \geq 0$.  Thus, by induction  $F^{n+1}(t,x,\pv)  \ge 0$, for all $n\ge 0$ if $F_1\ge 0$.  

Next we will show that in the limit as $n\to\infty$ that 
\begin{equation}\label{non.neg.limit}
    F^{n+1}(t,x,\pv) \to F(t,x,p)= J(\tqq \pv) + \sqrt{J(\tqq \pv)} f(t,x,p)\ge 0,
\end{equation}
and that this is the same local-in-time solution we previously obtained. We prove convergence in the space $L^1_{{k}}L^\infty_{T_0}L^2_\pv$. However, due to the embedding
$$
L^1_{{k}}L^\infty_{T_0}L^2_\pv\subset L^\infty([1,T_0]\times\T^3_x;L^2_\pv)
$$
we also have convergence in $L^\infty((0,T_0)\times\T^3;L^2_\pv)$. Thus we have the almost everywhere non-negativity in the limit \eqref{non.neg.limit}.  To prove the convergence, for 
\begin{equation*}
    F^{n+1}(t,x,\pv) = J(\tqq \pv) + \sqrt{J(\tqq\pv)} f^{n+1}(t,x,\pv),
\end{equation*}
we can rewrite \eqref{rBoltz.eqn0.approx} as
\begin{multline}\label{iternation.non.neg}
       \left( \partial_t +\frac{\pv^i}{|t^\qq\pv|} \partial_{x^i} 
    - \frac{2 \qq}{t} p^i \partial_{p^i} + t^{-3\qq} \nu_0\right) f^{n+1} 
    \\
    =
    \Gamma_+(f^{n},f^{n})
    -
    \Gamma_-(f^{n+1},f^{n})
    +
\compK (f^{n}).  
\end{multline}
Here, $\compK$ is defined in \eqref{compactK} and then instead of \eqref{gamma0} we use 
\begin{align*}
\Gamma_+ (h_1,h_2)
&\eqdef
t^{3\qq}\int_{\mathbb{R}^3} ~  dq 
\int_{\mathbb{S}^{2}} ~ d\omega
~ v_{\Emptyset} ~  
 \sqrt{J(\tqq\qv)}
 h_1(p^{\prime })h_2(q^{\prime}),
\\
\Gamma_- (h_1,h_2)
&\eqdef
t^{3\qq}\int_{\mathbb{R}^3} ~  dq 
\int_{\mathbb{S}^{2}} ~ d\omega
~ v_{\Emptyset} ~  
 \sqrt{J(\tqq\qv)}
h_1(p)h_2(q).
\end{align*}
Similar to \eqref{approx.solution}, 
the iteration scheme \eqref{iternation.non.neg} satisfies the uniform estimate in \eqref{induction hypo}  and it is  a Cauchy sequence with the same proofs.  Thus the limit function of \eqref{iternation.non.neg} is  a weak solution of  \eqref{rBoltz00} satisfying  \eqref{loces}.  Since we have uniqueness, then the two solutions coincide.  The proof of Theorem \ref{local.thm} is complete. 
\end{proof}

\subsection{Global existence near the equilibrium}\label{subsglbalexistence}

We will now prove the global-in-time uniform decay estimate \eqref{globalestim} from Theorem \ref{Torus Existence} with $l=0$. 

\begin{theorem}
    Let \(\qq\in [0,1]\). Under the assumptions of Theorem \ref{Torus Existence}, there is a unique global solution \(f\) of \eqref{rBoltz00}. Moreover, the following estimate holds,
    \begin{align}\label{uniform.global}
\forall T\geq 1,\qquad     {\Vert{\tTq f}\Vert_{L^1_kL^\infty_TL^2_{\pv}}}
+
{\norm{t^{3\qq/2} f}_{L^1_kL^2_TL^2_{\pv}}}
\lesssim 
\norm{ {f}_1}_{L^1_kL^2_{\pv}}.
\end{align}
\end{theorem}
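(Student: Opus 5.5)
The plan is a standard continuity (bootstrap) argument that combines the local theory of Theorem \ref{local.thm} with the a priori estimates of Proposition \ref{Torus Boltzmann Micro} and Theorem \ref{thmmacroscop}. For a solution on $[1,T]$ we write
\[
\mathcal{E}(T)\eqdef \Vert \tTq f\Vert_{L^1_kL^\infty_TL^2_{\pv}}+\norm{t^{3\qq/2} f}_{L^1_kL^2_TL^2_{\pv}} .
\]
The first step is to close a single a priori inequality of the form
\begin{equation}\label{plan.apriori}
\mathcal{E}(T)\le C_*\norm{f_1}_{L^1_kL^2_{\pv}}+C_*\,\mathcal{E}(T)^2,
\end{equation}
with $C_*$ independent of $T$. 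To do this we split $f=\macro f+\micro f$. The microscopic dissipation and the energy term are bounded by Proposition \ref{Torus Boltzmann Micro}. For the macroscopic part we use \eqref{ineq:macrodiss} together with Theorem \ref{thmmacroscop}. The right side of \eqref{abcest.torus1} contains the term $\Vert t^{3\qq/2}\micro f\Vert$, already controlled, plus $\Vert\tTq f\Vert_{L^1_kL^\infty_T L^2_p}$, also controlled. Adding a small multiple of the macroscopic estimate to the micro estimate therefore gives \eqref{plan.apriori}, provided the nonlinear term $\int_{\Z^3}(\int_1^T t^{5\qq}|\widehat{\mathfrak h}|^2dt)^{1/2}d\Sigma(k)$ is quadratic in $\mathcal{E}$.

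Controlling that nonlinear term is the one genuinely new computation, and I expect it to be the main obstacle. By Lemma \ref{L8.7}, applied on the Fourier side through the convolution identity \eqref{fourier.convolution.nonlin}, we get
\[
|\widehat{\mathfrak h}(t,k)|\lesssim \int_{\Z^3}\Vert\hat f(t,k-l)\Vert_{L^2_p}\Vert\hat f(t,l)\Vert_{L^2_p}\,d\Sigma(l).
\]
We write $t^{5\qq/2}=t^{-2\qq}\cdot t^{3\qq}\cdot t^{3\qq/2}$ and discard $t^{-2\qq}\le 1$. Minkowski's inequality \eqref{minkowski.ineq} followed by Fubini, exactly as in the proof of Proposition \ref{Torus Boltzmann Micro}, then bounds the term by $\Vert\tTq f\Vert_{L^1_kL^\infty_T L^2_p}\,\norm{t^{3\qq/2}f}_{L^1_kL^2_TL^2_p}\le \mathcal{E}(T)^2$. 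The point to check carefully is that the time weights match: the factor $t^{5\qq}$ must be at most $t^{6\qq}\cdot t^{3\qq}\cdot t^{-4\qq}$ in the appropriate sense, which holds because $t\ge1$ and $\qq\ge0$. Throughout, the conservation laws \eqref{conservlawfourier} ensure that the $k=0$ mode of $[\mathcal A,\mathcal B,\mathcal C]$ vanishes, so the division by $|k|^2$ in the macroscopic estimates is legitimate.

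The last step is the continuity argument. Let $\epsilon_0$ be so small that $4C_*^2\epsilon_0<1$, and define $T^*=\sup\{T: \text{the solution exists on }[1,T]\text{ and }\mathcal{E}(T)\le 2C_*\norm{f_1}_{L^1_kL^2_p}\}$. Theorem \ref{local.thm} with \eqref{loces} shows that $T^*>1$, after enlarging $C_*$ if necessary so that $C_*\ge C_0$. On $[1,T^*)$, inequality \eqref{plan.apriori} gives $\mathcal{E}(T)\le C_*\norm{f_1}+4C_*^3\norm{f_1}^2<\tfrac32 C_*\norm{f_1}$, which is strictly better than the bootstrap bound. Since $\Vert \tTq f(T)\Vert_{L^1_kL^2_p}$ stays small, we can restart the local theory at time $T$. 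Theorem \ref{local.thm} is stated at $t=1$, but the same iteration works from any initial time because the estimates are time-translation compatible after multiplying by $t^{3\qq}$. The restart extends the solution and keeps the bound, so $T^*=\infty$ and \eqref{uniform.global} follows. Uniqueness follows from the difference estimate already used in the proof of Theorem \ref{local.thm}, applied on each finite interval. Nonnegativity of $F$ propagates through each restart, since the Duhamel formula \eqref{positivity.duhamel} preserves it.
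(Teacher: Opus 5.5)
Your proposal is correct and follows the paper's route. You bound the nonlinear term $\int_{\Z^3}(\int_1^T t^{5\qq}|\widehat{\mathfrak h}|^2dt)^{1/2}d\Sigma(k)$ by $\Vert \tTq f\Vert_{L^1_kL^\infty_TL^2_{\pv}}\norm{t^{3\qq/2}f}_{L^1_kL^2_TL^2_{\pv}}$ via Lemma \ref{L8.7}, \eqref{fourier.convolution.nonlin} and the Minkowski--Fubini argument, add a small multiple of the resulting bound \eqref{ABC.est.n} to \eqref{est.fRHS}, and close with a small-data continuity argument, which you spell out in more detail than the paper. One trivial slip: $4C_*^2\epsilon_0<1$ only gives the strict bound $\mathcal{E}(T)<2C_*\norm{f_1}_{L^1_kL^2_{\pv}}$, and $\tfrac32 C_*\norm{f_1}_{L^1_kL^2_{\pv}}$ would need $8C_*^2\epsilon_0<1$, but any strict improvement suffices to close the bootstrap.
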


\begin{proof}
To this end, we first estimate the last term on the right side of \eqref{abcest.torus1}.  From \eqref{fourier.convolution.nonlin}, we have 
\begin{align}\notag
\irnp{\hat{\Gamma}(\hat{f},\hat{f})(k)}{\tTq{\mathfrak{e}_{\ell}(\tqq \pv)}}
=
\int_{\Z^3} \irnp{ \Gamma(\hat{f}(k-l),\hat{f}(l))}{\tTq{\mathfrak{e}_{\ell}(\tqq \pv)}}  d\Sigma(l).
\end{align}
Then from Lemma \ref{L8.7} we have the bound 
\begin{align*}
    \left\vert  \irnp{\hat{\Gamma}(\hat{f},\hat{f})(t,k)}{ \tTq{\mathfrak{e}_{\ell}(\tqq \pv)}} \right\vert^2
    \lesssim
   \left( \int_{\Z^3} \| \hat{f}(k-l)\|_{L^2_{\pv}} \| \hat{f}(l)\|_{L^2_{\pv}}  d\Sigma(l)\right)^2.
\end{align*}
Then exactly as in \eqref{minkowski.ineq} and the estimates below it, we obtain that 
\begin{multline*}
 \int_{\mathbb{Z}^3_k} \left( \int^{T}_{1} t^{5\qq}\left\vert  \irnp{\hat{\Gamma}(\hat{f},\hat{f})(t,k)}{ \tTq{\mathfrak{e}_{\ell}(\tqq \pv)}} \right\vert^2 dt \right)^{\frac12} d\Sigma (k)
 \\
    \lesssim
    \norm{\tTq{f}}_{L^1_kL^\infty_TL^2_{\pv}} \norm{t^{3\qq/2}{f}}_{L^1_kL^2_TL^2_{\pv}}.
\end{multline*}
In fact, a lower time weight is possible in the above estimate.
We place this estimate into \eqref{abcest.torus1} to obtain
\begin{align}\label{ABC.est.n}
\norm{t^{3\qq/2} [\mathcal{A},\mathcal{B},\mathcal{C}]}_{L^1_k L^2_T} 
&\lesssim
\Vert \tTq f \Vert_{L^1_k L^\infty_T L^2_{\pv}} +
\Vert t^{3\qq/2}\micro f \Vert_{L^1_k L^2_T L^2_{\pv}} + \Vert f_1 \Vert_{L^1_k L^2_{\pv}} \\ \notag
&\qquad+ 
    \norm{\tTq{f}}_{L^1_kL^\infty_TL^2_{\pv}} \norm{t^{3\qq/2}{f}}_{L^1_kL^2_TL^2_{\pv}}.
\end{align}
We multiply a small constant to the estimate of $\norm{t^{3\qq/2} [\mathcal{A},\mathcal{B},\mathcal{C}]}_{L^1_k L^2_T} $ in \eqref{ABC.est.n}  and then add that to \eqref{est.fRHS} to achieve that
\begin{align*}
    {\Vert{\tTq f}\Vert_{L^1_kL^\infty_TL^2_{\pv}}}
+
{\norm{t^{3\qq/2} f}_{L^1_kL^2_TL^2_{\pv}}}
\lesssim 
\norm{ {f}_1}_{L^1_kL^2_{\pv}}
+{\norm{\tTq f}_{L^1_kL^\infty_TL^2_{\pv}}} {\norm{t^{3\qq/2} f}_{L^1_kL^2_TL^2_{\pv}}}.
\end{align*}
Thus, if we start with a sufficiently small $\norm{ {f}_1}_{L^1_kL^2_{\pv}}$, then ${\norm{\tTq f}_{L^1_kL^\infty_TL^2_{\pv}}}$ will remain small for a short period of time, and we obtain \eqref{uniform.global}. Then, the above estimate \eqref{uniform.global} can be propagated for long times. This provides our global uniform decay estimate for any $0 \leq \qq \leq 1$. Then, we obtain the global-in-time existence and uniqueness of the solution using a standard continuity argument.
\end{proof}

\subsection{Faster time-decay rates for \texorpdfstring{$0 < \qq \leq \frac{1}{3}$}{}} \label{subfastdecay}

Let us now prove the faster decay estimate \eqref{decayestimate} in Theorem \ref{Torus Existence} with the weights \eqref{weight.decay}$_1$ and \eqref{weight.decay}$_2$ when $m=0$.

\begin{proposition}
    Let $0 < \qq \leq \frac{1}{3}$ and fix $0<\tc<1$. Under the assumptions of Theorem \ref{Torus Existence}, there holds \begin{align}\label{estimated.nonlin.second}
          \forall T\geq 1,\qquad   {\Vert{\tTq \tw(t) f}\Vert_{L^1_kL^\infty_TL^2_{\pv}}}
+
{\norm{t^{3\qq/2} \tw(t) f}_{L^1_kL^2_TL^2_{\pv}}}
\lesssim
\norm{ {f}_1}_{L^1_kL^2_{\pv}}.
\end{align}
\end{proposition}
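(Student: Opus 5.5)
The plan is to work with the weighted unknown $g\eqdef \tw(t) f$ and to run the energy method of Proposition \ref{Torus Boltzmann Micro} on the Fourier side. For $0<\qq<\frac13$ we have $\tw'(t)/\tw(t)=\tc\nu_0 t^{-3\qq}$, and for $\qq=\frac13$ we have $\tw'/\tw=\tc\nu_0 t^{-1}=\tc\nu_0t^{-3\qq}$. Hence $\hat g$ solves \eqref{torus fourier transformed} with $\linL$ replaced by $\linL-\tc\nu_0t^{-3\qq}$, and with the nonlinearity replaced by $\tw\widehat\Gamma(\hat f,\hat f)=\tw^{-1}\widehat\Gamma(\hat g,\hat g)$. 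Since $\tw\ge1$, this nonlinear term is no worse than before.

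Computing as in \eqref{eqn.complex} gives
\[
\tfrac12\tfrac{d}{dt}\big(t^{6\qq}\|\hat g\|^2_{L^2_p}\big)+t^{3\qq}\big(t^{3\qq}\langle\linL\hat g,\hat g\rangle-\tc\nu_0\|\hat g\|^2_{L^2_p}\big)=t^{6\qq}\tw^{-1}\real{\langle\widehat\Gamma(\hat g,\hat g),\hat g\rangle}.
\]
The key step is a lower bound on the bracket: it should dominate $c\|\hat g\|^2_{L^2_p}$ minus a lower-order remainder. We write $t^{3\qq}\langle\linL h,h\rangle=\nu_0\|h\|^2-t^{3\qq}\langle\compK h,h\rangle$ and take a convex combination of two inequalities:
\[
t^{3\qq}\langle \linL h,h\rangle\ge \delta_0\|\micro h\|^2 \quad\text{(Proposition \ref{lowerN})},
\]
\[
t^{3\qq}\langle \linL h,h\rangle\ge (\nu_0-\eta)\|h\|^2 - Ct^{-3\spr\qq}\|\ind_{\le R}h\|^2 \quad\text{(Proposition \ref{noKestimate}, $m=0$)}.
\]
Choosing $\eta<(1-\tc)\nu_0$ and weighting the first inequality by a small $\kappa$, we obtain
\[
\text{bracket}\ge c\|\micro \hat g\|^2+c'\|\hat g\|^2-Ct^{-3\spr\qq}\|\hat g\|^2 .
\]
The remainder $t^{3\qq}\cdot Ct^{-3\spr\qq}\|\hat g\|^2$ is not small, so I will not try to absorb it pointwise. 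Instead I will bound the weighted term only by the dissipation from the unweighted problem. The macroscopic part $\macro\hat g$ is not dissipated by $\micro$, and this is exactly why the compact-small split is used: the $\nu_0$ part damps everything, and only the compact piece is lost.

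I therefore expect the main obstacle to be controlling $\int_1^T t^{3\qq-3\spr\qq}\tw^2\|\hat f\|^2\,dt$. I would try the following. Let $c'>0$ absorb the bulk of it, and handle the rest by a Grönwall-type argument in the weighted energy. Since $\|\hat g\|^2\le t^{-6\qq}E(t)$ with $E\eqdef t^{6\qq}\|\hat g\|^2$, the inequality reads
\[
E'\le -2c' t^{-3\qq}E + C t^{-3\qq-3\spr\qq}E+\text{(nonlinear term)}.
\]
For $\qq>0$ the factor $t^{-3\spr\qq}$ makes $Ct^{-3\qq-3\spr\qq}\le c't^{-3\qq}$ for $t\ge T_*$. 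On the compact interval $[1,T_*]$ the weight $\tw$ is bounded, so there the estimate follows from \eqref{uniform.global}. We then close on $[T_*,T]$ with initial data controlled by \eqref{uniform.global}.

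This yields, for each $k$, $\sup_t t^{3\qq}\|\hat g\|+\big(\int t^{3\qq}\|\hat g\|^2\big)^{1/2}\lesssim\|\hat f(T_*,k)\|+(\text{nonlinear})^{1/2}$, with no macroscopic estimate needed because $c'\|\hat g\|^2$ gives full dissipation. The nonlinear term is treated exactly as in \eqref{pro.tpm.p4}–\eqref{minkowski.ineq}, using Lemma \ref{lem: Boltzmann nonlinear} and $\tw^{-1}\le1$. This bounds it by $\|t^{3\qq}g\|_{L^1_kL^\infty_TL^2_p}\|t^{3\qq/2}g\|_{L^1_kL^2_TL^2_p}$, which is absorbed by smallness via a continuity argument in $T$. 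Summing over $k$ gives \eqref{estimated.nonlin.second}.
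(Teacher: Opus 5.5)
Your proposal is correct and is essentially the paper's argument: the $\tc\nu_0 t^{-3\qq}$ part of the collision frequency is absorbed by $\tw'/\tw$, and the remaining $(1-\tc)\nu_0$ gives full (not just microscopic) dissipation, so no macroscopic estimate is needed; $\compK_s$ is absorbed by smallness, and $\compK_c$ is absorbed for large times through its $t^{-3\spr\qq}$ gain, while the bounded-time contribution is controlled by \eqref{uniform.global}. The differences are cosmetic: you restart the energy estimate at $T_*$ rather than keeping the $[1,T_1]$ part of the $\compK_c$ term as a source bounded by $\|t^{3\qq/2}f\|_{L^1_kL^2_TL^2_p}$, and you bound the nonlinearity quadratically in the weighted norms (hence your continuity argument), whereas the paper puts the weight on only one factor, effectively using $\tw^{-1}\Gamma(\hat g,\hat g)=\Gamma(\hat f,\hat g)$, so the already small unweighted norm $\|\tTq f\|_{L^1_kL^\infty_TL^2_p}$ absorbs it directly.
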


\begin{proof}
 First, we use \eqref{L0} and Proposition \ref{LEM:nuASYMP} to write \eqref{eqn.complex} as
\begin{multline*}
       \frac{1}{2}\frac{d}{dt}  \left( \tQQq
\Vert \hat{f}\Vert_{L^2_{\pv}}^2(t,k)  \right)
+
\tc\frac{\nu_0}{t^{3\qq}}\left( \tQQq
\Vert \hat{f}\Vert_{L^2_{\pv}}^2(t,k)  \right)
+
(1-\tc)\frac{\nu_0}{t^{3\qq}}\left( \tQQq
\Vert \hat{f}\Vert_{L^2_{\pv}}^2(t,k)  \right)
\\
= \tQQq\real{\irnp{\hat{\Gamma}(\hat{f},\hat{f})}{\hat{f}}}
-
\tQQq\real{\irnp{\compK\hat{f}}{\hat{f}}}. 
\end{multline*}
Now we recall the time weight \eqref{weight.decay} for the case of $0 < \qq \leq \frac{1}{3}$ as
\begin{equation}\notag
    \tw(t) =
    \begin{cases}
        \exp\left(\tc\frac{\nu_0 t^{1-3\qq}}{1-3\qq}\right),\quad &\text{if} \,\,0 < \qq<\frac{1}{3},\\
        t^{\tc\nu_0},\qquad &\text{if} \,\,\qq=\frac{1}{3}.
    \end{cases}
\end{equation}
Then similar to \eqref{ODE.faster.decay} we have 
\begin{multline*}
       \frac{1}{2}\frac{d}{dt}  \left( \tQQq \tw(t)^2
\Vert \hat{f}\Vert_{L^2_{\pv}}^2(t,k)  \right)
+
(1-\tc)\frac{\nu_0}{t^{3\qq}}\left( \tQQq \tw(t)^2
\Vert \hat{f}\Vert_{L^2_{\pv}}^2(t,k)  \right)
\\
= \tQQq \tw(t)^2\left\{\real{\irnp{\hat{\Gamma}(\hat{f},\hat{f})}{\hat{f}}}
-
\real{\irnp{\compK\hat{f}}{\hat{f}}} \right\}. 
\end{multline*}
Next integrating from $[1,t]$ as in \eqref{pro.tpm.p1} we have
\begin{multline*}
    \frac{1}{2} \tQQq \tw(t)^2\Vert \hat{f}\Vert_{L^2_{\pv}}^2(t,k)
    +
    (1-\tc)\nu_0\int^t_1 \tau^{3\qq} \tw(\tau)^2
\Vert \hat{f}(\tau,k)\Vert_{L^2_{\pv}}^2  d\tau
\\ 
    =
    \frac{1}{2}\tw(1)\Vert \hat{f}_1(k)\Vert_{L^2_{\pv}}^2
+\int^t_1 \tau^{6\qq}
\tw(\tau)^2\left(\real{\irnp{\hat{\Gamma}(\hat{f},\hat{f})}{\hat{f}}}
-
\real{\irnp{\compK\hat{f}}{\hat{f}}} \right)
d\tau.
\end{multline*}
Then following the arguments in \eqref{sqrt.ineq} and \eqref{v-int} then instead of \eqref{pro.tpm.p2} we obtain for some  uniform constant $C_1>0$ that
\begin{multline*}
        {\Vert{\tTq \tw(t) f}\Vert_{L^1_kL^\infty_TL^2_{\pv}}}
+
{\norm{t^{3\qq/2} \tw(t) f}_{L^1_kL^2_TL^2_{\pv}}}
\leq C_1
\norm{ {f}_1}_{L^1_kL^2_{\pv}}
\\
+ C_1
\int_{\Z^3} 
\left(\int^T_1 \tau^{6\qq} \tw(\tau)^2 \left|\irnp{\compK\hat{f}}{\hat{f}}\right|
d\tau \right)^{\frac12}
d\Sigma(k)
\\
+ C_1
\int_{\Z^3} 
\left(\int^T_1 \tau^{6\qq} \tw(\tau)^2 \left|{\irnp{\hat{\Gamma}(\hat{f},\hat{f})}{\hat{f}}}\right|
d\tau \right)^{\frac12}
d\Sigma(k).
\end{multline*}
We next estimate the terms involving $\irnp{\compK\hat{f}}{\hat{f}}$ and ${\irnp{\hat{\Gamma}(\hat{f},\hat{f})}{\hat{f}}}$.

We first estimate ${\irnp{\hat{\Gamma}(\hat{f},\hat{f})}{\hat{f}}}$ exactly as in \eqref{pro.tpm.p3}.  Then similarly to \eqref{pro.tpm.p4} we apply Cauchy--Schwarz's inequality to $\int_{1}^{T}(\cdot)dt$ and use Young's inequality with  small constant $\eta>0$ to obtain
\begin{align}
&\int_{\Z^3_k}\left(\int^{T}_{1} \tQQq \tw(t)^2\norm{\hat{f}(t,k)}_{L^2_{\pv}}\int_{\Z^3_l}\Vert \hat{f}(t,k-l)\Vert_{L^2_{\pv}} \norm{\hat{f}(t,l)}_{L^2_{\pv}}  d\Sigma(l)dt\right)^{\frac12}d\Sigma(k)
\notag
\\
&\leq \int_{\Z^3_k} \left(\int^{T}_{1} \tQQq \tTq \tw(t)^2\left(\int_{\Z^3_l} \Vert \hat{f}(t,k-l)\Vert_{L^2_{\pv}} \norm{\hat{f}(t,l)}_{L^2_{\pv}}d\Sigma(l)\right)^2dt\right)^{1/4}
\notag
\\
&\qquad\qquad\qquad\times \left(\int^{T}_{1} \tTq \tw(t)^2\norm{\hat{f}(t,k)}_{L^2_{\pv}}^2 dt\right)^{1/4} d\Sigma(k)
\notag
\\
&\leq \eta\int_{\Z^3_k}\left(\int^{T}_{1} \tTq \tw(t)^2\norm{\hat{f}(t,k)}_{L^2_{\pv}}^2 dt\right)^{\frac12}d\Sigma(k)
\label{pro.tpm.pD}
\\ \notag
&\qquad+\frac{1}{4\eta} \int_{\Z^3_k}\left(\int^{T}_{1} \tQQq \tTq \tw(t)^2\left(\int_{\Z^3_l} \Vert \hat{f}(t,k-l)\Vert_{L^2_{\pv}} \norm{\hat{f}(t,l)}_{L^2_{\pv}}d\Sigma(l)\right)^2dt\right)^{\frac12} d\Sigma(k).
\end{align}
For the second term on the right-hand side of \eqref{pro.tpm.pD},
we use \eqref{minkowski.ineq} to obtain
\begin{multline*}
\left(\int^{T}_{1} \tQQq \tTq \tw(t)^2\left(\int_{\Z^3_l} \Vert \hat{f}(t,k-l)\Vert_{L^2_{\pv}} \norm{\hat{f}(t,l)}_{L^2_{\pv}}d\Sigma(l)\right)^2dt\right)^{\frac12}
\\
\leq
\int_{\Z^3_l} \left(\int^{T}_{1} \tQQq \Vert \hat{f}(t,k-l)\Vert_{L^2_{\pv}}^2 \tTq \tw(t)^2\norm{\hat{f}(t,l)}_{L^2_{\pv}}^2 dt\right)^{\frac12}d\Sigma(l)
\\
\leq 
\int_{\Z^3_k} \int_{\Z^3_l} \sup_{0\le \tau\le T} \left( \tau^{3\qq}\Vert \hat{f}(\tau,k-l) \Vert_{L^2_{\pv}}\right) \left(\int^{T}_{1} \tTq\tw(t)^2\norm{\hat{f}(t,l)}_{L^2_{\pv}}^2 dt\right)^{\frac12}d\Sigma(l) d\Sigma(k).
\end{multline*}
Then Fubini's theorem and translation invariance imply that
\begin{multline*}
    \int_{\Z^3_k} \int_{\Z^3_l} \sup_{0\le \tau\le T} \left( \tau^{3\qq}\Vert \hat{f}(\tau,k-l) \Vert_{L^2_{\pv}}\right) \left(\int^{T}_{1} \tTq \tw(t)^2\norm{\hat{f}(t,l)}_{L^2_{\pv}}^2 dt\right)^{\frac12}d\Sigma(l) d\Sigma(k)
\\
=\Vert \tTq{f}\Vert_{L^1_kL^\infty_TL^2_{\pv}}
\Vert t^{3\qq/2}\tw(t) {f}\Vert_{L^1_kL^2_TL^2_{\pv}}.
\end{multline*}
Then the estimate above with \eqref{pro.tpm.pD} proves the following uniform estimate
\begin{multline}\notag 
    \int_{\Z^3} 
\left(\int^T_1 \tau^{6\qq} \tw(\tau)^2 \left|{\irnp{\hat{\Gamma}(\hat{f},\hat{f})}{\hat{f}}}\right|
d\tau \right)^{\frac12}
d\Sigma(k)
\\
\lesssim
\eta\Vert t^{3\qq/2}\tw(t) {f}\Vert_{L^1_kL^2_TL^2_{\pv}}
+
\frac{1}{4\eta}
\Vert \tTq{f}\Vert_{L^1_kL^\infty_TL^2_{\pv}}
\Vert t^{3\qq/2}\tw(t) {f}\Vert_{L^1_kL^2_TL^2_{\pv}}.
\end{multline}
This will be our main estimate for ${\irnp{\hat{\Gamma}(\hat{f},\hat{f})}{\hat{f}}}$.

Then first choosing $\eta>0$ sufficiently small and second choosing $\Vert \tTq{f_1}\Vert_{L^1_kL^\infty_TL^2_{\pv}}$ sufficiently small so that $\Vert \tTq{f}\Vert_{L^1_kL^\infty_TL^2_{\pv}}$ remains sufficiently small, we obtain for some different uniform constant $C_1>0$ that 
\begin{multline}\label{estimated.nonlin}
        {\Vert{\tTq \tw(t) f}\Vert_{L^1_kL^\infty_TL^2_{\pv}}}
+
{\norm{t^{3\qq/2} \tw(t) f}_{L^1_kL^2_TL^2_{\pv}}}
\leq C_1
\norm{ {f}_1}_{L^1_kL^2_{\pv}}
\\
+ C_1
\int_{\Z^3} 
\left(\int^T_1 \tau^{6\qq} \tw(\tau)^2 \left|\irnp{\compK\hat{f}}{\hat{f}}\right|
d\tau \right)^{\frac12}
d\Sigma(k).
\end{multline}
We next estimate the term including $\irnp{\compK\hat{f}}{\hat{f}}$.   As in Proposition \ref{noKestimate} we split $\compK = \compK_c + \compK_s$.  For the $\compK_s$ part, from Proposition \ref{noKestimate}, for any small $\eta>0$ we have the estimate 
\begin{multline*}
    \int_{\Z^3} 
\left(\int^T_1 \tau^{6\qq} \tw(\tau)^2 \left|\irnp{\compK_s\hat{f}}{\hat{f}}\right|
d\tau \right)^{\frac12}
d\Sigma(k)
\\
\leq \eta
    \int_{\Z^3} 
\left(\int^T_1 \tau^{3\qq} \tw(\tau)^2  \|  \hat{f}(\tau,k )\|_{L^2_p}^2
d\tau \right)^{\frac12}
d\Sigma(k)
=
\eta {\norm{t^{3\qq/2} \tw(t) f}_{L^1_kL^2_TL^2_{\pv}}}.
\end{multline*}
Then, for the $\compK_c$ part, we use \eqref{compact.improved.est} with $\spr =\frac13$ to obtain
\begin{align}\notag 
    \tTq |\langle  \compK_c h_1, h_2\rangle |
\lesssim t^{-\qq} \|\ind_{\le R} h_1 \|_{L^2_p} \| \ind_{\le R} h_2 \|_{L^2_p}.
\end{align}
Then we have 
\begin{multline*}
    \int_{\Z^3} 
\left(\int^T_1 \tau^{6\qq} \tw(\tau)^2 \left|\irnp{\compK_c\hat{f}}{\hat{f}}\right|
d\tau \right)^{\frac12}
d\Sigma(k)
\\
\leq C_{\eta}
    \int_{\Z^3} 
\left(\int^T_1 \tau^{3\qq-\qq} \tw(\tau)^2  \|  \hat{f}(\tau,k )\|_{L^2_p}^2
d\tau \right)^{\frac12}
d\Sigma(k).
\end{multline*}
Next we suppose that $T>0$ is a large time and we notice that the constant $C_{\eta}>0$ above is independent of $T$.  Then  fix a small $\delta>0$ and choose a time $T_1 < T$ such that we have
\begin{align*}
    C_{\eta} \tau^{-\qq}< \delta, \quad \forall \tau \geq T_1.
\end{align*}
On the other hand region we bound 
\begin{align*}
   \tau^{-\qq/2}  \tw(\tau)
    \leq   \tw(T_1)\eqdef C_\delta, \quad \forall 1\leq \tau < T_1.
\end{align*}
Collecting these estimates we obtain for any small $\delta>0$ that 
\begin{multline*}
    \int_{\Z^3} 
\left(\int^T_1 \tau^{6\qq} \tw(\tau)^2 \left|\irnp{\compK_c\hat{f}}{\hat{f}}\right|
d\tau \right)^{\frac12}
d\Sigma(k)
\\
\leq 
\delta {\norm{t^{3\qq/2} \tw(t) f}_{L^1_kL^2_TL^2_{\pv}}}
+
C_\delta {\norm{t^{3\qq/2}  f}_{L^1_kL^2_TL^2_{\pv}}}.
\end{multline*}
The large constant $C_\delta>0$ is uniform for $T>T_1$.  We collect these estimates to obtain
\begin{multline*}
    \int_{\Z^3} 
\left(\int^T_1 \tau^{6\qq} \tw(\tau)^2 \left|\irnp{\compK\hat{f}}{\hat{f}}\right|
d\tau \right)^{\frac12}
d\Sigma(k)
\\
\leq 
\eta {\norm{t^{3\qq/2} \tw(t) f}_{L^1_kL^2_TL^2_{\pv}}}
+
\delta {\norm{t^{3\qq/2} \tw(t) f}_{L^1_kL^2_TL^2_{\pv}}}
+
C_\delta {\norm{t^{3\qq/2}  f}_{L^1_kL^2_TL^2_{\pv}}}.
\end{multline*}
We plug this estimate into \eqref{estimated.nonlin}  choosing $\eta, \delta>0$ sufficiently small to obtain
\begin{align}\notag
            {\Vert{\tTq \tw(t) f}\Vert_{L^1_kL^\infty_TL^2_{\pv}}}
+
{\norm{t^{3\qq/2} \tw(t) f}_{L^1_kL^2_TL^2_{\pv}}}
\leq C_2
\norm{ {f}_1}_{L^1_kL^2_{\pv}}
+ C_2
{\norm{t^{3\qq/2}  f}_{L^1_kL^2_TL^2_{\pv}}}.
\end{align}
Here $C_2>0$ is some large uniform constant.  Lastly we apply the inequality \eqref{uniform.global} to the above to obtain the uniform decay estimate \eqref{estimated.nonlin.second}.
This proves the faster decay rates represented by the weight \eqref{weight.decay}$_1$ when $m=0$. The case of the weight \eqref{weight.decay}$_2$ when $m=0$ is proven by the same method.
\end{proof}

\subsection{Exponential large time decay for \texorpdfstring{$\qq =0$}{}} \label{subsexpnonexp}

We now prove the large time decay estimates \eqref{decayestimate} in the non-expanding case when $\qq =0$.

\begin{proposition}
    Let \(\qq=0\). Under the assumptions of Theorem \ref{Torus Existence}, there is a small uniform constant \(\lambda>0\) such that
    \begin{align}
    \forall T \geq 1,\qquad \norm{e^{\lambda t} f}_{L^1_kL^\infty_TL^2_{\pv}}
+
{\norm{e^{\lambda t} f}_{L^1_kL^2_TL^2_{\pv}}}
\lesssim 
\Vert {f}_1\Vert_{L^1_kL^2_{\pv}}.\label{exponentialdecaynorm}
\end{align}
\end{proposition}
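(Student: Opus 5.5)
For \(\qq=0\) the time weights \(\tTq\) are identically one and the basis functions \(\mathfrak{e}_\ell\) are time independent, so the plan is to rerun the global energy estimate (Proposition \ref{Torus Boltzmann Micro} combined with Theorem \ref{thmmacroscop}) with the weight \(e^{\lambda t}\) inserted. Since \(\nu_0 t^{-3\qq}=\nu_0\) is a constant, the linearised problem enjoys a uniform spectral gap modulo the null space. Concretely, I will set \(g\eqdef e^{\lambda t}f\). Then \(g\) solves \eqref{rBoltz00} with \(\qq=0\), except that the linear operator \(\linL\) is replaced by \(\linL-\lambda\) and the nonlinearity becomes \(e^{-\lambda t}\Gamma(g,g)\). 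Its macroscopic coefficients are \(e^{\lambda t}[\mathcal{A},\mathcal{B},\mathcal{C}]\), and they still have vanishing zero Fourier mode by Corollary \ref{L8.5}.

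\textbf{Microscopic estimate.} Running the computation \eqref{eqn.complex}--\eqref{v-int} for \(\hat g\) and using Proposition \ref{lowerN} gives, for each \(k\),
\[
\tfrac12\|\hat g(t,k)\|_{L^2_p}^2+\delta_0\int_1^t\|\micro\hat g\|_{L^2_p}^2\,d\tau\le \tfrac12 e^{2\lambda}\|\hat f_1(k)\|_{L^2_p}^2+\lambda\int_1^t\|\hat g\|_{L^2_p}^2\,d\tau+\int_1^t e^{-\lambda\tau}\big|\langle\hat\Gamma(\hat g,\hat g),\micro\hat g\rangle\big|\,d\tau .
\]
The nonlinear term is bounded exactly as in \eqref{pro.tpm.p3}--\eqref{pro.tpm.p4}, using \(e^{-\lambda\tau}\le1\) and Lemma \ref{lem: Boltzmann nonlinear}, by \(\|g\|_{L^1_kL^\infty_TL^2_p}\|g\|_{L^1_kL^2_TL^2_p}\). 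After taking square roots and integrating over \(k\), this yields the estimate \eqref{est.fRHS} for \(g\), with the additional term \(\sqrt{\lambda}\,\|g\|_{L^1_kL^2_TL^2_p}\) on the right-hand side.

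\textbf{Macroscopic estimate.} I will repeat the proof of Theorem \ref{thmmacroscop} for the weighted coefficients. The conservation laws \eqref{local.conservation.FTa}--\eqref{local.conservation.FTc} and the equations \eqref{macroscopic.FT.B}--\eqref{macroscopic.FT.Bjl} for \(e^{\lambda t}\widehat{\mathcal{A}}\), and so on, acquire extra terms \(-\lambda e^{\lambda t}(\cdot)\), which are of the form \(\lambda\times\)(macroscopic or \(\widehat{\mathfrak m}\)) quantities. After the integration by parts in time, these extra terms contribute at most \(C\lambda\int_1^T(|\widehat{[\mathcal A,\mathcal B,\mathcal C]}_g|^2+\|\micro\hat g\|^2_{L^2_p})\,dt\), where the subscript \(g\) denotes the coefficients of \(g\). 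Hence
\[
\|[\mathcal A,\mathcal B,\mathcal C]_g\|_{L^1_kL^2_T}\lesssim \|\micro g\|_{L^1_kL^2_TL^2_p}+\|g\|_{L^1_kL^\infty_TL^2_p}+\|f_1\|_{L^1_kL^2_p}+\sqrt\lambda\,\|g\|_{L^1_kL^2_TL^2_p}+\|g\|_{L^1_kL^\infty_TL^2_p}\|g\|_{L^1_kL^2_TL^2_p},
\]
where the last term comes from Lemma \ref{L8.7} exactly as in \eqref{ABC.est.n}.

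\textbf{Closing.} I will add a small multiple of the macroscopic estimate to the microscopic one. This bounds \(\|g\|_{L^1_kL^\infty_TL^2_p}+\|g\|_{L^1_kL^2_TL^2_p}\) by
\[
C\|f_1\|_{L^1_kL^2_p}+C\sqrt\lambda\,\|g\|_{L^1_kL^2_TL^2_p}+C\|g\|_{L^1_kL^\infty_TL^2_p}\|g\|_{L^1_kL^2_TL^2_p}.
\]
Choosing \(\lambda\) small absorbs the \(\sqrt\lambda\) term. The quadratic term is absorbed by a continuity argument in \(T\) from the local theory (Theorem \ref{local.thm}), using smallness of \(\|f_1\|_{L^1_kL^2_p}\). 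This proves \eqref{exponentialdecaynorm}.

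The main obstacle is bookkeeping: I must check that every \(\lambda\)-generated term in the macroscopic system is genuinely absorbable and that no uncontrolled boundary term in time appears. In Theorem \ref{thmmacroscop} the boundary terms are controlled by \(\sup_t\|\hat g\|_{L^2_p}\), so they should be harmless. Alternatively, one can avoid re-deriving the macroscopic estimate altogether: first establish the unweighted coercive inequality \(\mathcal E'+c\,\mathcal D\le 0\) with \(\mathcal D\gtrsim \mathcal E\) on the Fourier side, using the interactive functional behind Theorem \ref{thmmacroscop}, and then multiply by \(e^{2\lambda t}\).
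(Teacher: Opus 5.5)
Your proposal is correct and is essentially the paper's argument. The paper also sets $\hat h=e^{\lambda t}\hat f$, reruns the proof of Proposition~\ref{Torus Boltzmann Micro} to pick up an extra term in $\|h\|_{L^1_kL^2_TL^2_p}$, and reruns the macroscopic estimate of Theorem~\ref{thmmacroscop} for small $\lambda$ when $\mathfrak{q}=0$. It then closes by smallness of $\lambda$ and of $\|f_1\|_{L^1_kL^2_p}$, with your $\sqrt{\lambda}$ (the paper writes $\lambda$) being the accurate power after taking square roots and harmless either way.
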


\begin{proof}
For a small constant $\lambda>0$ we consider $\hat{h}(t,k,\pv) = e^{\lambda t} \hat{f}(t,k,\pv)$ with ${h}_1=e^{\lambda}{f}_1$.  Then we can re-write \eqref{torus fourier transformed} as
\begin{align}\notag 
\left(\partial_t  + i\frac{\pv}{|\tq \pv|}\cdot k 
-\frac{2\qq }{t} p^i \partial_{p^i} \right) \hat{h}
+\linL\hat{h}=e^{-\lambda t}\hat{\Gamma}(\hat{h},\hat{h})+ \lambda \hat{h}.
\end{align}
Then following the proof of \eqref{est.fRHS} we obtain
\begin{align*}
    \norm{h}_{L^1_kL^\infty_TL^2_{\pv}}
+
{\norm{\micro h}_{L^1_kL^2_TL^2_{\pv}}}
\lesssim 
\Vert {f}_1\Vert_{L^1_kL^2_{\pv}}
+{\Vert{ h}\Vert_{L^1_kL^\infty_TL^2_{\pv}}} {\norm{ h}_{L^1_kL^2_TL^2_{\pv}}}
+\lambda {\norm{ h}_{L^1_kL^2_TL^2_{\pv}}}.
\end{align*}
Furthermore, directly following the proof of \eqref{abcest.torus1} and \eqref{ABC.est.n}, for a sufficiently small constant $\lambda>0$ when $\qq =0$ we have 
\begin{align}\notag
\norm{ e^{\lambda t} [\mathcal{A},\mathcal{B},\mathcal{C}]}_{L^1_k L^2_T} 
&\lesssim
\Vert h \Vert_{L^1_k L^\infty_T L^2_{\pv}} +
\Vert \micro h \Vert_{L^1_k L^2_T L^2_{\pv}} + \Vert h_1 \Vert_{L^1_k L^2_{\pv}} \\ \notag
&\qquad+ 
    \norm{h}_{L^1_kL^\infty_TL^2_{\pv}} \norm{h}_{L^1_kL^2_TL^2_{\pv}}.
\end{align}
Collecting these estimates for $\Vert {f}_1\Vert_{L^1_kL^2_{\pv}}$ and $\lambda>0$ sufficiently small we obtain the decay estimate \eqref{exponentialdecaynorm}. This proves the exponential time decay when $\qq=0$.
\end{proof}

\subsection{Time-decay of weighted norms}\label{rapid.decay.solution}

Let us prove the decay estimates for weighted norms \eqref{decayestimate} using the weight $w_m(\tqq\pv)$ defined in \eqref{weight.fcn.def}.

\begin{proposition}\label{prop:weightenergy}
    Let \(\qq\in [0,1]\) and \(m\geq 0\). Under the assumptions of Theorem \ref{Torus Existence}, there holds \begin{align}
\forall t\geq 1, \qquad\Vert |t^{2\qq}\pv|^{m/2}{f}(t)\Vert_{L^1_kL^2_{\pv}}
\lesssim  t^{-3\qq}\Vert w_m{f}\Vert_{L^1_kL^2_{\pv}}(1).\label{decayweighted}
\end{align}
\end{proposition}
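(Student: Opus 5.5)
We propose to run the weighted energy method: repeat the proof of Proposition \ref{Torus Boltzmann Micro}, now with the weight $w_m^2(\tqq\pv)$ inserted, and close the resulting estimate with the unweighted global bound \eqref{uniform.global}. The key fact is \eqref{nullRHStp}: $w_m^2(\tqq\pv)$ is annihilated by the transport operator, so it commutes with $\partial_t + i\frac{\pv}{|\tq\pv|}\cdot k - \frac{2\qq}{t}p^i\partial_{p^i}$.

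\textbf{Step 1: weighted energy identity.} Multiply \eqref{torus fourier transformed} by $w_m^2(\tqq\pv)\bar{\hat f}$, take real parts, integrate in $p$, and use \eqref{timeDeriv.pq}. This gives
\begin{equation*}
\frac12\frac{d}{dt}\Big(t^{6\qq}\|w_m\hat f\|_{L^2_p}^2\Big)+t^{6\qq}\real{\langle w_m^2\linL\hat f,\hat f\rangle}=t^{6\qq}\real{\langle w_m^2\hat\Gamma(\hat f,\hat f),\hat f\rangle}.
\end{equation*}

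\textbf{Step 2: the linear term.} Apply Corollary \ref{2.10}:
\begin{equation*}
t^{3\qq}\langle w_m^2\linL\hat f,\hat f\rangle\gtrsim\|w_m\hat f\|_{L^2_p}^2-C_R\|\ind_{\le R}\hat f\|_{L^2_p}^2 .
\end{equation*}
Then integrate in time over $[1,t]$, take square roots via \eqref{sqrt.ineq}, take the supremum in $t$, and sum in $k$ as in \eqref{v-int} and \eqref{pro.tpm.p2}. This yields a bound for
\begin{equation*}
\|t^{3\qq}w_mf\|_{L^1_kL^\infty_TL^2_p}+\|t^{3\qq/2}w_mf\|_{L^1_kL^2_TL^2_p}
\end{equation*}
by three terms:
\begin{itemize}
\item the initial data term $\|w_mf\|_{L^1_kL^2_p}(1)$;
\item $C_R^{1/2}\|t^{3\qq/2}f\|_{L^1_kL^2_TL^2_p}$, which \eqref{uniform.global} already controls by $\|f_1\|_{L^1_kL^2_p}\le\|w_mf_1\|_{L^1_kL^2_p}$;
\item the weighted nonlinear term.
\end{itemize}

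\textbf{Step 3: the nonlinear term.} Use Lemma \ref{thm1} together with the Fourier convolution argument of Lemma \ref{lem: Boltzmann nonlinear}, which gives
\begin{equation*}
|\langle w_m^2\hat\Gamma(\hat f,\hat f),\hat f\rangle|(k)\lesssim\|w_m\hat f(k)\|_{L^2_p}\int_{\Z^3}\|w_m\hat f(k-l)\|_{L^2_p}\|w_m\hat f(l)\|_{L^2_p}\,d\Sigma(l).
\end{equation*}
Then follow \eqref{pro.tpm.p4} and \eqref{minkowski.ineq} verbatim to bound it by $\eta\|t^{3\qq/2}w_mf\|+\eta^{-1}\|t^{3\qq}w_mf\|_{L^1_kL^\infty_TL^2_p}\|t^{3\qq/2}w_mf\|_{L^1_kL^2_TL^2_p}$. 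This is absorbed under a smallness assumption $\|w_mf_1\|\le\epsilon_1$, via a continuity argument started from the local solution.

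\textbf{Step 4: conclusion.} The outcome is $t^{3\qq}\|w_m f(t)\|_{L^1_kL^2_p}\lesssim\|w_mf\|_{L^1_kL^2_p}(1)$. Since $|\tqq\pv|^{m/2}\le w_m$, this gives \eqref{decayweighted}.

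\textbf{Main obstacle.} The hard step is making sure the $\mathcal{K}$-part of $\linL$ under the weight produces no bad error. The small part $\compK_s$ is handled by Proposition \ref{noKestimate} directly in weighted norms. The compact part leaves only the localized term $\|\ind_{\le R}f\|$, which carries no weight and is therefore controlled by \eqref{uniform.global}.

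A related subtlety is that the dissipation from Corollary \ref{2.10} acts on all of $f$, not only on $\micro f$. This works precisely because the macroscopic part is absorbed into the $C_R$ term, so no separate weighted macroscopic estimate should be needed.
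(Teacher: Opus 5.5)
Your proposal is correct and follows essentially the same route as the paper. You commute $w_m^2(\tqq\pv)$ through the transport operator via \eqref{nullRHStp}, apply Corollary \ref{2.10} so that the only leftover is the unweighted localized term (controlled by \eqref{uniform.global}), bound the nonlinearity with the weighted trilinear Lemma \ref{thm1} and the Minkowski/Young argument of Proposition \ref{Torus Boltzmann Micro}, and close by smallness of $\|w_m f_1\|_{L^1_kL^2_p}$. Your ordering (coercivity first, so the full weighted dissipation is available to absorb the $\eta$-term, which is why no microscopic-projection trick is needed) is, if anything, slightly cleaner than the paper's presentation.
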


\begin{proof}
We multiply \eqref{torus fourier transformed} by $w_m(\tqq\pv)^2$ in \eqref{weight.fcn.def} and use the invariance in \eqref{nullRHStp} to obtain
\begin{multline}\notag
   \left(\partial_t  + i\frac{\pv}{|\tq \pv|}\cdot k 
-\frac{2\qq }{t} p^i \partial_{p^i} \right) \left( w_m(\tqq\pv)^2 \hat{f}(t,k,\pv)\right)
\\
+w_m(\tqq\pv)^2\linL\hat{f}(t,k,\pv)=w_m(\tqq\pv)^2\hat{\Gamma}(\hat{f},\hat{f})(t,k,\pv). 
\end{multline}
Then following the proof of Proposition \ref{Torus Boltzmann Micro} we obtain instead of \eqref{est.fRHS} that
\begin{multline}
\norm{\tTq w_mf}_{L^1_kL^\infty_TL^2_{\pv}}
+
\int_{\Z^3}\left(\int^T_1  \tau^{6\qq} w_m(\tau^{2\qq}\pv)^2\real{\irnp{\linL\hat{f}}{\hat{f}}}(t,k)d\tau\right)^{\frac12} d\Sigma(k)
\\
\lesssim 
\Vert w_m {f}\Vert_{L^1_kL^2_{\pv}}(1)
+{\Vert{\tTq w_mf}\Vert_{L^1_kL^\infty_TL^2_{\pv}}} {\norm{t^{3\qq/2} w_m f}_{L^1_kL^2_TL^2_{\pv}}}.
\notag
\end{multline}
We apply now Corollary \ref{2.10} to obtain
\begin{multline}\notag
\norm{\tTq w_mf}_{L^1_kL^\infty_TL^2_{\pv}}
+
{\norm{t^{3\qq/2} w_m f}_{L^1_kL^2_TL^2_{\pv}}}
\\
\lesssim 
\Vert w_m {f}\Vert_{L^1_kL^2_{\pv}}(1)
+{\Vert{\tTq w_mf}\Vert_{L^1_kL^\infty_TL^2_{\pv}}} {\norm{t^{3\qq/2} w_m f}_{L^1_kL^2_TL^2_{\pv}}}
+
\norm{t^{3\qq/2}f}_{L^1_kL^2_TL^2_{\pv}}.
\end{multline}
Next we use \eqref{uniform.global} to obtain $\norm{t^{3\qq/2}f}_{L^1_kL^2_TL^2_{\pv}} \lesssim 
\norm{ {f}_1}_{L^1_kL^2_{\pv}} \lesssim \Vert w_m {f}\Vert_{L^1_kL^2_{\pv}}(1)$.  Then if we further start with a  sufficiently small $\Vert w_m {f}\Vert_{L^1_kL^2_{\pv}}(1)$ norm, then the norm ${\Vert{\tTq w_mf}\Vert_{L^1_kL^\infty_TL^2_{\pv}}}$ will remain small for a short period of time, so that we obtain
\begin{equation}
    \norm{\tTq w_mf}_{L^1_kL^\infty_TL^2_{\pv}}
+
{\norm{t^{3\qq/2} w_m f}_{L^1_kL^2_TL^2_{\pv}}}
\lesssim 
\Vert w_m {f}\Vert_{L^1_kL^2_{\pv}}(1).
\label{torus.ft.weight}
\end{equation}
Then we obtain from \eqref{torus.ft.weight} the rapid decay rates stated in \eqref{decayweighted}. 
\end{proof}

\begin{remark}
The weights $w_{m}(\tqq \pv)$ can be directly added to the proofs of the faster time decay rates in the previous subsections to obtain the estimates in \eqref{decayestimate}. 
\end{remark}

\subsection{Propagation of spatial regularity}\label{sec.spatial.regularity.solution}

We next prove the propagation of spatial regularity for solutions to \eqref{rBoltz00}--\eqref{idff}. We study the Boltzmann equation in Fourier space as in \eqref{torus fourier transformed}. And for $l>0$, we multiply \eqref{torus fourier transformed} by $\langle k \rangle^{l}$, and prove analogous global uniform bounds to propagate this frequency weight and thereby show the propagation of spatial regularity. 

\begin{proposition}\label{prop:spatreg}
       Let \(\qq\in [0,1]\). Under the assumptions of Theorem \ref{Torus Existence}, there holds 
\begin{align}\label{weight.propagate}
\forall T\geq 1, \qquad    {\Vert{\tTq f}\Vert_{L^1_{k,l}L^\infty_TL^2_{\pv}}} 
+ 
\|t^{3\qq/2} f\|_{L^1_{k,l}L^2_TL^2_p}
\lesssim 
\Vert  {f}_1\Vert_{L^1_{k,l}L^2_{\pv}}.
\end{align}
\end{proposition}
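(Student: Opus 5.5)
The plan is to repeat the whole a priori argument of Sections \ref{macroscopic.eqns.sec}--\ref{sectionWienernorm} with every Fourier-mode quantity multiplied by $\langle k\rangle^l$. The linear parts of all the estimates are pointwise in $k$. The energy identity \eqref{pro.tpm.p1}, the coercivity bound of Proposition \ref{lowerN}, and the macroscopic estimates \eqref{FT.C.estimate}, \eqref{FT.AC.estimate} and \eqref{FT.B.estimate} are each proved for fixed $k\in\Z^3$ and then integrated in $d\Sigma(k)$. Multiplying \eqref{v-int} and the per-$k$ macroscopic bound by $\langle k\rangle^l$ before integrating over $k$ therefore gives
\begin{multline*}
\norm{\tTq f}_{L^1_{k,l}L^\infty_TL^2_{\pv}}+\norm{t^{3\qq/2}f}_{L^1_{k,l}L^2_TL^2_{\pv}}
\lesssim \Vert f_1\Vert_{L^1_{k,l}L^2_{\pv}}
+\int_{\Z^3}\langle k\rangle^l\Big(\int_1^T t^{6\qq}\big|\irnp{\hat\Gamma(\hat f,\hat f)}{\micro\hat f}\big|dt\Big)^{\frac12}d\Sigma(k)\\
+\int_{\Z^3}\langle k\rangle^l\Big(\int_1^T t^{5\qq}|\widehat{\mathfrak h}|^2dt\Big)^{\frac12}d\Sigma(k),
\end{multline*}
as in the proof of \eqref{uniform.global}. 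The linear terms need no change; this step only requires checking that no step used $\langle k\rangle^0$ specifically. The division by $|k|^2\ge1$ in Section \ref{macroscopic.eqns.sec} is unaffected, and the zero mode of the macroscopic coefficients still vanishes by Corollary \ref{L8.5}.

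The nonlinear terms are handled with the elementary inequality $\langle k\rangle^l\lesssim \langle k-m\rangle^l+\langle m\rangle^l$ for $l\ge0$. Inserting it into the convolution bound of Lemma \ref{lem: Boltzmann nonlinear} splits each convolution into two pieces, one carrying the weight on $\hat f(k-m)$ and the other on $\hat f(m)$. Repeating the Cauchy--Schwarz, Young, Minkowski \eqref{minkowski.ineq} and Fubini steps of the proof of Proposition \ref{Torus Boltzmann Micro}, the factor under $L^\infty_T$ and the factor under $L^2_T$ can each take the weight. This gives the bound
\[
\norm{\tTq f}_{L^1_{k,l}L^\infty_TL^2_{\pv}}\norm{t^{3\qq/2}f}_{L^1_kL^2_TL^2_{\pv}}+\norm{\tTq f}_{L^1_kL^\infty_TL^2_{\pv}}\norm{t^{3\qq/2}f}_{L^1_{k,l}L^2_TL^2_{\pv}},
\]
and the same bound holds for the $\widehat{\mathfrak h}$ term via Lemma \ref{L8.7}. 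The one technical point is the Young step, where the factor $\langle k\rangle^l$ is distributed between the two fourth-root factors; I would simply bound $\langle k\rangle^l\le \langle k\rangle^{l/2}\langle k\rangle^{l/2}$ and attach one half-weight to each factor.

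To close the argument I use \eqref{uniform.global}: the unweighted norms are bounded by $\norm{f_1}_{L^1_kL^2_p}\le \norm{f_1}_{L^1_{k,l}L^2_p}\le\epsilon_0$. The quadratic terms are therefore $\lesssim\epsilon_0$ times the weighted norms, and they can be absorbed into the left side. This absorption is linear in the weighted norms and uses no smallness of those norms themselves, so no bootstrap is needed beyond qualitative finiteness on $[1,T]$. Finiteness can be secured by running the argument for the approximating sequence of Theorem \ref{local.thm} with the $\langle k\rangle^l$ weight included, or by truncating to $|k|\le N$ and letting $N\to\infty$.

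The main obstacle I expect is the bookkeeping in the Young/Minkowski step. The weight must end up on exactly one factor per term, and after the Fubini translation the $L^1_{k,l}$ norm must land on the correct function.
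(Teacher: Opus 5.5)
Your argument is correct, and its skeleton matches the paper's. You weight the per-$k$ energy inequality \eqref{v-int} and the per-$k$ macroscopic bounds by $\langle k\rangle^l$ before summing, and you handle the convolution by splitting the weight. Where you differ is the closing step.

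\textbf{Paper's closing.} The paper bounds the weighted nonlinear terms by products of two \emph{weighted} norms, $C_\eta\|t^{3\mathfrak{q}}f\|_{L^1_{k,l}L^\infty_TL^2_p}\|t^{3\mathfrak{q}/2}f\|_{L^1_{k,l}L^2_TL^2_p}$. It then closes by a continuity argument that uses the smallness of $\|f_1\|_{L^1_{k,l}L^2_p}$.

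\textbf{Your closing.} You keep the tame structure, with the weight on exactly one factor. You then insert the already established global bound \eqref{uniform.global} for the unweighted factor, so the absorption is linear in the weighted norms. This gives a stronger statement: only $\|f_1\|_{L^1_kL^2_p}$ has to be small, while $\|f_1\|_{L^1_{k,l}L^2_p}$ need only be finite. The price is an explicit a priori finiteness argument, which the paper's bootstrap also tacitly needs.

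\textbf{Truncation detail.} If you justify finiteness by truncating the weight, use the capped weight $\min\{\langle k\rangle,N\}^l$. It satisfies
$$\min\{\langle k\rangle,N\}^l\le 2^l\big(\min\{\langle k-m\rangle,N\}^l+\min\{\langle m\rangle,N\}^l\big)$$
uniformly in $N$. A sharp cutoff $\mathbf{1}_{|k|\le N}\langle k\rangle^l$ admits no such bound: take $k=0$ and $|m|>N$. For the sharp cutoff the convolution splitting would therefore fail.
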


The following proof follows closely the argument in \cite{1904.12086}.

\begin{proof}
Following the proofs of Lemma \ref{lem: Boltzmann nonlinear} and Theorem \ref{thmmacroscop}, one can prove that
\begin{multline*}
        \int_{\Z^3}\bigg(\int_{1}^{T} \big|\irnp{\hat{\Gamma}(\hat{f},\hat{g})}{\langle  k \rangle^{2l}\,\hat{h}}\big| dt \bigg)^{\frac{1}{2}}d\Sigma(k)
        \leq C_{\eta} \|f\|_{L^1_{k,l}L^{\infty}_TL^2_p}\|g\|_{L^1_{k,l}L^{2}_TL^2_p}
        \\
+C_{\eta} \|f\|_{L^1_{k,l}L^{2}_TL^2_p}\|g\|_{L^1_{k,l}L^{\infty}_TL^2_p}+\eta\|h\|_{L^1_{k,l}L^{2}_TL^2_p},
\end{multline*}
and also using the argument from \eqref{ABC.est.n} we have
\begin{multline*}
\norm{t^{3\qq/2} [\mathcal{A},\mathcal{B},\mathcal{C}]}_{L^1_{k,l} L^2_T} \lesssim
\Vert t^{3\qq/2}\micro f \Vert_{L^1_{k,l} L^2_T L^2_{\pv}} +\Vert \tTq f \Vert_{L^1_{k,l} L^\infty_T L^2_{\pv}} 
\\ 
+ \Vert f_1 \Vert_{L^1_{k,l} L^2_{\pv}} 
+ 
\norm{\tTq{f}}_{L^1_{k,l}L^\infty_TL^2_{\pv}} \norm{t^{3\qq/2}{f}}_{L^1_{k,l}L^2_TL^2_{\pv}}.
\end{multline*}
Next, taking the \(L^2_p\) inner product of \eqref{torus fourier transformed} with the complex conjugate of \(\langle k \rangle^{2l}\hat{f}\) and then taking the real part, we obtain, similarly to \eqref{eqn.complex}, that
\begin{multline}\notag
        \frac{1}{2}\frac{d}{dt}  \left( \tQQq \langle k \rangle^{2l}
\Vert \hat{f}\Vert_{L^2_{\pv}}^2(t,k)  \right)+ \tQQq\real{\irnp{\linL\hat{f}}{\langle k \rangle^{2l}\hat{f}}}
\\
= \tQQq\real{\irnp{\hat{\Gamma}(\hat{f},\hat{f})}{\langle k \rangle^{2l}\hat{f}}}.
\end{multline}
Then following exactly the proof of Proposition \ref{Torus Boltzmann Micro}, for any small \(\eta>0\), we have 
\begin{multline*}
{\Vert{\tTq f}\Vert_{L^1_{k,l}L^\infty_TL^2_{\pv}}} 
+ 
\|t^{3\qq/2}\micro f\|_{L^1_{k,l}L^2_TL^2_p}
\\
\lesssim 
\Vert  {f}_1\Vert_{L^1_{k,l}L^2_{\pv}}
+ 
\int_{\Z^3} \left(\int^T_1 \tTq\big|\irnp{\hat{\Gamma}(\hat{f},\hat{f})}{\langle k \rangle^{2l}\micro\hat{f}}\big|\,dt\right)^{\frac12}d\Sigma(k)
\\ 
\lesssim 
\Vert  {f}_1\Vert_{L^1_{k,l}L^2_{\pv}}
+
C_{\eta}
{\Vert{\tTq f}\Vert_{L^1_{k,l}L^\infty_TL^2_{\pv}}} 
\Vert t^{3\qq/2} {f}\Vert_{L^1_{k,l}L^2_TL^2_{\pv}}
+
\eta
\|t^{3\qq/2}\micro f\|_{L^1_{k,l}L^2_TL^2_p}.
\end{multline*}
For $\Vert  {f}_1\Vert_{L^1_{k,l}L^2_{\pv}}$ sufficiently small, after a short time ${\Vert{\tTq f}\Vert_{L^1_{k,l}L^\infty_TL^2_{\pv}}}$ remains small.  Collecting the estimates above, we obtain for $\eta>0$ sufficiently small that the estimate \eqref{weight.propagate} holds. This establishes the propagation of spatial regularity for any $l>0$ and for $\Vert {f}_1\Vert_{L^1_{k,l}L^2_{\pv}}$ sufficiently small.
\end{proof}

\section{Global stability of the vacuum solution}\label{sec.sol.nearby.vacuum}

In this section, we will prove Theorem \ref{thmvacuum}. In Section \ref{subsecexistuniqvac}, we first show the local well-posedness result Theorem \ref{thmvacuum} is based on. In Section \ref{subsglbalexistvac}, we show global existence and the uniform decay estimate \eqref{apriorivacuumestimate0} in Theorem \ref{thmvacuum}. 

\subsection{Local existence and uniqueness around the vacuum}\label{subsecexistuniqvac}

We state here local existence and uniqueness of solutions for the massless Boltzmann equation on FLRW spacetime \eqref{rBoltz00} around vacuum.

\begin{theorem}[Local existence and uniqueness near vacuum]\label{local.thm2}
Let \(\qq\in [0,1]\). Under the assumptions of Theorem \ref{thmvacuum}, there is a small $\epsilon_0>0$, a $T_0=T_0(\epsilon_0)>0$ and some  $C_0=C_0(T_0)>0$ such that if 
\begin{equation}\notag 
\| f_{1}\|_{L_{k}^1L^2_{\pv}}\leq\epsilon_0,
\end{equation}
then the massless Boltzmann equation on FLRW spacetime around vacuum, \eqref{perturb.massless.rBoltz0},
 admits a unique local-in-time solution
$$
f(t,x,\pv),~~ \ 0\leq t\leq T_0<\infty, ~~\ x\in  \T^3,~~\ \pv\in \R^3,
$$
which satisfies the  uniform estimate
\begin{equation}\notag 
\| \tTq f\|_{L^1_{{k}}L^\infty_{T_0}L^2_{\pv}} 
\leq C_{0}
\|f_1\|_{L_{{k}}^1L^2_{\pv}}.
\end{equation}
If additionally 
$F_1(x,\pv)= J^{\frac{1}{2}}(\pv)f_1(x,\pv)\geq0$,
then the solution satisfies
$$
F(t,{x},\pv)= \sqrt{J(t^{2\qq}p) }f(t,{x},\pv)\geq0.
$$
\end{theorem}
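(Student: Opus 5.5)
We will follow the proof of Theorem \ref{local.thm}, which becomes simpler here because the linearised operator around vacuum vanishes: there is no $\nu_0 t^{-3\qq}$ damping and no $\compK$ term.

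\textbf{Step 1: linear problem.} We first solve, for a given function $\hh$ with $\|\tTq \hh\|_{L^1_kL^\infty_TL^2_p}\le\epsilon_0$, the linear problem
\begin{equation*}
\partial_t h+\frac{p^i}{|\tq p|}\partial_{x^i}h-\frac{2\qq}{t}p^i\partial_{p^i}h=\Gamma(h,\hh),\qquad h(1)=h_1 .
\end{equation*}
Existence follows from the characteristics \eqref{characterisitics.vacuum.backward} and a Duhamel or fixed-point argument, exactly as for Lemma \ref{lb.loc}. The a priori estimate comes from the energy identity used for \eqref{eqn.complex}, which now contains no $\linL$ term:
\begin{equation*}
\tfrac12\tQQq\|\hat h\|_{L^2_p}^2(t,k)=\tfrac12\|\hat h_1(k)\|_{L^2_p}^2+\int_1^t\tau^{6\qq}\,\real{\irnp{\hat\Gamma(\hat h,\hat\hh)}{\hat h}}\,d\tau .
\end{equation*}
Lemma \ref{lem: Boltzmann nonlinear} bounds the integrand by $\tau^{-3\qq}\|\tau^{3\qq}\hat h(k)\|\int\|\tau^{3\qq}\hat h(k-l)\|\,\|\tau^{3\qq}\hat\hh(l)\|\,d\Sigma(l)$. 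Taking the square root, the supremum in $t$, and summing in $k$ by Fubini and translation invariance, as in Proposition \ref{Torus Boltzmann Micro}, gives
\begin{equation*}
\|\tTq h\|_{L^1_kL^\infty_TL^2_p}\le C\|h_1\|_{L^1_kL^2_p}+C(T-1)^{1/2}\|\tTq h\|_{L^1_kL^\infty_TL^2_p}\|\tTq\hh\|_{L^1_kL^\infty_TL^2_p}.
\end{equation*}
The time integral is only used on the short interval $[1,T]$, so we do not need $\qq>1/3$ here. This is why the theorem is stated for all $\qq\in[0,1]$.

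\textbf{Step 2: iteration.} We set $f^0=f_1$ and let $f^{n+1}$ solve $(\partial_t+\dots)f^{n+1}=\Gamma(f^{n+1},f^n)$. By induction we obtain the uniform bound $\|\tTq f^n\|_{L^1_kL^\infty_TL^2_p}\le 2C\|f_1\|$, provided $\epsilon_0$ and $T_0-1$ are small. Writing $\Gamma(f^{n+1},f^n)-\Gamma(f^n,f^{n-1})=\Gamma(f^{n+1}-f^n,f^n)+\Gamma(f^n,f^n-f^{n-1})$, the same trilinear estimate shows the sequence is a contraction, hence Cauchy in $L^1_kL^\infty_{T_0}L^2_p$. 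The limit is a mild solution satisfying the stated bound.

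\textbf{Step 3: uniqueness.} For two solutions $f,h$, the difference satisfies the equation with right-hand side $\Gamma(f,f-h)+\Gamma(f-h,h)$. The same estimate with small norms gives $\|\tTq(f-h)\|\lesssim\|f_1-h_1\|$, which yields uniqueness.

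\textbf{Step 4: positivity (main obstacle).} For non-negativity of $F=J^{1/2}(\tqq p)f$ we use the scheme \eqref{rBoltz.eqn0.approx} with $F^0=0$, or $F^0=U(t,1)F_1$. The Duhamel formula \eqref{positivity.duhamel} propagates $F^n\ge0$ because $R(F^n)$ enters only through $e^{-\widetilde\Phi^n}$ and $Q_+^n\ge0$. The hard part is showing that this scheme, rewritten for $f^n$ as $\Gamma_+(f^n,f^n)-\Gamma_-(f^{n+1},f^n)$ with no $\compK$ term, still satisfies the uniform bounds and is Cauchy. This requires controlling $\Gamma_\pm$ separately; we do so using the $\Gamma_1$ and $\Gamma_2$ bounds from the proof of Lemma \ref{thm1}. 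Uniqueness then identifies the limit with the solution constructed in Step 2, and the embedding $L^1_kL^\infty_TL^2_p\subset L^\infty_{t,x}L^2_p$ transfers non-negativity to the limit almost everywhere.
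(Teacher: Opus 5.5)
Your proposal is correct and matches the paper, whose entire proof of Theorem \ref{local.thm2} is the remark that one reruns the proof of Theorem \ref{local.thm}: a linear problem, the iteration with $\Gamma(f^{n+1},f^n)$, uniqueness via the difference equation, and positivity via \eqref{rBoltz.eqn0.approx}--\eqref{positivity.duhamel}. You do exactly this with $\linL$ and $\compK$ removed, and you correctly note that $\int_1^{T}\tau^{-3\qq}\,d\tau\le T-1$ makes $\qq>\frac13$ unnecessary locally.

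One minor slip: summing over $k$ gives $C(T-1)^{1/2}\|\tTq h\|\,\|\tTq \hh\|^{1/2}$ by Cauchy--Schwarz in $k$, or $\eta\|\tTq h\|+C_\eta (T-1)\|\tTq h\|\,\|\tTq\hh\|$ by Young's inequality, rather than the product you displayed. Either form is still absorbed for $T_0-1$ and $\epsilon_0$ small, so nothing downstream changes.
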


The previous theorem is obtained by the same arguments performed in the proof of Theorem \ref{local.thm} concerning local existence and uniqueness of solutions for the massless Boltzmann around the Maxwell--J\"uttner equilibrium.

\subsection{Global existence near the vacuum solution}\label{subsglbalexistvac}

We will now prove the global-in-time uniform decay estimate \eqref{apriorivacuumestimate0} from Theorem \ref{thmvacuum}.

\begin{theorem}
    Let \(\qq\in (\frac{1}{3},1]\). Under the assumptions of Theorem \ref{thmvacuum}, there is a unique global solution \(f\) of \eqref{perturb.massless.rBoltz0}. Moreover, the following estimate holds,
\begin{equation}\notag 
\forall T\geq 1,\qquad {\norm{t^{3\qq} f}_{L^1_kL^\infty_TL^2_{\pv}}}
\lesssim 
\norm{f_1}_{L^1_kL^2_{\pv}}.  
\end{equation}
\end{theorem}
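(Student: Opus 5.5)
We will prove the uniform estimate a priori, for a solution on \([1,T]\), and then obtain global existence from Theorem \ref{local.thm2} by a continuity argument. We take the Fourier transform in \(x\) of \eqref{perturb.massless.rBoltz0}. This gives the same transported equation as \eqref{torus fourier transformed}, but with no \(\linL\) term. Pairing with \(\bar{\hat f}\), taking real parts, integrating by parts in \(p\), and using \eqref{timeDeriv.pq} as in the derivation of \eqref{eqn.complex}, we obtain
\begin{equation*}
\frac{1}{2}\frac{d}{dt}\Big(\tQQq\Vert \hat f\Vert^2_{L^2_p}(t,k)\Big)=\tQQq\,\real{\irnp{\hat\Gamma(\hat f,\hat f)}{\hat f}}.
\end{equation*}
Integrating over \([1,t]\) gives
\begin{equation*}
\tQQq\Vert\hat f\Vert^2_{L^2_p}(t,k)\le \Vert\hat f_1(k)\Vert^2_{L^2_p}+2\int_1^t\tau^{6\qq}\big|\irnp{\hat\Gamma(\hat f,\hat f)}{\hat f}\big|\,d\tau .
\end{equation*}

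Next we bound the nonlinear term with Lemma \ref{lem: Boltzmann nonlinear}, which rests on the trilinear estimate of Lemma \ref{thm1} and holds without using the kernel of \(\linL\):
\begin{equation*}
\tau^{6\qq}\big|\irnp{\hat\Gamma(\hat f,\hat f)}{\hat f}\big|(\tau,k)\lesssim \tau^{-3\qq}\,\big(\tau^{3\qq}\Vert\hat f(\tau,k)\Vert_{L^2_p}\big)\int_{\Z^3}\tau^{3\qq}\Vert\hat f(\tau,k-l)\Vert_{L^2_p}\,\tau^{3\qq}\Vert\hat f(\tau,l)\Vert_{L^2_p}\,d\Sigma(l).
\end{equation*}
Each factor is bounded by its supremum over \([1,T]\). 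Since \(\qq>\frac13\), we have \(\int_1^\infty\tau^{-3\qq}d\tau=(3\qq-1)^{-1}<\infty\), and this integrability is the only place the restriction on \(\qq\) enters. Setting \(a(k)=\sup_{1\le t\le T}t^{3\qq}\Vert\hat f(t,k)\Vert_{L^2_p}\), we take square roots via \eqref{sqrt.ineq} and then the supremum in \(t\). Using \(\sqrt{xy}\le \frac{1}{2}(x+y)\), this gives
\begin{equation*}
a(k)\lesssim \Vert\hat f_1(k)\Vert_{L^2_p}+a(k)+\frac{C}{3\qq-1}\int_{\Z^3} a(k-l)a(l)\,d\Sigma(l).
\end{equation*}
This form cannot be used as it stands, because the \(a(k)\) on the right cannot be absorbed. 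We therefore keep the square-root form \(a(k)^{1/2}\big(\int a(k-l)a(l)\,d\Sigma(l)\big)^{1/2}\) and apply Young's inequality with a small parameter \(\eta\), so that the term becomes \(\eta\, a(k)+C_\eta\int a(k-l)a(l)\,d\Sigma(l)\). The part \(\eta\, a(k)\) is then absorbed into the left side.

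Finally we sum over \(k\). By Fubini and translation invariance, the convolution term satisfies \(\sum_k\int a(k-l)a(l)\,d\Sigma(l)=\Vert t^{3\qq}f\Vert^2_{L^1_kL^\infty_TL^2_p}\). This gives
\begin{equation*}
\Vert t^{3\qq}f\Vert_{L^1_kL^\infty_TL^2_p}\le C\Vert f_1\Vert_{L^1_kL^2_p}+C\Vert t^{3\qq}f\Vert^2_{L^1_kL^\infty_TL^2_p},
\end{equation*}
with \(C\) independent of \(T\). For \(\epsilon_0\) small, a bootstrap yields \(\Vert t^{3\qq}f\Vert\le 2C\Vert f_1\Vert\): the bound holds on a short interval by Theorem \ref{local.thm2}, and the quadratic inequality improves it on any interval where it holds. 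Continuity in \(T\) then allows the local solution to be extended indefinitely. Uniqueness follows as in Theorem \ref{local.thm}, by running the same estimate on a difference of two solutions; the difference satisfies an equation with bilinear forcing \(\Gamma(f,f-h)+\Gamma(f-h,h)\) and initial data \(f_1-h_1\). Nonnegativity of \(F\) is inherited from Theorem \ref{local.thm2}.

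The main obstacle is the absorption step: it must be arranged so that no uncontrolled \(a(k)\) remains on the right-hand side. Handling it with the square root and Young's inequality before summing in \(k\) should suffice.
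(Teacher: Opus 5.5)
Your proposal is correct and follows essentially the same route as the paper. Both use the weighted energy identity without \(\linL\), the trilinear bound of Lemma \ref{lem: Boltzmann nonlinear} producing the factor \(\tau^{-3\qq}\) (integrable exactly because \(\qq>\frac13\)), and a quadratic inequality closed by a continuity argument. The one cosmetic difference is the order of two steps. You apply Young's inequality \(a(k)^{1/2}b(k)^{1/2}\le \eta\, a(k)+C_\eta\, b(k)\) pointwise in \(k\) and then sum. The paper sums in \(k\) first, uses Cauchy--Schwarz to bound the nonlinear contribution by \(\Vert t^{3\qq}f\Vert_{L^1_kL^\infty_TL^2_{\pv}}^{3/2}\), and only then splits this as \(\eta\Vert t^{3\qq}f\Vert_{L^1_kL^\infty_TL^2_{\pv}}+C_\eta\Vert t^{3\qq}f\Vert_{L^1_kL^\infty_TL^2_{\pv}}^2\). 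Both orderings give the same closed estimate.
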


\begin{proof}
Now take the Fourier transform of \eqref{perturb.massless.rBoltz0} to obtain
\begin{align}\label{vac.torus.fourier.transformed}
\left(\partial_t  + i\frac{\pv}{|\tq \pv|}\cdot k 
-\frac{2\qq }{t} p^i \partial_{p^i} \right) \hat{f}(t,k,\pv)
=\hat{\Gamma}(\hat{f},\hat{f})(t,k,\pv).
\end{align}
Next take  the product of  \eqref{vac.torus.fourier.transformed} with the complex conjugate of $\hat{f}(t,k,\pv)$, denoted $\bar{\hat{f}}$, and then take the real part of the resulting equation to obtain
\begin{align*}
\frac{1}{2}\left(\frac{d}{dt}   
-\frac{2\qq }{t} p^i \partial_{p^i} \right)
\vert \hat{f}(t,k,\pv)\vert^2
=\real{\bar{\hat{f}}\hat{\Gamma}(\hat{f},\hat{f})}.
\end{align*}
Next integrate the above with respect to $\pv$ to obtain
\begin{align*}
    \frac{1}{2}\left(\frac{d}{dt}   
+\frac{6\qq }{t}  \right)\Vert \hat{f}\Vert_{L^2_{\pv}}^2(t,k)
= \real{\irnp{\hat{\Gamma}(\hat{f},\hat{f})}{\hat{f}}}.
\end{align*}
From \eqref{timeDeriv.pq} this becomes
\begin{align*}
    \frac{1}{2}\frac{d}{dt}  \left( \tQQq
\Vert \hat{f}\Vert_{L^2_{\pv}}^2(t,k)  \right)= \tQQq\real{\irnp{\hat{\Gamma}(\hat{f},\hat{f})}{\hat{f}}}.
\end{align*}
We integrate in time over $[1,t]$ to obtain
\begin{align}\notag
\frac{1}{2} \tQQq \Vert \hat{f}\Vert_{L^2_{\pv}}^2(t,k)
\leq &~\frac{1}{2}\Vert \hat{f}_1\Vert_{L^2_{\pv}}^2(k)
+\int^t_1 \tau^{6\qq}\real{\irnp{\hat{\Gamma}(\hat{f},\hat{f})}{\hat{f}}}
d\tau.
\end{align}
Take the square root on both sides and use \eqref{sqrt.ineq} to obtain
\begin{align*}
 \tTq \Vert \hat{f}\Vert_{L^2_{\pv}}(t,k)
\leq 
2\left\{ \Vert \hat{f}_1\Vert_{L^2_{\pv}}(k) 
+
\left(\int^t_1 \tau^{6\qq} \left|{\irnp{\hat{\Gamma}(\hat{f},\hat{f})}{\hat{f}}}\right|
d\tau \right)^{\frac12}
\right\}.  
\end{align*}
We take  $\sup_{0\le t\le T}$ on both sides and then integrate the result with respect to $d\Sigma(k)$ over $\Z^3$ to obtain
\begin{align*}
{\norm{t^{3\qq} f}_{L^1_kL^\infty_TL^2_{\pv}}}
\leq 
2\left\{ \norm{f_1}_{L^1_kL^2_{\pv}}
+
\int_{\Z^3} \left(\int^T_1 \tau^{6\qq} \left|{\irnp{\hat{\Gamma}(\hat{f},\hat{f})}{\hat{f}}}(\tau,k)\right|
d\tau \right)^{\frac12} d\Sigma(k)
\right\}.  
\end{align*}
We now estimate the non-linear term in the upper bound.

From \eqref{ineq: Boltzmann nonlinear} we have 
\begin{align}\notag
\left| \irnp{\hat{\Gamma}(\hat{f},\hat{f})(k)}{\hat{f}(k)}\right|\lesssim \norm{\hat{f}(k)}_{L^2_{\pv}}\int_{\Z^3}\Vert \hat{f}(k-l)\Vert_{L^2_{\pv}} \norm{\hat{f}(l)}_{L^2_{\pv}}  \,d\Sigma(l).
\end{align}
We thus conclude that 
\begin{multline*}
   \tau^{6\qq}\left| \irnp{\hat{\Gamma}(\hat{f},\hat{f})(\tau,k)}{\hat{f}(\tau,k)}\right|
   \\
   \lesssim 
   \tau^{-3\qq}{\norm{t^{3\qq} \hat{f}(k)}_{L^\infty_TL^2_{\pv}}}
\int_{\Z^3}  {\norm{t^{3\qq} \hat{f}(k-l)}_{L^\infty_TL^2_{\pv}}}{\norm{t^{3\qq} \hat{f}(l)}_{L^\infty_TL^2_{\pv}}} \,d\Sigma(l). 
\end{multline*}
Then for $\qq > \frac{1}{3}$ we have the following uniform estimate
\begin{multline*}
\left(\int^T_1 \tau^{6\qq} \left|{\irnp{\hat{\Gamma}(\hat{f},\hat{f})}{\hat{f}}}(\tau,k)\right|
d\tau \right)^{\frac12}
\\
\lesssim 
{\norm{t^{3\qq} \hat{f}(k)}^{\frac12}_{L^\infty_TL^2_{\pv}}}
\left(\int_{\Z^3}  {\norm{t^{3\qq} \hat{f}(k-l)}_{L^\infty_TL^2_{\pv}}}{\norm{t^{3\qq} \hat{f}(l)}_{L^\infty_TL^2_{\pv}}} \,d\Sigma(l) \right)^{\frac12}.
\end{multline*}
This holds because if $\qq > \frac{1}{3}$ then we have 
\begin{align*}
  \left|   \int^T_1 \tau^{-3\qq} d\tau \right| \leq  \frac{1}{3\qq-1}\left|  T^{1-3\qq}-1 \right| \leq \frac{2}{3\qq-1}. 
\end{align*}
Now we apply the Cauchy--Schwarz inequality with a small constant $\eta>0$ to obtain 
\begin{multline*}
\int_{\Z^3} \left(\int^T_1 \tau^{6\qq} \left|{\irnp{\hat{\Gamma}(\hat{f},\hat{f})}{\hat{f}}}(\tau,k)\right|
d\tau \right)^{\frac12} \,d\Sigma(k) 
\leq C {\norm{t^{3\qq} {f}}^{\frac32}_{L^1_k L^\infty_TL^2_{\pv}}}
\\
\leq
\eta {\norm{t^{3\qq} {f}}_{L^1_k L^\infty_TL^2_{\pv}}}
+
C_\eta {\norm{t^{3\qq} {f}}^{2}_{L^1_k L^\infty_TL^2_{\pv}}}.
\end{multline*}
We conclude for some constant $C_0>0$ that we have 
\begin{align*}
{\norm{t^{3\qq} f}_{L^1_kL^\infty_TL^2_{\pv}}}
\leq 
C_0\left\{ \norm{f_1}_{L^1_kL^2_{\pv}}
+
{\norm{t^{3\qq} {f}}^{2}_{L^1_k L^\infty_TL^2_{\pv}}}
\right\}.  
\end{align*}
Thus for $\qq > \frac{1}{3}$ and for $\norm{f_1}_{L^1_kL^2_{\pv}}$ chosen sufficiently small, then ${\norm{t^{3\qq} f}_{L^1_kL^\infty_TL^2_{\pv}}}$ will remain small for a short time.   Thus we obtain the following uniform estimate 
\begin{align}\notag 
{\norm{t^{3\qq} f}_{L^1_kL^\infty_TL^2_{\pv}}}
\leq 
2C_0 \norm{f_1}_{L^1_kL^2_{\pv}}.  
\end{align}
This provides the global uniform decay estimate for any $\qq \in (\frac{1}{3},1]$. Finally, global-in-time existence and uniqueness of the solution follows by a standard continuity argument.
\end{proof}

\begin{remark}
    Let \(\qq\in [0,1]\), and recall the weight \(\tiq(t)\) defined in \eqref{tiq.weight}--\eqref{weightflowq1}. Then, we have the following invariant quantity for the massless Boltzmann equation,
\begin{multline}\label{collisionalINVARIANCE}
    |\pv'| \left|x+\tiq(\tau)\left(\frac{\pv}{ |\pv|}-\frac{\pv'}{ |\pv'|}\right)\right|^2 
    +
|\qv'| \left|x+\tiq(\tau)\left(\frac{\pv}{ |\pv|}-\frac{\qv'}{ |\qv'|}\right)\right|^2 
  \\
  =
|\pv| |x|^2
+
|\qv| \left|x+\tiq(\tau)\left(\frac{\pv}{ |\pv|}-\frac{\qv}{ |\qv|}\right)\right|^2. 
\end{multline}
Then, in the whole space $\R^3$, the invariant \eqref{collisionalINVARIANCE} can be used to quantify the dispersion and prove a global-in-time a priori estimate for near vacuum initial data in the style of \cite{MR0760333} and many subsequent works.  In this direction, we refer to \cite{GL1996}.  
\end{remark}

\appendix

\section{Conservation laws and H-theorem}\label{appconservlaws}

In this appendix we show the conservation in time of mass, energy, and momentum, and also the H-theorem.  We first define the symmetrised collision operator \[Q^*(F,G):=\frac{t^{3\qq}}{2}\int_{\mathbb{R}^3}\int_{\mathbb{S}^2} v_{\Emptyset}\sigma \Big(F(p')G(q')+F(q')G(p')-
F(p)G(q)-F(q)G(p)\Big) d\omega dq.\] Note that the collision term satisfies \(Q(F,F)=Q^*(F,F)\).

\begin{lemma}
For any sufficiently regular functions \(\varphi(p)\), \(F(p)\), and \(G(p)\) decaying at infinity, we have 
\begin{multline*}
       2\int_{\R^3_p} Q^*(F,G)\varphi dp
       \\
=t^{3\qq}\int_{\mathbb{R}^3}\int_{\mathbb{R}^3}\int_{\mathbb{S}^2} v_{\Emptyset}\sigma\big(F(p')G(q')+F(q')G(p')\big)\begin{pmatrix}
        \varphi(p) \\ \varphi (q) \\ -\varphi (p')\\ -\varphi (q')
    \end{pmatrix} d\omega dqdp
    \\
-t^{3\qq}\int_{\mathbb{R}^3}\int_{\mathbb{R}^3}\int_{\mathbb{S}^2} v_{\Emptyset}\sigma\big(F(p)G(q)+F(q)G(p)\big)\begin{pmatrix}
        \varphi(p) \\ \varphi (q) \\ -\varphi (p')\\ -\varphi (q')
    \end{pmatrix} d\omega dqdp.
\end{multline*}
\end{lemma}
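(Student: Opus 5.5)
The identity is the standard weak formulation of the symmetrised collision operator. It follows from the two collision symmetries: the exchange $p\leftrightarrow q$, and the pre/post-collisional exchange $(p,q)\leftrightarrow(p',q')$. I will use the facts already recorded in the paper. First, $v_{\Emptyset}$ and $\sigma$ depend only on the collision invariants $\mrel,s$ and on $\theta$, and are therefore unchanged under both exchanges. Second, the Jacobian of $(p,q)\mapsto(p',q')$ is given by Lemma \ref{Lem change of variables}. The plan is to write
\[
2\int_{\R^3_p} Q^*(F,G)\varphi\,dp = t^{3\qq}\int\!\!\int\!\!\int v_{\Emptyset}\sigma\,\big(\text{gain}-\text{loss}\big)\,\varphi(p)\,d\omega\,dq\,dp,
\]
where gain $=F(p')G(q')+F(q')G(p')$ and loss $=F(p)G(q)+F(q)G(p)$. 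I then produce three further representations of the same integral, with the test function $\varphi(p)$ replaced in turn by $\varphi(q)$, $-\varphi(p')$ and $-\varphi(q')$. The vector notation in the statement I read as the sum of these four entries, with the overall normalisation fixed by averaging the four representations.

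\textbf{Step 1 ($p\leftrightarrow q$).} I rename $p\leftrightarrow q$ in the integral. In the centre of momentum parametrisation \eqref{pprimedefcm}--\eqref{qprimedefcm}, this swap sends $p'\leftrightarrow q'$, possibly after the change $\omega\mapsto-\omega$, which preserves $d\omega$. Both the gain and the loss expressions are symmetric under this relabelling, so the integrand is unchanged except that $\varphi(p)$ becomes $\varphi(q)$.

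\textbf{Step 2 ($(p,q)\leftrightarrow(p',q')$).} Here the safe route is to pass to the Glassey--Strauss representation \eqref{collision.GS.flrw}, as the paper itself does in the proof of Lemma \ref{thm1}. In that representation the map $(p,q)\mapsto(p',q')$ at fixed $\omega$ is an involution with Jacobian $|p'||q'|/(|p||q|)$ by \eqref{Jacobian.COV}. The factor $v_{\Emptyset}\sigma\mathcal{B}$ absorbs this Jacobian: equivalently, the full kernel $W\,dp\,dq\,dp'\,dq'/(p^0q^0p'^0q'^0)$ in \eqref{collisionop} is manifestly symmetric under the swap. After changing variables, gain and loss exchange roles and $\varphi(p)$ becomes $\varphi(p')$. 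Since gain $-$ loss changes sign, this gives the representation with $-\varphi(p')$.

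\textbf{Step 3.} Combining Steps 1 and 2 gives the representation with $-\varphi(q')$.

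\textbf{Main obstacle.} The delicate point is the measure bookkeeping in Step 2. I will carry out Step 2 at the level of the invariant form \eqref{collisionop}: the $\delta^{(4)}$ kernel and the measures $dp/p^0$ are symmetric under the exchange, and $s,\varrho,\theta$ are invariant. This avoids tracking $\mathcal{B}$ explicitly, and the result is transferred back to the centre of momentum form via \eqref{collisionCMc}. The decay of $F,G,\varphi$ justifies Fubini and guarantees the finiteness needed for each change of variables.
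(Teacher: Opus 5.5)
Your argument is correct and is essentially the paper's proof. It uses the $p\leftrightarrow q$ relabelling combined with $\omega\mapsto-\omega$, together with the pre/post change of variables $(p,q)\mapsto(p',q')$ via the Jacobian of Lemma~\ref{Lem change of variables} and the collision invariance of $\varrho$ and $\theta$; your detour through the invariant $\delta^{(4)}$ form or the Glassey--Strauss representation is a more careful version of the step the paper performs directly in the centre-of-momentum form.

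One correction of reading: the column vector denotes four separate identities, one per entry, not a sum. Your Steps 1--3 already prove each of these individually, and the averaged sum is what Corollary~\ref{corconservationlaws} then uses.
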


\begin{proof}
The first equation follows by definition. The second follows by using the change of variables \((p,q,\omega)\mapsto (p,q,-\omega)\). For this, we note that \(\varrho\sqrt{s}=\varrho^2=g(p-q,p-q)\), that under this change of variables \((p',q')\mapsto (q',p')\), and the symmetry \(\varrho(p,q)=\varrho(q,p)\). Next, in the first equation we use the change of variables \((p,q)\mapsto (p',q')\), so
\begin{align*}
2\int_{\R^3_p} &Q^*(F,G)\varphi (p)dp=t^{3\qq}\int_{\mathbb{R}^3}\int_{\mathbb{R}^3}\int_{\mathbb{S}^2} \frac{\varrho^2 \sigma(\varrho,\theta)}{|t^{\mathfrak{q}}p'||t^{\mathfrak{q}}q'|}
\Big(F(p')G(q')+F(q')G(p')\\
&\qquad
-F(p(p',q'))G(q(p',q'))-F(q(p',q'))G(p(p',q'))\Big)\varphi(p(p',q'))d\omega dq'dp',
\end{align*}
where we used the Jacobian determinant computed in the previous lemma. Moreover, we note that \(\varrho\) and \(\theta\) are collision-invariants. We invert the equations to get explicitly \(p=p(p',q')\) and \(q=q(p',q')\). Now, we rename \((p',q')\) to \((p,q)\) to get 
\begin{align*}
2\int_{\R^3_p} Q^*(F,G)\varphi (p)dp&=t^{3\qq}\int_{\mathbb{R}^3}\int_{\mathbb{R}^3}\int_{\mathbb{S}^2} \frac{\varrho^2 \sigma(\varrho,\theta)}{|t^{\mathfrak{q}}p||t^{\mathfrak{q}}q|}
\Big(F(p)G(q)+F(q)G(p)\\
&\qquad\qquad\qquad\quad
-F(p')G(q')-F(q')G(p')\Big)\varphi(q')d\omega dqdp.
\end{align*}
For the last equation, we use the change of variables \((p,q,\omega)\mapsto(p,q,-\omega)\). Note that under this change \((p',q')\mapsto (q',p')\).
\end{proof}

\begin{corollary}\label{corconservationlaws}
There holds \(\int Q(F,F)dp=\int p^iQ(F,F)dp=\int |p|Q(F,F)dp=0\) for \(i\in \{1,2,3\}\). In particular, the mass, momentum, and energy are conserved quantities for solutions of the massless Boltzmann equation on FLRW spacetime.
\end{corollary}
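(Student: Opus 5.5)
The plan is to apply the preceding symmetrisation lemma with $F=G$ and with the test function $\varphi$ taken, in turn, to be $1$, $p^i$, and $|p|$. Since $Q(F,F)=Q^*(F,F)$, the lemma gives
\[
2\int_{\R^3_p} Q(F,F)\varphi\, dp
\]
as a difference of two integrals. Each integrand carries the factor $\varphi(p)+\varphi(q)-\varphi(p')-\varphi(q')$: the lemma records the four symmetrised representations, and averaging them produces this combination. The only fact needed is therefore that $\varphi(p)+\varphi(q)=\varphi(p')+\varphi(q')$ for these three choices of $\varphi$.

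This identity is supplied by the collisional conservation laws \eqref{collisionalCONSERVATION}:
\begin{itemize}
\item for $\varphi=1$ it is trivial;
\item for $\varphi=p^i$ it is $p+q=p'+q'$;
\item for $\varphi=|p|$ it is $|t^{\qq}p|+|t^{\qq}q|=|t^{\qq}p'|+|t^{\qq}q'|$, after dividing by $t^{\qq}$.
\end{itemize}
The integrand therefore vanishes identically, and all three integrals are zero. The required integrability (decay of $F$, with $v_{\Emptyset}\le 1$) is part of the lemma's hypotheses.

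For the conservation statement, I would multiply \eqref{massless.rBoltz} by $t^{6\qq}\psi(t^{2\qq}p)$, with $\psi\in\{1,\,t^{2\qq}p^i,\,|t^{2\qq}p|\}$ written as a function of $t^{2\qq}p$, and integrate over $\T^3\times\R^3$. Three facts dispose of the terms:
\begin{itemize}
\item The $x$-transport term integrates to zero by periodicity.
\item By \eqref{nullRHStp}, $\psi(t^{2\qq}p)$ is annihilated by $\partial_t-\frac{2\qq}{t}p^i\partial_{p^i}$.
\item Integrating by parts, the $p$-derivative term contributes $-\frac{2\qq}{t}\cdot(-3)\int F\psi = \frac{6\qq}{t}\int F\psi$, which exactly cancels the derivative of the weight $t^{6\qq}$, as in \eqref{timeDeriv.pq}.
\end{itemize}
What remains is $t^{6\qq}\int Q(F,F)\psi\,dp\,dx$. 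Since $\psi$ is a constant multiple (in $t$) of $1$, $p^i$ or $|p|$, the first part shows this is zero, and hence the quantities in \eqref{mass_mom_entrop} are conserved.

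The main point needing care is the sign and weight bookkeeping in the integration by parts, where the factor $6\qq/t$ must cancel. Justifying the vanishing of boundary terms at $|p|\to\infty$ and at $p=0$ for sufficiently regular, decaying $F$ is routine.
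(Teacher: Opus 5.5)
Your proposal is correct and follows the paper's argument. Summing the four symmetrised expressions from the preceding lemma gives the weak form with the factor $\varphi(p)+\varphi(q)-\varphi(p')-\varphi(q')$, which vanishes for $\varphi\in\{1,p^i,|p|\}$ by \eqref{collisionalCONSERVATION}. Your derivation of the conservation of the weighted quantities in \eqref{mass_mom_entrop}, using \eqref{nullRHStp} and the $6\qq/t$ cancellation, fills in the step the paper only asserts with ``in particular''; it mirrors the computation in the paper's proof of the H-theorem.
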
 

\begin{proof}
Adding the 4 expressions in the previous lemma, we have 
\begin{align*}
\int_{\R^3_p} Q(F,F)\varphi (p)dp&=\frac{t^{3\qq}}{4}\int_{\mathbb{R}^3}\int_{\mathbb{R}^3}\int_{\mathbb{S}^2} \frac{\varrho^2 \sigma(\varrho,\theta)}{|t^{\mathfrak{q}}p||t^{\mathfrak{q}}q|}
\Big(F(p')F(q')-F(p)F(q)\Big)\\
&\qquad \qquad\qquad \qquad \qquad \cdot \big(\varphi(p)+\varphi(q)-\varphi(p')-\varphi(q')\big)d\omega dqdp.
\end{align*}
Therefore, if \(\varphi(p')+\varphi(q')=\varphi(p)+\varphi(q)\), then \(\int  Q(F,F)\varphi (p)dp=0.\) In particular, one can take \(\varphi(p)=1\), \(\varphi(p)=p^i\), and \(\varphi(p)=|p|\).
\end{proof}

\begin{theorem}[H-theorem for massless Boltzmann on FLRW]\label{prophthm}
For any regular solution of the massless Boltzmann equation on FLRW spacetime, we have
\[\frac{d}{dt}\bigg(\int_{\mathbb{T}^3_x}\int_{ \mathbb{R}^3_p} (- t^{6\mathfrak{q}}F\log F)(t,x,p) dpdx\bigg)\geq 0.\]
\end{theorem}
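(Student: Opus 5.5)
The plan is to differentiate $H[F(t)]$ and treat the transport part and the collision part separately. The transport part should vanish, because the weight $t^{6\qq}$ exactly compensates the volume change of the flow. The collision part is then identified with the standard nonnegative Boltzmann entropy production.

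Since $F$ solves \eqref{massless.rBoltz}, we have $\partial_t(F\log F) = (1+\log F)\,\partial_t F$. Substituting $\partial_t F$ from the equation gives
\[
\frac{d}{dt}H = -6\qq t^{6\qq-1}\iint F\log F\,dp\,dx - t^{6\qq}\iint (1+\log F)\Big(-\tfrac{p^i}{|\tq p|}\partial_{x^i}F + \tfrac{2\qq}{t}p^i\partial_{p^i}F + Q(F,F)\Big)dp\,dx .
\]
The transport terms are handled by writing $(1+\log F)\,\partial F = \partial(F\log F)$ for each derivative.

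\textbf{Spatial term.} On $\mathbb{T}^3$ this becomes a divergence $\partial_{x^i}\big(\tfrac{p^i}{|\tq p|}F\log F\big)$, whose integral over the torus vanishes.

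\textbf{Momentum term.} Integrating by parts in $p$, and using $\partial_{p^i}p^i = 3$, gives
\[
-\frac{2\qq}{t}\int p^i\partial_{p^i}(F\log F)\,dp = \frac{6\qq}{t}\int F\log F\,dp ,
\]
assuming enough decay at infinity. After multiplying by $t^{6\qq}$, this cancels exactly the term $-6\qq t^{6\qq-1}\iint F\log F$ coming from differentiating the weight. The same computation with $\varphi=1$ explains why mass is conserved with the weight $t^{6\qq}$, which is a useful consistency check.

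\textbf{Collision term.} What remains is
\[
\frac{d}{dt}H = -t^{6\qq}\int_{\mathbb{T}^3}\int Q(F,F)(1+\log F)\,dp\,dx .
\]
By Corollary \ref{corconservationlaws} with $\varphi=1$, the constant $1$ contributes nothing. For $\varphi=\log F$, I apply the symmetrised identity from the proof of that corollary, namely $\int Q(F,F)\varphi = \frac{t^{3\qq}}{4}\iiint \cdots$. This gives
\[
-\int Q(F,F)\log F\,dp = \frac{t^{3\qq}}{4}\iiint \frac{\varrho^2\sigma}{|\tq p||\tq q|}\big(F(p')F(q')-F(p)F(q)\big)\log\frac{F(p')F(q')}{F(p)F(q)}\,d\omega\,dq\,dp .
\]
The integrand is nonnegative, since $(a-b)\log(a/b)\ge 0$ for $a,b>0$, and the kernel $\varrho^2\sigma/(|\tq p||\tq q|)$ is nonnegative. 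Hence $\frac{d}{dt}H\ge 0$.

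\textbf{Main obstacle.} The hardest step is justifying these manipulations rather than the algebra itself. The function $\log F$ is unbounded, so using $\varphi=\log F$ in the lemma needs integrability, and the integration by parts in $p$ needs the boundary terms $p^iF\log F$ to vanish. I would first carry out the argument for $F$ positive with Maxwellian-type bounds, where both conditions are clear. For general regular solutions I would then handle $F\log F$ by monotone or dominated convergence, for example replacing $\log F$ by $\log(F+\epsilon)$ and letting $\epsilon\to0$.
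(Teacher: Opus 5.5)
Your proposal is correct and follows essentially the same argument as the paper. In both, the derivative of the weight $t^{6\mathfrak{q}}$ cancels the term from integrating by parts $p^i\partial_{p^i}(F\log F)$, the spatial transport term integrates to zero on $\mathbb{T}^3$, and the remaining collision term is signed via the symmetrised weak form from the lemma preceding Corollary \ref{corconservationlaws}, using $(a-b)\log(a/b)\ge 0$. Your remarks on justifying the use of $\varphi=\log F$ and on the vanishing boundary terms go beyond the paper, which simply assumes a regular solution.
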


\begin{proof}
Adding the 4 expressions in the lemma above, and letting \(\phi=1+\log F\),
\begin{align*}
&\int_{\R^3_p} Q(F,F)(1+\log F)dp=
\frac{t^{3\qq}}{4}\int_{\mathbb{R}^3}\int_{\mathbb{R}^3}\int_{\mathbb{S}^2} \frac{\varrho^2 \sigma(\varrho,\theta)}{|t^{\mathfrak{q}}p||t^{\mathfrak{q}}q|}
\Big(F(p')F(q')-F(p)F(q)\Big)\\
&\qquad\qquad\qquad \qquad\qquad\qquad \cdot\big(\log F(p)+\log F(q)-\log F(p')-\log F(q')\big)d\omega dqdp\\
&\quad=\frac{t^{3\qq}}{4}\int_{\mathbb{R}^3}\int_{\mathbb{R}^3}\int_{\mathbb{S}^2} \frac{\varrho^2 \sigma(\varrho,\theta)}{|t^{\mathfrak{q}}p||t^{\mathfrak{q}}q|}
F(p')F(q') (1-\mu ) \log (\mu) d\omega dqdp,
\end{align*}
where in the last line we wrote \(\mu=F(p)F(q)F(p')^{-1}F(q')^{-1}.\) And since \((1-\mu)\log \mu\leq 0\) for all \(\mu>0\), we conclude that \(\int Q(F,F)\log F dp \leq 0.\) Finally, we get
\begin{multline*}
    \frac{d}{dt}\bigg(t^{6\mathfrak{q}}\int_{\mathbb{T}^3_x\times \mathbb{R}^3_p} F\log F dxdp\bigg)
    \\
    =6\mathfrak{q}t^{6\mathfrak{q}-1}\int_{\mathbb{T}^3_x\times \mathbb{R}^3_p} F\log F dxdp+t^{6\mathfrak{q}}\int_{\mathbb{T}^3_x\times \mathbb{R}^3_p} \partial_t F(1+\log F) dxdp
    \\
=6\mathfrak{q}t^{6\mathfrak{q}-1}\int_{\mathbb{T}^3_x\times \mathbb{R}^3_p} (F\log F) dxdp
\\
-t^{6\mathfrak{q}}\int_{\mathbb{T}^3_x\times \mathbb{R}^3_p} \Big[\frac{p^i}{|t^{\qq}p|} \partial_{x^i} F -\frac{2\qq}{t}p^i \partial_{p^i} F-Q(F,F)\Big](1+\log F) dxdp
\\
=t^{6\mathfrak{q}}\int_{\mathbb{T}^3_x\times \mathbb{R}^3_p} Q(F,F)\log F dxdp\leq 0.
\end{multline*}
This completes the proof.
\end{proof}

\section{Determination of the Maxwell--J\"uttner parameters}\label{MJparamet}

In this section, the Maxwell--J\"uttner parameters are identified for fixed values of the initial mass, energy, and momentum.

\begin{theorem}[Determination of Maxwell--J\"uttner parameters from conserved quantities]\label{thm:mjparamet}
    Given a sufficiently regular initial distribution \(F_1\colon \mathbb{T}^3_x\times \mathbb{R}^3_p\to [0,+\infty)\), there are parameters \(a\in \mathbb{R}\), \(c\in [0,+\infty)\), and \(b\in \mathbb{R}^3\) such that \(c>|b|\), for which 
\(
J_{a,b,c} (p)
=(8\pi)^{-1}e^{a+b \cdot  p-c|p|} 
\)
satisfies 
\begin{multline*}
  0=\int_{\mathbb{T}^3_x}\int_{ \mathbb{R}^3_p} (F_1-J_{a,b,c})dpdx= \int_{\mathbb{T}^3_x}\int_{ \mathbb{R}^3_p} p^i(F_1-J_{a,b,c})dpdx\\
= \int_{\mathbb{T}^3_x}\int_{ \mathbb{R}^3_p} |p|(F_1-J_{a,b,c})dpdx.   
\end{multline*}
\end{theorem}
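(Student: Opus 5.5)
We reduce the statement to solving five scalar equations for the five unknowns $(a,b,c)$. Write the spatially integrated moments of the datum as
\[
M_0=\int_{\mathbb{T}^3_x}\int_{\mathbb{R}^3_p}F_1\,dp\,dx,\qquad M_i=\int_{\mathbb{T}^3_x}\int_{\mathbb{R}^3_p}p^iF_1\,dp\,dx,\qquad E=\int_{\mathbb{T}^3_x}\int_{\mathbb{R}^3_p}|p|F_1\,dp\,dx.
\]
We assume $F_1\not\equiv 0$, so that $M_0>0$. Since $F_1\ge 0$ and $|p^i|\le|p|$, we have $|M|\le E$, with strict inequality unless $F_1$ is concentrated on a single ray; we treat that degenerate case separately or exclude it by the regularity hypothesis. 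Because $|\mathbb{T}^3|=(2\pi)^3$ and $J_{a,b,c}$ does not depend on $x$, the three conditions read
\[
(2\pi)^3\int_{\mathbb{R}^3} J_{a,b,c}\,(1,p^i,|p|)\,dp=(M_0,M_i,E).
\]

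First we compute the moments of $J_{a,b,c}$ in closed form. Rotate so that $b=|b|e_3$, set $\beta=|b|/c\in[0,1)$, and rescale $p\mapsto p/c$. Integrating radially with \eqref{integralp}, and then over angles using $\int_{-1}^1(1-\beta s)^{-3}ds=2(1-\beta^2)^{-2}$, gives
\[
\int_{\mathbb{R}^3} J_{a,b,c}\,dp=\frac{e^a}{c^3(1-\beta^2)^2}.
\]
The other moments come from differentiation in the parameters: $\int p\,J_{a,b,c}\,dp=\nabla_b$ of this quantity and $\int|p|J_{a,b,c}\,dp=-\partial_c$ of it. Writing $Z(b,c)=(c^2-|b|^2)^{-2}c$, the zeroth moment is $e^aZ$, and one finds
\[
\frac{\int p\,J}{\int J}=\frac{4b}{c^2-|b|^2},\qquad \frac{\int |p|J}{\int J}=\frac{3c^2+|b|^2}{c(c^2-|b|^2)}.
\]

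Second, we solve. The parameter $a$ only fixes the overall mass, so it is chosen last via $e^a=M_0/((2\pi)^3Z)$. It therefore suffices to match the normalized moments $m=M/M_0$ and $\epsilon=E/M_0$, which satisfy $|m|<\epsilon$. By rotational equivariance we take $b$ parallel to $m$ (and $b=0$ if $m=0$). In terms of $\beta=|b|/c$, the ratio $|m|/\epsilon$ must equal
\[
g(\beta)=\frac{4\beta}{3+\beta^2}.
\]
This function is continuous and strictly increasing on $[0,1)$, since $g'(\beta)\propto 3-\beta^2>0$, with $g(0)=0$ and $g(\beta)\to 1$ as $\beta\to1$. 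The intermediate value theorem then gives a unique $\beta\in[0,1)$ with $g(\beta)=|m|/\epsilon<1$. Once $\beta$ is fixed, $c$ is determined by the energy equation
\[
\epsilon=\frac{3+\beta^2}{c(1-\beta^2)},
\]
which yields $c>0$ uniquely. Then $b=\beta c\,m/|m|$ satisfies $|b|<c$, and $a$ is recovered from the mass equation.

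The main obstacle is the step that produces the strict inequality $|m|<\epsilon$, which is what makes $\beta<1$, and hence $c>|b|$, attainable. This inequality needs $F_1$ not supported on a single ray together with $M_0>0$. In the degenerate cases no admissible $J_{a,b,c}$ exists, so the phrase "sufficiently regular" in the statement has to be interpreted as excluding them; we would state this hypothesis explicitly. Everything else is explicit calculus.
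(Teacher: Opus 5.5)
Your proposal is correct, and it follows the same two-step skeleton as the paper's appendix: compute the moments of $J_{a,b,c}$ in closed form, then invert the resulting algebraic system. Your moment formulas, $e^a c/(c^2-|b|^2)^2$, $4e^a c\,b/(c^2-|b|^2)^3$ and $e^a(3c^2+|b|^2)/(c^2-|b|^2)^3$, agree with the paper's.

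The difference lies in the inversion step. The paper writes down explicit formulas for $e^a$, $b$, $c$ in terms of $\Delta=\sqrt{C^2-\tfrac34|B|^2}$ and leaves the check to ``direct computations''. You instead eliminate $a$ and $c$ and reduce to the scalar monotone equation $4\beta/(3+\beta^2)=|m|/\epsilon$. The two answers agree. In the paper's notation ($A,B,C$ for your $M_0,M,E$), the admissible smaller root of $|B|\beta^2-4C\beta+3|B|=0$ is $2(C-\Delta)/|B|=3|B|/(2(C+\Delta))$, which is exactly the paper's $|b|/c$.

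Your route buys two things. First, existence and uniqueness become transparent without verifying messy identities. Second, it exposes the correct hypothesis, namely the strict inequality $|M|<E$. The paper's proposition assumes $C\ge|B|$, but at $C=|B|$ one gets $\Delta=C/2$ and $4(C+\Delta)^2-9|B|^2=0$, so its formulas degenerate. This matches your observation that $\beta\to1$ is never attained.

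You also do not need an extra hypothesis to rule out the degenerate case. A ray is Lebesgue-null, so for any nonnegative $F_1\in L^1$ with finite first moment that is not a.e.\ zero, $|M|<E$ holds automatically. The only genuine exclusion is $F_1\equiv 0$, which the paper also excludes implicitly through $A>0$.
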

   
The proof of Theorem \ref{thm:mjparamet} follows from the propositions obtained below.

\begin{proposition}[Mass, momentum and energy of the Maxwell--J\"uttner equilibria]
    Let \(a\), \(c\in \mathbb{R}\), and \(b\in \mathbb{R}^3\) such that \(c>|b|\).  The mass, momentum, and energy of the Maxwell--J\"uttner equilibrium \(J_{a,b,c}\) are
    \begin{align*}
    t^{6\qq}\int_{\mathbb{T}^3_x}\int_{ \mathbb{R}^3_p} J_{a,b,c} (t^{{2\qq}}p)dpdx=\frac{e^a c}{(c^2-|b|^2)^2},\\ t^{6\qq}\int_{\mathbb{T}^3_x}\int_{ \mathbb{R}^3_p} t^{2\qq}p^iJ_{a,b,c} (t^{{2\qq}}p)dpdx=\frac{4e^acb^i}{(c^2-|b|^2)^3},\\ t^{6\qq}\int_{\mathbb{T}^3_x}\int_{ \mathbb{R}^3_p} |t^{2\qq}p|J_{a,b,c} (t^{{2\qq}}p)dpdx=e^a\frac{(3c^2+|b|^2)}{(c^2-|b|^2)^3}.
    \end{align*}
\end{proposition}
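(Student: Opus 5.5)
The plan is to compute the three integrals directly after rescaling. Substituting $u=t^{2\qq}p$ gives $dp=t^{-6\qq}du$, which cancels the prefactor $t^{6\qq}$. The $x$-integral over $\mathbb{T}^3$ contributes its volume, which I take to be normalised to $1$ in keeping with the stated formulas. So it suffices to compute
\[
\frac{e^a}{8\pi}\int_{\mathbb{R}^3}\begin{pmatrix}1\\ u^i\\ |u|\end{pmatrix} e^{b\cdot u-c|u|}\,du .
\]

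For the mass, I use spherical coordinates with the polar axis along $b$ and write $\beta=|b|$. The angular integral is $\int_{\mathbb{S}^2}e^{\beta r\cos\theta}d\omega=4\pi\sinh(\beta r)/(\beta r)$. This leaves
\[
\frac{e^a}{2\beta}\int_0^\infty r\sinh(\beta r)e^{-cr}dr=\frac{e^a}{4\beta}\Big[(c-\beta)^{-2}-(c+\beta)^{-2}\Big]=\frac{e^a c}{(c^2-\beta^2)^2}.
\]
The condition $c>|b|$ guarantees convergence. The case $b=0$ follows by continuity, or directly from \eqref{integralp}.

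The remaining two moments come from differentiating $M(a,b,c)=e^ac(c^2-|b|^2)^{-2}$ under the integral sign, which is justified by dominated convergence since $c>|b|$. The momentum is $\partial_{b^i}M$, which gives $e^ac\cdot 4b^i(c^2-|b|^2)^{-3}$. The energy is $-\partial_cM$, which gives $-e^a\big[(c^2-|b|^2)^{-2}-4c^2(c^2-|b|^2)^{-3}\big]=e^a(3c^2+|b|^2)(c^2-|b|^2)^{-3}$.

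The only point needing care is the sign and normalisation bookkeeping in the angular integral, together with the torus-volume convention. No other step poses an obstacle.
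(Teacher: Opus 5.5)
Your proposal is correct and follows the same route as the paper: you rescale by $u=t^{2\qq}p$ to absorb the $t^{6\qq}$ prefactor and evaluate the mass in spherical coordinates, and differentiating $e^a c(c^2-|b|^2)^{-2}$ in $b^i$ and in $-c$ is a clean way to do what the paper leaves as ``computed in a similar way'' for the momentum and energy. Your remark that the stated formulas presuppose $|\mathbb{T}^3|=1$ is apt, since the paper's proof drops the $x$-integral without comment.
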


\begin{proof}
    First, we compute the mass of \(J_{a,b,c} (t^{{2\qq}}p)\) by making the change of variables \(q=t^{2\qq}p\) to get
    \begin{align*}
            t^{6\qq}\int_{\mathbb{T}^3_x}\int_{ \mathbb{R}^3_p} J_{a,b,c} (t^{{2\qq}}p)dpdx
            &=\frac{t^{6q}e^a}{8\pi}\int_{\mathbb{R}^3_p} e^{b\cdot (t^{2\qq}p)-c|t^{2\qq}p|}dp
            \\
            &=\frac{e^a}{8\pi}\int_{\mathbb{R}^3_q} e^{b\cdot q-c|q|}dq=\frac{e^a c}{(c^2-|b|^2)^2},
    \end{align*}
 where the last integral was computed using spherical coordinates. The momentum and energy of \(J_{a,b,c} (t^{{2\qq}}p)\) are computed in a similar way.
\end{proof}

Then, we need to solve for \(a\), \(c\in \mathbb{R}\), and \(b\in \mathbb{R}^3\) in the following system of equations
\begin{align}
    A=\frac{e^a c}{(c^2-|b|^2)^2},\qquad B^i=\frac{4e^acb^i}{(c^2-|b|^2)^3},\qquad C=e^a\frac{(3c^2+|b|^2)}{(c^2-|b|^2)^3}.\label{eqnsabc}
\end{align}

\begin{proposition}
    Let \(A\), \(C\in (0,+\infty)\), and \(B\in [0,+\infty)^3\) such that \(C\geq |B|\). The nonlinear system of equations \eqref{eqnsabc} has a unique solution \(a\in \mathbb{R}\), \(c\in (0,+\infty)\), and \(b\in \mathbb{R}^3\) such that \(c>|b|\). In fact, we have
    \begin{align*}
        e^a&=\frac{864A^4}{(C+\Delta)(4(C+\Delta)^2-9|B|^2)},\quad\qquad b^i=\frac{36AB^i}{4(C+\Delta)^2-9|B|^2},\\
        c&=\frac{24A(C+\Delta)}{4(C+\Delta)^2-9|B|^2},\qquad \qquad \qquad \quad\Delta=\sqrt{C^2-\frac{3}{4}|B|^2}.
    \end{align*}
\end{proposition}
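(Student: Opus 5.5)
The plan is to eliminate $a$ and $b$ and reduce \eqref{eqnsabc} to a single quadratic equation for the positive quantity $D\eqdef c^2-|b|^2$.

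First, divide the second and third equations of \eqref{eqnsabc} by the first; this removes $e^a$. The ratio $B^i/A$ gives
\[
b^i=\frac{D}{4A}B^i, \qquad |b|=\frac{D|B|}{4A},
\]
so $b$ is parallel to $B$ and fixed once $D$ is known. For the ratio $C/A$, write $3c^2+|b|^2=3D+4|b|^2$. Then
\[
\frac{C}{A}=\frac{3D+4|b|^2}{cD}=\frac{1}{c}\Big(3+\frac{D|B|^2}{4A^2}\Big),
\]
which expresses $c$ linearly in $D$:
\[
c=\frac{A}{C}\Big(3+\frac{D|B|^2}{4A^2}\Big).
\]

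Next, substitute both expressions into the definition $D=c^2-|b|^2$. This gives
\[
D+\frac{D^2|B|^2}{16A^2}=\frac{A^2}{C^2}\Big(3+\frac{D|B|^2}{4A^2}\Big)^2,
\]
which is a quadratic equation in $D$. Its discriminant simplifies to a positive multiple of $C^2-\tfrac34|B|^2=\Delta^2$. This quantity is nonnegative because the hypothesis $C\ge |B|$ gives $C^2\ge |B|^2\ge \tfrac34|B|^2$, so $\Delta$ is real. I would solve the quadratic by the quadratic formula and then rationalize the root. The choice of sign is dictated by the requirement $D>0$, which is equivalent to $c>|b|$; this should select the root containing $C+\Delta$. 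Plugging that root back into the formulas for $c$ and $b$ should yield
\[
c=\frac{24A(C+\Delta)}{4(C+\Delta)^2-9|B|^2},\qquad b^i=\frac{36AB^i}{4(C+\Delta)^2-9|B|^2}.
\]
Finally, $e^a=A D^2/c$ follows from the first equation of \eqref{eqnsabc}, and after simplification it should give the stated expression for $e^a$.

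The main obstacle is uniqueness and admissibility, that is, showing that exactly one root of the quadratic gives $D>0$ and $c>0$. The approach is to examine the product of the roots (Vieta) together with the sign of the leading coefficient, to show that the second root is either negative or produces $c<0$ or $c\le|b|$. It will also be necessary to check that the denominator $4(C+\Delta)^2-9|B|^2$ is positive. From $C\ge |B|$ we get $\Delta\ge |B|/2$, hence $C+\Delta\ge \tfrac32|B|$ and the denominator is nonnegative. Equality can occur only in the degenerate case $C=|B|$, and that case will have to be handled separately or excluded by requiring $C>|B|$. Since all three parameters are recovered algebraically from $D$, uniqueness of the admissible $D$ gives uniqueness of $(a,b,c)$. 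A direct substitution of the explicit formulas back into \eqref{eqnsabc} then confirms existence.
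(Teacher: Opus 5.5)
Your reduction is correct, and it is a more complete version of what the paper does. The paper's proof only notes that $\Delta$ is real, asserts $4(C+\Delta)^2-9|B|^2=144e^{2a}c^2/(c^2-|b|^2)^5>0$ (the paper prints $e^a$ for $e^{2a}$), and leaves the rest to ``direct computations''. It never argues uniqueness, and its positivity claim presupposes that a solution already exists. Your elimination supplies both halves.

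Clearing denominators, your equation reads $|B|^2(C^2-|B|^2)D^2+8A^2(2C^2-3|B|^2)D-144A^4=0$, with discriminant $256A^4C^2\Delta^2$. For $0<|B|<C$ the leading coefficient is positive and the constant term negative, so Vieta gives exactly one positive root. Using $4\Delta^2=4C^2-3|B|^2$, that root rationalizes to $D=144A^2/\big(4(C+\Delta)^2-9|B|^2\big)$. Then $b^i=DB^i/(4A)$, your formula for $c$, and $e^a=AD^2/c$ reproduce the stated expressions. Existence follows by substituting back, and $c>|b|$ is automatic from $c>0$ and $D=c^2-|b|^2>0$.

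Your hedge on the degenerate cases resolves as follows. If $B=0$ the equation is linear and gives $D=9A^2/C^2$, which agrees with the formula.

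If $C=|B|$, the case cannot be ``handled separately'', because the statement is false there. Any admissible solution satisfies $C-|B|=e^a(3c-|b|)(c-|b|)/(c^2-|b|^2)^3>0$. Consistently with this, your leading coefficient vanishes at $C=|B|$, and the resulting linear equation gives $D=-18A^2/|B|^2<0$. So the hypothesis should read $C>|B|$, which is exactly the range where $4(C+\Delta)^2-9|B|^2>0$.
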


\begin{proof}
    By the condition \(C\geq |B|\) there holds \(\Delta\geq \frac{1}{4}|B|^2\geq 0\). Moreover, we get from the equations \eqref{eqnsabc} that \[4(C+\Delta)^2-9|B|^2=\frac{144e^ac^2}{(c^2-|b|^2)^5}>0.\] Finally, the result can be shown by direct computations.
\end{proof}

\providecommand{\bysame}{\leavevmode\hbox to3em{\hrulefill}\thinspace}
\providecommand{\href}[2]{#2}

\end{document}